\newcommand{\properpagestyle}{\pagestyle{myheadings}\markboth{}{}\markright{}}
\def\cleardoublepage{\clearpage\if@twoside\ifodd\c@page\else
\hbox{}
\vspace*{\fill}
\begin{center}

\end{center}
\vspace{\fill}
\thispagestyle{empty}
\newpage
\if@twocolumn\hbox{}\newpage\fi\fi\fi}
\titlespacing{\section}{0pt}{12pt}{12pt}
\titlespacing{\subsection}{0pt}{10pt}{10pt}
\def\execute{%
\begingroup
\catcode`\%=12
\catcode`\\=12
\executeaux}
\def\executeaux#1{\immediate\write18{#1}\endgroup}
\theoremstyle{plain}
\newtheorem{lemma}{Lemma}
\theoremstyle{definition}
\theoremstyle{remark}
\theoremstyle{definition}
\newtheorem{example}{Example}
\newtheorem*{example*}{Example}
\numberwithin{equation}{section}
\numberwithin{section}{chapter}
\numberwithin{figure}{chapter}
\numberwithin{table}{chapter}
\numberwithin{theorem}{chapter}
\numberwithin{result}{chapter}
\numberwithin{corollary}{chapter}
\numberwithin{lemma}{chapter}
\numberwithin{definition}{chapter}
\numberwithin{example}{chapter}
\newcommand{\ThesisTitle}{Characterization of Entanglement in Higher Dimensional Bipartite as well as Multipartite Quantum System and its Application
}
\newcommand{\Student}{Shruti Aggarwal}
\newcommand{\Enrollment}{2K19/PHD/AM/04}
\newcommand{\Supervisor}{Dr.\ Satyabrata Adhikari}
\newcommand{\Institute}{Delhi Technological University}
\newcommand{\SubmissionDate}{April, 2024}
\renewcommand{\bibname}{References}
\begin{document}

\newcommand{\TitlePage}{\thispagestyle{empty} \fontfamily{phv}\selectfont
\begin{center} \Large
\textbf{\MakeTextUppercase{\ThesisTitle}}
\end{center}
\vspace{0.8 cm}
\begin{center}

\emph{A thesis submitted to}\\ 
\textbf{\large\MakeTextUppercase{\Institute}}\\ \ \\
\emph{in partial fulfillment of the requirements for the award of degree of}\\ \ \\
  \textbf{\large DOCTOR OF PHILOSOPHY }\\ 
\emph{in}\\ 
\textbf{\large MATHEMATICS } \\ \ \\ \ \\
\emph{By}\\
  \textbf{\Large\MakeTextUppercase \Student}\\ \ \\
  \vspace{0.5cm}
\emph{Under the Supervision of}  \\
\large \textbf{ \Supervisor} \\
 \end{center}
\vspace{1.5 cm}
\begin{figure}[h]
  \centering
  \includegraphics[width=3.4 cm]{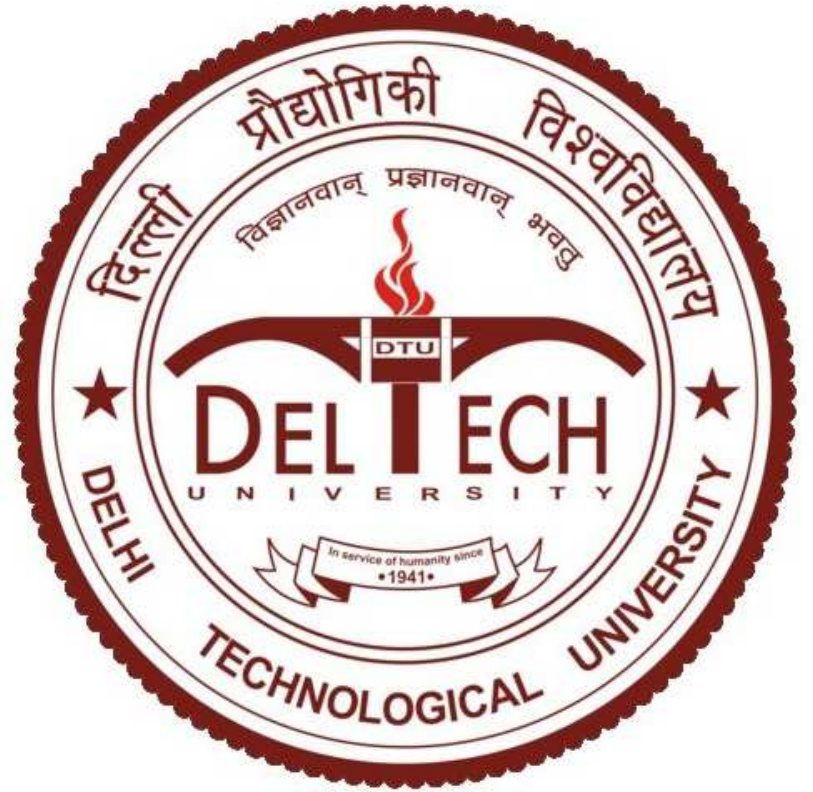}\\
\end{figure}

\begin{center}
DEPARTMENT OF APPLIED MATHEMATICS \\
DELHI TECHNOLOGICAL UNIVERSITY\\
(Formerly Delhi College of Engineering)\\
\small BAWANA ROAD, NEW DELHI-110 042, INDIA.
\end{center}
\vspace{1.5cm}
\begin{flushleft}
\textbf{\SubmissionDate\hfill Enroll. No. : \Enrollment}\\
\end{flushleft}
}
\TitlePage
\newpage

\thispagestyle{empty}
\
\newpage

\setcounter{page}{1}

%
%

%
\newpage
\addcontentsline{toc}{chapter}{Title}
\thispagestyle{empty}
\text{}
    \vspace{8cm}
    \begin{center}
   \large\bf $\copyright$  DELHI TECHNOLOGICAL UNIVERSITY, DELHI, 2024\\ ALL RIGHTS RESERVED.
   \end{center}

\newpage
\
\thispagestyle{empty}
\newpage
\pagenumbering{roman}
\setcounter{page}{1}
\addcontentsline{toc}{chapter}{Declaration}

\begin{center}
	{ \bf \large  DECLARATION\\ \ \\}
\end{center}

\noindent I, the undersigned, hereby declare that the research work reported in this thesis entitled ``\textbf{{\ThesisTitle}}'' for the award of the degree of \emph{Doctor of Philosophy in Mathematics} has been carried out by me under the supervision of \ifthenelse{\isundefined{\SupervisorTwo}}{\emph{\Supervisor}}{{\emph{\Supervisor }} and \emph{\SupervisorTwo}}, Department of Applied Mathematics, Delhi Technological University, Delhi, India.\\

\indent The research work embodied in this thesis, except where otherwise indicated, is my original research. This thesis has not been submitted by me earlier in part or full to any other University or Institute for the award of any degree or diploma. This thesis does not contain other person's data, graphs, or other information unless specifically acknowledged. \\ \ \\
\\ \ \\
\begin{flushright}
	{\bf Date : 02.04.2024  \hfill{\bf (\Student)}\\
	}
{\bf \hfill Enroll. No. : \Enrollment}
\end{flushright}

\newpage
\thispagestyle{empty}
\
\newpage
\addcontentsline{toc}{chapter}{Certificate}
\begin{center}{ \bf \large
		{CERTIFICATE}\\ \ \\}
\end{center}

\indent On the basis of a declaration submitted by {\bf Ms. \Student}, Ph.D. Research Scholar, I hereby certify that the thesis entitled {\bf``\ThesisTitle''} submitted to the Department of Applied Mathematics, Delhi Technological University, Delhi, India for the award of the degree of \emph{Doctor of Philosophy in Mathematics}, is a record of bonafide research work carried out by her under my supervision.

I have read this thesis and, in my opinion, it is fully adequate in scope and quality as a thesis for the degree of Doctor of Philosophy.

To the best of my knowledge, the work reported in this thesis is original and has not been submitted to any other Institution or University in any form for the award of any Degree or Diploma.
\\ \ \\
\vspace{2cm}
\begin{table}[h]
\begin{tabular}{l r}
	{\bf (\Supervisor)}  &  \hspace{5cm} {\bf (Prof. R. Srivastava)}\\
	Supervisor and Assistant Professor  & \hspace{4cm}  Professor and Head\\
    Department of Applied Mathematics & \hspace{2cm} Department of Applied Mathematics\\
	Delhi Technological University & \hspace{2cm} Delhi Technological University \\
	New Delhi, India & \hspace{2cm} New Delhi, India. \\
\end{tabular}
\end{table}

%
%
%
%

\newpage
\thispagestyle{empty}
\
\newpage
\addcontentsline{toc}{chapter}{Acknowledgements}
\begin{center}
	{\bf \large  ACKNOWLEDGEMENTS}
\end{center}

{\textsf{I express my deep sense of gratitude to my research supervisor \Supervisor, Department of Applied Mathematics, Delhi Technological University (DTU), New Delhi, India.
His unwavering enthusiasm for research kept me constantly engaged with my research.}} 

{\textsf{I am extremely thankful to Prof. R. Srivastava, Professor and Head; Prof. S. Sivaprasad Kumar, former HOD, Department of Applied Mathematics, DTU, for providing me a constant support and the necessary facilities in the department. }}

{\textsf{I also wish to extend my gratitude to SRC members: Prof. Archan S. Majumdar, Senior Professor, S. N. Bose National Centre for Basic Sciences, Kolkata; Prof. Tabish Qureshi, Professor and Hony Director, Centre for Theoretical Physics, Jamia Millia Islamia, New Delhi for their insightful comments and encouragement, but also for the questions which incented me to widen my research from various perspectives. My sincere thanks go to Prof. Anjana Gupta, DTU, and Prof. R. K. Sinha, Department of Applied Physics, DTU, for their valuable guidance from the beginning of my Ph.D. research work.}}\\ 
{\textsf{I am grateful to Prof. Sangita Kansal, DRC Chairman; Prof. Naokant Deo, former DRC Chairman, and the esteemed faculty of the Department of Applied Mathematics, DTU, for their support and encouragement. Special thanks to Prof. R Srivastava, DTU, who has offered invaluable advice and unwavering guidance that will benefit me throughout my life.}}

{\textsf{My acknowledgements would not be complete without thanking the academic and technical support of the DTU, and its staff, particularly the Academic-PG section. The library facilities of the University have been indispensable. I also extend my thanks to the administrative staff in the Department of Applied Mathematics for their support and assistance.}}

{\textsf{Lastly, I would be remiss in not mentioning my family and friends. I thank my spouse for being the pillar of strength for me in this process. I'm extremely grateful to my parents and siblings, their belief in me has kept my spirits and motivation high during this entire journey. }}

{\textsf{My appreciation also extends to Dr. Anu Kumari, and Ms. Anuma Garg for valuable discussions on the subject matter and for helping me develop my ideas. I also place on record, my sense of gratitude to one and all who helped me directly or indirectly in this journey.}}

{\textsf{I would like to acknowledge the Council of Scientific and Industrial Research (CSIR), Government of India, for providing fellowship (JRF and SRF) that made my PhD work possible.}}

{\textsf{Finally, I thank GOD, the Almighty, for letting me through all the difficulties and for showering blessings throughout my journey to complete the research successfully.}}\\

\noindent {\bf Date: 02.04.2024} \hfill {\bf (\MakeTextUppercase\Student)}

\newpage
\
\newpage
\text{}
\vspace{6cm}
\begin{center}
	{\Large{\bf{ \emph{
				Dedicated to my parents}}}}\\
			\emph{For their endless love, support and encouragement}
	\vspace{1cm}
\end{center}


{\singlespacing\tableofcontents}

\addcontentsline{toc}{chapter}{Preface}
\chapter*{Preface}
\indent 
In recent years considerable progress has been made towards developing a general theory of quantum entanglement. In particular, the criteria to decide whether a given quantum state is entangled are of high theoretical and practical interest. This problem is additionally complicated by the existence of bound entanglement, which are weak entangled states and hard to detect. In this thesis, we have worked on the characterization of bipartite and tripartite entanglement. We have established a few separability criteria that successfully detect negative partial transpose entangled states (NPTES) as well as positive partial transpose entangled states (PPTES). 
In this thesis, we propose some theoretical ideas to realize these entanglement detection criteria experimentally. \\
We have taken an analytical approach to construct a witness operator. To achieve this, we first constructed a linear map using the combination of partial transposition and realignment operation. Then we find some conditions on the parameters of the map for which the map represents a positive but not completely positive map. We then construct an entanglement witness operator, which is based on the Choi matrix. Finally, we prove its efficiency by detecting several bipartite bound entangled states that were previously undetected by some well-known separability criteria and find that our witness operator detects bound entangled states that are undetected by these criteria.
It is known that the witness operator is useful in the detection as well as the quantification of entangled states. This motivated us for the construction of a family of witness operators that can detect many mixed entangled states. We have shown the significance of our constructed witness operator by detecting many bound entangled states and then, we use the expectation value of the witness operator to estimate the lower bound of the concurrence of those bound entangled states.\\ 
Detection of entanglement through partial knowledge of the quantum state is a challenge to implement efficiently. Here we propose a separability criterion for detecting bipartite entanglement in arbitrary dimensional quantum states using partial moments of the realigned density matrix. Our approach enables the detection of both distillable and bound entangled states through a common framework. We illustrate the efficiency of our method through examples of states belonging to both the above categories, which are not detectable using other schemes relying on partial state information. \\
Realignment criteria is a powerful tool for the detection of entangled states in bipartite and multipartite quantum systems. Since the matrix corresponding to the realignment map is indefinite, the experimental implementation of the map is an obscure task. We have approximated the realignment map to a positive map using the method of structural physical approximation (SPA) and then we have shown that the SPA of the realignment map (SPA-R) is completely positive. The positivity of the constructed map is characterized using moments that can be physically measured. Next, we develop a separability criterion based on our SPA-R map in the form of an inequality and have shown that the developed criterion not only detects NPTES but also PPTES. \\
We have addressed the problem of experimental realization of the realignment operation. We first show that the realignment matrix can be expressed in terms of the partial transposition operation along with a permutation matrix which can be implemented via the SWAP operator acting on the density matrix. We have shown that the first moment of the realignment matrix can be expressed as the expectation value of a SWAP operator which indicates the possibility of its measurement. We defined a new matrix realignment operation for three-qubit states and have devised an entanglement criterion that can detect three-qubit genuine entangled states.\\
Chapter 1 is introductory in nature. Chapters 2 -- 6 are based on the research work published in the form of research papers in reputed refereed journals. Finally, we conclude with future scope and references. Each chapter begins with a brief outline of the work carried out in that chapter. \\\\\\
\noindent {\bf Date : 02.04.2024} \hfill {\bf (\MakeTextUppercase\Student)}
\\ \noindent {\bf Place : DTU, New Delhi, India.}
\newpage
\
\newpage

\addcontentsline{toc}{chapter}{List of Symbols}
\chapter*{List of Symbols}
\thispagestyle{empty}

\begin{tabular}{cp{0.6\textwidth}}
	$V$ & Vector Space \\
	$F$ & Field of real or complex numbers\\
	$\mathbb{N}$ & Set of natural numbers\\
	$\mathbb{R}^n$ & $n$-dimensional Euclidean space\\
	$\mathbb{C}^n$ & Complex vector space of dimension $n$ \\
	$\mathcal{H}$ & Hilbert space\\
	$\mathbb{R^{+}}$ & Set of positive real numbers \\
	$*$ & Complex conjugate \\
	$\dagger$ & Conjugate transpose operation \\
	$||.||$ & norm \\
	$|.\rangle$ & ket vector\\
	$\langle.|$ & bra vector\\
	$\langle .|.\rangle$ & inner product\\
	$I_n$ & Identity matrix of order $n$ \\
	$O_n$ & Null matrix of order $n$ \\
		$\lambda_i(A)$ & $i$th eigenvalue of a matrix $A$\\
	$\sigma_i(A)$ & $i$th singular value of a matrix $A$\\
	$L(\mathcal{H})$ & Set of all linear operators defined on $\mathcal{H}$\\
	$B(\mathcal{H})$ & Set of all bounded linear operators defined on $\mathcal{H}$\\
	$M_n(\mathbb{C})$ & Set of all $n \times n$ matrices with complex entries\\
	$M_{m,n}{(\mathbb{C})}$ & Set of all $m \times n$ matrices with complex entries\\
	$Tr[.]$ & Trace of a matrix\\
	$\sigma_x$ & Pauli $X$-matrix \\
		$\sigma_y$ & Pauli $Y$-matrix \\
			$\sigma_z$ & Pauli $Z$-matrix \\
		$\rho$ & Density matrix\\
		$\rho^{T_B}$ & Partially transposed matrix with respect to subsystem $B$\\
		$R(\rho)$ & Realignment matrix\\
		$\widetilde{R}(\rho)$ & Structural physical approximation of realignment matrix\\
\end{tabular}

\addcontentsline{toc}{chapter}{List of figures}
{\singlespacing\listoffigures}

\addcontentsline{toc}{chapter}{List of tables}
{\singlespacing\listoftables}
\properpagestyle
\pagenumbering{arabic}
%


\chapter{General Introduction}\label{ch1}
\vspace{1.5cm}
	{\small \emph{"As far as the laws of mathematics refer to reality, they are not certain; and as far as they are certain, they do not refer to reality."\\
			-Albert Einstein}}
\vspace{1.5cm}
\noindent\hrule

\noindent\emph{The general introduction gives an overview of the basic principles of quantum mechanics and quantum information theory.  In this initial chapter, we summarize many useful concepts and facts, some of which provide a foundation for the material in the rest of the thesis. 
We also include some additional useful components of the elementary topics of quantum mechanics and linear algebra.
The reader may use this chapter as a review before beginning the main part of the thesis; subsequently, it can serve as a convenient reference for notation and definitions that are encountered in the later chapters. Finally, the chapter contains the motivation and plan of the work carried out in the thesis.}
\noindent\hrulefill
\newpage


\section{Mathematical Background}\label{mb}
The basic principles of quantum mechanics have been formulated in concrete mathematical terms. Let us begin with a discussion of the concept of states, state vectors, and their associated mathematical structure. In this section, we will discuss the mathematical formalism, introducing the concept of linear operators and their association with a physical observable. Further, we will recapitulate some basic concepts of linear algebra and a few important results \cite{niel,horn}.

\subsection{Hilbert Space} \label{HS}
\noindent To start with, let us first review some fundamental concepts about vector spaces, which demand the introduction of some basic properties such as a spanning set, linearly independent, dependent vectors, and the basis of a vector space. Next, we discuss some operations on vector spaces like inner product and norm that lead us to the description of a Hilbert space.\\
In mathematics and physics, a vector space (or a linear space) over the field $F$ is a non-empty set that is closed under vector addition and scalar multiplication. The elements of a vector space are called vectors. The $n$-dimensional Euclidean space $\mathbb{R}^n = \{(a_1, a_2, . , ., ., a_n);\; a_i \in \mathbb{R}\}$ is a basic example of a vector space. 
In quantum mechanics, a vector is denoted by $|v\rangle$, where $v$ is the label for the vector. The vector $|v\rangle$ is called a\textbf{ ket }vector.  Corresponding to the set of set of ket vectors, there exists another set of vectors that are dual to the ket vectors. These are called \textbf{bra }vectors and are denoted by the symbol $\langle.|$. For instance, the bra vector corresponding to the ket vector $| v\rangle$ is $\langle v |$. 
{\definition (\textit{Spanning set}) Let $V$ be a vector space. Let $S = \{|v_i\rangle \;|\; |v_i\rangle \in V; \; i = 1, 2, . . ., n\}$ be a non-empty set consisting of vectors in $V$. Then $S$ is called a spanning set of $V$ if any vector $|v\rangle \in V$ can be expressed as a linear combination of $|v_i\rangle$, $i=1, 2, . . .,n$, i.e.,
 $|v\rangle = \sum_{i=1}^{n} a_i |v_i\rangle$ where $a_i \in F$. }
\begin{example*}
	In the vector space $\mathbb{C}^2$, $\{|v_1\rangle, |v_2\rangle\}$ forms a spanning set, where
	\begin{eqnarray}
		|v_1\rangle =
		\begin{pmatrix}
			1\\
			0
		\end{pmatrix}; \quad \quad	|v_2\rangle =
		\begin{pmatrix}
			0\\
			1
		\end{pmatrix}
	\end{eqnarray} since any vector $|v\rangle = \begin{pmatrix}
		a\\
		b
	\end{pmatrix} \in \mathbb{C}^2$ can be expressed as $|v\rangle = a|v_1\rangle + b|v_2\rangle$.
\end{example*} 
{\remark A vector space may have more than one spanning set.}
{\definition (\textit{Linearly independent and dependent vectors})
A set of vectors $\{|v_1\rangle, |v_2\rangle, ., ., ., |v_n\rangle\}$ is linearly independent if there exist scalars $a_i \in F$, $i=1,2,...,n$ such that 
\begin{eqnarray}
	a_1|v_1\rangle + a_2|v_2\rangle + . . . + a_n|v_n\rangle = 0 \implies a_i = 0, \; \forall\; i= 1,2,...,n
\end{eqnarray}
Otherwise, the set is linearly dependent.
\definition(\textit{Basis and dimension}) A linearly independent set of vectors which spans the vector space $V$ is called a basis for $V$. The number of elements in the basis is defined to be the dimension of $V$.\\
A standard basis is conventionally known as \textit{computational basis} in quantum mechanics. The computational basis for two-dimensional complex vector space may be denoted as $\{|0\rangle, |1\rangle\}$, where the basis vectors may be expressed in terms of the column vectors are given as
\begin{eqnarray}
	|0\rangle = \begin{pmatrix}
		0 \\1
	\end{pmatrix};\quad
	|1\rangle = \begin{pmatrix}
	1 \\0
\end{pmatrix};
\end{eqnarray}} 
\subsubsection{Inner Product space}
Let $V$ be a vector space. The inner product between two vectors $|v_1\rangle$ and $|v_2\rangle$ in $V$ generates a complex number and is denoted by $\langle v_1| v_2 \rangle $ in quantum mechanics.  Mathematically, an inner product is a function $(.,.): V \times V \longrightarrow \mathbb{C}$ such that it satisfies the following properties:
\begin{enumerate}
	\item[(i)] \textit{Linearity} in the first argument, i.e.,  $\langle \sum_i\alpha_i v_1 |  v_2 \rangle  = \sum_i\alpha_i \langle v_1 |  v_2 \rangle  $\\
	(Conjugate linearity in the second argument, i.e., $\langle  v_1 |\sum_i\alpha_i  v_2 \rangle  = \sum_i\alpha_i^* \langle v_1 |  v_2 \rangle  $)
	\item[(ii)] \textit{Conjugate symmetry}: $\langle v_1| v_2 \rangle $ =  $\langle v_2| v_1 \rangle^* $
	\item[(iii)]  \textit{Positivity:} $\langle v | v \rangle \geq 0 $ and  $\langle v | v \rangle = 0$ if and only if $v = 0$
\end{enumerate}
where $|v\rangle,|v_1\rangle,|v_2\rangle \in V$; $\alpha_i \in F$, a field of complex numbers; and $*$ represents the complex conjuagte.\\
For example, the inner product of the vectors $|x\rangle = \begin{pmatrix}
	x_1 \\x_2
\end{pmatrix}\in \mathbb{C}^2$ and $|y\rangle = \begin{pmatrix}
y_1 \\ y_2
\end{pmatrix} \in \mathbb{C}^2$ is given as
\begin{eqnarray}
	\langle x | y \rangle  = \begin{pmatrix}
		x_1^* & x_2^* 
	\end{pmatrix} \begin{pmatrix}
	y_1 \\ y_2 \\
\end{pmatrix} = \sum_{i=1}^2 x_i^* y_i
\end{eqnarray}
{\definition (\textit{Inner product space}) A vector space equipped with an inner product is called an inner product space.}

\subsubsection{Normed Vector Space}
 Every inner product space is a\textit{ normed vector space} where norm of a vector $|v\rangle$ is defined as 
\begin{eqnarray}
	||v|| = \sqrt{ \langle v | v \rangle}
\end{eqnarray}
The vector $|v\rangle$ is called a \textit{unit} vector or \textit{normalized} vector if $||v|| =1$. \\ 
Let $V$ be a vector space over the field of complex numbers $\mathbb{C}$. The norm can be defined as a mapping $||.||: V \rightarrow \mathbb{R^+}$ that must satisfy the following properties. For any $v, w \in V$ and $\alpha \in \mathbb{C}$,
 \begin{enumerate}
 	\item[(i)]\textit{ Positivity:} $||v|| \geq 0$, $||v|| = 0$ if and only if $v = 0$
 	\item [(ii)] \textit{Scalar multiplication:} $||\alpha v|| = |\alpha| ||v||$
 	\item  [(iii)] \textit{Triangle inequality:} $||v + w|| \leq ||v|| + ||w||$
 \end{enumerate}

{\definition(\textit{Orthogonal vectors}) Two vectors $|v\rangle$ and $|w\rangle$ are orthogonal if their inner product is zero, i.e., $\langle v|w\rangle = 0$.
\definition(\textit{Orthonormal vectors}) A set of vectors $\{|v_1\rangle, |v_2\rangle, . . .\}$ is orthonormal if each vector in the set is a unit vector, and distinct vectors in the set
	are orthogonal, i.e.,
	\begin{eqnarray}
		\langle v_i|v_j\rangle = \delta_{ij} = \begin{cases}
			1 & i=j \\0 & i \neq j
		\end{cases}
	\end{eqnarray}
\remark There is a useful method, known as the Gram-Schmidt procedure, which can be used to produce an orthonormal basis set for any vector space $V$.}\\
We are now in a position to define Hilbert space and study its properties.
 {\definition (\textit{Hilbert space}) A complete inner product space is called Hilbert space.}\\
 The concept of Hilbert space was introduced by the German mathematician David Hilbert in an abstract sense but later, it played a significant role in quantum mechanics. 
 A Hilbert space $\mathcal{H}$ is complete if every cauchy sequence in $\mathcal{H}$ converges in $\mathcal{H}$. In a finite dimensional vector space, Hilbert space is same as the inner product space while infinite dimensional Hilbert space satisfy additional restrictions that are beyond the inner product spaces. In the present thesis, finite dimensional Hilbert spaces are considered, unless otherwise stated.\\

\subsubsection{Some properties of Hilbert spaces:}
\begin{enumerate}
	\item[(i)] \textit{Parallelogram law}: For any two vectors $|v\rangle$ and $|w\rangle$ in the Hilbert space $\mathcal{H}$, 
	\begin{eqnarray}
		||v + w||^2 + ||v - w||^2 = 2(||v||^2 + ||w||^2)
	\end{eqnarray}
\item[(ii)] If the vectors $|v\rangle$ and $|w\rangle$ in the Hilbert space $\mathcal{H}$ are orthogonal, then
\begin{eqnarray}
	||v + w||^2 = ||v||^2 + ||w||^2
\end{eqnarray}
\item[(iii)] Every orthonormal set in a Hilbert space is linearly independent.
\end{enumerate}
\subsubsection{Cauchy-Schwarz inequality} 
An important geometric fact about Hilbert spaces may be discussed through the \textit{Cauchy-Schwarz inequality}. It may be stated in the following way: For any two vectors $|v\rangle$ and $|w\rangle$ in the Hilbert space $\mathcal{H}$,
\begin{eqnarray}
	|\langle v|w\rangle|^2 \leq \langle v|v\rangle \langle w|w\rangle \equiv ||v|| .||w|| \label{cauchy-sch}
\end{eqnarray}
Equality in (\ref{cauchy-sch}) occurs if and only if $|v\rangle$ and $|w\rangle$ are linearly dependent, i.e., $|v\rangle = c |w\rangle$, for some scalar $c$.\\
Geometrically, the Cauchy-Schwarz Inequality tells us that the magnitude of the inner product of two vectors is bounded by the product of their magnitudes.

\subsubsection{Tensor Product}
Let $\mathcal{H}_A$ and $\mathcal{H}_B$ be two Hilbert spaces of dimension $m$ and $n$, respectively.  Then $\mathcal{H}_A \otimes \mathcal{H}_B$ is an $mn$ dimensional Hilbert space. The elements of $\mathcal{H}_A \otimes \mathcal{H}_B$ are linear combinations
of \textit{tensor products} of elements  of $\mathcal{H}_A$ and $\mathcal{H}_B$. For $|v_i\rangle \in \mathcal{H}_A$,  $|w_i\rangle \in \mathcal{H}_B$ and $a_i \in F$, we have $\sum_i a_i|v_i\rangle \otimes |w_i\rangle \in \mathcal{H}_A \otimes \mathcal{H}_B$. \\
If $|i_A\rangle$ and $|j_B\rangle$ are orthonormal basis for  $\mathcal{H}_A$ and $\mathcal{H}_B$, then  $|i_A\rangle \otimes |j_B\rangle$ is a basis for $\mathcal{H}_A \otimes \mathcal{H}_B$. The vector $|v\rangle \otimes |w\rangle$ can be abbreviated as $|vw\rangle$.  For example, if $\mathcal{H}$ is a two-dimensional space with basis vectors $|0\rangle = \begin{pmatrix}
	1 \\ 0
\end{pmatrix}$ and $|1\rangle = \begin{pmatrix}
	0\\ 1
\end{pmatrix}$, then $\{|00\rangle, |01\rangle, |10\rangle, |11\rangle\}$ forms the basis of $\mathcal{H} \otimes \mathcal{H}$, where the basis vectors can be described as follows.
\begin{eqnarray}
	|00\rangle = |0\rangle \otimes |0\rangle = \begin{pmatrix}
		1 \\ 0 
	\end{pmatrix} \otimes \begin{pmatrix}
		1 \\ 0
	\end{pmatrix} = \begin{pmatrix}
		1 \\ 0 \\0 \\ 0 
	\end{pmatrix};\;\;\quad
	|01\rangle = |0\rangle \otimes |1\rangle = \begin{pmatrix}
		1 \\ 0 
	\end{pmatrix} \otimes \begin{pmatrix}
		0 \\ 1
	\end{pmatrix} = \begin{pmatrix}
		0\\ 1 \\0 \\ 0 
	\end{pmatrix};\nonumber\\
	|10\rangle = |1\rangle \otimes |0\rangle = \begin{pmatrix}
		0 \\ 1 
	\end{pmatrix} \otimes \begin{pmatrix}
		1 \\ 0
	\end{pmatrix} = \begin{pmatrix}
		0 \\ 0 \\1 \\ 0 
	\end{pmatrix};\;\;\quad
	|11\rangle = |1\rangle \otimes |1\rangle = \begin{pmatrix}
		0\\ 1 
	\end{pmatrix} \otimes \begin{pmatrix}
		0 \\ 1
	\end{pmatrix} = \begin{pmatrix}
		0 \\ 0 \\0 \\ 1
	\end{pmatrix}
\end{eqnarray}

\textit{Properties of Tensor Product}: Let $V$ and $W$ be two vector spaces. Then the following properties hold for tensor product.
\begin{itemize}
	\item [(i)] For $|v\rangle \in V$,  $|w\rangle \in W$ and for any scalar $c$
	\begin{eqnarray}
		c(|v\rangle \otimes |w\rangle) = 	(c|v\rangle) \otimes |w\rangle = |v\rangle \otimes (c|w\rangle)
	\end{eqnarray}
	\item [(ii)]For $|v_1\rangle,|v_2\rangle \in V$ and $|w\rangle \in W$ 
	\begin{eqnarray}
		(|v_1\rangle + |v_2\rangle) \otimes |w\rangle = 	|v_1\rangle  \otimes |w\rangle +  |v_2\rangle \otimes |w\rangle
	\end{eqnarray}
	\item [(iii)]For $|v\rangle \in V$ and $|w_1\rangle,|w_2\rangle  \in W$ 
	\begin{eqnarray}
		|v\rangle \otimes	(|w_1\rangle + |w_2\rangle)  = 	|v\rangle  \otimes |w_1\rangle +  |v\rangle \otimes |w_2\rangle
	\end{eqnarray}
\end{itemize}

\subsection{Linear Operator}
\noindent Operators play a fundamental role in quantum mechanics since they provide the mathematical language required to extract information about the physical properties of a quantum system. The required mathematical language can be expressed in terms of the eigenvalues of the operator. In this subsection, we discuss linear operators, and bounded operators and also describe the matrix representation of operators.
{\definition \textit{(Linear Operator):} A linear operator between two vector spaces $V$ and $W$ is defined as a function $T: V \longrightarrow W$ that preserves linearity, i.e.,
\begin{eqnarray}
	T\left(\sum_i a_i |v_i\rangle\right) = \sum_i a_i T(|v_i \rangle)
\end{eqnarray}}
Consider two vectors $|v\rangle$ and $|w\rangle$ in the two-dimensional vector space $\mathbb{C}^2$ as
\begin{eqnarray}
	|v\rangle = \begin{pmatrix}
		x_1 \\ x_2
	\end{pmatrix} \quad \text{and} \quad
	|w\rangle = \begin{pmatrix}
	y_1 \\ y_2
\end{pmatrix}
\end{eqnarray}
such that $T|v\rangle = |w\rangle$. This means that the operator $T$ when applied to the vector $|v\rangle$ changes it to another vector $|w\rangle$ of the same vector space. Physically that means, a physical system can be changed to another by the application of an operator in the Hilbert space of the system. Identity operator $I$ defined by $I|v\rangle = |v\rangle$ and zero operator $O$ defined by $O|v\rangle = 0$ are two important linear operators. The zero operator maps all the vectors of the vector space to the zero vector.\\
\textit{Composition of linear operators:} Let $T_1: U \longrightarrow V$ and $T_2: V \longrightarrow W$ be two linear operators. Then the composition of $T_2$ with $T_1$ is defined as $T_2 T_1 (|v\rangle) = T_2 ( T_1 |v\rangle) $ for all $|v\rangle \in U$.

\subsubsection{Matrix Representation of an Operator} 
Consider a linear operator $T:V \longrightarrow W$ between two vector spaces $V$ and $W$. Suppose $\{|v_1\rangle, |v_2\rangle, . . , |v_m\rangle\}$ and $\{|w_1\rangle, |w_2\rangle, . . , |w_n\rangle\}$ form basis for $V$ and $W$, respectively. Then for each $|v_j\rangle$ where $j = 1$ to $m$, there exist complex numbers $a_{1j}, a_{2j}, . . ., a_{nj}$ such that
\begin{eqnarray}
	T(|v_j\rangle) = \sum_i a_{ij}  |w_j\rangle
\end{eqnarray}
The matrix whose entries are the values $a_{ij}$ is called the matrix representation of the operator $T$. Hence, we use the terms "operator" and "matrix" interchangeably throughtout this thesis.
\subsubsection{Hilbert Schmidt inner product on operators}
The set $L(\mathcal{H})$ of all linear operators on a Hilbert space $\mathcal{H}$ forms a vector space. If $A, B \in L(\mathcal{H})$, then the inner product on $L(\mathcal{H}) \times L(\mathcal{H})$ may be defined by 
\begin{eqnarray}
	\langle A | B \rangle = Tr[A^\dagger B] \label{HS-ip}
\end{eqnarray}
The inner product defined in (\ref{HS-ip}) is known as \textit{Hilbert-Schmidt} or \textit{trace} inner product.
\subsubsection{Bounded operator} 
Let $\mathcal{H}_1$ and $\mathcal{H}_2$ be two Hilbert spaces. An operator $T: \mathcal{H}_1 \longrightarrow \mathcal{H}_2$ is bounded if it maps bounded subsets of $\mathcal{H}_1$ to bounded subsets of $\mathcal{H}_2$. Any linear operator defined on a finite dimensional normed linear space is bounded. 
\subsubsection{Tensor product of linear operators}
 Let $A$ and $B$ be linear operators on $V$ and $W$, respectively. Then $A \otimes B$ is a well defined linear operator on $V \otimes W$.  For $|v_i\rangle \in V$,  $|w_i\rangle \in W$,
\begin{eqnarray}
	(A \otimes B)(\sum_i a_i|v_i\rangle \otimes |w_i\rangle) = \sum_i a_i A  |v_i\rangle \otimes B |w_i\rangle
\end{eqnarray}
 Furthermore, for any four operators $A$, $B$, $C$, and $D$, we have
\begin{eqnarray}
	(A \otimes B) (C \otimes D) = AC \otimes BD
\end{eqnarray}

\subsection{Basic Concepts of Linear Algebra}
\noindent Linear algebra is the study of linear operations on vector spaces. A good understanding of quantum mechanics will be possible if one has a solid grasp of elementary linear algebra. 
Let us review some basic concepts from linear algebra which are used in the study of quantum mechanics.
\subsubsection{Matrices}
 A matrix is a rectangular array of elements arranged in rows and columns.
An $m\times n$ matrix with entries as complex numbers can be expressed as $A = [a_{ij}] \in M_{m,n}(\mathbb{C})$ where $1\leq i\leq m$ and $1\leq j\leq n$. Here $m$ denotes the number of rows and $n$ denotes the number of columns. We now discuss a few elementary operations defined on matrices.  
\begin{itemize}
	\item [(i)] \textbf{Trace:} Trace of a matrix is defined as the sum of its diagonal elements. If $A$ is a square matrix of order $n$, i.e., $A = [a_{ij}] \in M_n(\mathbb{C})$, then trace of $A$ is given as $Tr[A] = \sum_{i=1}^n a_{ii}$.
Some useful properties of trace are given below:
\begin{enumerate}
		\item[(a)] \textit{Linearity property:} If $A$ and $B$ are two linear operators,
	\begin{eqnarray}
		Tr[A+B] = Tr[A] + Tr[B] \;\; \text{and} \;\; Tr[cA] = cTr[A] \;\; \text{for any scalar $c$}
	\end{eqnarray}
	\item[(b)] \textit{Cyclic property:} If $A$ and $B$ are two linear operators then
	\begin{eqnarray}
		Tr[AB] = Tr[BA]
	\end{eqnarray}
	\item[(c)] \textit{Invariance under unitary transformation:} For any matrix $A$ and unitary transformation $U$, we have
	\begin{eqnarray}
		Tr[U A U^{\dagger}] = Tr[A]
	\end{eqnarray} 
	\item[(d)] \textit{Expectation value of the operator:} For any operator $A$ and the unit vector $|u\rangle$, we have
	\begin{eqnarray}
		Tr[A |u \rangle \langle u|] = \sum_i \langle i |A |u \rangle \langle u|i\rangle = \langle u|A|u\rangle
	\end{eqnarray}
	where $|i\rangle$ denotes the orthonormal basis.
	\item [(e)] \textit{Trace of tensor product:} For any two operators $A$ and $B$, we have
	\begin{eqnarray}
		Tr[A \otimes B] = Tr[A] Tr[B]
	\end{eqnarray}
\end{enumerate}
\item [(ii)] \textbf{Transpose of a matrix:} Transpose of a matrix is obtained after interchanging rows and columns of the matrix. If $A = [a_{ij}] \in M_{m,n}(\mathbb{C})$, the transpose of $A$, denoted by $A^T \in M_{n,m}(\mathbb{C})$ whose $(i, j)^{th}$ entry is $a_{ji}$, i.e., $A^T = [a_{ji}]$.
\item[(iii)] \textbf{Adjoint of a matrix:} Adjoint or the conjugate transpose of a matrix $A \in M_{m,n}(\mathbb{C})$, is denoted by $A^\dagger  \in M_{n,m}(\mathbb{C})$ and it is defined as $A^\dagger = [a_{ji}^*]$.
	\item[(iv)] \textbf{Rank of a matrix:} Rank of a matrix $A \in M_{m,n}(\mathbb{C})$ is defined as the number of linearly independent rows (or columns) of $A$.
\end{itemize}

\subsubsection{Eigenvalues and Eigenvectors} Let $A \in M_n (\mathbb{C})$. A scalar $\lambda \in \mathbb{C}$ such that $A|v\rangle = \lambda |v\rangle$ for some non-zero vector $|v\rangle \in \mathbb{C}^n$, is an \textit{eigenvalue} of $A$. The non-zero vector $|v\rangle$ corresponding to the eigenvalue $\lambda$ is referred to as an \textit{eigenvector} of the matrix $A$.
{\definition\textit{Spectrum of a matrix:} The set of all eigenvalues of the matrix $A$ is called the spectrum of $A$ and is denoted by $\sigma(A)$.
\definition \textit{Determinant:} Determinant of a matrix is defined as the product of its eigenvalues. For any matrix $A \in M_n (\mathbb{C})$ with eigenvalues $\lambda_{1}, \lambda_2, . . .,\lambda_n$, its determinant is given as $det(A)=\prod_{i=1}^n \lambda_i$
\definition\textit{Singular matrix:} If atleast one eigenvalue of a matrix $A$ is zero, then $A$ is said to be a singular matrix. Otherwise the matrix $A$ is said to be non-singular.
\remark Trace of a matrix can also be defined as the sum of its eigenvalues.}
\subsubsection{Characteristic Equation}
The characteristic polynomial of a matrix $A \in M_n (\mathbb{C})$ is defined as
\begin{eqnarray}
p_n (\lambda) =	det(A - \lambda I) = \lambda^n - Tr[A] \lambda^{n-1} + . . . (-1)^n det(A) \label{charpoly}
\end{eqnarray}
 The equation $p_n (\lambda) = 0$ is referred to as the characteristic equation of $A$. The solutions of the characteristic equation give the eigenvalues of $A$. The set of vectors corresponding to the eigenvalue $\lambda_i$ ($i=1, 2,....,n$) forms an \textit{eigenspace}, which is also a subspace of the vector space on which $A$ acts.

\subsubsection{Diagonalization}
Any matrix $A \in M_n(\mathbb{C})$ is said to be diagonalizable if it is similar to a diagonal matrix, i.e, if there exists an invertible matrix 
$P$ and a diagonal matrix 
$D$ such that $P^{-1}AP=D$ (or equivalently
$A=PDP^{-1}$).
The eigenvectors corresponding to the eigenvalues of the matrix $A$ forms the column vectors of the matrix $P$.
{\theorem $A \in M_n(\mathbb{C})$ is diagonalizable if it has $n$ distinct eigenvalues.}
{\remark Having distinct eigenvalues is sufficient for diagonalizability, but of course, it is not
	necessary. A square matrix with non-distinct eigenvalues may or may not be diagonalizable}.
 \subsubsection{Singular value decomposition}
  Let $A \in M_{m,n}{(\mathbb{C})}$. Then there exist unitary matrices $U \in M_m(\mathbb{C})$, $V \in M_n(\mathbb{C})$ and real numbers $\sigma_1 \geq \sigma_{2} \geq \sigma_{3} \geq . . . \geq \sigma_p$ (where $p = min\{m,n\}$), such that
\begin{eqnarray}
A = U \Sigma V	 \label{svd}
\end{eqnarray}
where $\Sigma =[\Sigma]_{ij} \in M_{m,n}{(\mathbb{C})}$ has entries $\sigma_j$ for $i=j$, and zero for $i \neq j$. \\
The numbers $\sigma_1, \sigma_2, . . ., \sigma_p$ are called the \textit{singular values} of $A$. 
Alternatively,  the singular value of $A$ can also be defined as the square root of the eigenvalue of the positive semi definite matrix $A^\dagger A$ (or equivalently of $A A^\dagger$).
If $\lambda_i$ $i=1,2,...,n$ denote the eigenvalue of $A^\dagger A$ (or $AA^\dagger$), then the singular value corresponding to the eigenvalue $\lambda_i$ is given by
\begin{eqnarray}
	\sigma_i = \sqrt{\lambda_i}, \;\; i = 1, 2,....,n
\end{eqnarray}
{\remark Let $A$ be a $m \times n$ matrix with complex entries. Then $rank(A)$ is equal to the number of non-zero singular values of $A$. }
\subsubsection{A few types of operators and their properties}
\noindent There are many operators in the literature of operator theory that may play a vital role in quantum mechanics. In this thesis, we present some of these, such as the Hermitian operator, unitary operator, positive and positive semi-definite operator.
\begin{enumerate}
	\item[(i)] \textbf{Hermitian operator:}
	An operator $A$ is said to be Hermitian or self-adjoint if it satisfies: $A^\dagger=A$. The eigenvalues of a Hermitian matrix are real. Thus, every observable in quantum mechanics is represented by a Hermitian matrix.
	\item[(ii)]	\textbf{Unitary operator:}
	An operator $U$ is said to be unitary if $U^\dagger U = UU^\dagger= I$. 
	Unitary operators are important
	because they preserve inner products between vectors, i.e., for any two vectors $|v\rangle$ and $|w\rangle$ in a vector space $V$,
	\begin{eqnarray}
		(U|v\rangle, U|w\rangle) = \langle v |U^\dagger U| w\rangle = \langle v |I| w\rangle =\langle v | w\rangle
	\end{eqnarray}
	All eigenvalues of a unitary matrix have modulus one, i.e., eigenvalues of a unitary matrix 
	can be written in the form $e^{i \theta}$ for some real $\theta$. Unitary matrices are important because they take part in the time evolution of the quantum state.
	{\remark If $\{|v_i\rangle\}$ and $\{|w_i\rangle\}$ for $i=1,2,...n$ are any two orthonormal bases of a vector space $V$, then the operator $U$ defined by $U= \sum_{i=1}^n |v_i\rangle \langle w_i|$ is a unitary operator.}
\item[(iii)]	\textbf{Positive operator:}
	An operator $A$ is defined to be a positive operator if for any vector $|v\rangle \in V$, $\langle v| A|v\rangle$ is a real non-negative number.
	$A$ is \textit{positive definite} if  $\langle v| A|v\rangle$ is strictly greater than zero for all $|v\rangle \neq 0$. Equivalently, a matrix $A$ is positive definite if its all eigenvalues are strictly positive.\\
	 The matrix $A$ is \textit{positive semi-definite} if its eigenvalues are non-negative. A quantum state is represented by the positive semi-definite operator. 
\end{enumerate}

\subsubsection{Special Matrices}
We now discuss about some particular matrices such as Pauli matrices, Gell-Mann matrices and the generalized Gell-Mann matrices, which may be useful in quantum information theory.
\begin{enumerate}
\item[(i)] \textbf{Pauli Matrices:}
Pauli matrices are useful in the study of quantum computation and quantum information. They are defined as follows.
\begin{eqnarray}
	\sigma_0  = I = \begin{pmatrix}
		1 &0 \\
		0 & 1 
	\end{pmatrix};\;\;\;\;\quad
 \sigma_x = \begin{pmatrix}
		0 &1 \nonumber\\
		1 & 0 
	\end{pmatrix}\\ \sigma_y \begin{pmatrix}
		0 & -i \\
		i & 0 
	\end{pmatrix};\;\;\quad
	 \sigma_z = \begin{pmatrix}
		1 &0 \\
		0 & -1 
	\end{pmatrix} \label{pauli}
\end{eqnarray}
Properties of Pauli matrices:
\begin{enumerate}
	\item[(a)] Pauli matrices are Hermitian and unitary.
	\item[(b)]  $Tr[\sigma_x] = Tr[\sigma_y] = Tr[\sigma_z]=0$
	\item[(c)] $\sigma_x^2 = \sigma_y^2 =\sigma_z^2 =I$
	\item [(d)] $\sigma_x \sigma_y = i \sigma_z;\; \sigma_y \sigma_z = i \sigma_x;\; \sigma_y \sigma_x = -i \sigma_z$
	\item[(e)] The Pauli matrices $\sigma_x, \sigma_y, \sigma_z$ form a basis of the two-dimensional space $M_2(\mathbb{C})$.
\end{enumerate}
\item[(ii)] \textbf{Gell-Mann Matrices:} \label{GMmatrices}
The eight Gell-Mann (GM) matrices in a three-dimensional space are defined as follows.
\begin{enumerate}
	\item[(a)] Three symmetric GM matrices
	\begin{eqnarray} \Lambda_1=
	\begin{pmatrix}
		0 & 1 & 0 \\
		1 & 0 & 0 \\
		0 & 0 & 0
	\end{pmatrix};\quad \Lambda_2 = \begin{pmatrix}
	0 & 0 & 1\\
	0 & 0 & 0\\
	1 & 0 & 0
\end{pmatrix};\quad  \Lambda_3 =  \begin{pmatrix}
0 & 0 & 0 \\
0 & 0 & 1 \\
0 & 1 & 0
\end{pmatrix}
\end{eqnarray}
	\item [(b)] Three anti-symmetric GM matrices
	\begin{eqnarray}  \Lambda_4 = 
		\begin{pmatrix}
			0 & -i & 0 \\
			i & 0 & 0 \\
			0 & 0 & 0
		\end{pmatrix};\quad  \Lambda_5 = \begin{pmatrix}
			0 & 0 & -i\\
			0 & 0 & 0\\
			i & 0 & 0
		\end{pmatrix};\quad  \Lambda_6 =  \begin{pmatrix}
			0 & 0 & 0 \\
			0 & 0 & -i \\
			0 & i & 0
		\end{pmatrix}
	\end{eqnarray}
	\item [(c)] Two diagonal GM matrices
	\begin{eqnarray} \Lambda_7 = 
		\begin{pmatrix}
			1 & 0 & 0 \\
			0 & -1 & 0 \\
			0 & 0 & 0
		\end{pmatrix};\quad  \Lambda_8 =  \frac{1}{\sqrt{3}}\begin{pmatrix}
			1 & 0 & 0\\
			0 & 1 & 0\\
			0 & 0 & -2
		\end{pmatrix}
	\end{eqnarray}
\end{enumerate}
GM matrices possess the property Hermitianity and unitarity. The GM matrices $\{\Lambda_i,\; i=1,2,...,8\}$ satisfies $Tr[\Lambda_i] = 0$ and $Tr[\Lambda_i \Lambda_j] = 2 \delta_{ij}$ where $i=1,2,...,8$.
\item[(iii)] \textbf{Generalized Gell-Mann matrices (GGM):} 
The generalized Gell-Mann matrices (GGM) are higher-dimensional extensions of the Pauli matrices (for qubits) and the Gell-Mann matrices (for qutrits). 
There are total $d^2 - 1$ GGM for a $d$-dimensional space.
All GGM are Hermitian and traceless. They are orthogonal and form a basis known as the generalized Gell-Mann matrix
Basis (GGB) \cite{bert2008}. The eight generalized Gell-Mann (GGM) matrices in a four-dimensional space are defined as follows.
\begin{enumerate}
	\item[(a)] Six symmetric GGM matrices
	\begin{eqnarray}&&G_1=
		\begin{pmatrix}
			0 & 1 & 0 &0\\
			1 & 0 & 0 & 0\\
			0 & 0 & 0 & 0\\
			0 & 0 & 0 & 0
		\end{pmatrix};\quad G_2=\begin{pmatrix}
			0 & 0 & 1 & 0\\
			0 & 0 & 0 & 0\\
			1 & 0 & 0 & 0 \\
				0 & 0 & 0 & 0
		\end{pmatrix};\quad G_3=\begin{pmatrix}
			0 & 0 & 0 & 1\\
			0 & 0 & 0 & 0\\
			0 & 0 & 0 & 0\\
				1 & 0 & 0 & 0\\
		\end{pmatrix}\nonumber\\
&&G_4=
	\begin{pmatrix}
		0 & 0 & 0 & 0\\
		0 & 0 & 1 & 0\\
		0 & 1 & 0 & 0\\
		0 & 0 & 0 & 0\\
	\end{pmatrix};\quad
	G_5=
\begin{pmatrix}
	0 & 0 & 0 & 0\\
	0 & 0 & 0 & 1\\
	0 & 0 & 0 & 0\\
	0 & 1 & 0 & 0\\
\end{pmatrix};\quad	
	G_6=
\begin{pmatrix}
	0 & 0 & 0 & 0\\
	0 & 0 & 0 & 0\\
	0 & 0 & 0 & 1\\
	0 & 0 & 1 & 0\\
\end{pmatrix} \nonumber
\end{eqnarray}

	\item [(b)] Six anti-symmetric GGM matrices
	\begin{eqnarray}G_7=
		\begin{pmatrix}
			0 & -i & 0 &0\\
			i & 0 & 0 & 0\\
			0 & 0 & 0 & 0\\
			0 & 0 & 0 & 0
		\end{pmatrix};\quad G_8=\begin{pmatrix}
			0 & 0 & -i & 0\\
			0 & 0 & 0 & 0\\
			i & 0 & 0 & 0 \\
			0 & 0 & 0 & 0
		\end{pmatrix};\quad G_9=\begin{pmatrix}
			0 & 0 & 0 & -i\\
			0 & 0 & 0 & 0\\
			0 & 0 & 0 & 0\\
			i & 0 & 0 & 0\\
		\end{pmatrix}&&\nonumber\\
		G_{10}=
		\begin{pmatrix}
			0 & 0 & 0 & 0\\
			0 & 0 & -i & 0\\
			0 & i & 0 & 0\\
			0 & 0 & 0 & 0\\
		\end{pmatrix};\quad
		G_{11}=
		\begin{pmatrix}
			0 & 0 & 0 & 0\\
			0 & 0 & 0 & -i\\
			0 & 0 & 0 & 0\\
			0 & i & 0 & 0\\
		\end{pmatrix};\quad	
		G_{12}=
		\begin{pmatrix}
			0 & 0 & 0 & 0\\
			0 & 0 & 0 & 0\\
			0 & 0 & 0 & -i\\
			0 & 0 & i & 0\\
		\end{pmatrix}&& \nonumber
	\end{eqnarray}
	\item [(c)] Three diagonal GGM matrices
	\begin{eqnarray} G_{13}=
		\begin{pmatrix}
			1 & 0 & 0 & 0\\
			0 & -1 & 0 & 0\\
			0 & 0 & 0 & 0 \\
			0 & 0 & 0 & 0
		\end{pmatrix}\quad G_{14}=\frac{1}{\sqrt{3}}\begin{pmatrix}
			1 & 0 & 0&0\\
			0 & 1 & 0 &0\\
			0 & 0 & -2 &0\\
			0 & 0 & 0 & 0
		\end{pmatrix}\quad G_{15}=\frac{1}{\sqrt{6}}\begin{pmatrix}
		1 & 0 & 0&0\\
		0 & 1 & 0 &0\\
		0 & 0 & 1 &0\\
		0 & 0 & 0 & -3
	\end{pmatrix} \nonumber
	\end{eqnarray}
\end{enumerate}
\end{enumerate}

\subsubsection{Partial Transpose} \label{pptnptmatrix}
Let $A=[A_{i,j}]_{i,j=1}^{n} \in M_n(M_k (\mathbb{C}))$ be an $n \times n$ matrix written in the block matrix form where each block $A_{i,j}$ represents a $k \times k$ matrix with complex entries. The partial transpose of the matrix $A$, denoted by $A^{\tau}$, is given by
\begin{eqnarray}
	A^{\tau}=[{A^{T}_{i,j}}]_{i,j=1}^{n} \label{atau}
\end{eqnarray}
where $T$ is the usual matrix transpose.\\
\textbf{PPT matrix:} A matrix $A \in M_n(\mathbb{C})$ has positive partial transpose (PPT), if its partial transposed matrix $A^\tau$ has no negative eigenvalues, i.e., it forms a positive semidefinite matrix. Such a matrix is called positive partial transpose (PPT) matrix. \\
\textbf{NPT matrix:} If the partial transposed matrix $A^\tau$ has atleast one negative eigenvalue, then it is called a negative partial transpose (NPT) matrix.
\begin{example*}
	Consider the matrix $\rho_f \in M_4(\mathbb{C})$, $0 \leq f \leq 1$,  and its partially transposed matrix $\rho_f^{\tau}$
\begin{eqnarray}
	\rho_f = \begin{pmatrix}
		\frac{1+2f}{6} & 0 & 0 & \frac{4f - 1}{6}\\
		0& \frac{1-f}{3} & 0 & 0\\
		0 & 0 & \frac{1-f}{3} & 0\\
		 \frac{4f - 1}{6} &0 & 0 & \frac{1+2f}{6}
	\end{pmatrix};\quad 
\rho_f^{\tau} = \begin{pmatrix}
	\frac{1+2f}{6} & 0 & 0 & 0\\
	0& \frac{1-f}{3} &  \frac{4f - 1}{6}& 0\\
	0 & \frac{4f - 1}{6} & \frac{1-f}{3} & 0\\
	0 &0 & 0 & \frac{1+2f}{6}
\end{pmatrix}
\end{eqnarray}
The minimum eigenvalue of  $\rho_f^{\tau}$ given by
$\lambda_{min} (\rho_f^\tau) = \frac{1-2f}{2} > 0$ when $0 < f \leq \frac{1}{2}$.
Hence, $\rho_f$ is a PPT matrix when $0 < f \leq \frac{1}{2}$ and it represents an NPT matrix when $\frac{1}{2} < f \leq 1$.
\end{example*}
\subsection{A few important results in Linear Algebra} \label{sec-laresults}
We discuss here few important definitions and results from linear algebra that will be needed in the later chapters of the present thesis.
\subsubsection{A few definitions}
{\definition \label{comm} (\textit{Commutator:}) The commutator between two operators $A$ and $B$ is defined as
\begin{eqnarray}
	[A,B] = AB - BA \label{commutator}
\end{eqnarray}
If $AB = BA$, i.e., $[A,B] =0$, we say $A$ commutes with $B$. Two operators $A$ and $B$ are \textit{compatible} if they commute, i.e., $[A,B] =0$ . If  $[A,B] \neq 0$, then the operators are \textit{incompatible} \cite{niel, snbiswas}.
\definition (\textit{Anti-commutator:}) The anti-commutator of two operators $A$ and $B$ is defined as
\begin{eqnarray}
	\{A,B\} = AB + BA
\end{eqnarray}
If $AB = - BA$, i.e., $\{A,B\}=0$, we say $A$ anti-commutes with $B$ \cite{niel}. 
\definition \label{posmap} (\textit{Positive map:}) A linear map $\phi : A\rightarrow B$ is called positive map if it maps positive elements of $A$ to positive elements of $B$ \cite{stormer}.
\definition (\textit{$k$-positive map:}) Let $\phi : A\rightarrow B$ be a linear map, and let $k \in \mathbb{N}$, the set of natural numbers. Then $\phi$ is \textit{k}-positive if $\phi \otimes i_k : A \otimes M_k \rightarrow B \otimes M_k$ is positive, $i_k$ denotes the
identity map on $M_k$.
\definition \label{cp map} (\textit{Completely Positive map:}) A linear map $\phi$ is said to be completely positive if $\phi$ is $k$-positive $\forall$ $k \in \mathbb{N}$ \cite{stormer}.
\definition \label{choimatrix}(\textit{Choi matrix:}) Let $\Phi: M_{d_1}(\mathbb{C}) \rightarrow M_{d_2}(\mathbb{C})$ be a linear map. The choi matrix for $\Phi$ is defined by the operator
\begin{equation}
	C_{\Phi} = \sum_{i,j}^{d_1,d_2} e_{ij} \otimes \Phi(e_{ij}) \label{choi}
\end{equation}
where $\{e_{ij}\}_{i,j=1}^{n}$ denotes the matrix units \cite{stormer}.
\definition \label{cjiso}(\textit{Choi-Jamiolkowski isomorphism:})
The correspondence between the map $\Phi$ and the choi matrix $C_{\Phi}$ given by the linear, one-one and onto map $\Phi\rightarrow C_{\Phi}$, which is called Choi-Jamiolkowski isomorphism \cite{jamio}.
\definition \label{tracenorm}(\textit{Trace norm:}) The trace norm $||A||_1$ of an $m \times n$ matrix $A$ is defined as the  sum of singular values of $A$ \cite{horn}.
\begin{eqnarray}
	||A||_1 = Tr[\sqrt{A^\dagger A}] = \sum_{i=1}^{n} \sigma_{i}(A)
	\label{trnorm}
\end{eqnarray}
where $\sigma_{i}(A),i=1,2,...n$ denotes the singular values of $A$.
\definition \label{frobnorm} (\textit{Frobenius norm:}) The norm $||A||_2$ is called Hilbert-Schmidt norm or Frobenius norm of $A$ and is equal to the sum of the squares of the singular values of $A$ \cite{horn}.
\begin{eqnarray}
	||A||_2^{2} = Tr[A^\dagger A] = \sum_{i=1}^{n} \sigma_{i}^{2}(A)
	\label{frob}
\end{eqnarray}}
\subsubsection{A few results}
{\result \label{res-amgmineq}  For any list of $n$ non-negative real numbers $x_1, x_2, . . . , x_n$, the arithmetic mean of $x_i$'s is greater than or equal to their geometric mean, i.e.,
	\begin{eqnarray}
		\frac{1}{n}{\sum_{i=1}^n x_i} \geq (\prod_{i=1}^n x_i)^{1/n} \label{eq-amgm}
	\end{eqnarray}
The inequality (\ref{eq-amgm}) is called AM-GM inequality where equality holds if and only if $x_1 = x_2 = ... x_n$.
\result \label{res-trnorm} For any matrix $A \in M_n(\mathbb{C})$, we have \cite{zou}
\begin{eqnarray}
	|Tr[A]| \leq ||A||_1 
	\label{modnorm1}
\end{eqnarray}
\result \label{res-weyl} (\textit{Weyl's Inequality \cite{horn}:}) Let $A$ and $B$ be Hermitian matrices in $M_n(\mathbb{C})$. Then the following inequality holds
\begin{eqnarray}
	\lambda_{min}(A) + \lambda_{min}(B) \leq \lambda_{min}(A+B) \label{weyl}
\end{eqnarray}
where $\lambda_{min}(.)$ denotes the minimum eigenvalue of the corresponding matrix.
\result \label{res-lmintr} For any two $n \times n$ Hermitian matrices $A$ and $B$, the following inequality holds \cite{jb}
\begin{eqnarray}
	\lambda_{min}(A) Tr[B] \leq Tr[AB] \leq \lambda_{max}(A) Tr[B] \label{jb}
\end{eqnarray}
\result \label{res-n1n2} For any matrix $A \in M_n(\mathbb{C})$ with rank $k$, we have \cite{zou}
\begin{eqnarray}
	||A||_1 \leq \sqrt{k} ||A||_2  \label{normineq}
\end{eqnarray}
{\result \label{res-yang} For any two positive semi-definite matrices $A$ and $B$ in $M_n(\mathbb{C})$, the following matrix inequality holds \cite{yang},
\begin{eqnarray}
	(Tr[AB]^q) \leq (Tr[A])^q (Tr[B])^q
	\label{ineq2}
\end{eqnarray} 
where $q$ is a positive integer.}
\result Let $A \in M_n(\mathbb{C})$ be any matrix with real eigenvalues and $\lambda_{min}^{lb}[A]$ denotes the lower bound of the minimum eigenvalue of $A$. Then \cite{wolko}
\begin{eqnarray}
	\lambda_{min}^{lb}[A] \leq \lambda_{min}[A] \
\end{eqnarray}
where the lower bound is given by 
\begin{eqnarray}
	\lambda_{min}^{lb}[A] =	\frac{Tr[A]}{n} - \sqrt{(n-1)\left(\frac{Tr[A^2]}{n}- (\frac{Tr[A]}{n})^2\right)} \label{lb}
\end{eqnarray}
\result Let $A$ be a complex $n \times n$ matrix with real eigenvalues $\lambda_i$ such that $\lambda_1 \geq \lambda_2 . \; . \; . \geq \lambda_n$. If $T_k$ denotes the $k^{th}$ order moment of $A$, i.e., $T_k = Tr[A^k]$ then, one 
has \cite{gupta} 
\begin{eqnarray}
	\lambda_1 \geq f(T_1, T_2, T_3) := \frac{T_1}{n} + \frac{b+ \sqrt{b^2 + 4 a^3}}{2a} \label{maxlb}
\end{eqnarray}
where
$$ a= \frac{T_2}{n} - \left(\frac{T_1}{n}\right)^2\quad \text{and}\quad
 b= \frac{1}{n^3} (n^2 T_3 -3 n T_1 T_2 + 2 T_1^3) $$.

\result  Let $A \in M_n(\mathbb{C})$ be a positive semi-definite matrix with eigenvalues $\lambda_1 \geq \lambda_2 . \; . \; . \geq \lambda_n$ and $T_k = Tr[A^k]$. Then the upper bound of the largest eigenvalue of $A$ is given as
\begin{eqnarray}
	\lambda_1 \leq g(T_1, T_2, T_3) \label{gtb}
\end{eqnarray}
where the upper bound $g(T_1, T_2, T_3)$ is the largest root of the following cubic equation \cite{sharma}:
\begin{eqnarray}
	T_1 x^3 -2 T_2 x^2 + T_3 x + T_2^2 - T_1 T_3=0  \label{cubiceq}
\end{eqnarray}
The explicit form of $g(T_1,T_2,T_3)$ is given by
\begin{eqnarray}
	g(T_1,T_2,T_3):=
	\frac{1}{6T_1} (4 T_2 + \frac{2\times 2^{1/3} r}{(p+\sqrt{q})^{1/3}} + 2^{2/3} (p+\sqrt{q})^{1/3}) \nonumber\\ \label{gt}
\end{eqnarray}
where 	$p = -27 T_1^2 T_2^2 +16 T_2^3 + 27 T_1^3 T_3 - 18 T_1 T_2 T_3;\;\; r=  4T_2^2 -3 T_1 T_3;\;\; q= p^2 - 4 r^3$

\result \label{res-choicp} Let $\Phi: M_{d_1}(\mathbb{C}) \rightarrow M_{d_2}(\mathbb{C})$ be a linear map and $C_{\Phi}$ be the choi matrix of $\Phi$.  Then the following conditions are equivalent  \cite{stormer}.
\begin{itemize}
	\item[1.] $C_{\Phi}$ is positive semi-definite.
	\item[2.] $\Phi$ is completely positive.
	\item[3.] $\Phi(A)= \sum_{i=1}^{k} V^{*}_i A V_i$ with $V_i:M_{d_2}(\mathbb{C}) \rightarrow M_{d_1}(\mathbb{C})$, a linear map and $k \leq min\{d_1, d_2\}$
\end{itemize} \label{thmchoi}

\result \label{rescp} Consider a map $\Phi: M_n(\mathbb{C}) \longrightarrow M_m(\mathbb{C})$. Let $A \in M_n$ and $B \in M_m$ be Hermitian matrices such that $\Phi(A)=B$. Then the map $\Phi$ is completely positive iff there exist non-negative real numbers $\gamma_1$ and $\gamma_2$ such that the following conditions hold \cite{poon}:
\begin{eqnarray}
	\lambda_{min}[B]&\geq& \gamma_1 \lambda_{min}[A]\\
	\lambda_{max}[B]&\leq& \gamma_2 \lambda_{max}[A]
\end{eqnarray}}

\section{Quantum Mechanics}\label{QM}
 Quantum mechanics is the the complete description of the world known to us. It is also the basis for an understanding of quantum information theory \cite{niel,mathews,srinivas,snbiswas}.
 This section provides a necessary background knowledge of quantum mechanics.
\subsection{Postulates of quantum mechanics}
We now discuss the fundamental postulates of qauntum meachanics. These postulates provide a connection between the physical
world and the mathematical formalism of quantum mechanics \cite{niel,mathews,srinivas}.\\
\textbf{Postulate 1.} (State space)
Any isolated physical system is associated with a Hilbert space known as the \textit{state space} of the system. The state of the system is completely described by a vector called the
\textit{state vector } belonging to a Hilbert space.\\
The simplest quantum mechanical system is called \textbf{qubit}. A qubit has two-dimensional state space. Let $\{|0\rangle, |1\rangle\}$ be an orthonormal basis for that state space. Then, an arbitrary state of a qubit is expressed as 
\begin{eqnarray}
|\phi\rangle =	a |0\rangle + b|1\rangle
\end{eqnarray}
where $a$ and $b$ are complex numbers with the normalization condition $|a|^2 +|b|^2 = 1$.\\
\noindent\textbf{Postulate 2.} (Evolution)
The evolution of a \textit{closed} quantum system is described by a unitary transformation. The state $|\psi_1(t_1)\rangle$ of a system at time $t_1$ is related to the state  $|\psi_2(t_2)\rangle$  of the system at time $t_2$ by a unitary operator $U(t_1, t_2)$ which depends only on the times $t_1$ and $t_2$.
\begin{eqnarray}
	|\psi_1(t_1)\rangle = U(t_1,t_2)|\psi_2(t_2)\rangle
\end{eqnarray}
Closed quantum systems do not suffer any unwanted interactions with the outside world. Although fascinating conclusions can be drawn about the information processing
tasks which may be accomplished in principle in such ideal systems, there are no perfectly closed systems in the real world. Real systems suffer from unwanted interactions with the
outside world. These undesirable interactions unfold as \textbf{noise} in quantum information processing systems.\\ 
\noindent\textbf{Postulate 3.} (Measurement)
Quantum measurements are described by a collection of
measurement operators denoted by $\{M_i\}$ with $\sum_i M_i^{\dagger} M_i = I$. These are operators acting on the state space of the system being measured. The index $i$ refers to the measurement outcomes that
may occur in the experiment. If the state of the quantum system is $|\psi\rangle$ immediately before the measurement then the probability that the outcome $i$  occurs is given by
\begin{eqnarray}
	p_i = \langle \psi |M_i^{\dagger}M_i| \psi \rangle
\end{eqnarray}
and the post-measurement state is given by $\frac{M_i|\psi\rangle}{\sqrt{p_i}}$.\\
For example, let us consider the measurement of a qubit in computational basis. Let $|\psi\rangle = a |0\rangle + b|1\rangle$ be the state being measured and $M_0 = |0\rangle \langle 0|$, $M_1 = |1\rangle \langle 1|$ be the measurement operators. Then the probability of obtaining measurement
outcome $"0"$ is $p_0 = \langle \psi |M_0^{\dagger}M_0| \psi \rangle = \langle \psi |M_0| \psi \rangle = |a|^2$. Similarly, the probability of obtaining the measurement outcome $"1"$ is $p_1 = |b|^2$. The post-measurement states in the two cases is therefore, $\frac{M_0 |\psi\rangle}{\sqrt{p_0}} = \frac{a |0\rangle}{|a|}$ and $\frac{M_1 |\psi\rangle}{\sqrt{p_1}} = \frac{b |1\rangle}{|b|}$.\\
The projective or von Neumann measurements and POVM measurements are special cases of Postulate 3 \cite{niel, akpan}.\\
\noindent\textbf{Postulate 4.} (Composite system) The state space of a composite physical system is the tensor product
of the state spaces of the component physical systems. Consider $n$ quantum systems and $|\psi_i\rangle$ represents the state of $i$th system, then the joint state of the total system is  $|\psi_1\rangle \otimes |\psi_2\rangle \otimes . . . \otimes |\psi_n\rangle$.
\subsection{Observables and their measurements}
In quantum mechanics, observables are dynamic variables like energy, position, momentum, angular momentum, etc. that can be measured \cite{snbiswas,mathews, srinivas}. In classical mechanics, these dynamical quantities are merely numbers, whereas, in quantum mechanics, these are represented by operators, in fact by Hermitian operators. The reason for choosing the Hermitian operators is due to the fact that Hermitian operators have real eigenvalues, and one needs a real number for the observation while making a measurement. Let $A$ be a dynamical system representing a certain dynamical quantity and consider the following eigenvalue equation
\begin{eqnarray}
	A |\psi \rangle = \lambda |\psi\rangle
\end{eqnarray}
If a physical system is in a state $|\psi\rangle$, then the measurement of the dynamical quantity will give a  real eigenvalue $\lambda$.
By Copenhagen interpretation of quantum mechanics, before the measurement of the dynamical variable $A$, the system is assumed to be in a superposition of eigenstates of $A$, where an arbitrary state is given by $|\alpha \rangle = \sum_i c_i |\alpha_i \rangle$. The state $|\alpha\rangle$ is normalized as follows.
\begin{eqnarray}
	\langle \alpha|\alpha\rangle = \sum_i \sum_j c_i^* c_j \langle\alpha_i|\alpha_j\rangle = \sum_i \sum_j c_i^* c_j \delta_{ij} = \sum_i |c_i|^2 = 1
\end{eqnarray}
 When a measurement is done, the system collapses to one of the eigenstates $|\alpha_i\rangle$ of the observable $A$. The probability that $|\alpha\rangle$ will go into the $i$th state $|\alpha_i\rangle$ is $|c_i|^2 = |\langle \alpha_i|\alpha\rangle|^2$.
This interpretation gives a physical meaning of superposition.

\subsection{Expectation value of an observable}
Expectation value of a Hermitian operator $A$ with respect to an arbitrary state $|\alpha\rangle$ is denoted by $\langle A \rangle$, and is defined by $\langle \alpha |A| \alpha\rangle$.  If $|\alpha_i\rangle$'s are the eigenstates of $A$ corresponding to the eigenvalues $\lambda_i$'s, $i = 1, 2, . . .$, then we have $A|\alpha_i\rangle = \lambda_i |\alpha_i \rangle$. Since $\langle \alpha_i | \alpha_j\rangle = \delta_{ij}$, we have
\begin{eqnarray}
	\langle \alpha_i |A| \alpha_j\rangle = \lambda_j \langle \alpha_i | \alpha_j\rangle = \lambda_j \delta_{ij} \label{step1.2}
\end{eqnarray}
 Using the completeness relation $\sum_i |\alpha_i\rangle \langle \alpha_i| = 1$, the expectation value $\langle A \rangle$ can be rewritten as
\begin{eqnarray}
	\langle A \rangle = \langle \alpha |A| \alpha\rangle &=& \sum_i \sum_j \langle \alpha|\alpha_i\rangle \langle \alpha_i |A| \alpha_j\rangle \langle \alpha_j|\alpha \rangle \nonumber\\ &=& \sum_i \sum_j \lambda_j \langle \alpha|\alpha_i\rangle \langle \alpha_j|\alpha \rangle \delta_{ij}\nonumber\\ &=& \sum_i \lambda_i \langle \alpha|\alpha_i \rangle \langle \alpha_i|\alpha \rangle = \sum_i \lambda_i  |\langle \alpha_i|\alpha \rangle|^2
\end{eqnarray}
where the third step follows from (\ref{step1.2}) and $|\langle \alpha_j|\alpha \rangle|^2$ is the probability that the state $|\alpha\rangle$ collapses in one of the eigenstates $|\alpha_i\rangle$.
Hence, expectation value of a dynamical variable is the average measured values of the dynamical variables. \\
If we perform a large number of experiments in which the state $|\alpha\rangle$ is prepared and the observable $A$ is measured, then the standard deviation $\bigtriangleup A$ of the observed values is determined by the formula 
\begin{eqnarray}
	\bigtriangleup A = \sqrt{\langle A^2 \rangle - \langle A \rangle^2} \label{sd}
\end{eqnarray}
\subsection{Uncertainty Principle}
One of the distinct features of quantum mechanics is quantum uncertainty, which means that not all properties of a system can be determined simultaneously \cite{srinivas}. The \textit{Heisenberg uncertainty principle} states that there is a limit to the precision with which a pair of complementary variables, such as a particle's position and momentum, can be measured simultaneously. The more precisely the position of a particle is determined, the less precisely its momentum can be known, and vice versa.\\
Let $A$ and $B$ be two Hermitian operators that are incompatible, i.e., $AB \neq BA$ (refer to def-(\ref{comm})). By the Heisenberg uncertainty principle,  the operators $A$ and $B$ are not \textit{simultaneously measurable}. So, we can find an error in our measurement. If the difference be described by the commutator of two incompatible operators, i.e., $[A,B] =	AB - BA = iC$, where $C$ is another observable, then the following inequality is satisfied when the average is taken over any arbitrary state \cite{snbiswas}.
\begin{eqnarray}
	\bigtriangleup A \bigtriangleup B \geq \frac{\langle C \rangle}{2}
\end{eqnarray}
This is precisely the mathematical statement of the uncertainty principle and is known as uncertainty relation.

\subsection{Tomography}
Measurement of a quantum state typically changes the state being measured, due to which many copies of the probe state are required to gain complete knowledge of the given state. Quantum state tomography is the process by which a quantum state is reconstructed using measurements on an ensemble of identical quantum states. The process of quantum state tomography typically involves the following steps:
\begin{enumerate}
	\item \textit{Preparation}: The first step is to prepare multiple identical copies of the quantum system in the state of interest. This is usually done by applying specific quantum operations or manipulations to the system.
	\item \textit{ Measurement in different bases}: Once the quantum system is prepared, it is measured multiple times in different bases. A basis is a set of orthogonal quantum states in which the measurement is performed. For example, in the case of a qubit (a two-level quantum system), one might choose the computational basis ($|0\rangle$ and $|1\rangle$) or the Hadamard basis ($|+\rangle$ and $|-\rangle$) for measurement.
	\item\textit{ Data collection}: The measurement outcomes are recorded for each copy of the quantum system on each chosen basis.
	\item \textit{Reconstruction}: Using the collected measurement data, various mathematical algorithms and techniques are employed to reconstruct or estimate the quantum state. The process involves solving an inverse problem to determine the quantum state that best fits the measurement results.
\end{enumerate}
The accuracy and reliability of quantum state tomography depend on factors such as the number of measurements performed, the quality of measurements, and the efficiency of the reconstruction algorithms used. Due to the complexity of quantum systems and the need for extensive resources, state tomography can be challenging and time-consuming, especially for larger quantum systems.
Despite its challenges, quantum state tomography plays a vital role in quantum information processing, quantum computing, and quantum technology. It enables researchers to characterize and understand quantum systems, which is crucial for developing and testing quantum algorithms and protocols.
\subsection{Shadow tomography}
Shadow tomography is a systematic framework that can help us benchmark the property of an unknown quantum system with only partial knowledge \cite{huang2020}.  Shadow tomography allows us to get specific
information about a quantum state using fewer copies of the state than full-scale tomography. Hence, shadow tomography is much more resource-saving than full tomography and it does not need joint operations among multiple copies to estimate non-linear functions. Shadow tomography consists of two phases: quantum measurement and data post-processing.
Suppose we are given some copies of an unknown quantum state $|\psi_i\rangle$, and we are interested in the
expectation value of an observable with respect to $|\psi_i\rangle$ within some error bounds. In particular,
if we would like to estimate $\langle \psi_i|A|\psi_i\rangle \pm \epsilon$ where $\epsilon$ is the error, then this can be done by taking $k$ copies of the state $|\psi_i\rangle$. It turns out that shadow tomography allows us to estimate observables $\{A_j\}$ where $j$ can assume a large value, using a
very small number of copies $k$ of the unknown state: given some number of copies $|\psi\rangle^{\otimes k}$ of an unknown state, we perform shadow tomography on
it to extract a piece of data known as a classical shadow $S$. This shadow $S$ is going to be a much
more efficient representation of $|\psi\rangle$ that allows us to estimate $\langle \psi|A_j |\psi \rangle \pm \epsilon$ for the observables that
we are interested in (but we will not be able to estimate all observables). The shadow tomography
procedure may depend on the observables {$A_j$}. The purpose of this procedure is to minimize the following
\begin{enumerate}
	\itemsep0em 
	\item The number of copies of $|\psi\rangle$ needed.
	\item  The complexity of producing the classical shadow $S$ from copies of $|\psi\rangle$.
	\item  The complexity of estimating the observables $\langle \psi|A_j|\psi \rangle$ given the classical shadow $S$.
\end{enumerate}
It is interesting and non-trivial to get any one of these items smaller than a polynomial in the dimension of $|\psi\rangle$ (which is exponential in the number of qubits) and the number of observables, which would be a thrift over standard tomography. Aaronson \cite{aaronson2018} gave a shadow tomography procedure where the required number of copies of $|\psi\rangle$ can be polylogarithmic in the dimension and in the number of
observables. Later, Huang et al. \cite{huang2020} presented another shadow tomography procedure that is more time-efficient than Aaronson's procedure. Shadow tomography has been found to be useful in entanglement detection. It has been shown that moments can be estimated by performing local random measurements on the quantum state, followed by post-processing using the classical shadows framework \cite{elben}.

\section{Basics of Quantum Information}
In this section, we discuss some basic concepts of quantum information theory such as quantum states, geometry of quantum states, representation of quantum states by density operator and reduced density operator. These concepts are especially useful in the study of quantum computation and quantum information.
\subsection{Quantum state}
Quantum state may be defined as the linear combination of classical states. In physics, the linear combination of states may be viewed as the superposition of states, which is one of the fundamental laws of quantum mechanics.  \\
\textbf{Polarization of light:} Consider a beam of light moving along the $z$-direction with a propogation vector, $\vec{k}$. Let $\hat{e}_x$ and $\hat{e}_y$ be two fundamental transverse polarization vectors of light along $x$ and $y$ direction, respectively. The polarization vector for this beam of light lies in the plane of $\hat{e}_x$ and $\hat{e}_y$. A physical system, for example, a photon with polarization along $\hat{e}_x$ or $\hat{e}_y$ denotes a state of a photon.  If a photon is in a state of arbitrary polarization, it is partly in a state of polarization along the vector $\hat{e}_x$ and partly in a state of polarization along the vector $\hat{e}_y$ \cite{snbiswas}. Let $|\phi\rangle$ describe the state of a photon with arbitrary polarization making an angle $\theta$ with $x$-axis. If $|\phi_1\rangle$ and $|\phi_2\rangle$ denote the state of a photon with polarization direction along $\hat{e}_x$ and $\hat{e}_y$, respectively, then the principle of superposition of states gives
\begin{eqnarray}
	|\phi\rangle = cos \theta |\phi_1 \rangle + sin \theta |\phi_2\rangle
\end{eqnarray}
where $cos^2 \theta$ and $sin^2 \theta$ denote the probabilities that the arbitrary state of a photon may appear along the vector $\hat{e}_x$ and $\hat{e}_y$, respectively.\\
	\begin{figure}[h!]
		\begin{center}
		\includegraphics[width=0.5\textwidth]{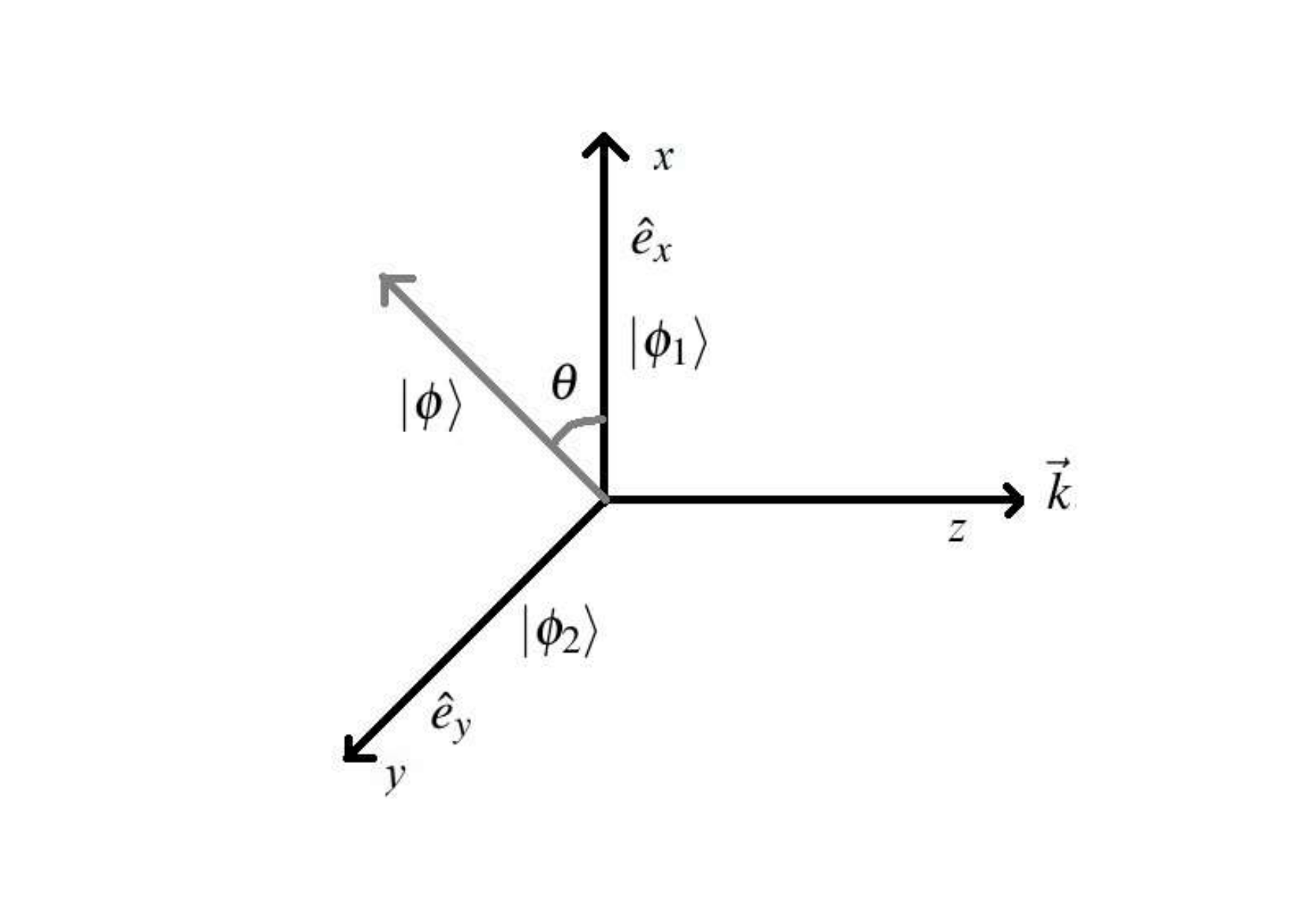}
		\caption{Light beam propogating along $z$-axis with propogation vector $\vec{k}$, light polarization vector $|\phi\rangle$ lies in $xy$ plane, photons polarized in an arbitrary direction making an angle $\theta$ with $x$-axis. }
	\end{center}
	\end{figure}
The vector $|\phi\rangle$ is called a state vector which is referred to as a \textit{ket} vector.
The quantum superposition occurs because of the lack of knowledge of the exact state. For example, when a beam of light consisting of one photon splits into two beams, it is not known in which of the split beams the photon is, except for the probability of finding it in either beam \cite{snbiswas}.  
{\remark If a ket vector $|\phi \rangle$ represents a physical system, then $c|\phi \rangle$ also represents the same physical system, where $c$ is a complex number and may be termed as the global phase factor.}
\subsection{Pure and Mixed Bipartite state}
 A bipartite quantum system described by the Hilbert space $\mathcal{H}$ is a composite system of two individual systems described by vectors in the tensor product of the two Hilbert spaces $\mathcal{H}_{A}$ and $\mathcal{H}_{B}$, i.e.,  $\mathcal{H} = \mathcal{H}_{A} \otimes \mathcal{H}_{B}$.\\
\textbf{Pure bipartite state:} Let $|a_i\rangle$ and $|b_j\rangle$ be the basis states of the Hilbert spaces $\mathcal{H}_{A}$ and $\mathcal{H}_{B}$ with dimensions $d_A$ and $d_B$, respectively. A pure state $|\psi\rangle \in \mathcal{H}$ can be written as
\begin{equation}
	|\psi\rangle = \sum_{i,j = 1}^{d_{A},d_{B}} c_{ij} |a_{i}\rangle \otimes |b_{j}\rangle \in \mathcal{H}_{A} \otimes \mathcal{H}_{B} \label{pure}
\end{equation}
 with a complex matrix $C = [c_{ij}]$ of order $d_{A} d_{B}$.\\ 
\textbf{Mixed bipartite state:} A convex mixture of the different pure bipartite states in an \textit{ensemble} is known to be a mixed bipartite state.
{\remark Pure and mixed states may be defined in a similar way for multipartite systems also.}
\subsection{Density operator}
Quantum mechanics has been formulated alternately using the language of operators or matrices. This alternate formulation is mathematically equivalent to the state vector approach. In general, the state of a quantum system is not known completely. If a quantum system is in one of the states $|\psi_i\rangle$ with probability $p_i$, then $\{p_i, |\psi_i \rangle\}$ is called an \textit{ensemble of pure states}. \\
The density operator for a quantum system is described as
\begin{eqnarray}
	\rho = \sum_i p_i |\psi_i\rangle \langle \psi_i | \;\; \text{where} \;\; 0 \leq p_i \leq 1 \;\; \text{and} \;\; \sum_i p_i =1
\end{eqnarray}
The density operator satisfies the following conditions:
\begin{enumerate}
	\item[(i)] (\textit{Hermitianity condition}) $\rho$ must be Hermitian.
	\item[(ii)] (\textit{Trace condition}) $\rho$ has trace equal to one, i.e., $Tr[\rho]=1$.
	\item[(iiI)] (\textit{Positive semi-definite condition}) $\rho$ must be a positive semi-definite operator, i.e., $\rho \geq 0$.
\end{enumerate}
{\remark: Let $\rho$ be a density operator. Then $Tr[\rho^2] \leq 1$ with equality if and only if $\rho$ is a pure state.}
\subsection{Reduced system}
Let us consider a composite system consisting of two subsystems $A$ and $B$, respectively. If we would like to gain some knowledge of the individual system from the composite system, then we have to perform some operations on the composite system. Partial trace may be considered as one such operation.\\
\textbf{Partial trace:} Let $A$ and $B$ be two physical systems. If $|a_1\rangle$, $|a_2\rangle$ be any two vectors in the state space of $A$; and  $|b_1\rangle$, $|b_2\rangle$ be any two vectors in the state space of $B$; then \textit{partial trace }over the system $A$ and partial trace over the system $B$ are defined as
\begin{eqnarray}
	Tr_A [|a_1\rangle \langle a_2] \otimes |b_1\rangle \langle b_2| ] &=& |b_1\rangle \langle b_2| 	Tr [|a_1\rangle \langle a_2|] \\
	Tr_B [|a_1\rangle \langle a_2] \otimes |b_1\rangle \langle b_2| ] &=& |a_1\rangle \langle a_2| 	Tr [|b_1\rangle \langle b_2|] \label{partracedef}
\end{eqnarray}
\textbf{Reduced density operator:} Suppose a bipartite quantum state is described by the density operator $\rho_{AB}$ where $A$ and $B$ are subsystems of the composite quantum system. Then the reduced density operators for the system $A$ and system $B$ are described as
\begin{eqnarray}
	\rho_A = Tr_B[\rho_{AB}] \label{reducedop}; \quad 	\rho_B = Tr_A[\rho_{AB}]
\end{eqnarray}
where $Tr_A$ and $Tr_B$ denotes the partial trace over the subsystem $A$ and $B$, respectively.

\subsection{Geometry of quantum states as Bloch sphere}
A single qubit in the state $|\psi\rangle = a |0\rangle + b|1\rangle$, $|a|^2 +|b|^2 =1$ can be visualized as a point $(\theta, \phi)$
on the unit three-dimensional sphere, where $a = cos(\theta/2), b = e^{i\phi} sin(\theta/2)$. This sphere is called the Bloch sphere and the vector $(cos \phi sin \theta, sin \phi sin \theta, cos \theta)$ is called the Bloch vector.
\begin{figure}[h!]
	\begin{center}
		\includegraphics[width=0.4\textwidth]{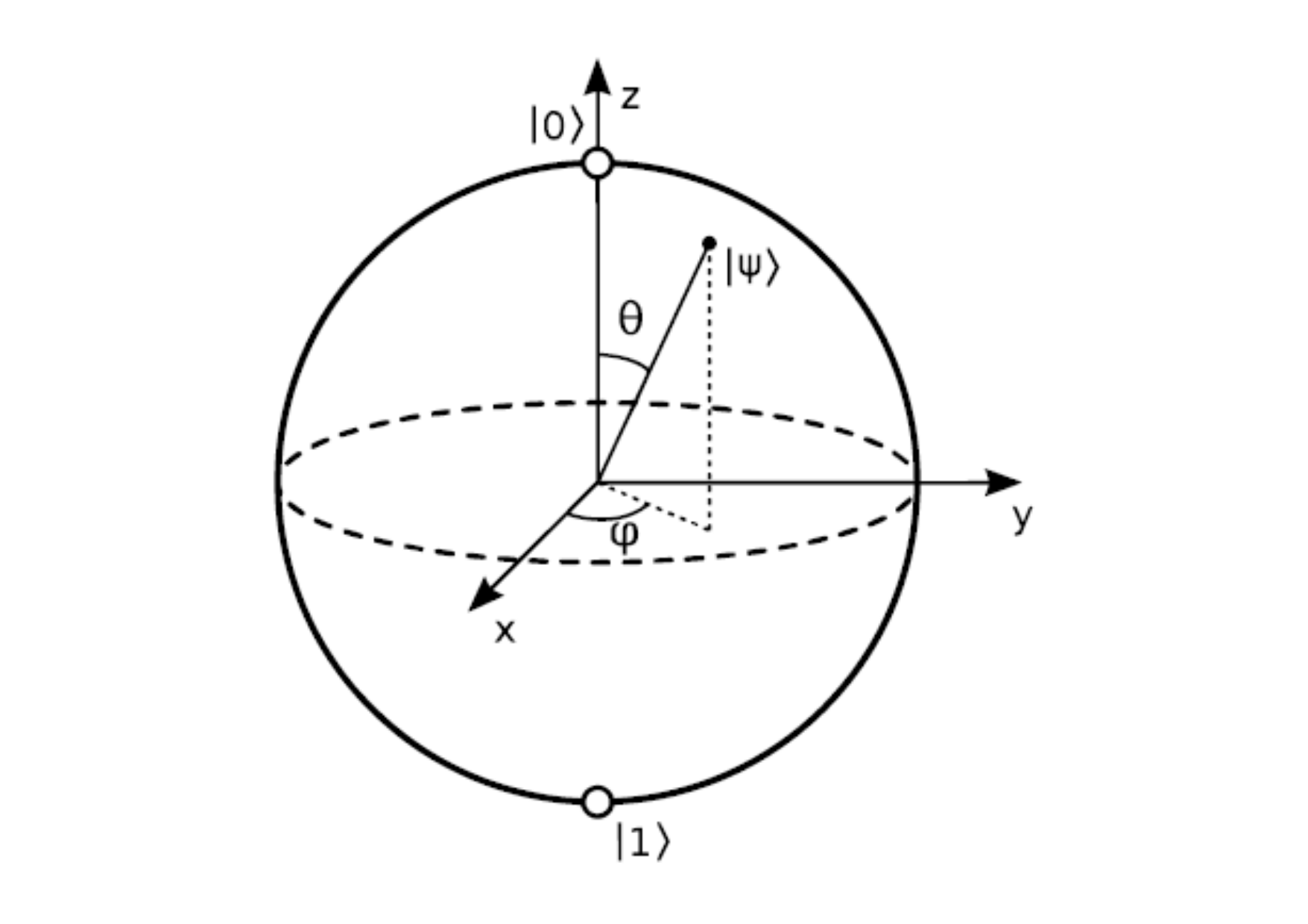}
		\caption{Bloch sphere representation of a qubit}
	\end{center}
\end{figure}
Bloch sphere provides a useful means of visualizing a quantum state. Any point on the Bloch sphere, and inside the ball corresponds to a physical state. The pure states lie on the sphere and the mixed ones lie inside the sphere \cite{niel}. \\
In general, a single qubit may be expressed as
\begin{eqnarray}
	\rho = \frac{I + \vec{r} .\vec{\sigma}}{2}
\end{eqnarray}
where $\vec{r} = (r_x,r_y,r_z)$ is the Bloch vector which is a real three-dimensional vector such that $||\vec{r}|| \leq 1$ and $\vec{\sigma} = (\sigma_{x},\sigma_{y},\sigma_{z})$. Here, $\sigma_i$ are the Pauli matrices defined in (\ref{pauli}). $\rho$ represents a pure state if and only if $||\vec{r}|| = 1$. The center of the Bloch sphere is represented by the maximally mixed state, i.e., at the center, the quantum state described by the density operator $\rho = I/2$. 
{\remark For the higher dimensional systems, the geometric character of the Bloch space turns out to be quite complicated and is still of great interest.}

\section{Quantum Entanglement}
In 1935, Einstein, Podolsky, and Rosen (EPR) found that quantum mechanics lacks a very important property known as the element of reality and locality \cite{epr}. Thus, they concluded that quantum mechanics is an incomplete theory. To support their statement, they considered the wave function of two physical quantities jointly and then revealed that the knowledge of one physical quantity is not sufficient to gain knowledge of the other physical quantity. But interestingly there exists a composite physical system consisting of two particles in which if we measure one system then instantaneously it affects the other system. Einstein called this feature, "spooky action at a distance". On the other hand, Schrodinger termed it as \textit{ entanglement} \cite{schrod}. Almost thirty years later, experiments were performed by John Bell that invalidated the EPR argument and proved the existence of the phenomenon known as entanglement \cite{jsbell}. Entanglement is a purely quantum mechanical feature that has no classical analogue. It enables particles to affect each other instantaneously across any distance. As time passed, a new theory of information evolved, that followed the laws of quantum mechanics. This new theory is called quantum information theory, which uses entanglement as a resource, enabling tasks like quantum cryptography \cite{bennett3}, quantum teleportation \cite{bennett1}, quantum superdense coding \cite{bennett5} or measurement based quantum computation \cite{raussen}. The potential offered by quantum entanglement to computing, security and communication makes it a topic of vital interest to researchers all across the globe. 
\subsection{Definition} \label{sec-entdef}
Let $\rho_{AB} \in B(\mathcal{H}_{A} \otimes \mathcal{H}_{B})$ be a density matrix for a bipartite system $\mathcal{H}_{A} \otimes \mathcal{H}_{B}$.\\ 
\textbf{Product state:} $\rho_{AB}^P$ is a product state if there exist states $\rho^{A}  \in B(\mathbb{H}_{A})$ and $\rho^{B} \in B(\mathbb{H}_{B})$ such that
\begin{equation}
	\rho_{AB}^P = \rho^{A} \otimes \rho^{B}
\end{equation}
\textbf{Separable state:} The state $\rho_{AB}^{sep} $ is called separable, if there are convex weights $p_{i}$ and product states $\rho_{i}^{A} \otimes \rho_{i}^{B}$ such that
\begin{equation}
	\rho_{AB}^{sep} = \sum_{i} p_{i} \rho_{i}^{A} \otimes \rho_{i}^{B} \;\; \text{where} \;\; 0 \leq p_i \leq 1; \;\; \sum_i p_i =1 \label{eq-sepstate}
\end{equation}
holds. Otherwise the state is \textbf{entangled}.

\subsection{Bell's inequality}
Einstein together with Podolsky and Rosen recognized that quantum theory allows a particular type of correlation (later known as entanglement) between two distant parts of a system \cite{epr}. They
argued that if such correlations allow the prediction of the result of a measurement
on one part of a system by looking at the very distant part, then, in a complete
and local theory, the predicted quantity has to have a definite value even before the measurement. However, according to them, this value could not be obtained from
quantum mechanics. The presence of entanglement led them in this way to the conclusion that quantum theory is an apparently incomplete theory.
For the following three decades, the debate about the EPR dilemma was philosophical in nature. This
situation, however, changed dramatically in the year 1964, when John Bell showed that the matter could be decided by an experiment \cite{jsbell}. He derived correlation inequalities, which can be violated in quantum mechanics, but have to be satisfied within every model that is local and complete. In a particular sense, the local and complete models are known as local hidden variable models (LHVM).
Experiments showing the first reliable violations of Bell's inequalities thereby confirming quantum mechanics and demonstrating the presence of entanglement in nature were then started in the early eighties \cite{aspect1981}. We begin the discussion of Bell inequalities studying the general assumptions required for their derivation.\\ 
(i) \textit{Assumption of realism}: The assumption that the physical properties  have definite values which exist independent of observation. \\
	(ii) \textit{Assumption of locality:} The assumption that the result of the measurement performed on one system does not influence the result of the measurement on second system.\\
These two assumptions together are known as the assumptions of \textit{local realism}. Bell inequalities showed that at least one of these assumptions is not correct. \\
Consider a correlation experiment in which the variables $A_1$, $A_2$ are measured on one subsystem of the composite system and $B_1$, $B_2$ on the other subsystem, and that the subsystems are spatially separated. Then
\begin{eqnarray}
|	\langle A_1 B_1 \rangle + \langle A_1 B_2 \rangle + \langle A_2 B_1 \rangle - \langle A_2 B_2 \rangle| \leq 2 \label{chsh}
\end{eqnarray}
where $	\langle A_i B_j \rangle $ is the expectation value of the  joint random variable $A_i B_j$.
This is known as Clauser, Horne, Shimony and Holt (CHSH) inequality which must be obeyed for any quantum state $\rho_{AB}$ admitting a LHVM \cite{clauser1969}. However, it was shown that the above inequality is violated with maximum violation  of $2\sqrt{2}$. The value $2\sqrt{2}$ is achieved for the following two-qubit states known as Bell states or EPR pairs \cite{aditi2001}.
\begin{eqnarray}
	|\phi^+ \rangle =	\frac{|00\rangle + |11\rangle}{\sqrt{2}};\;\;\;
	|\phi^- \rangle =	\frac{|00\rangle - |11\rangle}{\sqrt{2}} \nonumber\\
	|\psi^+ \rangle =	\frac{|10\rangle + |01\rangle}{\sqrt{2}};\;\;\;
	|\psi^- \rangle =	\frac{|01\rangle - |10\rangle}{\sqrt{2}} \label{bellstates}
\end{eqnarray}
 The violation of the inequality (\ref{chsh}) validates the existence of non-locality in the given bipartite state \cite{paterek2004, paterek2011, ravi2020}.

\subsection{Distillation}
Suppose two separated observers, Alice and Bob, would like to perform a quantum information task but do not share a maximally entangled state. Instead, they are supplied with as many mixed states $\rho_{AB}$ as they want. Let us assume that Alice and Bob are supplied with $k$ copies of $\rho_{AB}$. We now pose a question: Can they use the $k$ copies of the states described by $\rho_{AB}^{\otimes k}$ to establish the singlet states between them by LOCC? What is the cost of this transformation?
\begin{eqnarray}
	\underbrace{\rho_{AB} \otimes \rho_{AB} \otimes . . . .  \otimes \rho_{AB} }_{k \;copies} \xrightarrow{LOCC} |\psi ^- \rangle = \frac{1}{\sqrt2}(|01\rangle - |10\rangle)?
\end{eqnarray}
The distillable entanglement answers these questions and determines how
many singlet pairs can be extracted (or distilled) out of $k$ pairs of the state $\rho_{AB}$ using LOCC, in the limit of $k \rightarrow \infty$. The idea of entanglement distillation is for Alice and Bob to convert some
large number of copies of a state $\rho_{AB}$ into as many copies of the Bell state as possible using local operations and classical communication, requiring
not that they succeed exactly, but only with high fidelity. There are two qualitatively two types of entangled states classified on the basis of distillation: distillable entangled states are called free entangled states and undistillable entangled states are called bound entangled states (BES). Bound entangled states require entanglement for their creation, but the entanglement can then not be distilled again. If the density matrix representing a bipartite state is PPT, then the state is undistillable.  Further, there is evidence for the existence of undistillable NPT entangled states that have been shown in \cite{divincenzo2000}.
\subsubsection{Bound entangled states or PPT entangled states} \label{sec-bes}
The existence of bound entanglement stands among the most intriguing features of the entanglement theory. In fact, it represents a kind of irreversibility of the process of formation of entangled states \cite{terhal2003}. The essential quantum property that led to the observation of bound entanglement was the existence of entanglement with positive partial transposition (PPT) property, i.e., the density matrix representing the entangled state has positive partial transpose. Originally considered useless for quantum information processing, bound entangled states were later established as a valid resource in the contexts of quantum key distribution \cite{khoro}, entanglement activation \cite{pmrhoro}, metrology \cite{gt}, non-locality \cite{tn}, and their non-distillability has been linked to irreversibility in thermodynamics \cite{fgsl}.\\
The following two states $\rho_a$ and $\rho_{a,b,c}$ are examples of BES in $3 \otimes 3$ and $2\otimes 2\otimes 2$ dimensional system, respectively.
{\example
Let us consider a family of $3 \otimes 3$ BES described by the density operator $\rho_{a}$, which is given by \cite{horodeckia}
\begin{equation}
	\rho_a = \frac{1}{8a + 1}
	\begin{pmatrix}
		a&0&0&0&a&0&0&0&a\\
		0&a&0&0&0&0&0&0&0\\
		0&0&a&0&0&0&0&0&0\\
		0&0&0&a&0&0&0&0&0\\
		a&0&0&0&a&0&0&0&a\\
		0&0&0&0&0&a&0&0&0\\
		0&0&0&0&0&0&\frac{1}{2}(1 + a)&0&\frac{1}{2}\sqrt{1 - a^2}\\
		0&0&0&0&0&0&0&a&0\\
		a&0&0&0&a&0&\frac{1}{2}\sqrt{1 - a^2}&0&\frac{1}{2}(1 + a)\\
	\end{pmatrix} \label{astate-matrix}
\end{equation}
where $0\leq a \leq 1$. The state $\rho_a$ is separable when $a = 0$ and 1.\\
It is known that the density matrix $\rho_a$ represents a family of BES for $0 < a < 1$ \cite{horodeckia}.
\example Let us consider another family of BES  $\rho_{a,b,c}$ in $2\otimes 2\otimes 2$ dimensional system \cite{acin},
\begin{eqnarray}
	\rho_{a,b,c}= \frac{1}{N}
	\begin{pmatrix}
		1 & 0& 0&0&0&0&0&1 \\
		0 & a&0 &0&0&0&0&0 \\
		0 & 0& b&0&0&0&0&0\\
		0 & 0& 0&c&0&0&0&0\\
		0 & 0& 0&0&1/c&0&0&0\\
		0 & 0& 0&0&0&1/b&0&0\\
		0 & 0& 0&0&0&0&1/a&0\\
		1 & 0& 0&0&0&0&0&1
	\end{pmatrix}
\end{eqnarray} with $a,b,c > 0$ and $N= 2 + a+b+c+\frac{1}{a}+\frac{1}{b}+\frac{1}{c}$}.
\subsection{Detection of Entanglement}
Although entangled states are used in various quantum information processing tasks, the practical use of an entangled resource is restricted to the successful experimental realization of the resource. In real experimental setups, it is always a challenge to create and detect entangled states. Hence, successful generation, quantification and detection of entanglement are essential features in any quantum information processing protocol. Many authors contributed to a long list of various separability criteria and detection methods \cite{guhnerev,horodecki9}. However, a completely satisfactory solution to these problems has not been found despite many efforts in the past decades. To solve the separability problem, researchers have provided several criteria for the detection of entanglement. While it is relatively easy to detect and quantify the pure entangled states \cite{bennetch}, one of the most important open questions of quantum information theory is to determine whether a given mixed bipartite or multipartite state is entangled. This problem is additionally complicated by the existence of bound entangled states \cite{mprhoro}, which are weak entangled states and hard to detect.
Since entanglement detection problem is a NP hard problem so all entangled states cannot be detected by just one criterion and thus numerous criterion has been developed for the detection of entanglement.\\
\subsection{Classification of Entanglement}
Let us now start a discussion on the classification of multipartite systems. It may be observed that either with the increase in the dimension of a quantum system or with the number of qubits, the geometric structure of state space becomes highly complicated. This makes the detection of entanglement in multipartite quantum systems, a resource-consuming task. Hence, it is important to find implementable and efficient methods to classify multipartite entanglement. Acin et al. introduced a classification of the whole space of mixed three-qubit states into different entanglement classes \cite{acin}. Mixed three-qubit states are classified by generalizing the classification of three-qubit pure states.  Two states are called SLOCC (stochastic local operation and classical communication) equivalent if they can be obtained from each other under SLOCC, otherwise, they are SLOCC inequivalent \cite{dutta1}. Three-qubit states are classified into six SLOCC inequivalent classes: fully separable, three biseparable, and two genuinely entangled (GHZ and W) states \cite{vidal00}. 
A density matrix $\rho_{ABC}$ on the Hilbert space $\mathcal{H}_2\otimes \mathcal{H}_2 \otimes \mathcal{H}_2$ ($\mathcal{H}_2$ denotes the Hilbert space of dimension 2) is fully separable if it can be written as a convex combination of product states as follows:
\begin{eqnarray}
	\rho_{ABC} = \sum_i p_i \rho_i^{A} \otimes \rho_i^{B} \otimes \rho_i^{C} \;\text{with}\; p_i \geq 0 \; \text{and}\;\sum_i p_i = 1 \nonumber
\end{eqnarray}
If a state is not fully separable then it contains some entanglement. Then it may be classified as a biseparable or genuinely entangled state.  The three classes of biseparable states contain only bipartite entanglement between any two of the qubits. For example, the states in class $A|BC$ possess entanglement between the qubits $B$ and $C$ and are separable with respect to the qubit $A$. On the other hand, genuine three-qubit entanglement means that there exists entanglement between all three qubits. 
\subsection{Quantification of Entanglement}
Apart from the detection of entanglement, another major issue in quantum information theory is to quantify the amount of entanglement. Various entanglement measures exist in the literature such as concurrence \cite{wootters,wootters1,wootters2}, negativity \cite{vidal}, relative entropy of entanglement \cite{plenio}, geometric measure of entanglement \cite{wei} that can quantify the amount of entanglement in a two-qubit as well as higher dimensional bipartite pure and mixed state. A "good" measure of entanglement is any function $E: B(\mathcal{H}) \rightarrow [0,1]$, such that it satisfies the following postulates \cite{plenio}:
\begin{enumerate}
	\item[(i)] The measure of entanglement for any separable state $\rho^{sep}$ should be zero, i.e., $E{(\rho^{sep})}=0$, where $\rho^{sep} \in B(\mathcal{H})$.
	\item[(ii)] The amount of entanglement in any entangled state $\rho^{ent} \in B(\mathcal{H})$ should be unaffected for any local unitary transformation of the form  $U_{A}\otimes U_{B}$.
	\item[(iii)] Local operations and classical communication (LOCC) cannot increase the
	expected entanglement.
	\item[(iv)] For any $ \rho_i  \in B(\mathcal{H})$ and $\rho = \sum_i p_i \rho_i$, $E(\rho)$ satisfies convexity property, i.e., 
	\begin{eqnarray*}
		E(\sum_i p_i \rho_i) = 	\sum_i p_i E(\rho_i) \;\; \text{where} \;\;  0 \leq p_i \leq 1;\;\; \sum_i p_i =1
	\end{eqnarray*}
\end{enumerate}
\subsubsection{Concurrence}
 A very popular measure for the quantification of bipartite quantum entanglement is the \textit{concurrence}.  For the two-qubit case, an elegant formula for the concurrence was derived analytically by Wootters \cite{wootters}.\\
 \textbf{For pure state $|\psi_{AB}\rangle$ :} 
 $C(|\psi_{AB} \rangle)=\sqrt{2(1-Tr{(\rho_{A}^2)})}$, where $\rho_{A} = Tr_B (|\psi_{AB}\rangle \langle \psi_{AB}|)$ \\
 \textbf{For mixed state $\rho_{AB}$}:
 $C(\rho_{AB})= max(0,\sqrt{\lambda_{1}}-\sqrt{\lambda_{2}}-\sqrt{\lambda_{3}}-\sqrt{\lambda_{4}})$,\\
 where	
 $\lambda_{i}$ are the eigenvalues of $\rho_{AB}\widetilde{\rho_{AB}}$ arranged in descending order and $\widetilde{\rho_{AB}}$=$(\sigma_{y}\otimes\sigma_{y})\rho_{AB}(\sigma_{y}\otimes\sigma_{y})$; $\sigma_{y} = - i|0\rangle\langle1| + i|1\rangle\langle0|$.\\
  Due to the extremizations involved in the calculation, only a few explicit analytic formulae of concurrence for the higher dimensional bipartite systems have been found. The closed formula of concurrence for some special symmetric states has been obtained \cite{terhal2000,rungta2003}. In particular, some progress in the form of practical algorithms to find the lower bounds of the concurrence for qubit-qudit systems has been made \cite{chenliang}. Some results related to the lower bound of concurrence have been obtained using numerical optimization in arbitrary dimensional bipartite systems \cite{mintertf}. An
  optimized bound generally involves numerical optimization
  over a large number of free parameters which becomes computationally unmanageable in higher dimensional systems. Also, concurrence cannot reliably detect arbitrary entangled states even if one applies all known optimization methods \cite{mintertf}. To improve this situation, an analytic lower bound on concurrence for any dimensional mixed bipartite quantum states has been presented, which has further been shown to be exact for some special classes of states and detect many bound entangled states \cite{kchen}. Witness operator is also used in the quantification of entanglement \cite{brandao} and in the estimation of the lower bound of the concurrence of the entangled states in $d_{1}\otimes d_{2} (d_{1}\leq d_{2})$ dimensional systems \cite{mintert2}.

\section{A Few Method for Detection of Entanglement} \label{sec-entdet}
In this section, we will present some criteria for the detection of bipartite entanglement and discuss a few of them in
detail.
\subsection{Using the definition of entanglement}
 Pure entangled states may be detected using the definition of entanglement given in section \ref{sec-entdef}. The problem arises when we consider the detection of mixed entangled states. In this case, it is practically hard to apply the definition. This is because the representation of a quantum state is not unique. 
		Thus, we have to check for the given quantum state whether it can be expressed in the form (\ref{eq-sepstate}) for every possible basis. It is also known that one basis can be obtained from the other just by using a unitary transformation. Due to this, it is hard to conclude whether the given mixed state is entangled or not just by applying the definition of entanglement given in (\ref{eq-sepstate}).
\subsection{Schmidt decomposition}
The Schmidt decomposition is of central importance in the characterization and quantification of entanglement associated with pure states \cite{ekert1995}.\\
Let $|\psi\rangle_{AB}$ be a bipartite pure state such that $|i_A\rangle$ and $|i_B\rangle$ form an orthonormal basis of subsystems $A$ and $B$. By Schmidt decomposition, $\rho_{AB}$ can be expressed as
\begin{eqnarray}
	|\psi\rangle_{AB} = \sum_i \lambda_i |i_A\rangle |i_B\rangle \label{scpure}
\end{eqnarray}
where $\lambda_i$ are non-negative real numbers satisfying $\sum_i \lambda_i^2 = 1$ known as \textit{Schmidt coefficients} \cite{niel}. The number of non-zero coefficients $\lambda_i$ is called \textit{Schmidt rank}. 
{\theorem A pure state is entangled if and only if Schmidt rank is greater than one.}\\
\textbf{Schmidt decomposition for any pure or mixed bipartite state:} In general, the Schmidt decomposition of any pure or mixed bipartite state can be expressed in terms of the orthonormal basis of the operator space. To define the decomposition, let us consider a bipartite state (pure or mixed) described by the density operator $\rho_{AB}$ with $\{A_i\}$ and $\{B_i\}$ forming orthonormal bases of the operator spaces for subsystems $A$ and $B$ with respect to the Hilbert-Schmidt inner product, i.e., $Tr[A^{\dagger}_i A_j ] =Tr[B^{\dagger}_i B_j] = \delta_{ij}$. Then the Schmidt decomposition of $\rho_{AB}$ is defined as
\begin{eqnarray}
	\rho_{AB} = \sum_{i=1}^{r} \lambda_i\; A_i \otimes B_i \label{scmixed}
\end{eqnarray}
where the Schmidt coefficients  $\lambda_i$ are non-negative real numbers and $r$ denotes the Schmidt rank satisfying $1 \leq r\leq min\{dim(A), dim(B)\}$ \cite{peresbook}.\\
We will discuss the separability criteria using Schmidt decomposition for mixed states in the later section.

\subsection{PPT criteria} \label{ptmethod}
Partial transposition of a bipartite state is described as the transpose operation with respect to one subsystem. Consider a state $\rho_{AB}$ described in the following form:
\begin{eqnarray}
	\rho_{AB} = \sum_{ijkl} \rho_{ijkl} |i \rangle \langle j| \otimes |k \rangle \langle l| 
\end{eqnarray}
The partial transpose of $\rho_{AB}$ with respect to the subsystem $A$ is described as
\begin{eqnarray}
		\rho_{AB}^{T_A} = \sum_{ijkl} \rho_{ijkl} |j \rangle \langle i| \otimes |k \rangle \langle l| 
\end{eqnarray}
Similarly, we can define $\rho_{AB}^{T_B}$ by exchanging $k$ and $l$ instead of $i$ and $j$. \\
For example, the partial transpose $\rho_{AB}^{T_B}$ of the $2\otimes2$ state $\rho_{AB}$ w.r.t. the subsystem $B$ is given by
$$\rho_{AB}=\left[\begin{array}{cc|cc}
	\rho_{11} & \textcolor{red}{\rho_{12}} & \rho_{13} & \textcolor{red}{\rho_{14}}\\
	\textcolor{blue}{\rho_{12}^*} & \rho_{22} & \textcolor{blue}{\rho_{23}} & \rho_{24}\\ \hline
	\rho_{13}^* & \textcolor{red}{\rho_{23}^*} & \rho_{33} & \textcolor{red}{\rho_{34}}\\
	\textcolor{blue}{\rho_{14}^*} & \rho_{24}^* & \textcolor{blue}{\rho_{34}^*} & \rho_{44}\\
\end{array}\right] \;\rightarrow\; \rho_{AB}^{T_B}= \left[
\begin{array}{cc|cc}
	\rho_{11} & \textcolor{blue}{\rho_{12}^*} & \rho_{13} &\textcolor{blue}{\rho_{23}}\\
	\textcolor{red}{\rho_{12}} & \rho_{22} & \textcolor{red}{\rho_{14}} & \rho_{24}\\ \hline
	\rho_{13}^* & \textcolor{blue}{\rho_{14}^*} & \rho_{33} & \textcolor{blue}{\rho_{34}^*}\\
	\textcolor{red}{\rho_{23}^*} & \rho_{24}^* &\textcolor{red}{\rho_{34}} & \rho_{44}\\
\end{array}\right]$$
{\definition (\textit{PPT states:}) A state represented by the density matrix $\rho_{AB}$ has positive partial transpose if its partial transposition has no negative eigenvalues, i.e., it forms a positive semidefinite matrix. Such states are called PPT states. 
	\remark If a state is not PPT, then it is called an NPT state.}\\
PPT criteria is the most well-known separability criterion based on partial transposition. It is stated as follows \cite{peres}.
{\theorem (\textit{PPT criteria}) If $\rho_{AB}$ is a bipartite separable state, then it has positive partial transpose, i.e., it represents a PPT state.}\\
The contrapositive of the above theorem implies that if a state has negative partial transpose, then it is entangled. These states are called negative partial transpose entangled states (NPTES). Thus, if we find at least one negative eigenvalue of the partially transposed matrix, we can conclude that it represents an entangled state. Now the question is whether the above statement is sufficient to conclude the separability of a state. An answer to this question was given by Horodecki et al. \cite{horo1996}. It is stated as follows.
{\theorem  $\rho_{AB}$ is a separable state in $ 2 \otimes 2$ or $2 \otimes 3$ dimensional system if and only if $\rho_{AB}^{T_B} \geq 0$.}
Hence, the PPT criterion is necessary and sufficient only for two-qubit and qubit-qutrit systems. In higher dimensions, there exist entangled states with positive partial transpose. These states are called positive partial transpose entangled states (PPTES) or bound entangled states.
\subsection{Range Criterion}
The range criterion was one of the first criteria for the detection of entangled states for which the PPT criterion fails \cite{horodeckia}. It states that if a bipartite state $\rho_{AB}$ is separable, then there is a set of product vectors $|a_ib_i\rangle$ which spans the range of $\rho_{AB}$ as well as the set $|a_ib_i^*\rangle$  spans the range of $\rho_{AB}^{T_B}$. This criterion detects many entangled states that are undetected by the PPT criterion. However, it cannot be used for the detection of states that are affected by noise. This is because, in such case, the density matrix and its partial transpose will usually have full rank, hence the condition in the range criterion is automatically fulfilled.
\subsection{The Computable Cross Norm and Realignment (CCNR) Criterion} \label{sec-ccnr}
\textbf{Computable Cross Norm (CCN) Criterion:} Let $\rho_{AB}$ be a bipartite state with Schmidt decomposition $\rho_{AB} = \sum_{i=1}^{r} \lambda_i\; A_i \otimes B_i$, where $r$ is the Schmidt rank. Computable cross norm (CCN) criterion states that if $\rho_{AB}$ is separable, then $\sum_{i=1}^r\lambda_i \leq 1$ \cite{rud2005}.\\
An alternative formulation of the CCN criterion can be defined in terms of a linear map $R$ called a realignment map.
Let us start with first defining the realignment operation. Interestingly, the realignment operation can be formulated in different ways that are equivalent to each other \cite{rudolph2003}.\\
Consider a quantum state described by a density matrix $\rho_{AB}$ in $m \otimes n$ dimensional system. $\rho_{AB}$ can be written in block matrix form with $m$ number of blocks in each row and column with each block $Z_{ij}$ of size $n \times n$, i.e.,
\begin{eqnarray}
	\rho_{AB} = \begin{pmatrix}
		Z_{11} & Z_{12} & . & . & . & Z_{1n}\\
		Z_{21} & Z_{22} & . & . & . & Z_{2n} \\
		. & . & . & . & . & . \\
		. & . & . & . & . & . \\
		. & . & . & . & . & . \\
		Z_{m1} & Z_{m2} & . & . & . & Z_{mn} 	
	\end{pmatrix}
\end{eqnarray}
Then the realigned matrix of $\rho_{AB}$, denoted by $R(\rho_{AB})$, is an  $m^2 \times n^2$ matrix and it is given by \\
\begin{eqnarray}
R(\rho_{AB} )= \begin{pmatrix}
		vec(Z_{11})\\
		.\\
		.\\
		vec(Z_{1n})\\
		.\\
		.\\
		vec(Z_{n1})\\
		.\\
		.\\
		vec(Z_{mn})
	\end{pmatrix} \label{rdef}
\end{eqnarray}
where for any $m\times n$ matrix $Z_{ij} = (z_{kl}^{ij})$, the vector $vec(Z)$ is defined as
\begin{eqnarray}
	vec(Z_{ij})=	(z_{11}^{ij}, ., ., ., z_{1n}^{ij},  z_{21}^{ij}, ., ., ., z_{2n}^{ij}, ., ., ., z_{m1}^{ij}, ., ., ., z_{mn}^{ij} ) \label{vecx}
\end{eqnarray}
For example, consider the following bipartite qubit state $\rho_{AB}$ defined on $\mathbb{C}^2 \otimes \mathbb{C}^2$.
\begin{eqnarray}
	\rho_{AB}=
	\begin{pmatrix}
		\rho_{11} & \rho_{12} & \rho_{13} & \rho_{14}\\
		\rho_{12}^* & \rho_{22} & \rho_{23} & \rho_{24}\\
		\rho_{13}^* & \rho_{23}^* & \rho_{33} & \rho_{34}\\
		\rho_{14}^* & \rho_{24}^* & \rho_{34}^* & \rho_{44}\\
	\end{pmatrix} \label{rho}
\end{eqnarray}
 The matrix obtained after applying the realignment operation $R$  defined in (\ref{rdef}) on the state $\rho_{AB}$ is given as
\begin{eqnarray}
	R(\rho_{AB} )= 
	\begin{pmatrix}
		\rho_{11} & \rho_{12} & \rho_{12}^* & \rho_{22}\\
		\rho_{13} & \rho_{14} & \rho_{23} & \rho_{24}\\
		\rho_{13}^* & \rho_{23}^* & \rho_{14}^* & \rho_{24}^*\\
		\rho_{33} & \rho_{34} & \rho_{34}^* & \rho_{44}\\
	\end{pmatrix}
\label{Rrho}
\end{eqnarray}
The action of the realignment map on the tensor product of two matrices $A=\sum_{ij} a_{ij} |i\rangle \langle j|$ and $B=\sum_{kl} b_{kl} |k\rangle \langle l|$ is given as follows
\begin{eqnarray}
	R(A \otimes B)= R\left(\sum_{ijkl} a_{ij} b_{kl}  |ik\rangle \langle jl| \right) = \sum_{ijkl} a_{ij} b_{kl}  |ij\rangle \langle kl| \label{abr}
\end{eqnarray}
The matrices $A$ and $B$ can be identified as vectors lying in the tensor product ket space as a consequence of Choi-Jamiolkowski isomorphism \cite{choi, jamio}, i.e., we have
\begin{eqnarray}
	|A\rangle = \sum_{ij} a_{ij} |i\rangle |j\rangle; \quad |B\rangle = \sum_{kl} b_{kl} |k\rangle |l\rangle \label{ketab}
\end{eqnarray}
and the corresponding dual vectors in tensor product bra space are 
\begin{eqnarray}
	\langle A^*| = \sum_{ij} a^*_{ij} \langle i| \langle j|; \quad \langle B^*| = \sum_{kl} b^*_{kl} \langle k| \langle l| \label{ketab}
\end{eqnarray}
Hence we have,
\begin{eqnarray}
	R(A \otimes B) = |A\rangle \langle B^*| \label{ketbraR}
\end{eqnarray}
CCN or Realignment criterion provides a necessary condition for the separability of a quantum state. It is based on the matrix norm.  The matrix realignment criterion is stated as follows:
{\theorem(\textit{Realignment criteria}) \label{thm-ccnr} If a bipartite state $\rho_{AB}$ is separable then $||R(\rho_{AB} )||_1 \leq 1$ where $||.||_1$ defines the trace norm.
	The violation of this inequality indicates that the state $\rho_{AB}$ is entangled.}
\begin{proof}
	Let $\rho_{AB}$ be a separable state given by
	\begin{eqnarray}
		\rho_{AB} = \sum_i p_i \rho_i^A \otimes \rho_i^B \;\; \text{with}\;\; p_i \geq 0 \;\; \text{and} \;\; \sum_i p_i =1
	\end{eqnarray} 
	Using the linearity of the realignment operation, we have
	\begin{eqnarray}
		R(\rho_{AB} ) = \sum_i p_i R(\rho_i^A \otimes \rho_i^B) \;\;
	\end{eqnarray}
	By convexity of trace norm, we have
	\begin{eqnarray}
		||R(\rho_{AB} )||_1 \leq \sum_i p_i ||R(\rho_i^A \otimes \rho_i^B)||_1 \label{eqn1}
	\end{eqnarray}
	Using (\ref{ketbraR}), the realignment of a product state $\rho_i^A \otimes \rho_i^B$ is given by 
	\begin{eqnarray}
		R(\rho_i^A \otimes \rho_i^B) = |\rho_i^A\rangle \langle (\rho_i^B)^*|
	\end{eqnarray}
	By definition of trace norm (\ref{tracenorm}), we have
	\begin{eqnarray}
		|| R(\rho_i^A \otimes \rho_i^B)||_1 = Tr\left[\sqrt{|\rho_i^A\rangle \langle (\rho_i^B)^*| (\rho_i^B)^*\rangle \langle {\rho_i^A}|}\right] \leq 1 \label{eqn2}
	\end{eqnarray}
	where the inequality follows from the Hilbert Schmidt inner product $\langle X | Y \rangle = Tr[X^\dagger Y]$.  
	Using (\ref{eqn1}) and (\ref{eqn2}), we have $||R(\rho_{AB} )||_1 \leq 1$ for any bipartite separable state $\rho_{AB}$. 
\end{proof}
It has been shown that the PPT criterion and the CCNR criterion are equivalent under permutations of the density matrix's indices \cite{rudolph2004}. The generalization of the CCNR criterion was investigated in \cite{chen2}. The symmetric function of Schmidt coefficients has been used to improve the CCNR criterion in \cite{lupo,ckli2011}. Separability criteria based on the realignment of density matrices and reduced density matrices have been proposed in \cite{shen}.
In \cite{XQi}, the rank of the realigned matrix is used to obtain necessary and sufficient product criteria for quantum states.
Recently, methods for detecting bipartite entanglement based on estimating moments of the realignment matrix have been proposed \cite{tzhang}. 
\subsection{Entanglement detection via positive but not completely positive map} \label{sec-pncp}
Positive and completely positive maps are defined in (\ref{posmap}) and (\ref{cp map}). Entanglement detection criteria can be formulated from positive, but not
completely positive maps. For any separable state $\rho_{AB} \in B(\mathcal{H}_A \otimes \mathcal{H}_B)$ and any positive map $\Lambda: B(\mathcal{H}_A) \rightarrow  B(\mathcal{H}_B)$, we have 
\begin{eqnarray}
	(I_A \otimes \Lambda) (\rho_{AB}) \geq 0 \label{Itensorgamma}
\end{eqnarray}
Moreover, it has been shown that a state is separable if and only if for all positive maps $\Lambda$, the above relation holds. In this sense, the separability problem is equivalent to the classification of all positive maps. This has led to the construction of many positive, but not completely positive maps, resulting in strong separability criteria \cite{terhal2000,chruscinski2009,collins2018}. 
{\example(\textit{PPT criterion})
	The PPT criterion discussed in Sec-\ref{ptmethod} is an example of a criterion using a positive, but not completely positive map. Since matrix transpose preserves positivity, taking $\Lambda$ as the matrix transposition in (\ref{Itensorgamma}), the map $I_A \otimes \Lambda$ represents the partial transposition operation.
\example (\textit{Reduction criteria})  Another example of a positive, but not completely positive map is the reduction map, which is defined as follows \cite{mphoro1999}.
\begin{eqnarray}
	\Lambda^{Red} (X) = Tr[X] I -X \;\;\text{for any}\;\; X \in B(\mathcal{H}_A)
\end{eqnarray} 
If a bipartite state described by a density operator $\rho_{AB}$ is separable then $	(I_A \otimes \Lambda^{Red}) (\rho_{AB}) = \rho_A \otimes I_B - \rho_{AB}\geq 0$. where $\rho_A = Tr_B(\rho_{AB})$ is the reduced state defined in (\ref{reducedop}).
\remark The reduction criterion is weaker than the PPT criteria because $I_A \otimes \Lambda^{Red}$ can never detect entanglement unless $I_A \otimes T$ detects it. It provides a necessary and sufficient condition for separability
in the case of a two-qubit quantum system.}
\subsection{Majorization Criterion}
Majorization criterion relates the eigenvalues of a state with its reduced states \cite{nielson2001}. For a general  state represented by the density matrix $\rho_{AB}$, let $P = \{p_1, p_2, . . . \}$ be the set of eigenvalues of $\rho_{AB}$ arranged in decreasing order; and $Q = \{q_1, q_2, . . . \}$ be the set of eigenvalues of the reduced state $\rho_A$ arranged in decreasing order. The majorization criterion states
that if the state $\rho_{AB}$ is separable, then the following inequality holds
\begin{eqnarray}
	\sum p_i \leq \sum q_i ,\;\;\;\;\forall\; i
\end{eqnarray}
The same inequality holds when $\rho_A$ is replaced by $\rho_B$.
It was shown that the majorization
criterion follows from the reduction criterion and hence any state which can be detected by the majorization criterion, can
also be detected by the reduction criterion and, consequently, by the PPT criterion \cite{hiroshima2003}.
\subsection{Covariance matrix criterion } Another separability criterion is based on the covariance matrix (CM).
CMs constitute a well-established and powerful tool for the detection of entanglement in the infinite-dimensional setting (in particular, for Gaussian states). Strong separability criteria based on covariance matrix for finite dimensional systems have been proposed in the literature \cite{guhnerev,guhne2007}. 
{\definition \label{covmatrix}(\textit{Covariance matrix \cite{guhne2007}:}) Let $\rho$ be a given quantum state and let $\{M_1, M_2, . . ., M_n\}$ be some observables. Then the entries of the covariance matrix denoted by ${Cov}(\rho, \{M_k\}) = [c_{ij}]_{n\times n}$ are given by 
	\begin{eqnarray*}
		c_{ij} =\frac{\langle M_i M_j \rangle_{\rho} + \langle M_j M_i \rangle_{\rho} }{2} - \langle M_i \rangle_{\rho} \langle M_j \rangle_{\rho} \label{eqcm}
	\end{eqnarray*}
	\remark The covariance matrix depends on the state $\rho$ and the observables $\{M_k\}$.}

Let $\rho_{AB}$ be a bipartite state defined on the Hilbert space $\mathcal{H} = \mathcal{H}_A \otimes \mathcal{H}_B$ where $\mathcal{H}_A$ and $\mathcal{H}_B$ has dimension $d_A$ and $d_b$ respectively. We can choose $d^2_A$ observables $\{A_k\}$ in $\mathcal{H}_A$ forming an orthonormal basis of the observable space. For example, we may choose Pauli matrices along with identity for a qubit. Similarly, we take $\{B_k\}$ as a basis of observables in $\mathcal{H}_B$. Considering the total set $\{M_k\} = \{A_k \otimes I, I \otimes B_k\}$, the block matrix form of the CM is given as
\begin{eqnarray}
	{Cov}(\rho_{AB}, \{M_k\})=
	\begin{pmatrix}
		A & C \\
		C^T & B 
	\end{pmatrix} \label{cmblock}
\end{eqnarray}
where $A={Cov}(\rho_A, \{A_k\})$, $B={Cov}(\rho_B, \{B_k\})$ and $C=[C_{ij}]$ with entries $C_{ij} = \langle A_i \otimes B_j\rangle_{\rho_{AB}} - \langle A_i \rangle_{\rho_A} \langle B_j \rangle_{\rho_B}$. The separability criteria based on CM is defined as follows.\\
\textbf{Covariance matrix criterion (CMC).} Let ${Cov}(\rho_{AB}, \{M_k\})$ be a CM as defined in (\ref{cmblock}). If $\rho_{AB}$ is separable, then there exist states $|a_k\rangle \langle a_k|$ on $\mathcal{H}_A$ and $|b_k\rangle \langle b_k|$ on $\mathcal{H}_B$ such that for $\kappa_A = \sum_k p_k  Cov(|a_k\rangle \langle a_k|)$ and $\kappa_B = \sum_k p_k  Cov(|b_k\rangle \langle b_k|)$, where $\sum_k p_k =1$, the following inequality holds \cite{guhne2007}.
\begin{eqnarray}
	{Cov}(\rho_{AB}, \{M_k\}) = \kappa_A \oplus \kappa_B
\end{eqnarray}
If no such $\kappa_A$ or $\kappa_B$ exist then $\rho_{AB}$ is entangled.\\
Although the criterion is independent of the choice of the observables, a certification of violation of the criteria is simplified by choosing the Schmidt basis in operator space or by using an appropriate local filtering \cite{guhne2007}.
CMC is strong in the sense that it detects many bound entangled states, and, with the help of filtering
operations it can detect all entangled states for two qubits, just as the PPT criterion \cite{guhnerev}.
Moreover, the CCNR criterion can be shown to be a corollary of the CMC \cite{guhnerev}. 
\subsection{Entanglement detection criteria based on correlation tensor}
{\definition (\textit{Correlation matrix} :) For a bipartite state $\rho_{AB}$ defined on $\mathcal{H}=\mathcal{H}_A \otimes \mathcal{H}_B$, where $\mathcal{H}_A$ and $\mathcal{H}_B$ have dimensions $d_A$ and $d_B$, respectively, the correlation matrix is defined as \cite{sarbicki}
	\begin{equation}
		C_{a,b} = Tr[\rho_{AB} G_{a}^A \otimes G_{b}^B] \label{cor}
	\end{equation}
	where $G_{a}^A$ and $G_{b}^B$ denote arbitrary orthonormal basis in $B(\mathcal{H}_A)$ and $B(\mathcal{H}_B)$. In particular, if $(G_{a}^A)_{a\neq 0}$, $(G_{b}^B)_{b\neq 0}$ represent orthonormal traceless operators and $G_{0}^A=\frac{I_{A}}{\sqrt{d_{A}}}$, $G_{0}^B=\frac{I_{B}}{\sqrt{d_{B}}}$ then $\frac{1}{\sqrt{d_{A}}}G_{a}^A \in B(\mathcal{H}_A)$ and $\frac{1}{\sqrt{d_{B}}}G_{b}^B \in B(\mathcal{H}_B)$ forms a canonical basis.}\\
A correlation matrix may be used to develop some entanglement detection criteria. A few of them are given below.
\subsubsection{I. CCNR criterion based on correlation tensor}
If $\rho_{AB}$ is separable, then CCNR criterion reduces to the following bound \cite{rudolph2003}
\begin{equation}
	||C_{a,b}||_1  \leq 1
\end{equation}
where $C_{a,b}$ denote the correlation matrix defined in (\ref{cor}) and $||.||_1$ denotes the trace norm. 
\subsubsection{II. de Vicente (dV) Criterion}
de Vicente (dV) criterion states that if the state $\rho_{AB}$ in $d_{A}\otimes d_{B}$ dimensional system is separable then the correlation matrix $C_{a,b}$ defined in (\ref{cor}) for the state $\rho_{AB}$ satisfies the inequality \cite{dv}
\begin{equation}
	||C_{a,b}||_1 \leq \frac{\sqrt{d_A d_B (d_A -1)(d_B - 1)}}{2}
\end{equation}
\subsubsection{III. Correlation Tensor (CT) Criterion}
CT criterion \cite{sarbicki} states that if $\rho_{AB}$ is separable, then
\begin{equation}
	||D_{x}^A C^{can} D_{y}^B||_1 \leq \mathcal{N}_A (x) \mathcal{N}_B (y) \label{ct}
\end{equation}
where $D_x^A =$ diag($x, 1, 1, . . ., 1$), $D_y^B =$ diag($y, 1, 1, . . ., 1$),
$\mathcal{N}_A (x)= \sqrt{\frac{d_A - 1 + x^2}{d_A}}$, $\mathcal{N}_B (y)= \sqrt{\frac{d_B - 1 + y^2}{d_B}}$, $x,y \geq 0$, and $C^{can}$ is the correlation matrix defined by canonical basis.
\subsection{Detection of entanglement through witness operator} \label{sec-wo}
The entanglement detection methods we discussed so far require applying certain operations to a density matrix, to decide whether the state is entangled or not. However, there is a necessary and sufficient entanglement detection criterion in terms of directly measurable observables. These observables are called entanglement witnesses (or simply witnesses).
Mathematically, the entanglement witness operator is defined as
a Hermitian operator $\mathbf{W}$ that satisfies the following properties:
\begin{enumerate}
	\item[(i)] $Tr[\mathbf{W} \rho^{sep} ] \geq 0$ for all separable states $\rho^{sep}$
	\item[(ii)] $Tr[\mathbf{W} \rho^{ent} ] < 0$ for atleast one entangled state $\rho^{ent}$
\end{enumerate}
{\remark $Tr[\mathbf{W} \rho ] < 0$ implies that the state $\rho$ is entangled and detected by the witness operator $\mathbf{W}$. }

	\begin{figure}[h!]
		\begin{center}
		\includegraphics[width=0.7\textwidth]{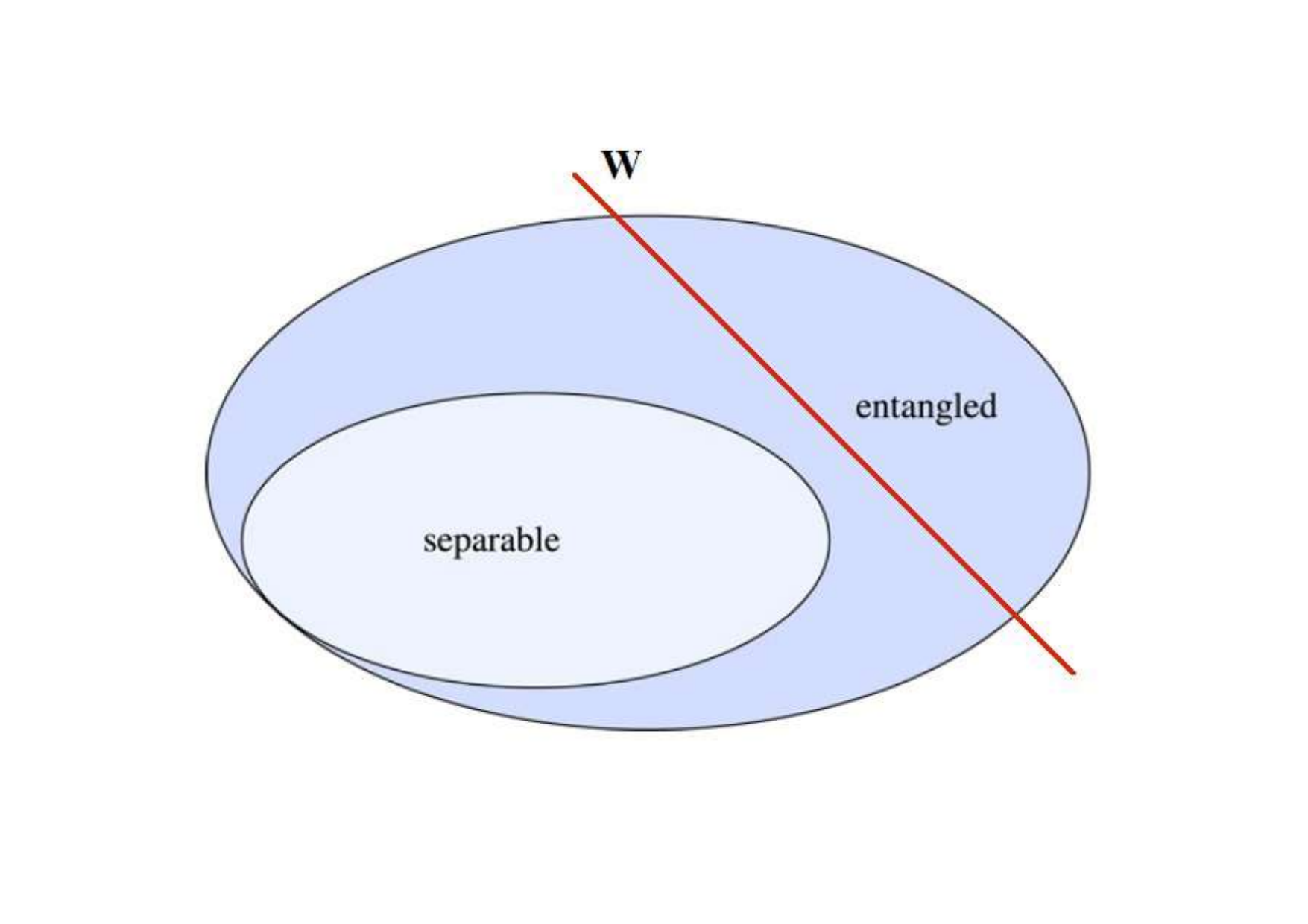}
		\caption{Schematic diagram of the set of all states; the set of separable states (lying in the light blue colour region) as a convex subset of the set of all states. The red line represents the hyperplane described by the witness operator $\mathbf{W}$. The equation of the hyperplane is given by $Tr[\mathbf{W} \rho] = 0$ where the state described by the density operator $\rho$ lies on the hyperplane.}
	\end{center}
	\end{figure}

The expectation value of an observable depends linearly on the state. Thus, geometrically the set of states where $Tr[\mathbf{W}\rho]= 0$ represents a hyperplane in the set of all states. The hyperplane divides the set of all states into two parts. A part with $Tr[\mathbf{W}\rho] \geq 0$ indicates the set of all separable states while the other part with $Tr[\mathbf{W}\rho] < 0$ represents the set of all entangled states detected by $\mathbf{W}$. \\
The existence of entanglement witnesses is a consequence of the Hahn-Banach theorem \cite{rudin} in functional analysis providing a necessary and sufficient condition to detect entanglement. A special geometric variant of the Hahn-Banach theorem states that if a set is convex and compact, then a point lying outside the set can be separated from it by a hyperplane. Since the set of separable states is convex and compact, for each entangled state, there exists an entanglement witness detecting it.
Although this theorem ensures that any entangled state can in principle be detected with an entanglement witness, the
construction of a witness operator is not an easy task. \\
Let us explore the construction of a witness operator from partial transposition for bipartite entangled states.\\
Consider an NPT entangled state denoted by $\rho_{e}$. Then by PPT criteria, we have  $\lambda_{min} (\rho_e^{T_B}) < 0$ where  $\lambda_{min} (\rho_e^{T_B})$ denotes the minimum eigenvalue of the partially transposed matrix $\rho_e^{T_B}$. Define an operator ${W}^{(PT)}$ as
 \begin{eqnarray}
 	{W}^{(PT)} = (|v\rangle \langle v|)^{T_B} \label{wpt}
 \end{eqnarray}
where $|v\rangle$ denotes the normalized eigenvector corresponding to the eigenvalue $\lambda_{min} (\rho_e^{T_B})$. 
{\theorem ${W}^{(PT)}$ is an entanglement witness operator detecting the state $\rho_e$}
\begin{proof}
	For any separable state $\rho^{sep}$, since  $(\rho^{sep})^{T_B} \geq 0$, we have 
	\begin{eqnarray}
		Tr[{W}^{(PT)}\rho^{sep}] &=& Tr[ (|v\rangle \langle v|)^{T_B}\rho^{sep} ]\nonumber \\&=& Tr[ |v\rangle \langle v|(\rho^{sep})^{T_B} ] = \langle v| (\rho^{sep})^{T_B}|v \rangle \geq 0
	\end{eqnarray}
	where the second equality follows from the identity $Tr[X^{T_B} Y] = Tr[X Y^{T_B}]$ for any matrix $X$ and $Y$.
	Now the expectation value of the witness ${W}^{(PT)}$ for the entangled state $\rho_e$ is given as
	\begin{eqnarray}
		Tr[{W}^{(PT)}\rho_{e}] = Tr[ |v\rangle \langle v|^{T_B}\rho_{e} ] = Tr[ |v\rangle \langle v|\rho_{e}^{T_B} ] = \langle v| \rho_{e}^{T_B}|v \rangle = \lambda_{min} (\rho_e^{T_B}) < 0
	\end{eqnarray}
Thus, ${W}^{(PT)}$ satisfies all the properties of a witness operator. Hence, ${W}^{(PT)}$ is a witness detecting $\rho_e$. 
\end{proof}
There are two classes of witness operators: Decomposable witness operators and indecomposable witness operators. The former detects only NPTES and the latter class of witness operators detects BES together with NPTES. Terhal \cite{terhal1} first introduced a family of indecomposable positive linear maps based on entangled quantum states using the notion of an unextendible product basis. Soon after this work, Lewenstein et al. \cite{mlewenstein} extensively studied the indecomposable witness operator and provided an algorithm to optimize them. The construction of witness operators is important in the sense that they can be used in an experimental setup to detect whether the generated state in an experiment is entangled. There are different methods of the construction of witness operator in the literature \cite{ganguly,adhikari2,sarbicki,halder,wiesniak}.
Witness operators can also be used in the detection of entangled states that act as a resource state in the teleportation protocol \cite{ganguly1,adhikari2}. Witness operator is also used in the quantification of entanglement \cite{brandao} and in the estimation of the lower bound of the concurrence of the entangled states in $d_{1}\otimes d_{2} (d_{1}\leq d_{2})$ dimensional systems \cite{mintert2,kchen}.


\subsection{Moment based methods for the detection of entangled states} \label{sec-ptmoment}
If an experiment aims at producing a specific quantum
state with few particles, entanglement witnesses or Bell
inequalities provide mature tools for entanglement detection.
However, these methods require a significant number of measurements for larger and noisy systems. Moreover,
some of the standard constructions of witnesses are not
very powerful. To overcome this, methods using locally
randomized measurements have been put forward. In
these schemes, one performs measurements on the particles 
in random bases and determines the moments from
the resulting probability distribution. It was noted earlier
that this approach allows one to detect entanglement
through the estimation of the moments of the density matrix \cite{sjvan}.
Recently, this approach has become the center of attention
and found experimental applications, for instance, it was
shown that, with these methods, entropies can be estimated
\cite{brydges}, different forms of multiparticle entanglement
can be characterized \cite{kett19}, and bound
entanglement can be detected \cite{simai}. A few moment based entanglement detection criteria discussed in the literature are given below.
\subsubsection{I. $p_3$-PPT criterion}
Elben et al. \cite{elben} proposed a method for detecting bipartite entanglement in a many-body mixed state based on estimating moments of the partially transposed density matrix. Since partial transposition operation is a positive but not completely positive map, it is not a physically realizable map and thus it may not be implemented in the experiment. But in spite of the above difficulty in realizing the partial transposition operation in an experiment, the measurement of their moments is possible. A condition to detect entanglement called $p_3$-PPT criterion, was proposed using the first three PT moments. If $\tau$ denotes the partial transposition operation, then the $k^{th}$ moment defined by $ p_k(\rho^{\tau})=Tr[(\rho^{\tau})^k]$  has the advantage that it can be estimated using shadow tomography in a more efficient way than if one had to reconstruct the state $\rho$ via full quantum state tomography \cite{elben}.
The $p_3$-PPT condition states that any PPT state $\rho$ satisfies the following inequality
\begin{equation}
	{L}_1 \equiv (p_2(\rho^{\tau}))^2 - 	p_3(\rho^{\tau}) p_1(\rho^{\tau}) \leq 0 \label{p3ppt}
\end{equation}
The violation of above inequality (\ref{p3ppt}) by any $d_1 \otimes d_2$ dimensional bipartite state $\rho$ indicates that the state is an NPT entangled state.
\subsubsection{II. Criteria based on $D_3^{(in)}$ inequality}
Neven et al. \cite{neven} proposed a set of inequalities, known as $D_k^{(in)}$ inequalities to detect bipartite NPT entangled states. Each $D_k^{(in)}$ involves the first $k$ moments of the partially transposed operator $\rho^{\tau}$. The violation of any single $D_k^{(in)}$ inequality implies $\rho$ is NPT entangled state. The inequalities $D_1^{(in)}, D_2^{(in)}, D_3^{(in)}, D_4^{(in)}$ are given below.
\begin{eqnarray}
	&D_1^{(in)}:& p_1(\rho^{\tau}) \geq 0 \\
	&D_2^{(in)}:& (p_1(\rho^{\tau}))^2 - p_2(\rho^{\tau})  \geq 0 \\
	&D_3^{(in)}:& p_3(\rho^{\tau}) + \frac{1}{2} (p_1(\rho^{\tau}))^3 - \frac{3}{2} p_1(\rho^{\tau})p_2(\rho^{\tau}) \geq 0\\
	&D_4^{(in)}:&  \left(\frac{1}{2} (p_1(\rho^{\tau}))^2 - p_2(\rho^{\tau})\right)^2 - \frac{1}{3} (p_1(\rho^{\tau}))^4 +\frac{4}{3} p_1(\rho^{\tau})p_3(\rho^{\tau}) \geq 0
\end{eqnarray}
In a similar way, the inequality involves up to $k$th moment which is also known as $D_k^{(in)}$ inequality may be constructed.
 Since, $p_1(\rho^{\tau}) = 1$ for any quantum state $\rho$, it implies that the inequality $D_1^{(in)}$ is trivially satisfied. Similarly, for any quantum state $\rho$, we have $p_2(\rho^{\tau})=p_2(\rho)$, where $p_2(\rho)$ is the purity of $\rho$, the inequality $D_2^{(in)}$ is
also trivially satisfied \cite{neven}. Therefore, one can obtain the first non-trivial condition in the form of $D_3^{(in)}$ inequality, which may be rewritten as
\begin{equation}
	{L}_2 \equiv \frac{3}{2} (p_1(\rho^{\tau}))(p_2(\rho^{\tau})) - \frac{1}{2}(p_1(\rho^{\tau}))^3 - p_3(\rho^{\tau})  \leq 0
	\label{d3}
\end{equation}
The inequality (\ref{d3}) is satisfied by all PPT states and its violation certifies the existence of NPTES, which may be expressed as 
\begin{equation}
	{L}_2  > 0.
	\label{d3v}
\end{equation}
Neven et al. \cite{neven} showed that knowing only the first three moments $p_1(\rho^{\tau})$, $p_2(\rho^{\tau})$ and $p_3(\rho^{\tau})$, the above inequality   detects more entangled states than the $p_3$-PPT criterion when the purity of $\rho^{\tau}$ is greater than or equal to $1/2$, i.e., when $1/2 \leq p_2(\rho^{\tau}) \leq 1$. In the other region, i.e., when $0 \leq p_2(\rho^{\tau}) < 1/2$, the $p_3$-PPT criterion detects more entangled states than the $D_3^{(in)}$ criterion.
\subsubsection{III. $p_3$-OPPT criterion}
Yu et. al. \cite{guhne2021} introduced an optimal entanglement detection criteria based on partial moments called $p_3$-Optimal Positive Partial Transpose ($p_3$-OPPT) criterion. This optimal separability criterion can be stated as follows. For any bipartite state $\rho_{AB} \in B(\mathcal{H}_A \otimes \mathcal{H}_B)$, the following inequality holds:
\begin{eqnarray}
	{L}_3 =	\mu x^3 + (1-\mu x)^3 - p_3 \leq 0
	\label{l3}
\end{eqnarray}
where $x= \frac{\mu + \sqrt{\mu [p_2(\mu + 1 )-1]}}{\mu(\mu+1)}$ and $\mu = \lfloor \frac{1}{p_2} \rfloor$.

\subsubsection{IV. Criteria based on moments of realignment matrix}
Zhang et al. \cite{tzhang} proposed another entanglement detection
criteria in terms of the quantities called realignment moments. To derive their entanglement detection criterion, they have defined the realignment moments for a $d \otimes d$ dimensional bipartite state  $\rho_{AB}$ as 
\begin{eqnarray}
	r_k [R(\rho_{AB} )] = Tr[R(\rho_{AB} ) (R(\rho_{AB} ))^{\dagger}]^{k/2},\; k = 1, 2, . . ., d^2 \label{eq-r_k}
\end{eqnarray}
where $d^2$ is order of the matrix $R(\rho_{AB})$.\\
The separability criterion based on realignment moments $r_2$ and $r_3$ is stated as follows. If a quantum state $\rho$ is separable, then
\begin{eqnarray}
	{L}_4 \equiv (r_2[R(\rho_{AB} )]^2 - r_3[R(\rho_{AB} )] \leq 0	\label{L_4}
\end{eqnarray}
Violation of the inequality (\ref{L_4}), i.e., ${L}_4 > 0$ implies that the state $\rho$ is entangled.\\
A stronger separability criterion based on Hankel matrices and involving higher order $r_k$ has been derived in \cite{tzhang}.
For $r=(r_0, r_1, r_2, . . ., r_{d^2})$, Hankel matrices can be constructed as  $[{H}_k (r)]_{ij} = r_{i+j}$ and $[{B}_l (r)]_{ij}=r_{i+j+1}$ for $i,j=0,1,2,. . ., k$. The criterion may be stated as follows.
If $\rho_{AB}$ is separable, then for $k = 1, 2, . . ., \lfloor \frac{d^2}{2} \rfloor$  and $l = 1, 2, . . ., \lfloor \frac{d^2 - 1}{2} \rfloor$ with $r_1[R(\rho_{AB} )] = 1$, we have 
\begin{eqnarray}
	\widehat{H}_k (r) &=& [r_{i+j}[R(\rho)]] \geq 0 \label{hk} \\ \widehat{B}_l (r)&=& [r_{i+j+1} [R(\rho)]] \geq 0
	\label{bl}
\end{eqnarray}
\begin{center}
****************
\end{center}

\chapter{Entanglement Detection via Positive but not Completely Positive Map}\label{ch2}
\vspace{1cm}
	{\small \emph{ "I cannot seriously believe in [the quantum theory] because it cannot be
		reconciled with the idea that physics should represent a reality in time
		and space, free from spooky actions at a distance."\\
		- Albert Einstein}}
\vspace{1cm}
\noindent \hrule
\noindent\emph{ In this chapter\;\footnote { This chapter is based on the published research article, ``S. Aggarwal, S. Adhikari, \emph{Search for an efficient entanglement witness operator for bound entangled states in bipartite quantum systems},  Annals of Physics, {\bf{444}}, 169043, (2022)"} we take an analytical approach to construct a family of witness operators detecting NPTES and BES in arbitrary dimensional bipartite quantum systems. To construct the family of witnesses, we first construct a linear map using partial transposition and realignment operation. Then we find some conditions on the parameters of the map for which the map is positive but not completely positive. Finally, we prove its efficiency by detecting several bipartite bound entangled states that were previously undetected by some well-known separability criteria, namely, the dV criterion, CCNR criterion, and the separability criteria based on correlation tensor (CT) proposed by Sarbicki et al and find that our witness operator detects bound entangled states undetected by these criteria.}
\noindent \hrulefill
\newpage
The entanglement detection problem is one of the important problems in quantum information theory. Gurvit \cite{gurvit, gurvit1,huang2014} showed that this problem is NP-complete and thus this may be the possible reason that only one criterion is not sufficient to detect all entangled states. In 1996, Peres designed the positive partial transpose (PPT) criterion to detect entanglement \cite{peres}. After a few years, another strong criterion for separability called the computable cross norm or realignment (CCNR) criterion was found by Rudolph \cite{rudolph2003,brussperes} and Chen et al. \cite{chenwu}. The realignment criterion is independent of the PPT criterion and turns out to be strong enough to detect bound entanglement, for which the former criterion fails.  
More criteria discussed in section \ref{sec-entdet}, such as range criterion, majorization criterion, and covariance matrix criterion have been developed for detecting the PPTES. 
Although there is a lot of progress in this line of research, till now there does not exist any necessary and sufficient condition for entanglement detection in multipartite and higher dimensional bipartite systems. This may be due to the fact that in higher dimensional systems, the eigenspectrum of partial transposition of entangled state contains positive eigenvalues. These types of states are  PPTES or BES as discussed in the sections \ref{sec-bes} and \ref{ptmethod}. 
In spite of these criteria, there exist other approaches such as the construction of a positive but not completely positive map discussed in section \ref{sec-pncp}, which may also help in the detection of entangled states. 
In this chapter, we construct a witness operator by employing the theoretical method based on the linear maps which are positive but not completely positive, also referred to as PnCP maps. Positive linear maps are useful in entanglement detection \cite{terhal2000}. The simplest example of a PnCP map is the partial transposition (PT) map. Any positive, but not completely positive map gives rise to a construction of entanglement witnesses via the Choi-Jamiolkowski isomorphism \cite{jamio}. The entanglement witness operator discussed in section \ref{sec-wo} plays a significant role in the entanglement detection problem since if we have some prior partial information about the state that is to be detected then entanglement can be detected in the experiment by the construction of witness operator \cite{guhnerev}. We also compared the detection power of our witness operator with three well-known powerful entanglement detection criteria, namely, dV criterion \cite{dv}, CCNR criterion \cite{rudolph2003}, and the separability criteria based on correlation tensor (CT) proposed by Sarbicki et. al. \cite{sarbicki} and find that our witness operator detects more entangled states than these criteria.

\section{Construction of a Linear Map}
To start with, let us consider a map $\Phi_{\alpha,\beta}: M_{n}{(\mathbb{C})} \rightarrow M_{n}{(\mathbb{C})} $, $n\geq2$ defined as
\begin{equation}
	\Phi_{\alpha,\beta}(A) = \alpha  A^{T_B} + \beta R(A)
	\label{phi}
\end{equation}
where $A \in M_n{(\mathbb{C})}$, $\alpha,\beta \in \mathbb{R^{+}}$. $R$ represents the realignment operation and $T_B$ denote the partial transposition operation with respect to subsystem $B$. The parameters $\alpha$ and $\beta$ are chosen in such a way that the map $\Phi$ represents a positive map and then we fix the parameters $\alpha$ and $\beta$.\\
Now our prime task is to search for the parameters $\alpha$ and $\beta$ for which the map $\Phi_{\alpha,\beta}$ will be positive. Once chosen the parameters $\alpha$ and $\beta$, we then move on to investigate whether the positive map will be completely positive for such $\alpha$ and $\beta$.
\subsection{Positivity of the map $\Phi_{\alpha,\beta}$}
Let $A \in M_n{(\mathbb{C})}$ be a positive semi-definite matrix. If $\Phi_{\alpha,\beta}(A)$ is positive for some $\alpha, \beta \in \mathbb{R^{+}}$ then we can say that the map $\Phi_{\alpha,\beta}$ represent a positive map. To show the positivity of the map, it is enough to show $\lambda_{min}(\Phi_{\alpha,\beta}(A)) \geq 0$ for $A\geq 0$. Therefore, $\Phi_{\alpha,\beta}$ will become a positive map if for some fixed $\alpha,\beta \in \mathbb{R^{+}}$, we have
\begin{equation}
	\lambda_{min}(\alpha  A^{T_B} + \beta  R(A)) \geq 0
	\label{mineigen}
\end{equation}
\subsubsection{Determination of the parameters $\alpha$ and $\beta$}
Let us consider $d_{1} \otimes d_{2}$ dimensional quantum system which can be described by the density operator, say, $\varrho_{AB}$. Thus, we consider the domain of the map as the set $D$ of all bipartite density matrices $\varrho_{AB} \in M_{d_{1}d_{2}}{(\mathbb{C})}$ such that $R(\varrho_{AB})$ is Hermitian. Therefore, the map $\Phi_{\alpha,\beta}$ given in (\ref{phi}) can be re-expressed as
\begin{equation}
	\Phi_{\alpha,\beta}(\varrho_{AB}) = \alpha \varrho_{AB}^{T_B} + \beta  R(\varrho_{AB})
	\label{phiredefine}
\end{equation}
The map $\Phi_{\alpha,\beta}(\varrho_{AB})$ can be shown to be positive by considering two different cases. In the first case, we assume that the state $\varrho_{AB} \in M_{d_{1}d_{2}}{(\mathbb{C})}$ is PPT and in the second case, we will consider $\varrho_{AB}$ as an NPT entangled state.\\
\textbf{Case-I:} When $\varrho_{AB}$ is a PPT state.\\
Since $\varrho_{AB}$ is a PPT state so $\lambda_{min}(\varrho_{AB}^{T_{B}})\geq 0$, where $\lambda_{min}(.)$ denote the minimum eigenvalue of $(.)$. Thus, using (\ref{phiredefine}) and Weyl's inequality given in (\ref{weyl}), we can write
\begin{equation}
	\lambda_{min}(\Phi_{\alpha,\beta}(\varrho_{AB})) \geq \alpha \lambda_{min}(\varrho_{AB}^{T_B}) + \beta \lambda_{min}(R(\varrho_{AB}))
	\label{mineig1}
\end{equation}
(i) If $\lambda_{min}(R(\varrho_{AB}))$ is non-negative then $\lambda_{min}\{\Phi_{\alpha,\beta}(\varrho_{AB})\}\geq 0$ for all $\alpha, \beta > 0$. Thus the map $\Phi_{\alpha,\beta}$ is positive $ \forall \alpha, \beta > 0$.\\
(ii) If $\lambda_{min}(R(\varrho_{AB}))$ is negative then we can choose the parameters $\alpha$ and $\beta$ in such a way so that  $\lambda_{min}\{\Phi_{\alpha,\beta}(\varrho_{AB})\}\geq 0$. Therefore, the chosen parameters $\alpha$ and $\beta$ should satisfy the inequality
\begin{equation}
	\frac{\alpha}{\beta} \geq \frac{\lambda_{min}(R(\varrho_{AB}))}{\lambda_{min}(\varrho_{AB}^{T_B})}
	\label{alphabeta1}
\end{equation}
Hence, if $\lambda_{min}(R(\varrho_{AB}))$ is negative then the map  $\Phi_{\alpha,\beta}$ is positive for some parameter $\alpha$ and $\beta$ that satisfies the inequality (\ref{alphabeta1}).\\
\textbf{Case-II:} When $\varrho_{AB}$ is an NPT entangled state.\\
If $\varrho_{AB}$ is an NPT entangled state then $\lambda_{min}(\varrho_{AB}^{T_{B}}) < 0$. Thus, from (\ref{mineig1}), we conclude the following fact:\\
(i) If $\lambda_{min}(R(\varrho_{AB}))$ is non-negative then $\lambda_{min}\{\Phi_{\alpha,\beta}(\varrho_{AB})\}\geq 0$ when the parameter $\alpha, \beta$ satisfy the inequality
\begin{equation}
	\frac{\alpha}{\beta} \leq \frac{\lambda_{min}(R(\varrho_{AB}))}{\lambda_{min}(\varrho_{AB}^{T_B})}
	\label{alphabeta2}
\end{equation}
therefore, the map $\Phi_{\alpha,\beta}$ is positive if $\alpha,\beta$ satisfies (\ref{alphabeta2}).\\
(ii) If $\lambda_{min}(R(\varrho_{AB}))$ is negative then the map  $\Phi_{\alpha,\beta}$ never be a positive map for any $\alpha, \beta > 0$.
\subsubsection{Illustration}
\noindent Here, we will analyse the positivity of the map $\Phi_{\alpha,\beta}: M_{4}{(\mathbb{C})} \rightarrow M_{4}{(\mathbb{C})}$. For this, we start with a two-qubit maximally mixed marginals state which is given by \cite{luo2008, rau2010}
\begin{equation}
	\varrho = \frac{1}{4} [I_2 \otimes I_2 + \sum_{j=1}^{3} t_j \sigma_j \otimes \sigma_j]
\end{equation}
where $I_{2}$ denote the identity matrix of order 2 and $\sigma_{i},i=1,2,3$ denote the Pauli matrices.\\
The matrix representation of $\varrho$ in the computational basis as
\begin{equation}
	\varrho=\frac{1}{4}
	\begin{pmatrix}
		1+t_3 & 0 & 0 & t_1 - t_2 \\
		0 & 1-t_3 & t_1+t_2 & 0\\
		0 & t_1+t_2 &1-t_3 & 0 \\
		t_1-t_2 & 0 & 0 & 1+t_3 \\
	\end{pmatrix}
\end{equation}
The eigenvalues of $\varrho$ are non-negative if the following inequalities holds \cite{ali2010}
\begin{eqnarray}
	(1 - t_3)^2 \geq (t_1 + t_2)^2 \quad \text{and}\quad
	(1 + t_3)^2 \geq (t_1 - t_2)^2
	\label{non-negeigenvalue}
\end{eqnarray}
The matrix representation of partial transposed state $\varrho^{T_{B}}$ is given by
\begin{equation}
	\varrho^{T_{B}}=\frac{1}{4}
	\begin{pmatrix}
		1+t_3 & 0 & 0 & t_1 + t_2 \\
		0 & 1-t_3 & t_1-t_2 & 0\\
		0 & t_1-t_2 &1-t_3 & 0 \\
		t_1+t_2 & 0 & 0 & 1+t_3 \\
	\end{pmatrix}
\end{equation}
If $\lambda_{min}(.)$ denote the minimum eigenvalue of $(.)$ then the minimum eigenvalue of $\varrho^{T_{B}}$ is given by
\begin{eqnarray}
	\lambda_{min}(\varrho^{T_{B}})= min\{\frac{1+t_{1}-t_{2}-t_{3}}{4},\frac{1-t_{1}+t_{2}-t_{3}}{4},\frac{1-t_{1}-t_{2}+t_{3}}{4},
	\frac{1+t_{1}+t_{2}+t_{3}}{4}\}
	\label{partranseig}
\end{eqnarray}
The realigned matrix $R(\varrho)$ can be expressed as
\begin{equation}
	R(\varrho)=\frac{1}{4}
	\begin{pmatrix}
		1+t_3 & 0 & 0 & 1 - t_3 \\
		0 & t_1-t_2 & t_1+t_2 & 0\\
		0 & t_1+t_2 & t_1-t_2 & 0 \\
		1 - t_3 & 0 & 0 & 1+t_3 \\
	\end{pmatrix}
\end{equation}
the minimum eigenvalue of $R(\varrho)$ is given by
\begin{eqnarray}
	\lambda_{min}(R(\varrho))= min\{\frac{1}{2},\frac{t_1}{2},\frac{t_2}{2},\frac{t_3}{2}\}
	\label{realigneig}
\end{eqnarray}
We are now in a position to determine the non-negative parameters $\alpha$ and $\beta$ for which the map $\Phi_{\alpha,\beta}$ is positive. To be specific, we discuss here only one parameter family of separable and entangled state.\\
\textbf{Case-I:} One parameter family of separable states $\varrho_{1}$ for which $\lambda_{min}(R(\varrho_1))<0$.\\
Let us take $t_{1}=t_{2}=0$ and $t_{3}=-t, 0< t< 1$. The minimum eigenvalues $\lambda_{min}(\varrho_1^{T_{B}})$ and $\lambda_{min}(R(\varrho_1))$ are given by
\begin{eqnarray}
	\lambda_{min}(\varrho_{1}^{T_{B}})= min\{\frac{1+t}{4},\frac{1+t}{4},\frac{1-t}{4},\frac{1-t}{4}\}=\frac{1-t}{4}
	\label{min1}
\end{eqnarray}
\begin{eqnarray}
	\lambda_{min}(R(\varrho_{1}))= min\{0,0,\frac{1}{2},\frac{-t}{2}\}=\frac{-t}{2}
	\label{min2}
\end{eqnarray}
Since $\lambda_{min}(\varrho_{1}^{T_{B}})>0$ so the state $\varrho_{1}$ represent a family of separable state. Using (\ref{min1}) and (\ref{min2}), we can determine $\alpha$ and $\beta$ for which the map is positive. Therefore, the map $\Phi_{\alpha,\beta}$ will be positive if
\begin{eqnarray}
	\frac{\alpha}{\beta} \geq \frac{2t}{1-t}, ~~0<t<1
	\label{posmap1}
\end{eqnarray}
\textbf{Case-II:} One parameter family of separable states $\varrho_{2}$ for which $\lambda_{min}(R(\varrho_{2}))\geq0$.\\
In this case, we take $t_{1}=t_{2}=0$ and $0 \leq t_{3} \leq 1$. In this case, $\lambda_{min}(\varrho_{2}^{T_{B}})$ is found to be same as given in
(\ref{min1}) and $\lambda_{min}(R(\varrho_{2}))$ is given by
\begin{eqnarray}
	\lambda_{min}(R(\varrho_{2}))= min\{0,0,\frac{1}{2},\frac{t_{3}}{2}\}=0
	\label{min4}
\end{eqnarray}
Since $\lambda_{min}(\varrho_{2}^{T_{B}})\geq 0$ so the state $\varrho_{2}$ represent another class of separable states. In this case, it can be easily checked that the map $\Phi_{\alpha,\beta}$ will be positive for all parameters $\alpha, \beta \geq 0$. Thus, we can choose any positive $\alpha$ and $\beta$ to construct a positive map.\\
\textbf{Case-III:} One parameter family of entangled states $\varrho_{3}$ for which $\lambda_{min}(R(\varrho_{3}))<0$.\\
In this case, we can consider $t_{1}=1$ and $t_{3}=-t_{2}=-t, -1 \leq t \leq 1$. In this case, $\lambda_{min}(\varrho_{3}^{T_{B}})$ and $\lambda_{min}(R(\varrho_{3}))$ are same and are given by
\begin{eqnarray}
	\lambda_{min}(\varrho_{3}^{T_{B}})= \lambda_{min}(R(\varrho_{3}))= min\{\frac{1}{2},\frac{1}{2},\frac{t}{2},\frac{-t}{2}\}
	\label{min5}
\end{eqnarray}
Since the parameter $t$ lying in the interval $-1 \leq t \leq 1$ so it can be easily seen that both $\lambda_{min}(\varrho_{3}^{T_{B}})$ and $\lambda_{min}(R(\varrho_{3}))$ are negative. Thus, the state $\varrho_{3}$ represent a family of entangled states for $-1 \leq t \leq 1$ and within this interval of $t$, the map $\Phi_{\alpha,\beta}$ can never be positive for any parameters $\alpha, \beta > 0$.\\
\textbf{Case-IV:} An entangled state described by the density operator $\varrho_{4}$ for which $\lambda_{min}(R(\varrho_{4}))>0$.\\
Let us consider the case when $t_{1}=t_{2}=t_{3}=-1$. $\lambda_{min}(\varrho_{4}^{T_{B}})$ and $\lambda_{min}(R(\varrho_{4}))$ are given by
\begin{eqnarray}
	\lambda_{min}(\varrho_{4}^{T_{B}})= min\{\frac{1}{2},\frac{1}{2},\frac{1}{2},\frac{-1}{2}\}=\frac{-1}{2}
	\label{min7}
\end{eqnarray}
\begin{eqnarray}
	\lambda_{min}(R(\varrho_{4}))= min\{\frac{1}{2},\frac{1}{2},\frac{1}{2},\frac{1}{2}\}=\frac{1}{2}
	\label{min8}
\end{eqnarray}
The state $\varrho_{4}$ represents an entangled state. The map $\Phi^{(2,2)}_{\alpha,\beta}$ can never be positive for any parameters $\alpha, \beta > 0$.

\subsection{Is the map $\Phi_{\alpha,\beta}$ completely positive?}
\noindent Now, we ask whether the positive map $\Phi^{(d_{1},d_{2})}_{\alpha,\beta}: M_{d_{1}d_{2}}{(\mathbb{C})} \rightarrow M_{d_{1}d_{2}}{(\mathbb{C})}$ for some positive $\alpha,\beta$ is also completely positive. To investigate this question, we consider the domain set as the set of all $2 \otimes 2$ dimensional quantum states. The map for this particular domain will reduce to $\Phi^{(2,2)}_{\alpha,\beta}: M_{4}{(\mathbb{C})} \rightarrow M_{4}{(\mathbb{C})}$. 
From (\ref{phi}), the explicit form of $\Phi_{\alpha, \beta}^{(2,2)} (A)$ is given as
\begin{eqnarray*}
	\Phi_{\alpha, \beta}^{(2,2)} (A)=
	\begin{pmatrix}
		(\alpha + \beta) a_{11} && \alpha a_{21} + \beta a_{12} && \alpha a_{13} + \beta a_{21} && \alpha a_{23} + \beta a_{22}\\
		\alpha a_{12} + \beta a_{13} && \alpha a_{22} + \beta a_{14} &&\alpha a_{14} + \beta a_{23}&& (\alpha + \beta) a_{24}\\
		(\alpha + \beta) a_{31} && \alpha a_{41} + \beta a_{32} &&\alpha a_{33} + \beta a_{41} && \alpha a_{43} + \beta a_{42}\\
		\alpha a_{32} + \beta a_{33} && \alpha a_{42} + \beta a_{34} &&\alpha a_{34} + \beta a_{43} &&(\alpha + \beta) a_{44}\\
	\end{pmatrix}
\end{eqnarray*}
where $A= [a_{ij}]_{i,j=1}^{4}$.\\
We have already shown that for this domain set, there exist  $\alpha=\alpha'$ and $\beta=\beta'$ for which the map $\Phi^{(2,2)}_{\alpha',\beta'}$ is positive. The positive map $\Phi^{(2,2)}_{\alpha',\beta'}$ may or may not be completely positive. We show here the condition under which, the positive map $\Phi^{(2,2)}_{\alpha',\beta'}$ will be completely positive. We have considered here $2 \otimes 2$ dimensional quantum states in the domain to reduce the complexity of the calculation. For the general case, that is,  when the domain of the map $\Phi^{(d_{1},d_{2})_{\alpha,\beta}}$ contain $d_{1} \otimes d_{2}$ dimensional quantum states, the method of showing the completely positiveness property of the positive map $\Phi^{(d_{1},d_{2})}_{\alpha,\beta}$ will be same as we have shown below.\\
To start our investigation, we construct the Choi matrix using the result given in (\ref{choi}). The Choi matrix $C_{\Phi^{(2,2)}_{\alpha,\beta}}$ corresponding to the map $\Phi^{(2,2)}_{\alpha,\beta}$ can be constructed as
\begin{equation}
	C_{\Phi^{(2,2)}_{\alpha,\beta}} =
	\begin{bmatrix}
		C_{11} & C_{12}\\
		C_{21} & C_{22}\\
	\end{bmatrix}
\end{equation}
where the $8\times 8$ block matrices $C_{ij},i,j=1,2$ can be expressed as
\begin{equation}
	C_{11} = \begin{pmatrix}
		\alpha + \beta & 0 & 0 & 0 & 0 & \beta & 0 & 0\\
		0 & 0 & 0 & 0 &\alpha & 0 & 0 & 0 \\
		0 & 0 & 0 & 0& 0 & 0 & 0 & 0\\
		0 & 0 & 0 & 0& 0 & 0 & 0 & 0 \\
		0 & \alpha & \beta & 0& 0 & 0 & 0 & \beta\\
		0 & 0 & 0 & 0& 0 & \alpha & 0 & 0 \\
		0 & 0 & 0 & 0& 0 & 0 & 0 & 0\\
		0 & 0 & 0 & 0& 0 & 0 & 0 & 0\\
	\end{pmatrix}
\end{equation}

\begin{equation}
	C_{12} = \begin{pmatrix}
		0 & 0 & \alpha & 0 & 0 & 0 & 0 & 0\\
		\beta & 0 & 0 & 0& 0 &\beta & \alpha & 0\\
		0 & 0 & 0 & 0& 0 & 0 & 0 & 0\\
		0 & 0 & 0 & 0& 0 & 0 & 0 & 0 \\
		0 & 0 & 0 & \alpha & 0 & 0 & 0 & 0\\
		0 & 0 & \beta & 0& 0 & 0 & 0 & \alpha+\beta \\
		0 & 0 & 0 & 0& 0 & 0 & 0 & 0\\
		0 & 0 & 0 & 0& 0 & 0 & 0 & 0\\
	\end{pmatrix}
\end{equation}
\begin{equation}
	C_{21} = \begin{pmatrix}
		0 & 0 & 0 & 0& 0 & 0 & 0 & 0 \\
		0 & 0 & 0 & 0& 0 & 0 & 0 & 0 \\
		\alpha + \beta & 0 & 0 & 0& 0 & \beta & 0 & 0 \\
		0 & 0 & 0 & 0& \alpha & 0 & 0 & 0 \\
		0 & 0 & 0 & 0& 0 & 0 & 0 & 0  \\
		0 & 0 & 0 & 0& 0 & 0 & 0 & 0  \\
		0 & \alpha & \beta & 0& 0 & 0 & 0 & \beta  \\
		0 & 0 & 0 & 0& 0 & \alpha & 0 & 0 \\
	\end{pmatrix}
\end{equation}
\begin{equation}
	C_{22} = \begin{pmatrix}
		0 & 0 & 0 & 0& 0 & 0 & 0& 0 \\
		0 & 0 & 0 & 0& 0 & 0 & 0& 0 \\
		0 & 0 & \alpha & 0& 0 & 0 & 0& 0 \\
		\beta & 0 & 0 & 0&0 & \beta & \alpha & 0 \\
		0 & 0 & 0 & 0& 0 & 0 & 0& 0 \\
		0 & 0 & 0 & 0& 0 & 0 & 0& 0 \\
		0 & 0 & 0 & \alpha& 0 & 0 & 0& 0 \\
		0 & 0 & \beta & 0& 0 & 0 & 0& \alpha+\beta
	\end{pmatrix}
\end{equation}
From the Result-\ref{res-choicp}, it is known that if the Choi matrix $C_{\Phi^{(2,2)}_{\alpha,\beta}}$ represent a positive semi-definite matrix then the map $\Phi^{(2,2)}_{\alpha,\beta}$ is completely positive. Thus, we calculate the eigenvalues $\lambda_{i},i=1,2,...16$ of the Choi matrix $C_{\Phi^{(2,2)}_{\alpha,\beta}}$ and they are given by
\begin{eqnarray}
	&&\lambda_{1}=-2\alpha,\;\;\; \lambda_{2}= 2\alpha,\;\;\; \lambda_{3}=2(\alpha+\beta),\;\;\; \lambda_{i}=0,~~i=4,5,...16
	\label{eigval}
\end{eqnarray}
(i) When $\alpha=0$ and $\beta\geq 0$, all eigenvalues of $C_{\Phi^{(2,2)}_{\alpha,\beta}}$ comes out to be positive. Thus, $C_{\Phi^{(2,2)}_{\alpha,\beta}}$ will become a positive semi-definite matrix and hence the map $\Phi^{(2,2)}_{\alpha,\beta}$ will become a completely positive map.\\
(ii) The Choi matrix $C_{\Phi^{(2,2)}_{\alpha,\beta}}$ will not represent a positive semi-definite matrix when $\alpha>0$ and $\beta > 0$. Thus, in this case, the map will not be a completely positive map.


\section{Construction of witness operator} \label{sec-2.3}
\noindent We now use the Choi matrix $C_{\Phi^{(2,2)}_{\alpha,\beta}}$ corresponding to the map $\Phi^{(2,2)}_{\alpha,\beta}: M_{4}{(\mathbb{C})} \rightarrow M_{4}{(\mathbb{C})}$ to construct a witness operator. Thus, we show the construction by taking a simple case when the domain of the map contains only $2 \otimes 2$ density matrices but one may follow the same procedure to construct the witness operator when the domain of the map contains only $d_{1} \otimes d_{2}$ density matrices. The constructed witness operator can detect PPTES as well as NPTES that may exist in any arbitrary dimension. \\
To start with the construction of the witness operator, we first define a Hermitian operator $O_{\alpha,\beta} \in M_2(M_8)$ as
\begin{equation}
	O_{\alpha,\beta} = C_{\Phi_{\alpha,\beta}} C_{\Phi_{\alpha,\beta}}^{\dagger} =
	\begin{bmatrix}
		A & B \\
		B^{\dagger} & D \\
	\end{bmatrix}
	\label{operator1}
\end{equation}
where $A, B, D$ denote the $8 \times 8$ block matrices.\\
The $8 \times 8$ block matrices $A, B, D$ can further be expressed in terms of $4 \times 4$ block matrices as
\begin{eqnarray*}
	A=\begin{pmatrix}
		\alpha^2 + \beta^2 + (\alpha + \beta)^2 & 0& 0& 0 &	0& 2\alpha\beta& 0& 0  \\
	0& 2 \alpha^2 + 2 \beta^2& 0& 0 &	0& 0& 0& 0 \\
	0& 0& 0& 0 & 	0& 0& 0& 0  \\
	0& 0& 0& 0 & 	0& 0& 0& 0 \\
0& 0& 0& 0 &		2 \alpha^2 + 2 \beta^2& 0& 0& 0 \\
2\alpha\beta& 0& 0& 0 &	0& \alpha^2 + \beta^2 + (\alpha + \beta)^2& 0& 0 \\
0& 0& 0& 0 &	0& 0& 0& 0 \\
0& 0& 0& 0 &	0& 0& 0& 0 \\
	\end{pmatrix};
\end{eqnarray*}

\begin{equation}
	B=
	\begin{pmatrix}
		0& 0& \alpha^2 + \beta^2 + (\alpha + \beta)^2& 0&0&0&0&	2\alpha\beta\\
		0& 0& 0& 2 \alpha^2 + 2 \beta^2 &	0& 0& 0& 0\\
		0& 0& 0& 0 &	0& 0& 0& 0\\
		0& 0& 0& 0&	0& 0& 0& 0\\
		0& 0& 0& 0 &		0& 0& 2 \alpha^2 + 2 \beta^2& 0 \\
	0& 0& 	2\alpha\beta& 0 &	0& 0& 0& \alpha^2 + \beta^2 + (\alpha + \beta)^2 \\
		0& 0& 0& 0 &	0& 0& 0& 0 \\
		0& 0& 0& 0 &	0& 0& 0& 0 \\
	\end{pmatrix};\nonumber\\
\end{equation}

\begin{equation}
	D=
	\begin{pmatrix}
		0& 0& 0& 0&	0& 0& 0& 0\\
		0& 0& 0& 0&	0& 0& 0& 0\\
		0& 0& \alpha^2 + \beta^2 + (\alpha + \beta)^2& 0&	0& 0& 0& 2\alpha\beta\\
		0& 0& 0& 2 \alpha^2 + 2 \beta^2&	0& 0& 0& 0\\
		0& 0& 0& 0&	0& 0& 0& 0\\
		0& 0& 0& 0&	0& 0& 0& 0\\
			0& 0& 0& 0&0& 0& 2 \alpha^2 + 2 \beta^2& 0\\
			0& 0& 2\alpha\beta& 0&0& 0& 0& \alpha^2 + \beta^2 + (\alpha + \beta)^2
	\end{pmatrix};\nonumber\\
\end{equation}
Now, we are in a position to use the Hermitian operator $O_{\alpha,\beta}$ defined in (\ref{operator1}) for the construction of a witness operator $W_{\alpha,\beta}$ in $4\otimes4$ dimensional space to detect an entangled state lying in the same space. We define an operator $W_{\alpha,\beta}$ as
\begin{equation}
	W_{\alpha,\beta}= O_{\alpha,\beta} - \gamma I_{16}
\end{equation}
where $I_{16}$ denote the identity matrix of order 16.\\
Our task is to show that $W_{\alpha,\beta} \in M_2(M_8)$ is a witness operator for some suitable $\gamma$. To do this, we need to prove the following facts:\\
\textbf{(i)} $Tr[W_{\alpha,\beta}\sigma]\geq 0$ for all separable state $\sigma \in M_2(M_8)$.\\
\textbf{(ii)} $Tr[W_{\alpha,\beta}\rho_{e}]< 0$ for at least one entangled state $\rho_{e} \in M_2(M_8)$.\\
To prove \textbf{(i)}, let us consider any arbitrary separable state $\sigma$ lying in $4\otimes4$ dimensional space. The state $\sigma$ is given in the form as
\begin{equation}
	\sigma =
	\begin{pmatrix}
		X & Y \\
		Y^{\dagger} & Z \\
	\end{pmatrix}
\end{equation}
where $X,Z\geq 0$, and $Y$ represent a $8 \times 8$ block matrices and satisfies $X\geq YZ^{-1}Y^{\dagger}$.
Since $\sigma$ represent a separable state so $\sigma^{T_{B}}=\begin{pmatrix}
	X^{T} & Y^{T} \\
	(Y^{\dagger})^{T} & Z^{T} \\
\end{pmatrix}$ represent a positive semi-definite matrix and thus the block matrices $X^{T}$, $Y^{T}$, $(Y^{\dagger})^{T}$ and $Z^{T}$ satisfies
the following inequalities \cite{sharma2017}
\begin{equation}
	X^{T}\geq 0,~~ Z^{T}\geq 0,~~ X^{T}\geq Y^{T}(Z^{T})^{-1}(Y^{\dagger})^{T}
	\label{semi-def1}
\end{equation}
The expectation value of the operator $W_{\alpha,\beta}$ with respect to the state $\sigma$ is given by
\begin{equation}
	\langle W_{\alpha,\beta}\rangle_{\sigma}=Tr[W_{\alpha,\beta} \sigma] = Tr[O_{\alpha,\beta} \sigma] - \gamma
	\label{expectation}
\end{equation}
$Tr[O_{\alpha,\beta} \sigma]$ given in (\ref{expectation}) can be calculated as
\begin{eqnarray}
	Tr[O_{\alpha,\beta} \sigma] &=& Tr[AX + BY^{\dagger}] + Tr[B^{\dagger}Y + DZ] \nonumber\\
	&=& Tr[AX] + Tr[DZ] + 2Tr[BY^{\dagger}]\nonumber\\
	&\geq& \lambda_{min}(X) Tr[A] + \lambda_{min}(Z) Tr[D] + 2 Tr[BY^{\dagger}] \nonumber\\
	&=& (\lambda_{min}(X) + \lambda_{min}(Z)) Tr[A] + 2 Tr[BY^{\dagger}] \label{expectation1}
\end{eqnarray}
The third step follows from (\ref{weyl}) given in Result-\ref{res-weyl}  and last step holds true since $Tr[A] = Tr[D]$.\\
Using (\ref{expectation1}) in the expression (\ref{expectation}), we get
\begin{eqnarray}
	Tr[W_{\alpha,\beta} \sigma] &\geq&  (\lambda_{min}(X) + \lambda_{min}(Z)) Tr[A] + 2 Tr[BY^{\dagger}] - \gamma
	\label{wit1}
\end{eqnarray}
Choosing $\gamma=2 (Tr[BY^{\dagger}]+ Tr[A] ||Y||_2)$, the inequality (\ref{wit1}) reduces to
\begin{eqnarray}
	Tr[W_{\alpha,\beta} \sigma] &\geq&  (\lambda_{min}(X) + \lambda_{min}(Z)) Tr[A] - 2Tr[A]||Y||_2
	\label{wit2sig}
\end{eqnarray}
Since $\lambda_{min}(X)$ and $\lambda_{min}(Z)$ are non-negative so we can apply $AM \geq GM$ on $\lambda_{min}(X)$ and $\lambda_{min}(Z)$ and thus the inequality (\ref{wit2sig}) further reduces to
\begin{eqnarray}
	Tr[W_{\alpha,\beta} \sigma] &\geq& 2Tr[A] ((\lambda_{min}(X)\lambda_{min}(Z))^{\frac{1}{2}}-||Y||_2)
	\label{wit3}
\end{eqnarray}
Let us now use the following separability criterion \cite{johnston} that may be stated as, in particular, if any separable state in $4 \otimes 4$ dimensional system described by the density operator $\sigma =
\begin{bmatrix}
	X & Y \\
	Y^{\dagger} & Z \\
\end{bmatrix}$, then
\begin{equation}
	||Y||_2^{2} \leq \lambda_{min}(X) \lambda_{min}(Z) \label{res1}
\end{equation}
Considering (\ref{wit3}), (\ref{res1}) and the fact that $Tr[A]\geq 0$, we can conclude that $Tr[W_{\alpha,\beta} \sigma] \geq 0$ for any separable state $\sigma \in M_2(M_8)$.\\
To prove \textbf{(ii)},\label{eg5ch2} let us consider a $4\otimes4$ dimensional BES $\rho_{p,q}$ given by \cite{akbari}.
\begin{eqnarray}
	\rho_{p,q} = p \sum_{i=1}^4 |\omega_i\rangle \langle\omega_i| +  q \sum_{i=5}^6 |\omega_i\rangle \langle\omega_i| \label{bes4by4}
\end{eqnarray} where $p$ and $q$ are non-negative real numbers and $4p + 2q = 1$. The pure states $\{|\omega_i\rangle\}_{i=1}^6$ are defined as follows:
\begin{eqnarray*}
	&&|\omega_1\rangle = \frac{1}{\sqrt{2}} (|01\rangle + |23\rangle);\;\;\quad |\omega_2\rangle = \frac{1}{\sqrt{2}} (|10\rangle + |32\rangle); \;\;\quad |\omega_3\rangle = \frac{1}{\sqrt{2}} (|11\rangle + |22\rangle)\\
	&& |\omega_4\rangle = \frac{1}{\sqrt{2}} (|00\rangle - |33\rangle);\;\;\quad
	|\omega_5\rangle = \frac{1}{2} (|03\rangle + |12\rangle) + \frac{|21\rangle }{\sqrt{2}};\;\;\quad |\omega_6\rangle = \frac{1}{2} (-|03\rangle + |12\rangle)+ \frac{|30\rangle }{\sqrt{2}}
\end{eqnarray*}
The state $\rho_{p,q}$ becomes invariant under partial transposition when $p = \frac{q}{\sqrt2}$. Therefore, the state $\rho_{p_0,q_0}$ is a PPT state, where $p_0 = \frac{\sqrt{2} - 1}{2\sqrt{2}}$ and $q_0=\frac{\sqrt{2} - 1}{2}$. Since $\|R(\rho_{p_{0},q_{0}})\|_1 = 1.08579$, which is greater than 1, so by matrix realignment criteria explained in $Theorem-\ref{thm-ccnr}$, one can say that $\rho_{p_{0},q_{0}}$ is a PPTES.\\
A simple calculation shows that $\rho_{p_{0},q_{0}}$ is detected by our witness operator $W_{\alpha,\beta}$ for all $\alpha>0, \beta > 0$. To show it explicitly, let us calculate $Tr[W_{\alpha,\beta} \rho_{p_{0},q_{0}}]$. It gives
\begin{eqnarray}
	Tr[W_{\alpha,\beta} \rho_{p_{0},q_{0}}] = -1.07107 (2\alpha^2 + \alpha\beta + 2\beta^2),  \quad \forall \; \alpha, \beta > 0
\end{eqnarray}
Since, $W_{\alpha,\beta}$ detects the BES described by the density operator $\rho_{p_{0}, q_{0}}$, so $W_{\alpha,\beta}$ satisfies both \textbf{(i)} and \textbf{(ii)} and thus it qualifies to become a witness operator for all $\alpha>0$ and $\beta > 0$.
\section{Efficiency of the constructed witness operator $W_{\alpha,\beta}$}
Now, we study the detection power of our criteria by comparing them with other powerful existing entanglement detection criteria in the literature \cite{rudolph2003,kchen,dv,sarbicki}. We take a few examples of the family of PPTES and then show that our witness operator detect most PPTES in the family in comparison to other well-known entanglement criterion such as the realignment or computable cross norm (CCNR) criterion \cite{rud2005}, the de Vicente criterion (dV) \cite{dv} and the separability criterion based on correlation tensor (CT) given in \cite{sarbicki}.
\subsection{Example-1}
Let us illustrate the efficiency of our witness operator $W_{\alpha,\beta}$ using a class of $4\otimes 4$ bound entangled state which cannot be detected by realignment criteria.\\
Consider the following $4\otimes4$ PPTES \cite{kye2020}.\\
\begin{equation}
	\rho_{z,p,r}= \frac{1}{N}
	\begin{pmatrix}
		A_{11} & A_{12} & A_{13} & A_{14}\\
		A_{12}^* & A_{22} & A_{23} & A_{24}\\
		A_{13}^* & A_{23}^* & A_{33} & A_{34}\\
		A_{14}^* & A_{24}^* & A_{34}^* & A_{44}\\
	\end{pmatrix}
\end{equation}
where $ A_{11} =
\begin{pmatrix}
	z + \overline{z} & 0 & 0 & 0\\
	0 & \frac{1}{p} & 0 & 0\\
	0 & 0 & p & 0\\
	0 & 0 & 0 & \frac{r}{p} + r\\
\end{pmatrix};\;
A_{12} =
\begin{pmatrix}
	0 & -z & 0 & 0\\
	0 & 0 & 0 & 0\\
	0 & 0 & 0 & -r\\
	0 & 0 & 0 & 0\\
\end{pmatrix};\;
A_{13} =
\begin{pmatrix}
	0 & 0 & -\overline{z} & 0\\
	0 & 0 & 0 & -rz\\
	0 & 0 & 0 & 0\\
	0 & 0 & 0 & 0\\
\end{pmatrix}\\
A_{22} =
\begin{pmatrix}
	p & 0 & 0 & 0\\
	0 & z + \overline{z} & 0 & 0\\
	0 & 0 & \frac{r}{p} + r & 0\\
	0 & 0 & 0 & \frac{1}{p}\\
\end{pmatrix};
A_{24} =
\begin{pmatrix}
	0 & 0 & -rz & 0\\
	0 & 0 & 0 & -z\\
	0 & 0 & 0 & 0\\
	0 & 0 & 0 & 0\\
\end{pmatrix};\;
A_{33} =
\begin{pmatrix}
	\frac{1}{p} & 0 & 0 & 0\\
	0 & rp + r & 0 & 0\\
	0 & 0 & z + \overline{z} & 0\\
	0 & 0 & 0 & p\\
\end{pmatrix};\\
A_{34} =
\begin{pmatrix}
	0 & -r & 0 & 0\\
	0 & 0 & 0 & 0\\
	0 & 0 & 0 & -\overline{z}\\
	0 & 0 & 0 & 0\\
\end{pmatrix};\;
A_{44} =
\begin{pmatrix}
	rp + r & 0 & 0 & 0\\
	0 & p & 0 & 0\\
	0 & 0 & \frac{1}{p} & 0\\
	0 & 0 & 0 & z + \overline{z}\\
\end{pmatrix}; A_{14} = A_{23} = [0]_{4\times4}\\$ and $N = \frac{4}{p} + 4p + 4r + \frac{2r}{p} + 2pr + 8 Re(z)$.\\
$\rho_{z,p,r}$ is PPTES when $p>0$, $0<r<1$ and $z$ is a complex number such that $|z|=1$ and $\frac{-\pi}{4}<Arg(z)<\frac{\pi}{4}$.
If we take $p=z=1$ then we find that the state $\rho_{1,1,r}$ is not detected by realignment criterion, since $||R(\rho_{1,1,r})||_1 = \frac{2}{2+r} < 1$. $R(\rho_{1,1,r})$ denoting the realigned matrix of $\rho_{1,1,r}$.  \\
The state $\rho_{1,1,r}$ can be expressed in the block matrix form as
\begin{equation}
	\rho_{1,1,r} =
	\begin{pmatrix}
		X_r & Y_r \\
		Y_r^{\dagger} & Z_r \\
	\end{pmatrix}
\end{equation}
where $X_r, Y_r$ and $Z_r$ are $8 \times 8$ block matrices given as

\begin{equation*}
	X_r = \frac{1}{16+ 8r}
	\begin{pmatrix}
		2 & 0 & 0 & 0 & 0 & -1 &0 &0 \\
		0 & 1 & 0 & 0&0 & 0 & 0 & 0\\
		0 & 0 & 1 & 0&0 & 0 & 0 & -r\\
		0 & 0 & 0 & 2r &0 & 0 & 0 & 0\\
		0 & 0 & 0 & 0&1 & 0 & 0 & 0\\
		-1 & 0 & 0 & 0&0 & 2 & 0 & 0\\
		0 & 0 & 0 & 0&0 & 0 & 2r & 0\\
		0 & 0 & -r & 0&0 & 0 & 0 & 1\\
	\end{pmatrix}
\end{equation*}

\begin{equation*}
	Y_r = \frac{1}{16+ 8r}
	\begin{pmatrix}
		0 & 0 & -1 & 0 & 0 & 0 &0 &0 \\
		0 & 0 & 0 & -r &0 & 0 & 0 & 0\\
		0 & 0 & 0 & 0&0 & 0 & 0 & 0\\
		0 & 0 & 0 & 0&0 & 0 & 0 & 0\\
		0 & 0 & 0 & 0&0 & 0 & -r & 0\\
		0 & 0 & 0 & 0&0 & 0 & 0 & -1\\
		0 & 0 & 0 & 0&0 & 0 & 0 & 0\\
		0 & 0 & 0 & 0&0 & 0 & 0 & 0\\
	\end{pmatrix}
\end{equation*}

\begin{equation*}
	Z_r = \frac{1}{16 + 8r}
	\begin{pmatrix}
		1 & 0 & 0 & 0 & 0 & -r &0 &0 \\
		0 & 2r & 0 & 0&0 & 0 & 0 & 0\\
		0 & 0 & 2 & 0&0 & 0 & 0 & -1\\
		0 & 0 & 0 & 1&0 & 0 & 0 & 0\\
		0 & 0 & 0 & 0&2r & 0 & 0 & 0\\
		-r & 0 & 0 & 0&0 & 1 & 0 & 0\\
		0 & 0 & 0 & 0&0 & 0 & 1 & 0\\
		0 & 0 & -1 & 0&0 & 0 & 0 & 2\\
	\end{pmatrix}
\end{equation*}
$\rho_{1,1,r}$ is a PPT state since eigenvalues of $\rho_{1,1,r}^{T_B}$  are non negative. Now our task is to see whether the witness operator constructed here is able to detect the state described by the density operator $\rho_{1,1,r}$. To probe this fact, we calculate the expectation value of witness operator $W_{\alpha,\beta}$ for the state $\rho_{1,1,r}$, which is given by
\begin{eqnarray}
	Tr[W_{\alpha,\beta}\rho_{1,1,r}]
	&=& \frac{1}{2+r} ((\alpha^2 + \beta^2)(3-2\sqrt{2+2r^2})+  \alpha \beta (1-\sqrt{2+2r^2})) 
\end{eqnarray}
After simple calculation, we observed the following:
\begin{itemize}
	\item[(i)] If $r \in (0,\frac{1}{2\sqrt{2}})$ then $Tr[W_{\alpha,\beta}\rho_{1,1,r}] < 0$ for $\frac{\alpha\beta}{\alpha^2 + \beta^2} > \frac{3-2\sqrt{2-2r^2}}{\sqrt{2+2r^2} - 1}$.
	\item[(ii)] If $r \in [\frac{1}{2\sqrt{2}},1)$ then $Tr[W_{\alpha,\beta}\rho_{1,1,r}] < 0$ for any $\alpha, \beta > 0$.
\end{itemize}
The above facts implies that there exist positive parameters $\alpha$ and $\beta$ so that $Tr[W_{\alpha,\beta}\rho_{1,1,r}] < 0$ in the whole range $0 < r < 1$. Thus, $\rho_{1,1,r}$ is detected by $W_{\alpha,\beta}$ for some positive values of $\alpha$ and $\beta$. Hence, our witness operator $W_{\alpha,\beta}$ detects this family of PPTES in the whole range $0 < r < 1$.\\
For comparison, let us examine the separability criteria based on the correlation tensor for the detection of entanglement of the states belonging to the family $\rho_{1,1,r}$. Since, the other criteria such as CCNR criteria and the dV criteria are special cases of the CT criteria, their detection ability depends on CT criteria. Thus, we will discuss CT criteria for the detection of the state described by the density operator $\rho_{1,1,r}$.\\
The correlation matrix $C^{can}_r$ for the state $\rho_{1,1,r}$ is given by
\begin{equation}
	C^{can}_r =
	\begin{pmatrix}
		C_{11} & C_{12} \\
		C_{21} & C_{22} \\
	\end{pmatrix}
\end{equation}
where
\begin{eqnarray*}
	C_{11} =
	\begin{pmatrix}
		\frac{1}{4} & 0 &0 &	0 & 0 &0 &	0 & 0\\
		0 & \frac{-1}{8(2+r)} &0 &	0 & 0 &0 &	\frac{-r}{8(2+r)} & 0\\
		0 & 0 &\frac{-1}{8(2+r)} &	0 & 0 &\frac{-r}{8(2+r)} &	0 & 0\\
		0 & 0 &0 &	0 & 0 &0 &	0 & 0\\
		0 & 0 &0 &	0 & 0 &0 &	0 & 0\\
		0 & 0 &\frac{-r}{8(2+r)} &	0 & 0 &\frac{-1}{8(2+r)} &	0 & 0\\
		0 & \frac{-r}{8(2+r)} &0 &	0 & 0 &0 &	\frac{-1}{8(2+r)} & 0\\
		0 & 0 &0 &	0 & 0 &0 &	0 & \frac{1}{8(2+r)}\\
	\end{pmatrix};
\end{eqnarray*}
\begin{eqnarray*}
	C_{12} =
	\begin{pmatrix}
		0 & 0 &0 &	0 & 0 &0 &	0 & 0\\
		0 & 0 &0 &	0 & 0 &0 &	0 & 0\\
		0 & 0 &0 &	0 & 0 &0 &	0 & 0\\
		0 & 0 &0 &	0 & 0 &0 &	0 & 0\\
		0 & 0 &0 &	0 & 0 &0 &	0 & 0\\
		0 & 0 &0 &	0 & 0 &0 &	0 & 0\\
		0 & 0 &0 &	0 & 0 &0 &	0 & 0\\
		0 & 0 &0 &	0 & \frac{r}{8(2+r)} &0 &	0 & 0\\
	\end{pmatrix};
\end{eqnarray*}

\begin{eqnarray*}
	C_{22} =
	\begin{pmatrix}
		\frac{1}{8(2+r)} & 0 &0 &	\frac{r}{8(2+r)} & 0 &0 &	0 & 0\\
		0 & 0 &0 &	0 & 0 &0 &	0 & 0\\
		0 & 0 &0 &	0 & 0 &0 &	0 & 0\\
		\frac{r}{8(2+r)} & 0 &0 &	\frac{1}{8(2+r)} & 0 &0 &	0 & 0\\
		0 & 0 &0 &	0 & \frac{1}{8(2+r)} &0 &	0 & 0\\
		0 & 0 &0 &	0 & 0 & \frac{1}{8(2+r)} &	\frac{-1+2r}{8\sqrt{3}(2+r)} & \frac{1-2r}{4\sqrt{6}(2+r)}\\
		0 & 0 &0 &	0 & 0 &\frac{-1+2r}{8\sqrt{3}(2+r)} &	\frac{5-4r}{48+24r} & \frac{1-2r}{12\sqrt{2}(2+r)}\\
		0 & 0 &0 &	0 &0& \frac{1-2r}{4\sqrt{6}(2+r)}  &	\frac{1-2r}{12\sqrt{2}(2+r)} & \frac{2-r}{12(2+r)}\\
	\end{pmatrix}
\end{eqnarray*} and $C_{21} = C_{12}^{\dagger}$.
Let us evaluate the expression $E_{CT}=||D_{x}^A C^{can}_{r} D_{y}^B||_1 - \mathcal{N}_A (x) \mathcal{N}_B (y)$. If $E_{CT} \leq 0$ then CT criteria fails to detect the entangled state. Thus, the expression $E_{CT}$ for the state $\rho_{1,1,r}$  is given by
\begin{eqnarray}
	E_{CT} &=& \frac{1}{4} (-1 + \frac{8}{2+r} + xy - \sqrt{(3+x^2)(3+y^2)}) \nonumber
	\\ &\leq& 0 \;\; \forall\; x,y\geq 0 \; \text{and}\; r \in (0, 1)
	\label{cteg1}
\end{eqnarray}
Hence, the state $\rho_{1,1,r}$ is not detected by CT criterion for any value of the state parameter $r \in (0,1)$ and for any non-negative constant $x$ and $y$.\\
\textbf{Few Particular cases:}\\
For $(x,y)=(0,0)$ and $(1,1)$ we have $$||D_{x}^A C^{can}_{r} D_{y}^B||_1 - \mathcal{N}_A (x) \mathcal{N}_B (y) = \frac{-4r}{2+r} <0$$
which implies dV criterion and CCNR criterion fail to detect PPT entangled states in the family $\rho_{1,1,r}$.
The efficiency of the entanglement detection criteria such as dV criteria, CCNR criteria, and CT criteria can be summarized as follows (table-\ref{rtable}):\\
\begin{table}[h!]
	\begin{center}
		\begin{tabular}{| p{4.3cm} | p{4.3cm} |}
			\hline
			\multicolumn{2}{|c|}{The state $\rho_{1,1,r},~~0<r<1$} \\
			\hline
			Criterion & Detection range \\
			\hline
			\hline
			1. dV &  Does not detect \\
			2. CCNR &  Does not detect \\
			3. CT & Does not detect\\
			4. Witness operator $W_{\alpha,\beta}$ & $0< r < 1$\\
			\hline
		\end{tabular}
		\caption{Comparing the detection efficiency of different criteria for the detection of PPTES $\rho_{1,1,r}$ in the range $0 < r < 1$}
		\label{rtable}
	\end{center}
\end{table}
\subsection{Example-2}
In this example, we construct a one parameter family of 2 ququart states with positive partial transpose. These states are obtained by mixing the bound entangled state $\rho_{p_0,q_0}$ described in (\ref{bes4by4}) with white noise:
\begin{equation}
	\rho^{\lambda}_{p_0,q_0} = \lambda\rho_{p_0,q_0} + \frac{1-\lambda}{16} I_{4} \otimes I_4,~~0\leq \lambda \leq 1
\end{equation}
We note the following facts:\\
(i) $\rho^{\lambda}_{p_0,q_0}$ is invariant under partial transposition.\\
(ii) \textbf{Realignment criteria:} By matrix realignment criteria, $\rho^{\lambda}_{p_0,q_0}$ is entangled when $\lambda \in (0.897358,1]$.\\
\textbf{Detection of $\rho^{\lambda}_{p_0,q_0}$ by our witness operator method:} We now examine our criteria to detect entanglement in the family of states $\rho^{\lambda}_{p_0,q_0}$.
The state $\rho^\lambda_{p_0,q_0}$ can be expressed in the block matrix form as
\begin{equation}
	\rho^\lambda_{p_0,q_0} =
	\begin{pmatrix}
		X_{\lambda} & Y_{\lambda} \\
		Y_{\lambda}^{\dagger} & Z_{\lambda} \\
	\end{pmatrix}
\end{equation}
where $X_{\lambda}, Y_{\lambda}$ and $Z_{\lambda}$ are $8 \times 8$ block matrices given as
\begin{eqnarray*}
	X_{\lambda}= diag\{\frac{(1+(3-2\sqrt{2})\lambda)}{16}, \frac{(1+(3-2\sqrt{2})\lambda)}{16} , \frac{1-\lambda}{16}, \frac{(1 +(-5 + 4\sqrt{2})\lambda)}{16},\\ \frac{(1+(3-2\sqrt{2})\lambda)}{16} , \frac{(1+(3-2\sqrt{2})\lambda)}{16}, \frac{(1 +(-5 + 4\sqrt{2})\lambda)}{16}, \frac{1-\lambda}{16}\}
\end{eqnarray*}
\begin{eqnarray*}
	Y_{\lambda}=
	\begin{pmatrix}
		0 & 0 & 0 &0 & 0 & 0 &0 & \frac{1}{8}(-2 + \sqrt{2})\lambda\\
	0 & 0 & 0 &\frac{1}{8}(2 - \sqrt{2})\lambda & 	0 & 0 & 0 &0\\
	0 & 0 & 0 &0 & 	0 & 0 & 0 &0\\
	0 & \frac{(-1+\sqrt2)}{4\sqrt{2}}\lambda & 0 &0 & 	-\frac{(-1+\sqrt2)}{4\sqrt{2}}\lambda & 0 &0 & 0\\
		0 & 0 & 0 & 0 & 0 & 0 &\frac{1}{8}(2 - \sqrt{2})\lambda & 0\\
	0 & 0 & \frac{1}{8}(2 - \sqrt{2})\lambda & 0 & 0 & 0 &0 & 0\\
	0 & \frac{(-1+\sqrt2)}{4\sqrt{2}}\lambda & 0 & 0 & \frac{(-1+\sqrt2)}{4\sqrt{2}}\lambda & 0 &0 & 0\\
	0 & 0 & 0 & 0 & 0 & 0 &0 & 0\\
	\end{pmatrix}
\end{eqnarray*}

\begin{eqnarray*}
	Z_{\lambda}= diag\{\frac{1-\lambda}{16}, \frac{(1 +(-5 + 4\sqrt{2})\lambda)}{16}, \frac{(1+(3-2\sqrt{2})\lambda)}{16}, \frac{(1+(3-2\sqrt{2})\lambda)}{16},\\
	\frac{(1 +(-5 + 4\sqrt{2})\lambda)}{16}, \frac{1-\lambda}{16},  \frac{(1+(3-2\sqrt{2})\lambda)}{16},\frac{(1+(3-2\sqrt{2})\lambda)}{16} \}
\end{eqnarray*}
 To apply our criteria, we need to construct witness operator for the detection of entangled states in the family represented by $\rho^{\lambda}_{p_1,q_1}$. The witness operator $W_{\alpha,\beta}$ can be constructed through the prescription given in section \ref{sec-2.3}. It is given by
\begin{eqnarray}
	W_{\alpha,\beta}= O_{\alpha,\beta} - \gamma I
\end{eqnarray}
\begin{eqnarray}
	\gamma&=& 2 (Tr[BY_{\lambda}^{\dagger}]+ Tr[A] ||Y_{\lambda}||_2) = ((-6 + 7\sqrt{2})(\alpha^2 + \beta^2) + 4(-1 + \sqrt2)\alpha\beta)\lambda
\end{eqnarray}
where
$$||Y_{\lambda}||_2 = \frac{\sqrt{3 - 2 \sqrt{2}}}{2}  \lambda; \quad
Tr[A]= 6 \alpha^2 + 6 \beta^2 + 2(\alpha+\beta)^2 ; \quad Tr[BY_{\lambda}^{\dagger}]= \frac{1}{2} (2-\sqrt2)(a^2 +b^2) \lambda$$

The expectation value of $W_{\alpha,\beta}$ with respect to the state $\rho^\lambda_{p_0,q_0}$ is given by
\begin{equation}
	Tr[W_{\alpha,\beta} \rho^\lambda_{p_0,q_0}] = \frac{1}{2} (2\alpha^2 + \alpha\beta + 2\beta^2)(1 + (11-10\sqrt{2})\lambda)
\end{equation}
Therefore, it can be easily shown that $Tr[W_{\alpha,\beta} \rho^\lambda_{p_0,q_0}] < 0$ for $\alpha,\beta > 0$ and $\lambda> \frac{1}{-11+10\sqrt{2}} \approx 0.318255$.\\
\textbf{Detection of $\rho^{\lambda}_{p_0,q_0}$ by CT criterion:}
First we calculate the correlation matrix $C^{can}_{\lambda}$ for the state $\rho^{\lambda}_{p_0,q_0}$ using generalized Gell-Mann matrix (GGM) basis consisting of six symmetric GGM $\{G_i\}_{i=1}^{6}$; six antisymmetric GGM $\{G_i\}_{i=7}^{12}$ and three diagonal GGM $\{G_{13},G_{14}, G_{15}\}$ given in section (\ref{GMmatrices}). \\ $C^{can}_{\lambda}$ is $16 \times 16$ matrix with entries $C_{a,b} = \langle G_a \otimes G_b \rangle_{\rho^{\lambda}_{p_0,q_0}} $  where
\begin{eqnarray*}
	C_{0,0}&=& \langle G_0 \otimes G_0 \rangle_{\rho^{\lambda}_{p_0,q_0}}= \frac{1}{4}\\
	C_{2,5}&=& \langle G_2 \otimes G_5 \rangle_{\rho^{\lambda}_{p_0,q_0}}= \frac{1}{4}(2-\sqrt{2})\lambda\\	
	C_{3,3}&=& \langle G_3 \otimes G_3 \rangle_{\rho^{\lambda}_{p_0,q_0}}= \frac{1}{4}(-2+\sqrt{2})\lambda\\
	C_{4,4}&=& \langle G_4 \otimes G_4 \rangle_{\rho^{\lambda}_{p_0,q_0}}= \frac{1}{4}(2-\sqrt{2})\lambda\\	
	C_{13,14}&=& \langle G_{13} \otimes G_{14} \rangle_{\rho^{\lambda}_{p_0,q_0}}=\frac{1}{4\sqrt{3}}(-1+\sqrt{2})\lambda\\	
	C_{13,15}&=& \langle G_{13} \otimes G_{15} \rangle_{\rho^{\lambda}_{p_0,q_0}}=\frac{1}{2\sqrt{6}}(1-\sqrt{2})\lambda\\	
	C_{14,14}&=& \langle G_{14} \otimes G_{14} \rangle_{\rho^{\lambda}_{p_0,q_0}}=\frac{1}{6}(3-2\sqrt{2})\lambda\\	
	C_{14,15}&=& \langle G_{14} \otimes G_{15} \rangle_{\rho^{\lambda}_{p_0,q_0}}=\frac{1}{12}(-4+3\sqrt{2})\lambda\\
	C_{15,15}&=& \langle G_{15} \otimes G_{15} \rangle_{\rho^{\lambda}_{p_0,q_0}}=\frac{1}{12}(3-2\sqrt{2})\lambda	
\end{eqnarray*}
$C^{can}_{\lambda}$ is symmetric with $C_{2,5}=C_{5,2}$, $C_{13, 14}=C_{14,13}$, $C_{13,15}=C_{15,13}$ and $C_{14,15}=C_{15,14}$. The rest of the entries of $C^{can}_{\lambda}$ are zero.\\
The LHS of the inequality (\ref{ct}) can be calculated as
\begin{eqnarray}
	||D_{x}^A C^{can}_{\lambda} D_{y}^B||_1 -\mathcal{N}_A (x) \mathcal{N}_B (y) = \frac{1}{4} ((9 - 4\sqrt{2})\lambda + xy - \sqrt{(3 + x^{2})(3+y^2)})
	\label{cr1}
\end{eqnarray}
Therefore, the inequality (\ref{ct}) is satisfied $\forall\; x,y\geq 0 \; \text{ and when}\; \lambda \leq \frac{\sqrt{(3+x^2)(3+y^2)}-xy}{9-4\sqrt{2}}$.
The inequality (\ref{ct}) is violated  for $(x,y) = (\frac{1}{16},\frac{1}{32})$, and when $\lambda \geq \frac{3681}{4096} \approx 0.898682$. It implies that $\rho^\lambda_{p_0,q_0}$ represent the family of entangled states when $\lambda \in [0.898682,1]$.\\
Also for the case when $x=y$, the LHS of the inequality (\ref{ct}) can be re-expressed as
\begin{eqnarray}
	||D_{x}^A C^{can}_{\lambda} D_{x}^B||_1 - \mathcal{N}_A (x) \mathcal{N}_B (x) = \frac{1}{4} ((9 - 4\sqrt{2})\lambda -3) 
	\label{ct3}
\end{eqnarray}
We can now observe that the expression given in (\ref{ct3}) is independent of $x$ and hence for $y$ also. Therefore, CT separability criterion is violated for any $x=y\geq 0$ and when $\lambda > \frac{3}{9- 4\sqrt{2}} \approx 0.897358$. Thus CT criterion detect all entangled states $\rho^{\lambda}_{p_0,q_0}$ where $\lambda \in (0.897358,1] $ and for any $x = y$.\\
Now since Sarbicki et. al \cite{sarbicki}  have mentioned that $(x, y) = (1, 1)$ reproduces CCNR criterion and $(x, y) = (0, 0)$ reproduces dV criterion so using above argument for $x=y$, we conclude that these three criteria detect entanglement in the range $0.897358< \lambda \leq 1$. 
Thus our witness operator detects entanglement in the range $0.318255< \lambda \leq 1$, which is better than dV, CCNR, and CT criterion. We now summarize the results in Table-\ref{tab-4by4}. It shows the range in which $\rho^{\lambda}_{p_0,q_0}$ is detected using different separability criteria.
\begin{table}[h!]
	\begin{center}
		\begin{tabular}{| p{4cm} | p{4.3cm} |}
			\hline
			\multicolumn{2}{|c|}{The family of states $\rho^{\lambda}_{p_0,q_0}$} \\
			\hline
			Criterion & Detection range  \\
			\hline
			\hline
			1. dV &  $0.897358< \lambda \leq 1$ \\
			2. CCNR &  $0.897358< \lambda \leq 1$ \\
			3. CT & $0.897358< \lambda \leq 1$ \\
			4. Witness operator $W_{\alpha,\beta}$ & $0.318255< \lambda \leq 1$\\
			\hline
		\end{tabular}
		\caption{Comparing the efficiency of different criteria for the detection of BES $\rho^{\lambda}_{p_0, q_0}$ in the range $0\leq \lambda \leq 1$}
		\label{tab-4by4}
	\end{center}
\end{table}

\section{Conclusion}
To summarize, we have constructed a witness operator to detect NPTES and PPTES. 
We have constructed a linear map which is shown to be a positive map under certain restrictions on the parameters involved in the construction of the map. To make our discussion simple, we have considered $2\otimes 2$ dimensional system and then find out the condition for which the map is positive. To investigate the completely positivity of the map, we started with the Choi matrix associated with the constructed map and showed that the Choi matrix has always at least one negative eigenvalue for the parameters with respect to which the map is positive. Thus the Choi matrix is not a positive semi-definite matrix and hence the map is not completely positive. Furthermore, we find that the Choi matrix is not Hermitian so we use the product of the Choi matrix and its conjugate transpose. The resulting product now represents a Hermitian matrix. In the next step, we take the linear combination of the obtained Hermitian matrix and the identity matrix to construct the witness operator. We have also shown that the constructed witness operator not only detects NPTES but also may be used to detect PPTES. Since the Choi matrix is generated from the linear map defined in (\ref{phi}), if we consider $d\otimes d$ dimensional system as the input of the map $\phi$ then the generated Choi matrix is of order $d^{2}$. Also since the construction of our witness operator depends on the Choi matrix so our witness operator may detect NPTES and PPTES lying in $d^2 \otimes d^2$ dimensional Hilbert space.  Interestingly, we found that our witness operator is efficient in detecting several bipartite bound entangled states that were previously undetected by some well-known separability criteria. Moreover, we also compared the detection power of our witness operator with three well-known separability criteria, namely, the dV criterion, CCNR criterion, and the separability criteria based on correlation tensor (CT) proposed by Sarbicki et al. \cite{sarbicki} and found that our witness operator detects more PPTES than the criteria mentioned above.

\begin{center}
	****************
\end{center}

\chapter{Entanglement Detection via Partial Realigned Moments}\label{ch3}
\vspace{1cm}

\noindent{\small \emph{"Quantum information is more like the information in a dream."\\
		- Charles Bennett}}\\
\vspace{1cm}
\noindent \hrule
\noindent \emph{In this chapter,\footnote{ This chapter is based on the published research article, ``S. Aggarwal, S Adhikari, A. S. Majumdar, \emph{Entanglement detection in arbitrary dimensional bipartite quantum systems through partial realigned moments}, Phys. Rev. A 109, 012404 (2024)"}, firstly, we propose a separability criterion for detecting bipartite entanglement in $m \otimes n$ ($mn \neq 4$) dimensional quantum states using partial moments of the realigned density matrix. Our approach
	enables the detection of both distillable and bound entangled states through a common framework. We illustrate the significance of our method through examples of states belonging to both the above categories, which are not detectable using other schemes relying on partial state information. Secondly, the formalism of employing partial realigned moments proposed here is further adopted to give an effective separability criterion for two-qubit systems too.} 
\noindent \hrulefill
\newpage
\section{Introduction}\label{sec3.1}
Entanglement \cite{horodecki9} is a remarkable feature of quantum systems that has no classical analogue. 
The criteria to decide whether or not a given quantum state is entangled are of high theoretical and practical interest. Even though numerous entanglement criteria have been proposed in the past years, nevertheless there exists no universally applicable method to verify whether a given quantum state is entangled or not. Historically,
Bell-type inequalities were the first operational criterion
to distinguish between entangled and separable states \cite{jsbell}. Due to the importance of entanglement in quantum information processing, there has been a steady quest for devising more and more efficient methods for detecting entanglement in quantum states \cite{guhnerev, horodecki9}.\\
For bipartite systems, there exist two famous separability
criteria: the positive partial transpose (PPT) criterion \cite{peres} and the matrix realignment criterion \cite{chenwu, rud2005}. The former criterion is able to detect the entanglement of all non-positive partial transpose (NPT) states but cannot detect any PPTES. The latter criterion is weaker than the former one over NPTES; however, it is able to detect some PPTES. Although PPT criterion detects all NPTES, it cannot be implementable physically. There exist operational tools to detect entanglement in practice. Entanglement witness based on measurement of observables provides
a method to detect and characterize entanglement \cite{bmterhalPLA,guhne2003}. There exist different schemes for the construction of witness operators in the literature \cite{mlewenstein,ganguly2013}, though all such schemes rely on certain prior information about the quantum state. Besides, entanglement can also be detected using measurement statistics in a device independent manner through an approach called self-testing \cite{selftest, selftest2, selftest3, selftest4} which again relies on certain additional assumptions.\\
In practical situations, complete information about the quantum state may not be always available, and entanglement detection based on the partial knowledge of the density matrix may be easier to implement in experiments \cite{entdet}. Recently, Elben et al.  \cite{elben} proposed a method for detecting bipartite entanglement based on estimating moments of the partially transposed density matrix. Nevel et al.  \cite{neven} proposed an ordered set of experimentally accessible conditions for detecting entanglement in mixed states. Moments have the advantage that they can be estimated using shadow tomography in a more efficient way than if one had to reconstruct the state via full quantum state tomography.
Such works discussed in section \ref{sec-ptmoment} are focussed on the detection of NPTES only \cite{elben,neven,guhne2021}.   Detection of BES using moments needs a deeper investigation. In the present chapter, we provide a separability criterion based on moments of the Hermitian matrix obtained after applying the realignment operation on quantum states in arbitrary dimensional bipartite systems.\\
The motivation of this chapter is to construct computable entanglement conditions that can detect NPTES as well as BES
through partial knowledge of the density matrix. Recently, detection of BES using a moment based criterion has been studied in \cite{tzhang, gamma}.  Our criterion based on
moments is stronger in the sense that not all but only a few realigned moments are sufficient to detect entanglement. Importantly, this approach is not only conceptually sound but also tractable from an experimental perspective. The classical shadows formalism allows for reliably estimating moments
from randomized single-qubit measurements \cite{chad,kett}.
If we make multiple copies of a state represented by an $m \times m$ density matrix $\rho$, the moments $Tr[\rho^2]$,\;.\;.\;.\;, $Tr[\rho^m]$ can be measured using cyclic shift operators \cite{horoekert2002, keyl}. It has been shown that measuring partial moments is technically possible using $m$ copies of the state and controlled swap operations \cite{ha, cai, barti}. A method based on machine learning for measuring moments of any order has also been proposed \cite{sougato}.\\
In this chapter, we introduce a novel entanglement criterion for bipartite systems based on the moments of the Hermitian matrix  $[{R(\rho_{AB} )}]^{\dagger} R(\rho_{AB} )$. Here $R(\rho_{AB})$ may be 
obtained after applying realignment operation on a quantum state $\rho_{AB}$. The moments of the Hermitian matrix  $[{R(\rho_{AB})}]^{\dagger} R(\rho_{AB} )$ are known as realigned moments (or $R$-moments). First, we  show that our criterion effectively detects BES in higher dimensional bipartite systems. Further, we  show that it performs better than the other existing criteria based on partial moments in some cases for the detection of NPTES in higher dimensional systems. The $R$-moment criterion is formulated as a simple inequality which must be fulfilled by separable states of bipartite systems, and hence, its violation by a state reveals that the state is entangled.  We illustrate the significance of $R$-moment criterion for the detection of NPT and bound entanglement by examining some examples. Finally, we devise another separability criterion for 2-qubit systems and probe some examples of 2-qubit NPTES that are undetected by other partial moments based criteria and realignment criteria. 
\section{Realigned Moments or $R-$moments}
Before presenting our separability criterion  based on realigned moments, let us first define the idea of realigned moments or $R$-moments in $m \otimes n$ dimensional systems.  To make the task simpler, consider first a $2\otimes 3$ system described by the density operator $\sigma_{12}$ which is given by
\begin{eqnarray}
	\sigma_{12}= 
	\begin{pmatrix}
		Z_{11} & Z_{12} \\
		Z_{21} & Z_{22} 
	\end{pmatrix}
\end{eqnarray}
where $Z_{11}=\begin{pmatrix}
	t_{11} & t_{12} & t_{13} \\
	t_{12}^{*} & t_{22} & t_{23} \\
	t_{13}^{*} & t_{23}^{*} & t_{33} \\
\end{pmatrix}$, $Z_{12}=\begin{pmatrix}
	t_{14} & t_{15} & t_{16} \\
	t_{24} & t_{25} & t_{26} \\
	t_{34} & t_{35} & t_{36} \\
\end{pmatrix}$, $Z_{21}=Z_{12}^{\dagger}$, $Z_{22}=\begin{pmatrix}
	t_{44} & t_{45} & t_{46} \\
	t_{45}^{*} & t_{55} & t_{56} \\
	t_{46}^{*} & t_{56}^{*} & t_{66} \\
\end{pmatrix}$.\\
The normalization condition of $\sigma_{12}$ is given by $\sum_{i=1}^{6}t_{ii}=1$. The realigned matrix of $\sigma_{12}$ is denoted by $R(\sigma_{12})$ and it is given by
\begin{eqnarray}
	R(\sigma_{12})&=& \begin{pmatrix}
		(vecZ_{11})^{T} \\
		(vecZ_{12})^{T}\\
		(vecZ_{21})^{T}\\
		(vecZ_{22})^{T} 
	\end{pmatrix}=\begin{pmatrix}
		t_{11} & t_{12} & t_{13} & t_{12}^{*} & t_{22} & t_{23} & t_{13}^{*} & t_{23}^{*} & t_{33}\\
		t_{14} & t_{15} & t_{16} & t_{24} & t_{25} & t_{26} & t_{34} & t_{35} & t_{36}\\
		t_{14}^{*} & t_{24}^{*} & t_{34}^{*} & t_{15}^{*} & t_{25}^{*} & t_{35}^{*} & t_{16}^{*} & t_{26}^{*} & t_{36}^{*} \\
		t_{44} & t_{45} & t_{46} & t_{45}^{*} & t_{55} & t_{56} & t_{46}^{*} & t_{56}^{*} & t_{66} 
	\end{pmatrix}
\end{eqnarray}
where for any $n \times n$ matrix $X_{ij}$ with entries ${x_{ij}}$, $vecX_{ij}$ is defined as
\begin{eqnarray}
	vecX_{ij} = [x_{11}, . . ., x_{1n}, x_{21}, . . ., x_{2n},. . .,x_{n1}, . . . x_{nn}]^T
\end{eqnarray}
Note that $(R(\sigma_{12}))^{\dagger}R(\sigma_{12})$ is a matrix of order $9\times 9$. 
Also, the number of non-zero singular values of $R(\sigma_{12})$ is equal to the rank of $R(\sigma_{12})$ and hence it will be at most four. 
We can now generalize this fact for $m\otimes n$ systems. 
Let $\rho_{AB}$ be a density matrix representing a $m \otimes n$ dimensional state and it can be written as a block matrix with $m$ number of blocks in each row and column with each block being a $n \times n$ matrix.
The realigned matrix $R(\rho_{AB})$ obtained after applying the realignment operation has dimension $m^2 \times n^2$. 
The first step to obtain the $R$-moment criterion is to find the characteristic equation of the $n^2 \times n^2$ Hermitian operator $(R(\rho_{AB}))^{\dagger}R(\rho_{AB})$. It is given by
\begin{eqnarray}
	&&det\left([R(\rho_{AB})]^{\dagger}R(\rho_{AB})-\lambda I\right) = 0\nonumber\\&&
	\implies \prod_{i=1}^{n^2} \left(\lambda_{i}([R(\rho_{AB})]^{\dagger}R(\rho_{AB})) - \lambda\right)=0\nonumber\\&&
	\implies  \lambda^{n^2} + D_1 \lambda^{n^2-1} + D_2\lambda^{n^2-2} + . \; .\;. \;. + D_{n^2} = 0  \label{charr}
\end{eqnarray}
where $\lambda_{i}([R(\rho_{AB})]^{\dagger}R(\rho_{AB}))$ given in the second step denotes the roots of the characteristic polynomial (\ref{charr}). 
Using well-known results related to Newton polynomials and the Faddeev-LeVerrier algorithm for the characteristic polynomial and traces of powers of a matrix \cite{verrier,faddeev,zeil},
the coefficients $\{D_i\}_{i=1}^{n^2}$ given in the third step can be described in terms of moments of $[R(\rho_{AB})]^{\dagger}R(\rho_{AB})$, i.e., 
\begin{eqnarray}
	D_i = (-1)^i \frac{1}{i!} 
	\begin{vmatrix}
		T_1 & T_2 & T_3 &.&.&.&...&T_m \\
		1 & T_1 & T_2 & T_3 &.&.&...&T_{m-1} \\
		0 & 2 & T_1 & T_2 & T_3 &.&...&T_{m-2} \\
		0 & 0 & 3 & T_1 & T_2 & T_3 &...&T_{m-3} \\
		. & . & . & . & . & . &...&. \\
		. & . & . & . & . & . &...&. \\
		. & . & . & . & . & . &\ddots&. \\
		0 & 0 & 0 & 0 & 0 & 0 &...& T_1 \\
	\end{vmatrix} \label{dm}
\end{eqnarray}
for $i=1$ to $n^2$, where $D_{n^2} = det([R(\rho_{AB})]^{\dagger}R(\rho_{AB}))$. $T_k = Tr[\left([R(\rho_{AB})]^{\dagger}R(\rho_{AB})\right)^k]$ denotes the $k^{th}$ realigned moment of $[R(\rho_{AB})]^{\dagger}R(\rho_{AB})$.

Let us arrange the singular values of $R(\rho_{AB})$ in descending order, i.e., $\sigma_1(R(\rho_{AB})) \geq . \; . \; . \geq \sigma_r(R(\rho_{AB})) \geq 0$, where $r$ denotes the total number of singular values of $R(\rho_{AB})$. 
Thus for $i=1$ to $r$, we have 
$$ \lambda_{i}([R(\rho_{AB})]^{\dagger}R(\rho_{AB})) = \sigma_i^2(R(\rho_{AB}))$$  Therefore, the relation between the coefficients $D_{i}$ and the singular values  $\sigma_i(R(\rho_{AB}))$ is given by
\begin{eqnarray}
	\sum_{i=1}^r \sigma_{i}^2 (R(\rho_{AB})) &=& -D_1 \label{re1} \\
	\sum_{i < j} \sigma_{i}^2 (R(\rho_{AB})) \sigma_{j}^2 (R(\rho_{AB})) &=& D_2 \label{e2}\\
	\sum_{i < j < k} \sigma_{i}^2 (R(\rho_{AB})) \sigma_{j}^2 (R(\rho_{AB})) \sigma_{k}^2 (R(\rho_{AB}))&=& -D_3 \label{D3} \\
	....&&.... \nonumber\\
	....&&....\nonumber\\
	....&&.... \nonumber\\
	\prod_{i=1}^{r}	\sigma_{i}^2 (R(\rho_{AB}))  &=& D_r \label{e4}
\end{eqnarray}
Using (\ref{dm}), the coefficients $D_1, D_2$ and $D_3$ in terms of first moment $T_{1}$, second moment $T_{2}$ and third moment $T_{3}$ of $[R(\rho_{AB})]^{\dagger}R(\rho_{AB})$ can be expressed as
\begin{eqnarray}
	&&D_1 = -T_1 \label{d1}\label{rd_1} \\&&
	D_2 = \frac{1}{2} ({T_1}^2 - T_2)\label{d_2}\\&&
	D_3 = -\frac{1}{6} ({T_1}^3 - 3T_1 T_2 + 2 T_3) \label{d_3}
\end{eqnarray}
In general, the coefficient $D_{n^2}$ may be expressed as $D_{n^2} = det([R(\rho_{AB})]^{\dagger}R(\rho_{AB}))$.
\section{Separability Criterion Based on $R$-moments in $m \otimes n$ Systems} \label{sec-rmoment}
Although, the separability criteria based on partial moments (discussed in section \ref{sec-ptmoment}) involves upto $3^{rd}$ order moments \cite{elben,neven}, it fails to detect several NPTES in higher dimensional systems, and is also  not applicable for the detection of BES. This gives us a strong motivation to investigate the concept of realigned moments in entanglement detection in higher dimensional systems. 

\subsection{Separability criterion}
We are now ready to present our separability criterion based on realigned moments  for  $m \otimes n$ ($mn\neq 4$) dimensional systems. It is formulated in the form of an inequality that involves the $k^{th}$ order moments where $k$ is the rank of the matrix $R(\rho_{AB})$.
{\theorem \label{thm4.1} Let $\rho_{AB}$ be any $m \otimes n$ ($mn\neq 4$) dimensional bipartite state. Consider the $k$ non-zero singular values of the realigned matrix $R(\rho_{AB})$ that may be denoted as $\sigma_1, \sigma_2, \ldots \sigma_k$ with $1\leq k \leq min\{m^2,n^2\}$. If $\rho_{AB}$ is separable then the following inequality holds:
\begin{equation}
	{R}_1 \equiv	k(k-1) {D_k}^{1/k} + T_1 - 1 \leq 0 \label{thm4}
\end{equation}
where $D_k = \prod_{i=1}^{k} \sigma_i^2(R(\rho_{AB}))$ and $T_1 = Tr[[R(\rho_{AB})]^{\dagger} R(\rho_{AB})]$.}
\begin{proof}
Let $\rho_{AB}$ be any separable state in $m \otimes n$ system and assume that the number of non-zero eigenvalues of $[R(\rho_{AB})]^{\dagger} R(\rho_{AB})$ are $k$.  That is, there exist a number $k$  $(1\leq k \leq min\{m^2,n^2\})$ depending upon the number of non-zero singular values of $R(\rho_{AB})$ for which $D_{k}=\prod_{i=1}^{k} \sigma_i^2(R(\rho_{AB}))\neq 0$.
The degenerated characteristic equation of $[R(\rho_{AB})]^{\dagger} R(\rho_{AB})$ is given by
\begin{eqnarray}
	\lambda^k + \sum_{i=1}^{k} D_i \lambda^{k-i} = 0
\end{eqnarray}
where $D_i$ is defined in (\ref{dm}) for $i=1$ to $k$ and $D_i = 0$ for $i>k$.
Using (\ref{re1}) and (\ref{rd_1}), the first realigned moment $T_1$ can be expressed  in terms of the singular values of $ R(\rho_{AB})$ as
\begin{eqnarray}
	T_1 &=&\sum_{i=1}^{k} \sigma_{i}^2 ( R(\rho_{AB})) \nonumber\\ &=& \left(\sum_{i=1}^{k} \sigma_{i} ( R(\rho_{AB}))\right)^2 - 2\sum_{i < j} \sigma_{i} ( R(\rho_{AB})) \sigma_{j} ( R(\rho_{AB}))  \label{e5}
\end{eqnarray}
(\ref{e5}) can be re-expressed as
\begin{eqnarray}
	\sum_{i < j} \sigma_{i} ( R(\rho_{AB})) \sigma_{j} ( R(\rho_{AB}))= \frac{1}{2} \left(\left(\sum_{i=1}^{k} \sigma_{i} ( R(\rho_{AB}))\right)^2 - T_1\right) \label{e6}
\end{eqnarray}
It is elementary to note that the arithmetic mean of a list of non-negative real numbers is greater than or equal to their geometric mean.
Since $\sigma_i$'s for $i=1$ to $k$ are non-negative real numbers, using Result \ref{res-amgmineq}, we have 
\begin{eqnarray}
	\sum_{i < j} \sigma_{i} ( R(\rho_{AB})) \sigma_{j} ( R(\rho_{AB})) \geq \frac{k(k-1)}{2}\left(\prod_{i=1}^{k} {\sigma_i}(R(\rho_{AB}))\right)^{2/k} \label{e8}
\end{eqnarray}
Using (\ref{e6}) and (\ref{e8}), we get
\begin{eqnarray}
	\frac{1}{2} (||R(\rho_{AB})||_1^2 - T_1) \geq \frac{k(k-1)}{2} D_k^{1/k} \label{e9}
\end{eqnarray}
where $||R(\rho_{AB})||_1 =\sum_{i=1}^{k} \sigma_{i} (R(\rho_{AB}))$ and  $D_k= \prod_{i=1}^{k} \sigma_i^2(R(\rho_{AB}))$.
Thus, we obtain 
\begin{equation}
	||R(\rho_{AB})||_1^2 \geq k(k-1) D_k^{1/k} +T_1 \label{e10}
\end{equation} 
Since $\rho_{AB}$ is any arbitrary separable state, so using the realignment criterion in the above inequality (\ref{e10}), we have $k(k-1) D_k^{1/k} +T_1 \leq 1$, which proves (\ref{thm4}).
\end{proof}
{\corollary \label{cor4.1} Let $\rho_{AB}$ be any bipartite state in $m \otimes n$ ($mn\neq 4$)  dimensional system. Let $\sigma_1, \sigma_2, \ldots \sigma_k$ with $1\leq k \leq min\{m^2,n^2\}$ be $k$ non-zero singular values of the realigned matrix $R(\rho_{AB})$. If any state $\rho_{AB}$ violates (\ref{thm4}), then it is an entangled state.}\\
It is to be noted that the $R$-moment based separability criterion we have developed here is more fruitful for those density matrices $\rho_{AB}$ in $m \otimes n$  ($mn\neq 4$) system for which $det([R(\rho_{AB})]^{\dagger} R(\rho_{AB})) = 0$.  Therefore, the $R$-moment criterion works well when $R(\rho_{AB})$ is non-full rank. To test the separability criteria based on $R$-moments, consideration of the non-full rank state is advantageous in the sense that it does not require all the $R$-moments, and hence, our criterion holds good even when we do not have full information of the state.   In $m \otimes n$ ($mn\neq 4$) system, the condition $det([R(\rho_{AB})]^{\dagger} R(\rho_{AB})) = 0$ is valid when  (i) the number of non-zero eigenvalues of the matrix$[R(\rho_{AB})]^{\dagger} R(\rho_{AB})$  are less than $n^{2}$ where $m \geq n$, or when (ii) the number of non-zero eigenvalues of the matrix $[R(\rho_{AB})]^{\dagger} R(\rho_{AB})$ are less than $m^{2}$ where $m \leq n$.\\ 
It may be noted that there is no universal entanglement detection criterion that could outperform all other criteria. Any chosen criterion could work better for a given class of states, and vice-versa \cite{ravi2016}. 
We now discuss examples of PPTES and NPTES in $3\otimes 3$ and $4 \otimes 4$ systems which are detected by our $R$-moment criterion, but not by certain other criteria discussed in the literature.
\subsection{Examples}
We have now considered few examples of bipartite two-qutrit, two-ququart and a 2-parameter family of $2\otimes n$ quantum system to verify the criteria given in $Theorem-\ref{thm4.1}$. 
\subsubsection{I. A family of NPTES and BES in  4 $\otimes$ 4 dimensional system}
Let us consider the family of entangled states in $4\otimes4$ dimensional system described by the density operator $\rho_{p,q}$, which is defined in (\ref{bes4by4}). \\
\textbf{A few important properties of the state $\rho_{p,q}$: }\\
\textbf{P1.}
For this state, $det([R(\rho_{p,q})]^{\dagger} R(\rho_{p,q})) = 0$, which implies that the matrix $R(\rho_{p,q})$ is not of full rank.\\
\textbf{P2.} 
$\rho_{p_0,q_0}$ is a PPT state for $q_0 = \frac{\sqrt{2} - 1}{2}$ and $p_0 = \frac{1 - 2q_0}{4}$. \\
\textbf{P3.} It may be noted that $\|R(\rho_{p_0,q_0})\|_1 = 1.08579$, which is greater than one. Thus, by the matrix realignment criterion, one can say that $\rho_{p_0,q_0}$ is a PPTES. \\
Moreover, we find that $R(\rho_{p_0,q_0})$ has $8$ non-zero singular values. Therefore, we have $k=8$. The degenerated characteristic equation is $ \lambda^8 + \sum_{i=1}^8 D_i \lambda^{8-i} = 0$ where $D_i's$ are defined in (\ref{dm}) and $D_8 = \prod_{i=1}^8 \sigma_i^2 (R(\rho_{p_0,q_0}))$. Thus, we find that in this example, the left hand side of the inequality (\ref{thm4}) is given by
${R}_1 \equiv 56 {D_8}^{1/8} + T_1 -1 = 0.02082 > 0$. So inequality (\ref{thm4}) is violated and $Corollary-\ref{cor4.1}$ implies that the state $\rho_{p_0,q_0}$ is a PPTES. We further observe that the criterion \cite{tzhang} given in (\ref{L_4}) does not detect the BES described by the density operator $\rho_{p_0,q_0}$ belonging to the $\rho_{p,q}$ family.
On the other hand, when $(p, q) \neq (p_0, q_0)$, $\rho_{p,q}$ represents an NPTES for which the detection range by employing our $R$-moment
criterion is given by  ($0.00659601 < q < 0.153105 $) and ($ 0.26477 < q \leq 1/2$), which is comparatively larger than the range ($0.425035 < q \leq 1/2 $)  detected by employing the $D_3^{(in)}$ criterion (\ref{d3}).
\subsubsection{II. A family of NPTES in 3 $\otimes$ 3-dimensional system} 
Next, consider the class of NPT entangled states in $3\otimes 3$ dimensional system, which is defined as
\cite{garg}
\begin{eqnarray}
	\rho_a =
	\begin{pmatrix}
		\frac{1-a}{2} & 0 & 0&0&0&0&0&0 & \frac{-11}{50}\\
		0 & 0&0&0&0&0&0 &0&0\\
		0 & 0&0&0&0&0&0 &0&0\\
		0 & 0&0&0&0&0&0 &0&0\\
		0 & 0&0&0&\frac{1}{2} - a&  \frac{-11}{50} &0 &0&0\\
		0 & 0&0&0&  \frac{-11}{50} &a&0 &0&0\\
		0 & 0&0&0&0&0&0 &0&0\\
		0 & 0&0&0&0&0&0 &0&0\\
		\frac{-11}{50} & 0&0&0&0&0&0 &0& \frac{a}{2}\\
	\end{pmatrix};\;\; \frac{1}{50} (25 - \sqrt{141}) \leq a \leq \frac{1}{100}(25 + \sqrt{141})
	\label{rhoa_npt}
\end{eqnarray} 
Since $R(\rho_a)$ forms a matrix of rank $5$, so we have $k=5$. Thus, to detect whether $\rho_a$ is entangled or not, we need only 5 moments of $[R(\rho_{a})]^{\dagger} R(\rho_a)$. Further we note that $T_1 = \frac{867}{1250} - \frac{3a}{2} + \frac{5 a^2}{2}$. 
By calculating all the five moments, we find that the inequality in (\ref{thm4}) is violated in the whole range of $a$. Thus, applying $Corollary-\ref{cor4.1}$, we can say that $\rho_a$ is entangled for $\frac{1}{50} (25 - \sqrt{141}) \leq a \leq \frac{1}{100}(25 + \sqrt{141}) $.
Figure-\ref{rmo-fig1}(i) shows the violation of the inequality (\ref{thm4}), i.e., $ {R}_1 \equiv 20 {D_5}^{1/5} + T_1 - 1 > 0$ for all $0.262513 \leq a \leq 0.368743 $.
Fig-\ref{rmo-fig1}(i) and (ii) shows the comparison of $R$-moment criterion with the $p_3$-PPT, $D_3^{(in)}$  and  $p_3$-OPPT criteria defined in (\ref{p3ppt}),  (\ref{d3}), and (\ref{l3}), respectively. Since  ${L}_1 < 0$, $ {L}_2 < 0$ and $ {L}_3 < 0$, we find that the state $\rho_a$ is not detected by any of the above partial moment based  criteria for any value of the parameter $a$ in the given range. This is illustrated in Fig-\ref{rmo-fig1}(ii).
\begin{figure}[h!]
	\begin{center}
		\includegraphics[width=0.45\textwidth]{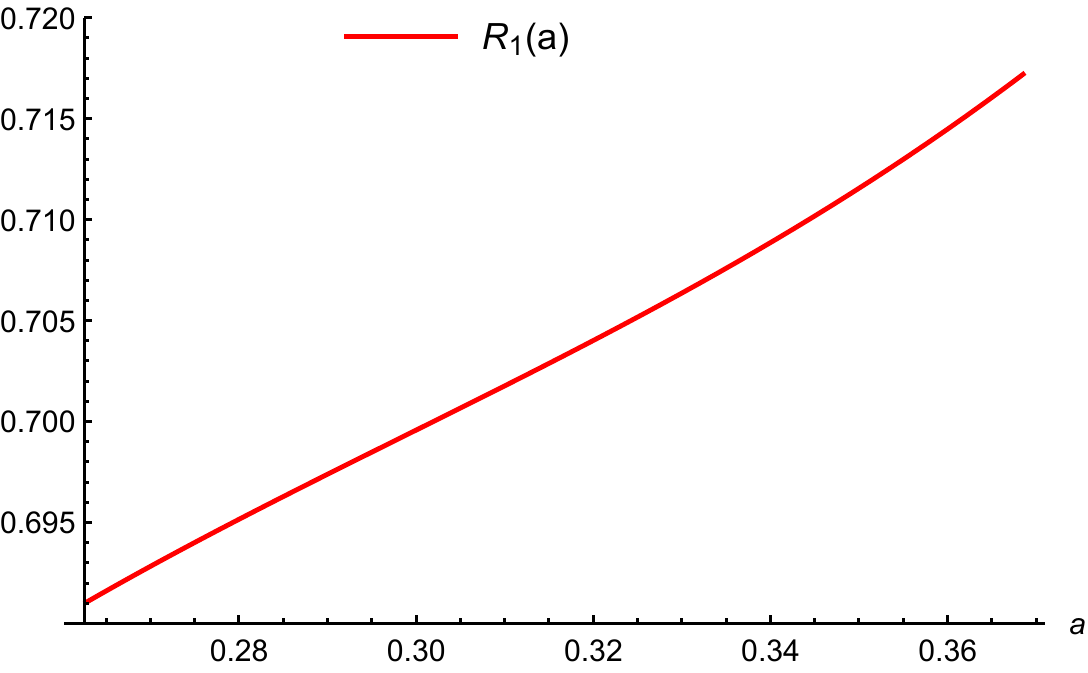} (i) \includegraphics[width=0.45\textwidth]{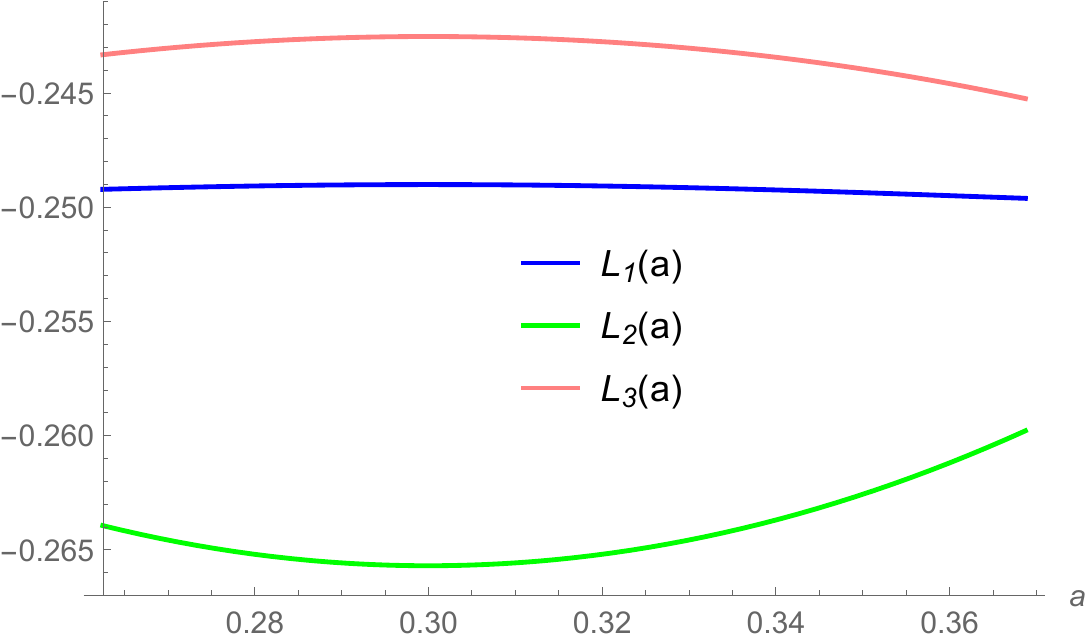} (ii)
		\caption{(i) The red curve represents our $R$-moments criterion for the state $\rho_a$ given
			by Eq.(\ref{rhoa_npt}). ${R}_1(a) > 0$ certifies the detection of entanglement in $\rho_a$. Here, $x$-axis represents the state parameter $a$. (ii) For the state $\rho_a$ (\ref{rhoa_npt}), the blue curve represents the $p_3$-PPT criterion (\ref{p3ppt}), the green curve represents the $D_3^{(in)}$ citerion (\ref{d3}), and the pink curve represents the $p_3$-OPPT criterion (\ref{l3}). The graph is plotted with respect to the state parameter $a$. $L_1, L_2, L_3 < 0$ show that $p_3$-PPT, $D_3^{(in)}$ and $p_3$-OPPT criteria fail to detect this state in the whole range.}
		\label{rmo-fig1}
	\end{center}
	\end{figure}

\subsubsection{III. Two-parameter class of states in
	$2 \otimes n$ quantum systems}
Consider the two parameter class of state defined in $2 \otimes n$ quantum systems \cite{chi}:
\begin{eqnarray}
	\rho^{(n)}_{\alpha,\gamma} = \alpha \sum_{i=0}^{1}\sum_{j=2}^{n-1} |ij\rangle \langle ij| + \beta (|\phi^+\rangle \langle \phi^+| + |\phi^-\rangle \langle \phi^-|  + |\psi^+\rangle \langle \psi^+|) + \gamma(|\psi^-\rangle \langle \psi^-|) \label{rhon}
\end{eqnarray}
where $0 \leq \alpha \leq \frac{1}{2(n-1)}$; $0 \leq \gamma \leq 1$; and ${|ij\rangle; i=0,1; j=0, 1, . . ., n-1}$ forms an orthonormal basis for $2 \otimes n$ quantum systems,
\begin{eqnarray}
	|\phi^{\pm}\rangle = \frac{1}{\sqrt{2}} (|00\rangle \pm |11\rangle) ;\quad
	|\psi^{\pm}\rangle = \frac{1}{\sqrt{2}} (|01\rangle \pm |10\rangle)
\end{eqnarray}
The normalization condition of $	\rho^{(n)}_{\alpha,\gamma}$ connect the parameters $\alpha$, $\beta$ and $\gamma$ as follows
\begin{eqnarray}
	\beta = \frac{1 - 2(n-2)\alpha - \gamma}{3}
\end{eqnarray} 
Now we compare the detection power of $R$-moment criterion with the moment based criterion given in section \ref{sec-ptmoment}. Let us consider the realignment moment based criterion given by Zhang et al \cite{tzhang} mentioned in (\ref{L_4}), (\ref{hk}), and (\ref{bl}). Since in $2 \otimes n$ systems, $rank(\rho^{(n)}_{\alpha,\gamma}) \leq 4$, the criterion given in (\ref{L_4}) is equivalent to the separability criterion based on Hankel matrices given in (\ref{hk}) and (\ref{bl}).\\
In $2 \otimes 3$ systems, $\rho^{(3)}_{\alpha,\gamma}$ is entangled when $0 \leq \alpha \leq \frac{1}{4}$ and $\frac{1-2\alpha}{2} \leq \gamma \leq 1-2\alpha$. The inequality in (\ref{L_4}) is violated, i.e., $L_4 (\alpha,\gamma)>0$ for $(\alpha, \gamma)$ lying in the blue shaded region in Fig-\ref{alphagamma}(i). Hence the entanglement is detected in this region by Zhang's criteria.
Now, applying the $R$-moment criterion on $\rho^{(3)}_{\alpha,\gamma}$, the inequality in (\ref{thm4}) is violated for the states lying in the blue as well as yellow shaded regions. It can be thus seen that the $R$-moment 
criterion performs better for such systems. Similarly, it is also possible to
show (see  Fig-\ref{alphagamma}(ii)) that for $n=4$, again a larger set of states is detected by the $R$-moment
criterion compared to the Zhang's realignment moment criterion.
\begin{figure}[h!]
	\begin{center}
		\includegraphics[width=0.4\textwidth]{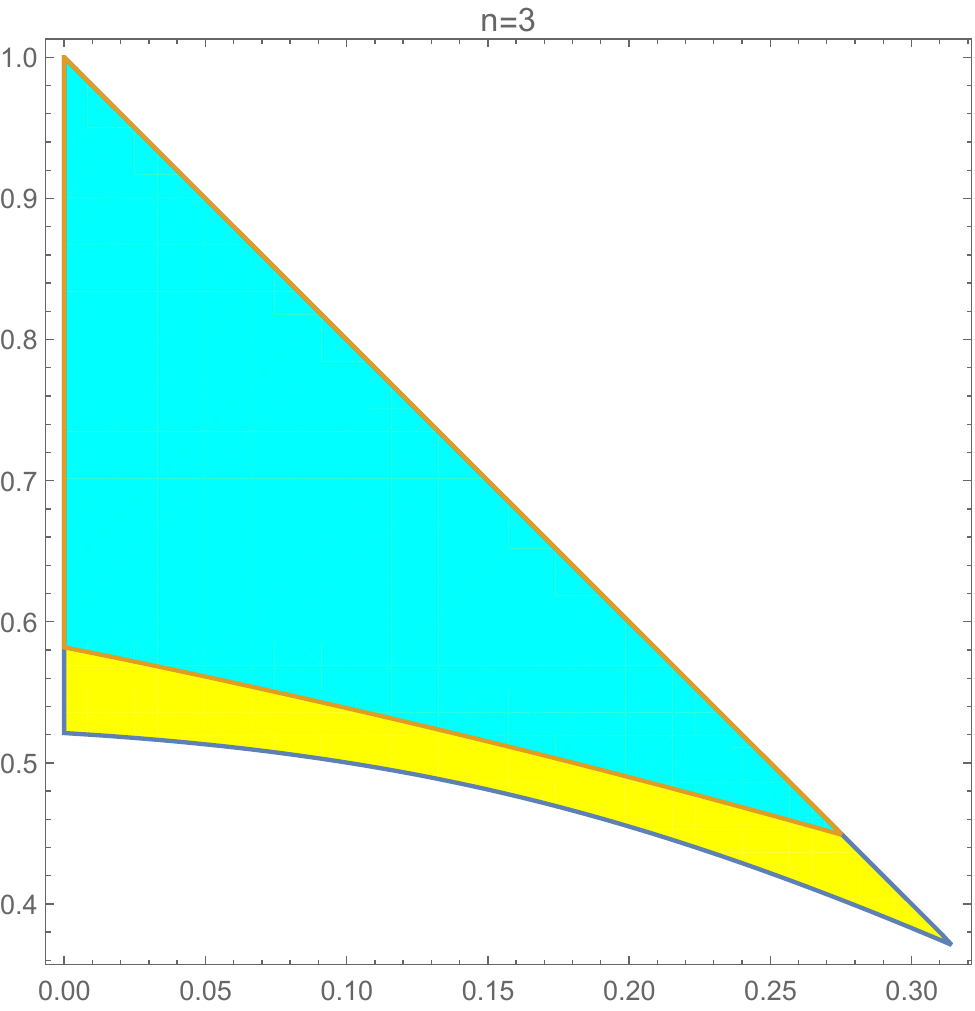}(i)
		\includegraphics[width=0.4\textwidth]{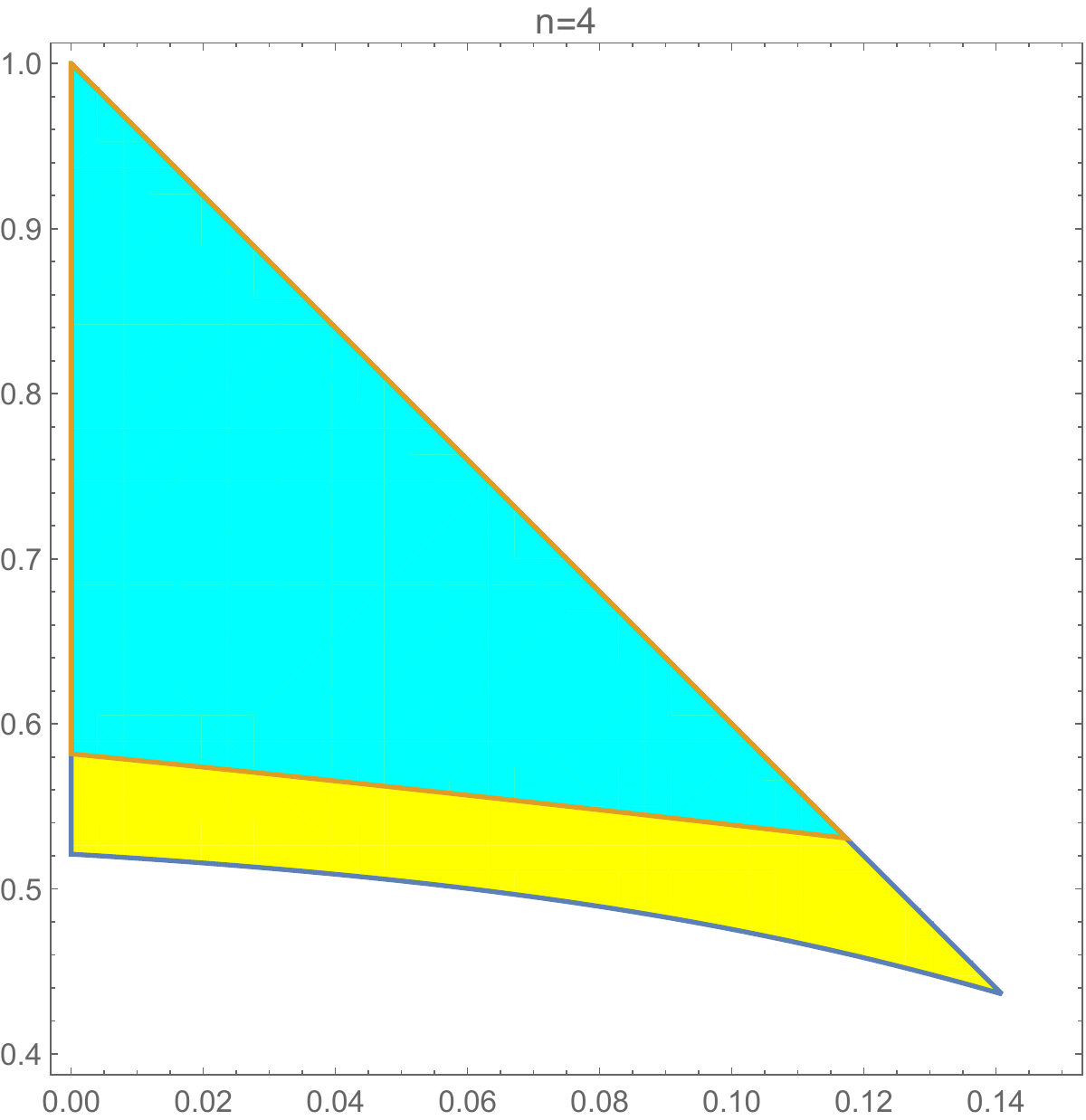}(ii)
		\caption{The above region plot in (i) and (ii), respectively, shows the detection region of the state $\rho^{(n)}_{\alpha,\gamma}$, for $n = 3$ and $n=4$ by $R$-moment criteria and the Zhang's realignment moment based criteria given in (\ref{L_4}).
			The blue region represents the entangled states detected by both the above mentioned criteria. The yellow region shows the state detected by $R$-moment criteria but undetected by Zhang's criteria. Here, $x$-axis represents the state parameter $\alpha$ and $y$-axis represents the state parameter $\gamma$.}
		\label{alphagamma}
	\end{center}
\end{figure}
\section{Non-existence of Non-full Rank Realigned Matrix of Two-qubit Entangled States} \label{rmo-AppxA}
\noindent 
We now show that the coefficient $D_{4}$ will not take value zero for any entangled two-qubit state. Let us consider an arbitrary two-qubit state that can be transformed by local filtering operation into either the Bell-diagonal state $\rho^{(BD)}=\sum_{i=1}^{4}p_{i}|\phi_{i}\rangle\langle \phi_{i}|$, where $\sum_i p_i =1$; $|\phi_{i}\rangle's$ denote the Bell states, or the states described by the density operator $\rho^{(1)}$ which is of the form \cite{ishizaka} 
\begin{eqnarray}
	\rho^{(1)} = \frac{1}{2} 
	\begin{pmatrix}
		1+c & 0 & 0 & d\\
		0 & 0 & 0 & 0\\
		0 & 0 & b-c & 0\\
		d & 0 & 0 & 1-b 
	\end{pmatrix}
\end{eqnarray}
where the state parameters $b$, $c$, $d$ satisfies any one of the following:
\begin{enumerate}
	\item[(C1)] $-1 \leq b < 1$, $~c=-1$, $~d=0$ 
	\item[(C2)] $b = 1$,$~-1< c \leq -1$, $~d=0$
	\item[(C3)] $-1 \leq b < 1$, $~-1< c \leq b$, $~d\leq |\sqrt{(1-b)(1+c)}|$ 
\end{enumerate}
\textbf{Case-I:} Let us consider the case when after the application of filtering operation on any two qubit state, the state is transformed as an entangled state $\rho^{(1)}$ if the following condition holds:
\begin{enumerate}
	\item[(E1)] $b,d \in \mathbb{R}$, $~c \leq b$, $~d \neq 0$
\end{enumerate}
The realigned matrix of $\rho^{(1)}$ is denoted by $R(\rho^{(1)})$ and it is given by
\begin{eqnarray}
	R(\rho^{(1)}) = \frac{1}{2} 
	\begin{pmatrix}
		1+c & 0 & 0 & 0\\
		0 & d & 0 & 0\\
		0 & 0 & d & 0\\
		b-c & 0 & 0 & 1-b 
	\end{pmatrix}
\end{eqnarray}
The determinant of $R(\rho^{(1)})$ is given by
\begin{equation}
	det(R(\rho^{(1)}))=\frac{(1-b)(1+c)d^{2}}{16}
	\label{det}
\end{equation} 
Using the conditions $(C1)$, $(C2)$, $(C3)$, and $(E1)$ in the determinant $det(R(\rho^{(1)}))$, it follows that the determinant must not be equal to zero. Thus, the Hermitian matrix $[R(\rho^{(1)})]^{\dagger}R(\rho^{(1)})$ is a full rank matrix. Therefore, $D_{4}\neq 0$ for any two-qubit entangled state $\rho^{(1)}$.\\
\textbf{Case-II:} If the state is transformed as a Bell diagonal state $\rho^{(BD)}$, then also it can be shown that the entanglement condition and $det(R(\rho^{(BD)}))=0$ does not hold simultaneously. This implies that in this case too  $D_{4}\neq 0$ for any two-qubit entangled state $\rho^{(BD)}$.
Thus, combining the above two cases, we can say that $D_{4}\neq 0$ for any two-qubit entangled state.

\section{Separability Criterion based on Realigned Moments ($R$-moments) in $2\otimes 2$ systems}
For $2\otimes 2$ dimensional systems, the necessary and sufficient condition
for entanglement is provided by the Peres-Horodecki PPT criterion 
\cite{peres, horo1996}. It may be noted though that partial transposition
is positive but not completely positive, and hence, it is difficult 
to be realized directly in an experiment. On the other hand, quantitative
measures of entanglement such as the entanglement of formation (EoF)
have been proposed \cite{wootters2,wootters}, which have important applications
in using entanglement as a resource for implementing tasks such as teleportation and dense coding \cite{woot3}. Bounds on the EoF have further been proposed \cite{fei1, fei2}, which are in principle, measurable, though precise measurement schemes for EoF are yet to be developed. Note further, that certain other ingenious measurement schemes for detecting two-qubit entanglement have been proposed, such as those based on employing weak measurements \cite{entdet}. However, only pure states can be detected by local operations based on the weak measurement scheme.\\
Since we have already shown that $D_4$ is non-zero for all 2-qubit entangled states (see section \ref{rmo-AppxA}), we have to use full information of the state to show the violation of the inequality (\ref{thm4}) for 2-qubit states. 
This motivates us to develop a criterion that detects entanglement using less number of moments in $2 \otimes 2$ systems. \\
Now, we introduce a different separability criterion based on $R$-moments for $2 \otimes 2$ systems. The criterion is formulated in the form of an inequality that involves up to $3^{rd}$ order moments for detecting entanglement in a 2-qubit system. In general, any separability criterion for $2 \otimes 2$ system requires up to $4^{th}$ order moments, except for a few separability criteria \cite{elben,neven,guhne2021,lin} that require up to $3^{rd}$ order moments.  
{\lemma \label{lemm4.1} If $\sigma_{i}$'s denote the singular values of the realigned matrix $R(\rho)$ arranged in the descending order as $\sigma_1(R(\rho)) \geq \sigma_2(R(\rho)) \geq \sigma_3(R(\rho)) \geq \sigma_4(R(\rho))$, then the following inequality holds:
\begin{eqnarray}
	\prod_{j=2}^{4} \left(\sigma_{1} (R(\rho)) + \sigma_{j} (R(\rho))\right) 
	&\geq& \lambda_{max}^{lb} ||R(\rho)||_1 + \sqrt{|D_3|}
	\label{lemma3.1}
\end{eqnarray}
where $\lambda_{max}^{lb}\equiv \lambda^{lb}_{max} ([R(\rho)]^{\dagger}R(\rho))$ denotes the lower bound of the maximal eigenvalue of $[R(\rho)]^{\dagger}R(\rho)$, i.e., $\lambda_{max}([R(\rho)]^{\dagger}R(\rho))\geq \lambda^{lb}_{max} $.}
\begin{proof}
	The LHS of inequality (\ref{lemma3.1}) can be expressed as
	\begin{eqnarray}
		\prod_{j=2}^{4} (\sigma_{1} (R(\rho)) + \sigma_{j} (R(\rho))) &=& \sigma_{max}^2(R(\rho))||R(\rho)||_1 + \sum_{i<j<k} \sigma_{i}(R(\rho))\sigma{j}(R(\rho))\sigma_k(R(\rho))\nonumber\\
		&\geq& \lambda_{max}^{lb} ||R(\rho)||_1 + \sqrt{\sum_{i<j<k} \sigma_{i}^2(R(\rho))\sigma_{j}^2(R(\rho))\sigma_k^2(R(\rho))} \label{e63} \nonumber\\
		&=& \lambda_{max}^{lb} ||R(\rho)||_1 + \sqrt{|D_3|}
		\label{eq50}
	\end{eqnarray}
	where the inequality (\ref{e63}) follows from the fact that  $(\sum_{i=1}^n x_i)^2 \geq \sum_{i=1}^n x_i^2$ holds for positive integers $x_i$, $i=1, 2, . . ., n$.
\end{proof}
{\lemma \label{lemm4.2} If $\rho$ describes a $2 \otimes 2$ dimensional quantum state and $R(\rho)$ denotes the realigned matrix of $\rho$, then we have
\begin{eqnarray}
	||R(\rho)||_1^2 &\geq& 2 \sqrt{D_2} + T_1 \label{lemma2}
\end{eqnarray}}
\begin{proof}
	The equation (\ref{e5}) can be re-written as
	\begin{eqnarray}
		||R(\rho)||_1^2 &=&	2\sum_{i < j} \sigma_{i} (R(\rho)) \sigma_{j} (R(\rho)) + T_1  \label{e65}
	\end{eqnarray}
	Applying the inequality $(\sum_{i,j=1}^n x_{ij})^2 \geq \sum_{i,j=1}^n x_{ij}^2$ for $i<j$ in (\ref{e65}) where $x_{ij} = \sigma_{i} (R(\rho)) \sigma_{j} (R(\rho))$,  we have
	\begin{eqnarray}
		\sum_{i < j} \sigma_{i} (R(\rho)) \sigma_{j} (R(\rho))  \geq \sqrt{\sum_{i < j} \sigma_{i}^2 (R(\rho)) \sigma_{j}^2 (R(\rho))} \label{e66}
	\end{eqnarray}
	Using the inequality (\ref{e66}), the equation (\ref{e65}) reduces to
	\begin{eqnarray}
		||R(\rho)||_1^2 &\geq& 2 \sqrt{\sum_{i < j} \sigma_{i}^2 (R(\rho)) \sigma_{j}^2 (R(\rho))} + T_1 \nonumber\\ &=& 2 \sqrt{D_2} + T_1 \label{eq51}
	\end{eqnarray}
\end{proof}
{\lemma \label{lemm4.3} If $\sigma_{i}$'s denote the singular values of the realigned matrix $R(\rho)$ arranged in the descending order as $\sigma_1(R(\rho)) \geq \sigma_2(R(\rho)) \geq \sigma_3(R(\rho)) \geq \sigma_4(R(\rho))$ and if $D_2$ and $T_1$ have their usual meaning, then we have the following:
\begin{eqnarray}
	\sum_{1< i < j} \sigma_{i}^2 (R(\rho)) \sigma_{j}^2 (R(\rho))
	= D_2 - \sigma_{1}^2 (R(\rho))(T_1- \sigma_{1}^2 (R(\rho)) ) \label{lemma3}
\end{eqnarray}
}
\begin{proof}
	The LHS of (\ref{lemma3}) can be expressed as
	\begin{eqnarray}
		\sum_{1< i < j} \sigma_{i}^2 (R(\rho)) \sigma_{j}^2 (R(\rho)) &=& \sigma_{2}^2 (R(\rho)) \sigma_{3}^2 (R(\rho)) + \sigma_{2}^2 (R(\rho)) \sigma_{4}^2 (R(\rho)) + \sigma_{3}^2 (R(\rho)) \sigma_{4}^2 (R(\rho))
		\nonumber\\
		&=& D_2 - \sigma_{1}^2 (R(\rho))(\sum_{i=2}^4 \sigma_{i}^2 (R(\rho)))\nonumber\\
		&=& D_2 - \sigma_{1}^2 (R(\rho))(T_1- \sigma_{1}^2 (R(\rho)) )
	\end{eqnarray}
Hence proved.
\end{proof}
Now we are ready to discuss our separability criteria based on realigned moments for $2 \otimes 2$ systems. Let $\rho$ be a density matrix representing a $2 \otimes 2$ dimensional state and $R(\rho)$ denote the realigned matrix obtained after applying the realignment operation. Let $\sigma_{max}(R(\rho))$ be the maximum singular value of $R(\rho)$, and  $\lambda^{lb}_{max} ([R(\rho)]^{\dagger}R(\rho)) $ and $\lambda^{ub}_{max}([R(\rho)]^{\dagger}R(\rho))$ denote respectively, the lower and upper bound of the maximal eigenvalue of $[R(\rho)]^{\dagger}R(\rho)$, where  $\lambda_{max}([R(\rho)]^{\dagger}R(\rho))\leq \lambda^{ub}_{max} ([R(\rho)]^{\dagger}R(\rho))$.\\
Using (\ref{maxlb}) and (\ref{gtb}) and since $\lambda_{i}([R(\rho)]^{\dagger}R(\rho))= \sigma_{i}^2 (R(\rho))$, we have
\begin{eqnarray}
	\lambda_{max}^{lb}([R(\rho)]^{\dagger}R(\rho)) \leq \sigma_{max}^2 (R(\rho)) \leq \lambda_{max}^{ub} ([R(\rho)]^{\dagger}R(\rho))
\end{eqnarray} 
where $\lambda_{max}^{lb} ([R(\rho)]^{\dagger}R(\rho)) = f(T_1,T_2,T_3)$ and $\lambda_{max}^{ub} ([R(\rho)]^{\dagger}R(\rho))= g(T_1,T_2,T_3)$.\\ 
We are now in a position to state the following theorem on the separability condition based on $R$-moments.
{\theorem \label{thm4.2} Let $\rho$  be a positive trace class linear operator acting on the Hilbert space $\mathcal{H}^2_A \otimes \mathcal{H}^2_B$. The realigned matrix $R(\rho)$ has singular values arranged in the order as $\sigma_1(R(\rho)) \geq \sigma_2(R(\rho)) \geq \sigma_3(R(\rho)) \geq \sigma_4(R(\rho))$. If $\rho_s$ denotes a separable state, then the following inequality holds true:
\begin{equation}
	{R}_2 \equiv \sqrt{3 X^{2/3} + 2 Y - 2T_1} - 1 \leq 0 \label{xy}
\end{equation}
where $X$ and $Y$ are the functions of the first three realigned moments, given by
\begin{eqnarray}
X= \lambda_{max}^{lb} \sqrt{2\sqrt{D_2} + T_1} + \sqrt{|D_3|}\quad \text{and}\quad Y= T_1 - \lambda_{max}^{ub} + \sqrt{D_2 - \lambda_{max}^{ub} T_1 + (\lambda_{max}^{lb})^2}
\label{eq-xy}
\end{eqnarray}

}
\begin{proof}
	 Let us start with the first realigned moment $T_1$. Using (\ref{e5}), it can be expressed as
\begin{eqnarray}
	T_1 &=& (\sum_{i=1}^{4} \sigma_{i} (R(\rho_s))^2 - 2\sum_{i < j} \sigma_{i} (R(\rho_s)) \sigma_{j} (R(\rho_s))  \label{e51}
\end{eqnarray}
The second term of (\ref{e51}) can be expressed as
\begin{eqnarray}
	2\sum_{i < j} \sigma_{i} (R(\rho_s)) \sigma_{j} (R(\rho_s)) 
	&=&\sum_{i < j} \left(\sigma_{i} (R(\rho_s)) + \sigma_{j} (R(\rho_s))\right)^2 - \sum_{i < j} (\sigma_{i}^2 (R(\rho_s))+ \sigma_{j}^2 (R(\rho_s))) \nonumber\\ 
	&=& \sum_{i < j} (\sigma_{i} (R(\rho_s)) + \sigma_{j} (R(\rho_s)))^2 -3T_1 \nonumber\\
	&=& \sum_{j=2}^4 (\sigma_{1} (R(\rho_s)) + \sigma_{j} (R(\rho_s)))^2 +  \sum_{1< i < j} (\sigma_{i} (R(\rho_s)) + \sigma_{j} (R(\rho_s)))^2 -3T_1  \label{sum}\nonumber\\
\end{eqnarray}
Since, arithmetic mean of a list of non-negative real numbers is greater than or equal to their geometric mean and $\sigma_i$'s for $i=1$ to $4$ are non-negative real numbers, we have 
\begin{eqnarray}
	\sum_{j=2}^4 (\sigma_{1} (R(\rho_s)) + \sigma_{j} (R(\rho_s)))^2 
	\geq 3\left(\prod_{j=2}^{4} (\sigma_{1} (R(\rho_s)) + \sigma_{j} (R(\rho_s)))\right)^{2/3} \label{amgm}
\end{eqnarray}
Using Lemma-\ref{lemm4.1}, the RHS of the inequality (\ref{amgm}) may be simplified to
\begin{eqnarray}
3\left(\prod_{j=2}^{4} (\sigma_{1} (R(\rho_s)) + \sigma_{j} (R(\rho_s)))\right)^{2/3}
	\geq 3\left(\lambda_{max}^{lb} ||R(\rho_s)||_1 + \sqrt{|D_3|}\right)^{2/3}
	\label{eq28}
\end{eqnarray}
Using (\ref{eq28}) and (\ref{lemma2}) in (\ref{amgm}), we obtain
\begin{eqnarray}
	\sum_{j=2}^4 (\sigma_{1} (R(\rho_s)) + \sigma_{j} (R(\rho_s)))^2
	&\geq& 3\left( \lambda_{max}^{lb} \sqrt{ 2 \sqrt{D_2} + T_1} + \sqrt{|D_3|} \right)^{2/3}\nonumber\\ &=& 3 X^{2/3} \label{X}
\end{eqnarray}
The second term of (\ref{sum}) can be expanded using Lemma-\ref{lemm4.3} as
\begin{eqnarray}
	\sum_{1<i<j} (\sigma_{i} (R(\rho_s)) + \sigma_{j} (R(\rho_s)))^2 &=& 2\sum_{i=2}^4 \sigma_{i}^2 (R(\rho_s))  + 2\sum_{1< i < j} \sigma_{i} (R(\rho_s)) \sigma_{j} (R(\rho_s)) \nonumber\\
	&\geq& 2(T_1 - \sigma_{max}^2(R(\rho_s))) + 2\sqrt{ \sum_{1< i < j} \sigma_{i}^2 (R(\rho_s)) \sigma_{j}^2 (R(\rho_s))}\nonumber\\
	&=& 2(T_1 - \lambda_{max}^{ub}) + 2\sqrt{D_2 - \sigma_{max}^2 (R(\rho_s))(T_1 - \sigma_{max}^2 (R(\rho_s)) )} \nonumber \\
	&\geq&  2(T_1 - \lambda_{max}^{ub}) + 2\sqrt{D_2 - \lambda_{max}^{ub} T_1 + (\lambda_{max}^{lb})^2 } \nonumber\\
	&=& 2Y \label{Y}
\end{eqnarray}
Using (\ref{X}) and (\ref{Y}) in (\ref{sum}), we get
\begin{eqnarray}
	||R(\rho_s)||_1 &\geq& \sqrt{3X^{2/3} + 2Y -2 T_1} \label{eq61}
\end{eqnarray}
Now we will use $||R(\rho_s)||_1\leq 1$ in (\ref{eq61}) to obtain the desired result, i.e., if $\rho_s$ is separable, then 
\begin{eqnarray}
	\sqrt{3X^{2/3} + 2Y -2 T_1} \leq 1
\end{eqnarray}
Hence proved.
\end{proof} 
{\corollary
If any two-qubit state $\rho$ violates the inequality (\ref{xy}), i.e., if ${R}_2 > 0$, then the state is an entangled state.}
\subsection{Examples}
{\example
The two-qubit isotropic state is given by \cite{zhao2010}
\begin{eqnarray}
	\rho_{f} = \frac{1-f}{3} I_2 \otimes I_2 + \frac{4f-1}{3} |\psi^+\rangle \langle \psi^+|,  0\leq f \leq 1
\end{eqnarray}
with 	$|\psi^+\rangle = \frac{1}{\sqrt{2}} (|00\rangle + |11\rangle)$ and $I_2$ denotes the $2\times2$ identity matrix.
One can use the PPT and matrix realignment criteria to verify that the isotropic state $\rho_f$ is entangled for $\frac{1}{2} < f \leq 1$. The $D_3^{(in)}$ criterion given in (\ref{d3}) ensures that the state $\rho_f$ is entangled in the range $0.625< f \leq 1$. In order to apply $R$-moment criterion on $\rho_f$, our task is to probe whether the inequality in (\ref{xy}) holds for $\rho_f$.
After simple calculations, we get 
\begin{eqnarray}
	&&T_1 = Tr[[R(\rho_f)]^{\dagger} R(\rho_f)]= \frac{1}{3} (1 - 2 f + 4 f^2)\nonumber
\end{eqnarray}
Thus, for the $2 \otimes 2$ isotropic state, the inequality (\ref{xy}) reads
\begin{equation}
	{R}_2 \equiv \sqrt{3X_f^{2/3} + 2Y_f -2 T_1} - 1 \leq 0 \label{iso}
\end{equation}
where $X_f$ and $Y_f$ are the functions of $T_1$, $T_2$, and $T_3$ for the state $\rho_f$, defined in (\ref{eq-xy}).
The above inequality $R_2 \leq 0$ is violated for $0.608594<f\leq 1$ and this implies that the state $\rho_f$ is entangled in this range which is better than that provided by the $D_3^{(in)}$ criterion.
\example
Consider the two-parameter family of $2 \otimes 2$ states represented by the density matrix \cite{rudolph2003}
\begin{eqnarray}
	\rho_{s,t} = 
	\begin{pmatrix}
		\frac{5}{8} & 0 & 0 & \frac{t}{2}\\
		0 & 0 & 0 & 0\\
		0 & 0 & \frac{1}{2}(s-\frac{1}{4}) & 0\\
		\frac{t}{2} & 0 & 0 & \frac{1-s}{2} \\
	\end{pmatrix}, t\neq 0, \frac{1}{4} < s \leq 1
\end{eqnarray} 
The state $\rho_{s,t}$ has non-positive partial transpose for all values of the state parameters $t$ and $s$ for which the state is defined. Therefore, by the PPT criterion, it is entangled for any non-zero value of the parameter $t$ and $\frac{1}{4} < s \leq 1$.\\
Let us now apply the $R$-moment criterion for the detection of the entangled state belonging to the family of the states described by the density operator $\rho_{s,t}$. The first moment of the Hermitian operator $[R(\rho_{s,t})]^{\dagger} R(\rho_{s,t})$ can be calculated as $T_1 = Tr[[R(\rho_{s,t})]^{\dagger} R(\rho_{s,t})]= \frac{1}{32}\left(21 -20s + 16( s^2 +|t|^2)\right)$. Similarly, calculating $T_2$ and $T_3$ and putting these values in (\ref{xy}), we find that the state $\rho_{s,t}$ satisfies the inequality ${R}_2 > 0$ for different ranges of the state parameter $s$ and $t$ as shown in Fig-\ref{rud1}. 
This implies that the state $\rho_{s,t}$ is entangled in this region and detected by $R$-moment criteria.\\
Further, to show the significance of $R$-moment criteria, we compare our criterion with partial transpose moments based criteria such as $p_3$-PPT and $D_{3}^{(in)}$ criterion given in (\ref{p3ppt}) and (\ref{d3}). It can be shown that the $D_{3}^{(in)}$ criterion performs better than the $p_3$-PPT criterion for detecting entangled states in $\rho_{s,t}$ family.
$D_{3}^{(in)}$ criterion gives the following inequality for the family of states described by $\rho_{s,t}$
\begin{eqnarray}
	{L}_2 > 0, \; \text{for}\;\; |t|> \sqrt{\frac{ 20s^2-25s +5 }{16s-26}}
\end{eqnarray}
This shows that $D_{3}^{(in)}$ inequality is violated in the region $ |t|> \sqrt{\frac{ 20s^2-25s +5 }{16s-26}}$ and hence the entangled states are detected by $D_{3}^{(in)}$  inequality.\\
Also, it is important to note that the $R$-moment criterion detects those entangled states in $\rho_{s,t}$ family which are neither detected by partial transpose moments based criteria, such as $D_3^{(in)}$ and $p_3$-PPT criteria nor by the realignment criterion. This is illustrated in Fig-\ref{rud1}. 
Similarly, we can get identical region $S=\{t  \;| -0.25\leq t \leq -0.2\}$  of entangled states detected by $R$-moment criterion but not by partial moment based criteria.\\
\begin{figure}[h!]
	\begin{center}
			\includegraphics[width=0.48\textwidth]{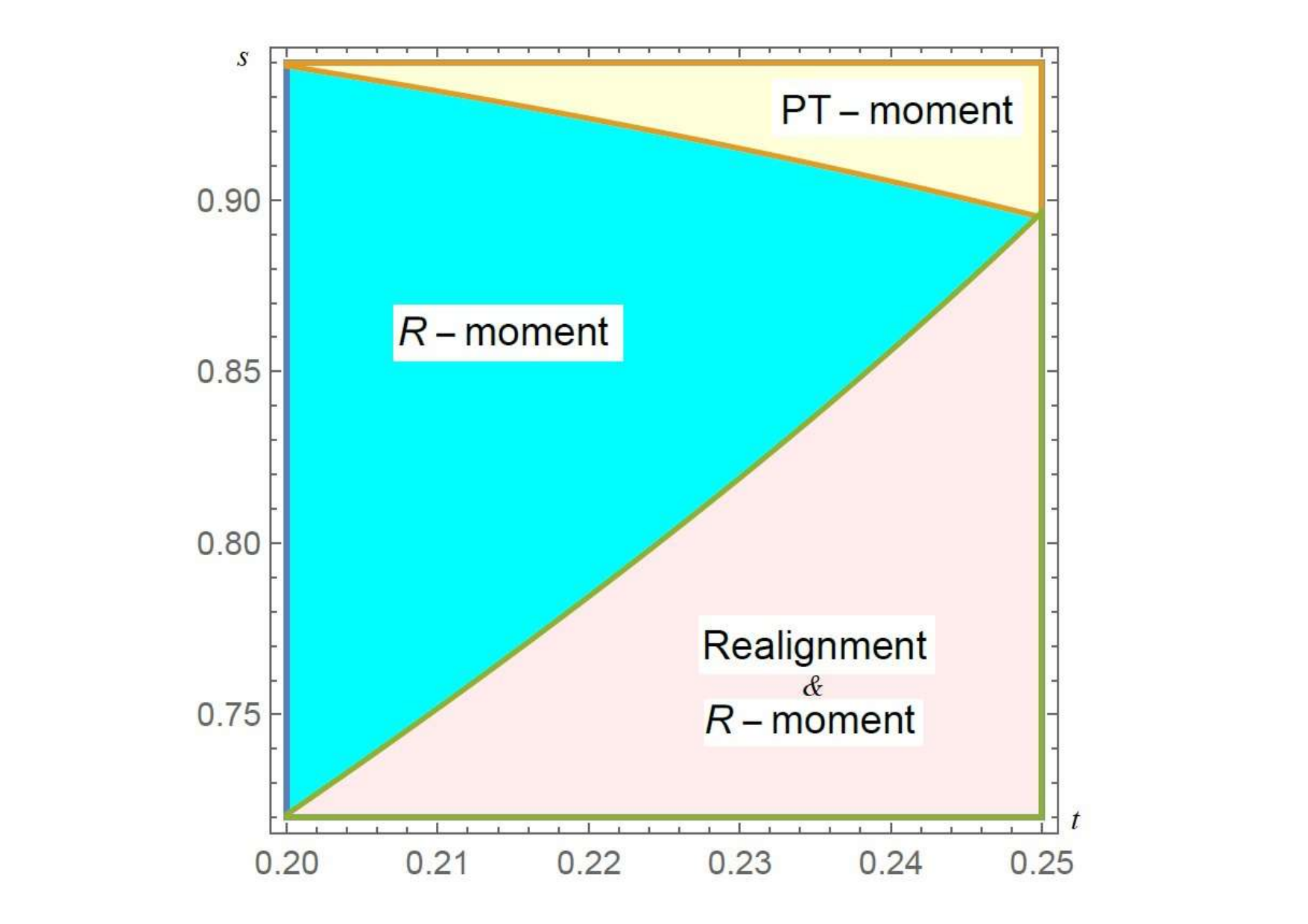}
		\caption{This figure shows the detection region of entangled states belonging to $\rho_{s,t}$ family of states. Here $x$-axis and $y$-axis denote the state parameters $t$ and $s$, respectively. The states lying in the blue region are  detected by the $R$-moment criterion, but not by either the 
			criterion using partial transpose moments or by the realignment criterion. The states lying in the pink region are detected by $R$-moment as well as realignment criterion. Yellow region depicts the states detected by $p_3$-PPT as well as $D_3^{(in)}$ criterion. } 
		\label{rud1}
	\end{center}
\end{figure}
}

\section{Conclusion}
In this chapter, we have introduced a separability criterion for detecting the entanglement of arbitrary dimensional bipartite states based on partial information of the density matrix by employing realigned moments. Our proposed approach enables the detection of both PPT and NPT entangled states within the same framework using a few moments of the realigned matrix.
The formalism presented here is thus advantageous compared to the recently formulated entanglement detection schemes using partial transpose moments \cite{elben,neven,guhne2021} which fails to detect BES.\\
We have demonstrated the significance of our separability criterion with the help of several examples of higher dimensional states.  We have studied the effectiveness of our criterion by comparing it with other partial transpose moment based criteria. Our $R$-moment criterion for $m \otimes n$ systems is further significant since it is able to detect certain NPTES that are undetected by the criteria based on partial transpose moments. Moreover, our separability criterion is able to detect
certain BES that are not detected by another recently
proposed criterion \cite{tzhang} using realigned moments.\\
Additionally, for two-qubit systems, we have shown that our approach can be
slightly modified to yield another separability criterion that can detect
entanglement in $2 \otimes 2$ systems without requiring complete information
about the quantum state. Interestingly, we have found that our realigned moment based criterion detects some two-qubit entangled states that are neither detected by partial moments based criteria \cite{elben,neven}, nor by the matrix realignment criteria \cite{chenwu, rud2005}. In \cite{liu}, authors have proposed a method based on permutation moments for the detection of multipartite entangled state. They have shown that their criterion reduces to partial transposition and realignment criterion in particular cases. In their criterion, odd order moments are inaccessible and their proposed criterion needs $2n$ copies of the entangled states to calculate $n$th order moment. On the other hand, odd and even moments are accessible in our criterion and it needs only $n$ copies of the state to calculate $n$th order moment.

\begin{center}
	****************
\end{center}
\chapter{Physical Realization of Realignment Criteria Using Structural Physical Approximation}\label{ch4}
\vspace{1cm}
{\small \emph{"Those who are not shocked when they first come across quantum theory cannot possibly have understood it."\\
- Niels Bohr}}
\vspace{1.5cm}
\noindent \hrule
\noindent \emph{ In this chapter\;\footnote { This chapter is based on the published research paper ``S. Aggarwal, A. Kumari, S. Adhikari, \emph{Physical realization of realignment criteria using structural physical approximation}, {Physical Review A} \textbf{108}, 012422, (2023)"},
	we propose the method of structural physical approximation (SPA) of realignment map. 
	Firstly, we have approximated the realignment map to a positive map using the method of SPA and then we have shown that the structural physical approximation of the realignment map (SPA-R) is completely positive. The positivity of the constructed map is characterized using moments that can be physically measured. Next, we develop a separability criterion based on our SPA-R map in the form of an inequality and have shown that the developed criterion not only detects NPTES but also PPTES. Further, we have shown that for a special class of states called Schmidt symmetric states, the SPA-R separability criteria reduce to the original form of realignment criteria. We have provided some examples to support the results obtained. Moreover, we have analyzed the error that may occur because of approximating the realignment map.} 
\noindent \hrulefill
\newpage
\section{Introduction}\label{sec4.1}
Positive maps are strong detectors of entanglement but they may not be realized in an experiment. Thus, in the case of describing a quantum channel or the reduced dynamics of an open system, a stronger positivity condition is required \cite{kraus}. On the other hand, completely positive maps play an important role in quantum information theory as they make a positive map physical. Completely positive maps were introduced by Stinespring in the study of dilation problems for operators \cite{stein}. Positive and completely positive maps are defined in section \ref{sec-laresults}. 
Choi described the operator sum representation of completely positive maps in \cite{choi}. In \cite{poon}, Kwong and Poon studied the necessary and sufficient conditions for the existence of completely positive maps. \\
A physical way by which a positive map can be approximated by a completely positive map is called structural physical approximation (SPA) \cite{horoekert2002,jaro, augusiak11,hakye,augusiak14,jbae}. The idea of SPA is to mix a positive map $\Phi$ with a maximally mixed state, making the mixture completely positive \cite{horoekert2002}. If a $d\otimes d$ dimensional system described by the density operator $\rho$ then the SPA to the positive map $\Phi$ may be defined as 
\begin{equation}
	\tilde{\Phi}(\rho) = \frac{p^*}{d^2} I_{d^2} + (1-p^*)\Phi(\rho)
\end{equation}
where $I_{d^2}$ denotes the identity matrix of order $d^2$ and $p^*$ is the minimum value of the probability $p$ for which the approximated map $\tilde{\Phi}$ is completely positive \cite{adhikari2018}.\\
The resulting map $\tilde{\Phi}$ can then be physically realized in a laboratory and its action characterizes entanglement of the states detected by $\Phi$. Geometrically, the map $\tilde{\Phi}$ retains the direction of the generalized Bloch vector of the output state described by $\Phi(\rho)$ and it only changes the length of the vector by some factor \cite{horo2003}.\\
Although PPT criteria is one of the most important and widely used criteria, it suffers from a few serious drawbacks. One of the major drawbacks is that it is based on the negative eigenvalues of the partially transposed matrix and is thus used to detect NPTES only. Another drawback is that the partial transposition map is positive but not a completely positive map and hence, may not be implementable in an experiment. In order to make it experimentally implementable, partial transposition maps have been approximated to a completely positive map using the method of SPA \cite{horoekert2002}. A lot of work has been done on the structural physical approximation of partial transposition (SPA-PT) \cite{horoekert2002,adhikari2018, hlim1, hlim2, hlim3, kumari2019}. The SPA-PT has been used to detect and quantify entanglement \cite{horoekert2002,kumari2022} but till now it can only be used to detect and quantify NPTES. \\
Realignment criteria is a powerful criterion in the sense that it may be used to detect NPTES as well as PPTES. Although it is one of the best for the detection of PPTES the problem with this criteria is that it may not be used to detect entanglement practically. It is due to the fact that the realignment map corresponds to a non-positive map and it is known that the non-positive maps are not experimentally implementable. The defect that the realignment map may not be realized in an experiment may be overcome by approximating the non-positive realignment map to a completely positive map  \cite{korbicz}. The work included in this chapter is significant because although there has been considerable progress in entanglement detection using the SPA of partial transposition map, the idea of SPA of realignment operation is still unexplored.\\
In this chapter, we approximate the non-positive realignment map to a completely positive map. To achieve this goal, we first approximate the non-positive realignment map with a positive map and then we show that the obtained positive map is also completely positive. We estimate the eigenvalues of the realignment matrix using moments that may be used physically in an experiment \cite{brun,tanaka,sougato,guhne2021}. Further, we formulate a separability criterion that we call SPA-R criteria, using our approximated map that not only detects NPTES but also PPTES.  Next, we have shown that the SPA-R criteria reduces to the original formulation of realignment criteria for a class of states called Schmidt-symmetric states. 
\section{Structural Physical Approximation of Realignment Map: Positivity and Completely positivity}
\noindent We employ the method of structural physical approximation to approximate the realignment map. To proceed toward our aim, let us first recall the depolarizing map which may be defined in the following way: A map $\Psi: M_n \longrightarrow M_n$ is said to be depolarizing if 
\begin{eqnarray}
	\Psi(A)= \frac{Tr[A]}{n} I_{n}
	\label{dep}
\end{eqnarray}
In the method of structural physical approximation of the realignment map, we mix an appropriate proportion of the realignment map with a depolarizing map in such a way that the resulting map will be positive. This may happen because the lowest negative eigenvalues generated by the realignment map can be offset by the eigenvalues of the maximally mixed state generated by the depolarizing map.\\
Consider any quantum state $\rho \in \mathcal{D} \subset B(\mathcal{H}_A \otimes \mathcal{H}_B)$, where the Hilbert spaces $ \mathcal{H}_A$ and $ \mathcal{H}_B$ have dimension $d$ and $\mathcal{D}$ denote the set of all states $\rho$ whose realignment matrix $R(\rho)$ have real eigenvalues and positive trace. The structural physical approximation of the realignment map may be defined as $\widetilde{R}: M_{d^2} (\mathbb{C}) \longrightarrow M_{d^2}(\mathbb{C})$ such that
\begin{eqnarray}
	\widetilde{R}(\rho)=\frac{p}{d^2}I_{d\otimes d}+\frac{(1-p)}{Tr[R(\rho)]}{R(\rho)},~~0 \leq p \leq 1
	\label{sparealign}
\end{eqnarray}
\subsection{Positivity of structural physical approximation of realignment map}
It is known that $R(\rho)$ forms an indefinite matrix, its eigenvalues may be negative or positive. Let us first consider the case when all the eigenvalues of $R(\rho)$ are non-negative. So, by the definition of $\widetilde{R}$ given in (\ref{sparealign}), $\widetilde{R}(\rho)$ is positive for all $ p \in [0,1]$ and hence $\widetilde{R}$ defines a positive map. On the other hand, if $R(\rho)$ has negative eigenvalues, then $\widetilde{R} (\rho)$ may be positive under some conditions.
However, since the realignment operation is not physically realizable, it is not feasible to compute the eigenvalues of $R(\rho)$. To overcome this challenge, we use the lower bound of the minimum eigenvalue of $R(\rho)$ which is denoted by $\lambda_{min}^{lb}[R(\rho)]$. We find that the map $\widetilde{R}(\rho)$ is positive for some range of $p$ that can be expressed in terms of $\lambda_{min}^{lb}[R(\rho)]$ defined in (\ref{lb}). Further, it can be observed that $\lambda_{min}^{lb}[R(\rho)]$ can be expressed in terms of $Tr[R(\rho)]$ and $Tr[(R(\rho))^2]$, which may be measured experimentally that is shown in the later section. We can analyze the positivity of $\widetilde{R}(\rho)$ when we know the nature of the eigenvalues of $R(\rho)$. Therefore, the question reduces to the following: how to determine the sign of the real eigenvalues of $R(\rho)$ experimentally without directly computing its eigenvalues? To investigate the asked question, we adopt a method to determine the sign of real eigenvalues of $R(\rho)$ that may be implemented in the experiment.
\subsubsection{Method for determining the sign of real eigenvalues of $R(\rho)$}
Let $\rho \in \mathcal{D}$ be a $d \otimes d$ dimensional state such that $R(\rho)$ has real eigenvalues $\lambda_1, \lambda_2, . . ., \lambda_{d^2}$. The characteristic polynomial of $R(\rho)$ is given as 
\begin{eqnarray}
	f(x) = \prod_{i=1}^{d^2} (x - \lambda_i) = \sum_{k=0}^{d^2} (-1)^k a_k x^{d^2-k}
\end{eqnarray} 
where $a_0 = 1$ and $\{a_k\}_{k=1}^{d^2}$ are the functions of eigenvalues of $R(\rho)$.\\
Let us now consider the polynomial $f(-x)$, which
effectively replaces the positive eigenvalues of $R(\rho)$ with negative ones and vice versa. 
For a polynomial with real roots, Descartes' rule of
sign states that the maximum number of positive roots is given by the number of sign changes between consecutive elements in the ordered list of its nonzero coefficients \cite{descartes}. The matrix $R(\rho)$ is positive semi-definite if and only if the number of sign changes in the ordered list of non-zero coefficients of $f(x)$ is equal to the degree of the polynomial $f(x)$. These non-zero coefficients can be determined in terms of moments of the matrix $R(\rho)$. The coefficients $a_i$'s are related to the moments of $R(\rho)$ by the recursive formula \cite{zeil}.
\begin{eqnarray}
	a_k = \frac{1}{k} \sum_{i=0}^k (-1)^{i-1} a_{k-i}  m_i (R(\rho)) \label{rec}
\end{eqnarray}
where $ m_i (R(\rho)) = Tr[(R(\rho))^i]$ denotes the $ith$ order moment of the matrix $R(\rho)$. For convenience, we write $ m_i (R(\rho))$ as $m_i$. The $ith$ order moment can be explicitly expressed as 
\begin{eqnarray}
	m_i  = (-1)^{i-1} i a_i + \sum_{k=1}^{i-1} (-1)^{i-1+k} a_{i-k} m_k
\end{eqnarray}
Using (\ref{rec}), we get
\begin{eqnarray}
	a_1 &=& m_1\\
	a_2 &=& \frac{1}{2} (m_1^2 - m_2) \\
	a_3 &=& \frac{1}{6} ( m_1^3 - 3 m_1 m_2 + 2m_3)
\end{eqnarray} and so on.\\
Therefore, the matrix $R(\rho)$ is positive semi-definite iff $a_i \geq 0$ for all $i = 1, ..., d^2$. 
\subsubsection{Positivity of $\widetilde{R}(\rho)$:}
We now derive the condition for which the approximated map $\tilde{R}(\rho)$ will be positive when (i) $R(\rho)$ is positive; and when (ii) $R(\rho)$ is indefinite. The obtained conditions are stated in the following theorem.\\
{\theorem Let $\rho$ be a $d \otimes d$ dimensional state such that its realignment matrix  $R(\rho)$ has real eigenvalues. The structural physical approximation of realignment map $\widetilde{R}(\rho)$ is a positive operator for $p \in [l, 1]$, where $l$ is given by
\begin{eqnarray}
	l= \left\{
	\begin{array}{lrr}
		0 & when & \lambda_{min} [R(\rho)] \geq 0 \\
		\frac{d^2k}{Tr[R(\rho)]+d^2k} \leq p \leq 1 & when & \lambda_{min} [R(\rho)] < 0
	\end{array}\right\}
	\label{thm3.1eq}
\end{eqnarray}
where $k=max[0,-\lambda_{min}^{lb}[R(\rho)]]$
and $\lambda_{min}^{lb}[R(\rho)])$ denotes the lower bound of the minimum eigenvalue of $R(\rho)$ defined in (\ref{lb}). \label{thm3.1}}
\begin{proof}
	Recalling the definition (\ref{sparealign}) of the SPA of the realignment map, the minimum eigenvalue of $\widetilde{R}(\rho)$ is given by
	\begin{eqnarray}
		\lambda_{min}[\widetilde{R}(\rho)]=\lambda_{min}[\frac{p}{d^2}I_{d\otimes d}+\frac{(1-p)}{Tr[R(\rho)]}{R(\rho)}]
		\label{minr}
	\end{eqnarray}
	where $\lambda_{min}(.)$ denote the minimum eigenvalue of $[.]$. Using Weyl's inequality given in (\ref{weyl}) on RHS of (\ref{minr}), it reduces to
	\begin{eqnarray}
		\lambda_{min}[\widetilde{R}(\rho)]&\geq& \lambda_{min}[\frac{p}{d^2}I_{d\otimes d}]+\lambda_{min}[\frac{(1-p)}{Tr[R(\rho)]}{R(\rho)}]\nonumber\\&=& \frac{p}{d^2}+\frac{(1-p)}{Tr[R(\rho)]}\lambda_{min}{[R(\rho )]}
		\label{lambmin}
	\end{eqnarray}
	\noindent Now our task is to find the range of $p$ for which $\widetilde{R}$ defines a positive map. 
	Based on the sign of $\lambda_{min}[R(\rho )]$, we consider the following two cases.\\
	\textbf{Case-I:} When $\lambda_{min}[R(\rho )]\geq 0$, the RHS of the inequality in (\ref{lambmin}) is positive for every $0 \leq p\leq 1$ and hence $\widetilde{R}(\rho)$ represent a positive map for all $p \in [0,1]$.\\
	\textbf{Case-II:} If $\lambda_{min}[R(\rho)]<0$, then (\ref{lambmin}) may be rewritten as 
	\begin{eqnarray}
		\lambda_{min}[\widetilde{R}(\rho)] &\geq& \frac{p}{d^2}+\frac{(1-p)}{Tr[R(\rho)]}\lambda_{min}^{lb}[R(\rho )]
		\label{lambminneg}
	\end{eqnarray}
	where $\lambda_{min}^{lb}[R(\rho)]$ is given in (\ref{lb}) and may be re-expressed in terms of moments as
	\begin{eqnarray}
		\lambda_{min}^{lb}[R(\rho)]= \frac{m_1}{d^2} - \sqrt{(d^{2}-1)\left(\frac{m_2}{d^2}- \left(\frac{m_1}{d^2}\right)^2\right)} \label{lbr}
	\end{eqnarray}
	where $m_1 = Tr[R(\rho)]$ and $m_2 = Tr[(R(\rho))^2] $.\\
	Taking $\lambda_{min}^{lb}[R(\rho)]=-k,~~k(>0)\in \mathbb{R}$, (\ref{lambminneg}) reduces to
	\begin{eqnarray}
		\lambda_{min}[\widetilde{R}(\rho)] &\geq&  \frac{p}{d^2}-k\frac{(1-p)}{Tr[R(\rho)]}
		\label{lambminneg1}
	\end{eqnarray}
	Now, if we impose the condition on the parameter $p$ as $p\geq \frac{d^2k}{Tr[R(\rho)]+d^2k}=l$ then $\lambda_{min}[\widetilde{R}(\rho)]\geq 0$. 
	Thus combining the above discussed two cases, we can say that the approximated map $\widetilde{R}(\rho)$ represent a positive map when (\ref{thm3.1eq}) holds. Hence the theorem is proved.
\end{proof}
\subsection{Completely positivity of structural physical approximation of realignment map} \label{sec-cp}
In order to show that the approximated map $\widetilde{R}(\rho)$ defined in (\ref{sparealign}) may be realized in an experiment, it is not enough to show that $\widetilde{R}(\rho)$ is positive but also we need to show that it is completely positive.\\ 
When $l \leq p \leq 1$ there exist non-negative real numbers $\gamma_1$ and $\gamma_2$ such that the following conditions hold 
\begin{eqnarray}
	\lambda_{min} [\widetilde{R}(\rho)] &\geq& \gamma_1 \lambda_{min}[\rho] \label{26}\\
	\lambda_{max} [\widetilde{R}(\rho)] &\leq& \gamma_2 \lambda_{max}[\rho] \label{27}
\end{eqnarray}
Hence, using Result \ref{rescp}, $\widetilde{R}(\rho)$ is a completely positive operator for $p \in [l,1]$.

\section{Realization of Moments of the Realigned  Matrix} \label{sec-expmoments}
Now we formulate a procedure to show how the moments of the realigned matrix may be realized in an experiment.
Recently, it has been shown that the measurement of the moments of partially transposed density matrices is practically possible \cite{ha, cai, barti,sougato}. We consider here the moments of the realigned matrix for its possible realization in an experiment. To achieve our aim, we adopt the idea presented in \cite{barti, sougato}, where it has been shown that the $k$th partial moment can be measured using SWAP operators \cite{kwek2002} on $k$ copies of the state. The technique involved is to express the matrix power as the expectation of the permutation operator. We adopt this approach applied to the realigned matrix in order to show how the measurement of realigned moments could be accomplished. \\
For a $d\otimes d$ dimensional state $\rho$, the $k$ copies are given by $\otimes_{c=1}^k \rho_c$. Let $m_k$ denote the $k$th moment of the realigned matrix $R(\rho)$, i.e., 
\begin{eqnarray}
	m_k = Tr[(R(\rho))^k] \label{mk}
\end{eqnarray}
The $k$th moment of $R(\rho)$ can be expressed in terms of expectation value of the permutation operator as 
\begin{eqnarray}
	m_k &=& Tr[(\otimes_{c=1}^k R(\rho_c)) P^k] \label{mkc}
\end{eqnarray}
where $P$ is the normalized permutation operator defined as 
$P=\frac{1}{d} \sum_{i,j=0}^{d-1} |ij\rangle \langle ji |$. It is also known as the SWAP operator.\\
Since the separability criteria presented in $Theorem-\ref{thm4.1}$ and $Theorem-\ref{thm4.2}$ are based on the moments of the realigned matrix,  we need to  estimate the  moments of the matrix $[R(\rho)]^\dagger R(\rho)$. In particular, we  show here the procedure of determining the  first and second moments of $[R(\rho)]^\dagger R(\rho)$, which are denoted by $T_{1}$ and $T_{2}$, respectively.
The first moment of the realigned matrix is given by
\begin{eqnarray}
	m_1=	Tr[R(\rho)] 
	\label{m100}
\end{eqnarray}
Applying the results \ref{res-trnorm} and \ref{res-n1n2}, we get 
\begin{eqnarray}
	m_1=	Tr[R(\rho)] \leq \sqrt{k} ||R(\rho)||_2 = \sqrt{k \sum_{i=1}^k \sigma_{i}^2(R(\rho))} =\sqrt{kT_1}\nonumber\\
\end{eqnarray}
Hence, we have
\begin{eqnarray}
	T_1 \geq \frac{m_1^2}{k} \label{lbt1}
\end{eqnarray}
As the derived separability criterion in $Theorem-\ref{thm4.1}$ depends on the singular values of the realigned matrix,  we further need to estimate the singular values of $R(\rho)$. Here we use the result \cite{merikoski}
\begin{eqnarray}
	\sigma_{j}(R(\rho)) \leq \frac{m_1}{d^2} + \sqrt{\frac{d^2-j}{j}\left(T_1 - \frac{m_1^2}{d^2}\right)} \label{sigub}
\end{eqnarray} 
where $1\leq j \leq k$.
The above inequality (\ref{sigub}) holds when 
\begin{eqnarray}
	T_1 \leq \frac{m_1^2}{j} \label{ubt1}
\end{eqnarray}
Now, combining (\ref{lbt1}) and (\ref{ubt1}), the first moment of $[R(\rho)]^\dagger R(\rho)$ is bounded from above and below by
\begin{eqnarray}
	\frac{m_1^2}{k} \leq T_1 \leq \frac{m_1^2}{j},~~1\leq j \leq k 
	\label{lbub}
\end{eqnarray}
It may be observed that the optimal value of $T_1$ may be obtained by substituting $j=k$ in (\ref{lbub}), which turns out to be
\begin{eqnarray}
	T_1^{opt} = \frac{m_1^2}{k}
\end{eqnarray}
Next, we obtain the estimate of $T_{2}$ using the Result-\ref{res-yang}, where we have used $A=B=[R(\rho)]^\dagger R(\rho)$ and $q=1$.
Hence, the inequality (\ref{ineq2}) reduces to
\begin{eqnarray}
	T_{2}=Tr[\left([R(\rho)]^\dagger R(\rho)\right)^2] \leq Tr[[R(\rho)]^\dagger R(\rho)]^2=(T_{1})^{2}
	\label{ineq3}
\end{eqnarray}
On the other hand, the lower bound turns out to be \cite{merikoski}
\begin{eqnarray}
	T_{2}\geq \frac{T_{1}^2}{d^2}
	\label{ineq4}
\end{eqnarray}
Combining (\ref{ineq3}) and (\ref{ineq4}), the bounds on $T_{2}$ expressed in terms of $m_{1}$ is given by
\begin{eqnarray}
	\frac{m_{1}^4}{d^2k^2} \leq T_{2} \leq \frac{(m_{1})^{4}}{j^2},~~~1\leq j \leq k
	\label{ineq5}
\end{eqnarray}
The optimal range of $T_2$ may be obtained by putting $j=k$, which is given by
\begin{eqnarray}
	\frac{m_{1}^4}{d^2k^2} \leq T_{2}^{opt} \leq \frac{(m_{1})^{4}}{k^2}
	\label{ineqopt}
\end{eqnarray}
Therefore,  $T_{1}^{opt}$ and $T_{2}^{opt}$  may be experimentally determined
by measuring the first moment $m_1$ of $R(\rho)$.\\
Since $R(\rho)$ is not physically realizable, we need to first express it in terms of a physically realizable operator using the approximated map $\widetilde{R}$ defined in (\ref{sparealign}). \\
It may be observed from (\ref{sparealign}) that the realigned matrix described by the density operator $R(\rho)$ is proportional to its SPA operator $\widetilde{R}(\rho)$, {\it i. e.},
\begin{eqnarray}
	R(\rho) \propto \widetilde{R}(\rho)- \frac{p}{d^2} I_{d \otimes d}
\end{eqnarray}
Hence, the $k$th moment of $R(\rho)$ may be estimated as
\begin{eqnarray}
	m_k &\simeq& Tr	\left[\left(\otimes_{c=1}^k \left(\widetilde{R}(\rho_c)- \frac{p}{d^2} I_{d \otimes d}\right) \right)P^k\right]\\
	&=& Tr	\left[\left(\otimes_{c=1}^k \widetilde{R}(\rho_c) \right)P^k - \frac{p}{d^2} P^k \right]\\
	&=& Tr	\left[\left(\otimes_{c=1}^k \widetilde{R}(\rho_c) \right)P^k\right] - \frac{p}{d^2} Tr[P^k] \label{mk} 
\end{eqnarray}
where $p \in [\frac{ d^2l}{m_1+ d^2l}, 1]$  and $l=max[0,-\lambda_{min}^{lb}[R(\rho_{AB})]]$. \\
Since $ \widetilde{R}(\rho_c)$ is a Hermitian, positive semi-definite operator with unit trace, we introduce a quantity $s_k$ which is given by $s_k := Tr	\left[\left(\otimes_{c=1}^k \widetilde{R}(\rho_c) \right)P^k\right]$ . The quantity $s_k$ can be measured using controlled swap operations \cite{horoekert2002, kwek2002}.\\
In particular, we show how the first moment $m_1$ may be 
determined. 
Equation (\ref{mk}) may be re-expressed for $k=1$ as
\begin{eqnarray}
	m_1 &\simeq& Tr	[\widetilde{R}(\rho)P] - \frac{p}{d^2} Tr[P]\\
	&=& Tr	[\widetilde{R}(\rho)P] - \frac{p}{d^2}\\
	&\leq& Tr	[\widetilde{R}(\rho)P] - \frac{l}{m_1+ d^2l} \label{rmo-m1}
\end{eqnarray}
In the last line, we have used $p \geq \frac{ d^2l}{m_1+ d^2l} $ and $l=max[0,-\lambda_{min}^{lb}[R(\rho_{AB})]]$, which is defined in $Theorem-\ref{thm3.1}$.
We have used (\ref{thm3.1eq}) in the last step.
Therefore, the inequality (\ref{rmo-m1}) may be re-written as
\begin{eqnarray}
	m_1 + \frac{ l}{m_1+ d^2l} \leq  Tr	[\widetilde{R}(\rho)P] := s_1 \label{rmo-s1}
\end{eqnarray}
Simplifying (\ref{rmo-s1}), we get 
\begin{eqnarray}
	m_1^2 + m_1(d^2l -s_1) + l(1-d^2s_1) \leq 0 \label{quad1}
\end{eqnarray}
Solving the above quadratic equation for $m_1$, we have
\begin{eqnarray}
	\frac{ -(d^2l - s_1) - \sqrt{(d^2l-s_1)^2 - 4l(1-d^2s_1)}}{2}	\leq m_1  \leq \frac{ -(d^2l - s_1) + \sqrt{(d^2l-s_1)^2 - 4l(1-d^2s_1)}}{2} \nonumber
\end{eqnarray}
For $m_1$ to be real, we have 
\begin{eqnarray}
	(d^2l-s_1)^2 - 4l(1-d^2s_1) \geq 0 \quad
	\Rightarrow \quad d^4l^2 + 2l(d^2s_1 -2) + s_1^2 \geq 0  \label{rmo-quad2}
\end{eqnarray}
Inequality (\ref{rmo-quad2}) holds when either $l \geq \frac{2 -d^2s_1+ 2\sqrt{1-d^2s_1}}{d^4}$ or $l \leq \frac{2 -d^2s_1 - 2\sqrt{1-d^2s_1}}{d^4}$\\
\textbf{Case 1:} When $ 2-d^2s_1+ 2\sqrt{1-d^2s_1} \leq d^4l \leq d^4 $
\begin{eqnarray}
	f_l(s_1)\leq	m_1 \leq f_u(s_1) \label{rmo-case1}
\end{eqnarray}
\textbf{Case 2:} When $0 \leq d^  4l \leq 2 -d^2s_1 - 2\sqrt{1-d^2s_1}$
\begin{eqnarray}
	g_l(s_1)\leq	m_1 \leq g_u(s_1) \label{rmo-case2}
\end{eqnarray}
where $f_l$, $f_u$, $g_l$, $g_u$ are functions of $d$ and $s_1$ given as follows:
\begin{eqnarray}
	f_l(s_1) &=&  \frac{1}{2}
	(-d^2 + s_1) -\frac{1}{2d^2} \sqrt{d^8 + 2d^6s_1 + 4d^2s_1 + d^4s_1^2 -8(1+\sqrt{x})} \ \nonumber\\
	f_u(s_1) &=& \frac{-1}{d^2}(x + \sqrt{x}) +  \frac{1}{2d^2}\left(\sqrt{d^8 + 2d^6s_1 + 4d^2s_1 +d^4s_1^2 -8 (1+ \sqrt{x})}\right) \nonumber\\
	g_l(s_1) &=& \frac{1}{d^2}\left(  -x + \sqrt{x} - \sqrt{1+x-2\sqrt{x}} \right) \nonumber\\
	g_u(s_1) &=& \frac{s_1}{2} + \frac{1}{d^2} \sqrt{1+x-2\sqrt{x}}
\end{eqnarray}
where $x := 1 - d^2s_1$.
Hence, the first moment $m_1$ of $R(\rho)$ may be estimated in terms of $s_1$ using the relations (\ref{rmo-case1}) and (\ref{rmo-case2}).  Since these relations are expressed in terms of $s_1= Tr[\widetilde{R}(\rho)P]$, the first moment of realigned matrix $R(\rho)$ can be estimated experimentally. 
Similarly, it can be shown that the second moment of the realigned matrix can also be practically estimated. 
Hence, the $k$th realigment moment may measured using (\ref{mk}).
Thus, this scheme can be generalized to higher dimensional systems as well. 
Hence, the measurement of the moments $m_k$ of the realigned matrix may be practically possible.

\section{Detection using the Experimental Implementable Form of Realignment Criteria}
\noindent Now, we derive a separability condition for the detection of NPTES and PPTES that may be implemented in the laboratory. The separability condition obtained depends on the structural physical approximation of the realignment map and thus the condition may be termed as SPA-R criterion. We further identify a class of states known as Schmidt-symmetric state for which the SPA-R criterion is equivalent to the original form of realignment criterion \cite{rudolph2004,chen2} and weak form of realignment criterion \cite{hertz}. 
\subsection{SPA-R Criterion}
We are now in a position to derive the laboratory-friendly (for clarification, see section \ref{sec-expmoments}) separability criterion that may detect the NPTES and PPTES. The proposed entanglement detection criterion is based on the structural physical approximation of realignment criterion and it may be stated in the following theorem.
{\theorem \label{thm3.2} If any quantum system described by a density operator $\rho_{sep}$ in $d\otimes d$ system is separable then 
	\begin{eqnarray}
		||\widetilde{R}(\rho_{sep})||_1 \leq \frac{p[Tr[R(\rho_{sep})]-1]+1}{Tr[R(\rho_{sep})]}=[\widetilde{R}(\rho_{sep})]_{UB} \label{thm3.2eq}
\end{eqnarray}}
\begin{proof}
	Let us consider a two-qudit bipartite separable state described by the density matrix $\rho_{sep}$, then after the application of the approximated realignment map (\ref{sparealign}) on $\rho_{sep}$, we have
	\begin{eqnarray}
		\widetilde{R}(\rho_{sep}) = \frac{p}{d^2}I_{d\otimes d}+\frac{1-p}{Tr[R(\rho_{sep})]}R(\rho_{sep})
		\label{spa1}
	\end{eqnarray}
	Taking trace norm on both sides of (\ref{spa1}) and using triangular inequality on the norm, it reduces to 
	\begin{eqnarray}
		||\widetilde{R}(\rho_{sep})||_1 &\leq& 
		p+\frac{1-p}{Tr[R(\rho_{sep})]}||R(\rho_{sep})||_1
		\label{spa3}
	\end{eqnarray}
	Since $\rho_{sep}$ denote a separable state, using realignment criteria, we have $||R(\rho_{sep})||_1\leq 1$ \cite{chenwu,rud2005}. Therefore, (\ref{spa3}) further reduces to
	\begin{eqnarray}
		||\widetilde{R}(\rho_{sep})||_1 &\leq& p+\frac{1-p}{Tr[R(\rho_{sep})]}\nonumber\\
		&=& \frac{p[Tr[R(\rho_{sep})]-1]+1}{Tr[R(\rho_{sep})]} = [\widetilde{R}(\rho_{sep})]_{UB}
	\end{eqnarray}
	Hence proved.
\end{proof} 
{\corollary If for any two-qudit bipartite state $\rho$, the inequality
	\begin{eqnarray}
		||\widetilde{R}(\rho)||_1 > [\widetilde{R}(\rho)]_{UB} \label{cor1}
	\end{eqnarray}   
	holds then the state $\rho$ is an entangled state.}\\
We should note an important fact that $[\widetilde{R}(\rho_{sep})]_{UB}$ given in (\ref{thm3.2eq}) depends on $Tr[R(\rho)]$, which may be measured in an experiment (See section \ref{sec-expmoments}).
	
	\subsection{Schmidt-symmetric states}
	Let us consider a class of states known as Schmidt-symmetric states which may be defined as \cite{hertz} 
	\begin{eqnarray}
		\rho_{sc}=\sum_{i} \lambda_{i} A_{i}\otimes A_{i}^{*}
		\label{sc}
	\end{eqnarray}
	where $A_{i}$ represent the orthonormal bases of the operator space and $\lambda_{i}$ denote non-negative real numbers known as Schmidt coefficients.\\ 
	We are considering this particular class of states to show that the separability criteria using the SPA-R map become equivalent to the original form of realignment criteria for such a class of states. Hertz et al. \cite{hertz} studied the Schmidt-symmetric states and proved that a bipartite state $\rho_{sc}$ is Schmidt-symmetric if and only if 
	\begin{eqnarray}
		||R(\rho_{sc})||_1 = Tr[R(\rho_{sc})] \label{sch}
	\end{eqnarray}
	For any Schmidt-symmetric state described by the density operator $\rho_{sc}$, the realignment matrix $R(\rho_{sc})$ defines a positive semi-definite matrix. Hence, using $Theorem-\ref{thm3.1}$, $\widetilde{R}(\rho_{sc})$ is positive $ \forall \; p\in [0,1]$. Also, using (\ref{26}) and (\ref{27}), $\widetilde{R}(\rho_{sc})$ can be shown as a completely positive. To achieve the motivation, let us start with the following lemma.
	{\lemma For any Schmidt-symmetric state $\rho_{sc}$,
		\begin{eqnarray}
			||\widetilde{R}(\rho_{sc})||_1 = 1 
			\label{lemma4.1}
	\end{eqnarray}}
	\begin{proof} Let us recall (\ref{sparealign}), which may provide the structural physical approximation of the realignment of the Schmidt-symmetric state. Therefore, $\widetilde{R}(\rho_{sc})$  is given by
		\begin{eqnarray}
			\widetilde{R}(\rho_{sc}) = \frac{p}{d^2}I_{d\otimes d}+\frac{(1-p)}{Tr[R(\rho_{sc})]}{R(\rho_{sc})}
		\end{eqnarray}
		Taking trace norm on both sides and using triangle inequality we have,
		\begin{eqnarray}
			||\widetilde{R}(\rho_{sc})||_1 \leq p + \frac{(1-p)}{Tr[R(\rho_{sc})]} ||R(\rho_{sc}) ||_1
			\label{m1}
		\end{eqnarray}
		Using (\ref{sch}), the inequality (\ref{m1}) reduces to
		\begin{eqnarray}
			||\widetilde{R}(\rho_{sc})||_1 \leq 1
			\label{r1}
		\end{eqnarray}
		Again, using (\ref{sparealign}), $	Tr[\widetilde{R}(\rho_{sc})] $ is given by
		\begin{eqnarray}
			Tr[\widetilde{R}(\rho_{sc})] = Tr[\frac{p}{d^2}I_{d\otimes d}+\frac{(1-p)}{Tr[R(\rho_{sc})]}{R(\rho_{sc})}] = 1 
			\label{trace}
		\end{eqnarray}
		Applying Result-\ref{res-trnorm} on $\widetilde{R}(\rho_{sc})$, we get
		\begin{eqnarray}
			Tr[\widetilde{R}(\rho_{sc})] \leq ||\widetilde{R}(\rho_{sc})||_1
			\label{spr11}
		\end{eqnarray}
		Using (\ref{trace}), the inequality (\ref{spr11}) reduces to
		\begin{eqnarray}
			||\widetilde{R}(\rho_{sc})||_1 \geq 1
			\label{spineq1}
		\end{eqnarray}
		Both (\ref{r1}) and (\ref{spineq1}) holds only when
		\begin{eqnarray}
			||\widetilde{R}(\rho_{sc})||_1 = 1
			\label{r2}
		\end{eqnarray}
		Hence proved.
	\end{proof}
	We are now in a position to show that SPA-R criteria may reduce to the original form of realignment criteria for Schmidt-symmetric states. It may be expressed in the following theorem.
	{\theorem For Schmidt-symmetric state, SPA-R separability criterion reduces to the original form of realignment criterion.}
	\begin{proof} Let $\rho_{sc}^{sep}$ be any separable Schmidt-symmetric state. The SPA-R separability criterion for $\rho_{sc}^{sep}$ is given by
		\begin{eqnarray}
			||\widetilde{R}(\rho_{sc}^{sep})||_1 \leq [\widetilde{R}(\rho_{sc}^{sep})]_{UB}
			\label{f1} 
		\end{eqnarray}
		Using (\ref{lemma1}), the inequality (\ref{f1}) reduces to 
		\begin{eqnarray}
			&&[\widetilde{R}(\rho_{sc}^{sep})]_{UB} = \frac{p[Tr[R(\rho_{sc}^{sep})]-1]+1}{Tr[R(\rho_{sc}^{sep})]} \geq 1 \nonumber\\
			&\implies& \;\; p[Tr[R(\rho_{sc}^{sep})]-1] + 1 \geq Tr[R(\rho_{sc}^{sep})]\nonumber\\
			&\implies&\;\; Tr[R(\rho_{sc}^{sep})] (p-1) \geq (p-1)\nonumber\\
			&\implies&\;\; Tr[R(\rho_{sc}^{sep})] \leq 1 \nonumber\\
			&\implies&\;\; ||R(\rho_{sc}^{sep})||_1 \leq 1
		\end{eqnarray}
		The last step follows from (\ref{sch}). Hence proved.
	\end{proof}
	\section{Illustrations}
	Let us illustrate the SPA-R separability criteria with the help of a few examples and show that our criteria not only detect NPTES but also PPTES.
	{\example \label{eg3.1} Consider the family of two-qubit states $\rho(r,s,t)$ discussed in \cite{rudolph2003}. For $r=\frac{1}{4}$ and $s=\frac{1}{2}$, the family is represented by
		\begin{eqnarray}
			\rho_t=\frac{1}{2}
			\begin{pmatrix}
				\frac{5}{4} & 0 & 0 & t\\
				0 & 0 & 0 & 0\\
				0 & 0 & \frac{1}{4} & 0\\
				t & 0 & 0 & \frac{1}{2}
			\end{pmatrix} \label{rhot} ,~~  |t| \leq \frac{\sqrt{5}}{2\sqrt{2}} 
		\end{eqnarray}
		By PPT criterion, $\rho_t$ is entangled when $t \in [-\frac{\sqrt{5}}{2\sqrt{2}},\frac{\sqrt{5}}{2\sqrt{2}}] - \{0\}$. Realignment criteria detect the entangled states for $|t| > 0.116117$.\\
		Using the prescription given in (\ref{sparealign}), we construct the SPA-R map $\widetilde{R}: M_4 (\mathbb{C}) \longrightarrow M_4 (\mathbb{C})$ as
		\begin{eqnarray}
			\widetilde{R}(\rho_t) = \frac{p}{4} I_4 + \frac{(1-p)}{Tr[R(\rho_t)]} R(\rho_t); \quad 0 \leq p \leq 1
		\end{eqnarray}
		Our task is now to calculate the eigenvalues of $R(\rho_t)$ and thus we find the characteristic polynomial of the matrix $R(\rho_t)$. It can be expressed as
		\begin{eqnarray}
			f_1(x) = x^4 - a_1(t) x^3 + a_2(t) x^2 - a_3(t) x + a_4(t)
		\end{eqnarray}
		Using (\ref{rec}), we get
		\begin{eqnarray}
			&&a_1(t) = m_1 = t + \frac{7}{8}
			\label{a1t}\\
			&&a_2(t) =\frac{1}{2} (m_1^2 - m_2) = \frac{1}{32}(8t^2 +28 t + 5)\\
			&&a_3(t) = \frac{1}{6} ( m_1^3 - 3 m_1 m_2 + 2m_3)= \frac{1}{32}(7t^2 + 5t)\\
			&&a_4(t) = \frac{1}{24} (m_1^4 - 6 m_1^2 m_2 + 8 m_1 m_3 + 3 m_2^2 - 6m_4) = \frac{5}{128}t^2
		\end{eqnarray}
		where $m_k=Tr[(R(\rho_t))^k]$, $k=1,2,3,4$.\\
		The nature of the sign of the eigenvalues of $R(\rho_{t})$ may be determined by Descarte's rule of sign. To investigate, we divide the whole range of the state parameter $t$ into two parts (i) $t \in [0,0.790569]$ and (ii) $t \in [-0.790569,0]$.\\ 
		\textbf{Case 1:} If $t \in [0,0.790569]$ then $a_{i}\geq 0,\; i=1,2,3,4$, i.e. all the coefficients of the characteristic polynomial $f_1(-x)$ are positive. Therefore, there is no sign change in the ordered list of coefficients of $f_1(-x)$. Thus, $R(\rho_t)$ has no negative eigenvalue for $t\geq 0$. Hence $R(\rho_t)$ is a positive semi-definite operator for $t \in [0,0.790569]$. \\
		\textbf{Case 2:} In this case, we find that (i) if $t \in [-0.790569,-0.188751]$ then  $a_2(t) < 0$ and (ii) if $t \in [-0.714286, 0]$ then $a_3(t) < 0$. Thus, for every $t \in [-0.790569,0]$, atleast one coefficient of $f_1(x)$ is negative. Hence $R(\rho_t)$ has at least one negative eigenvalue, i.e.,  $R(\rho_t)$ is not positive semi-definite (PSD) for $t<0$.\\
		From the above analysis, it may be concluded that\\
		\textbf{ (i)} if $t \geq 0$, then the SPA-R map $\widetilde{R}(\rho_{t})$ defines a positive map $\forall p\in [0,1]$ and \\
		\textbf{(ii)} if $t < 0$ then according to the $Theorem-\ref{table3.1}$, SPA-R map $\widetilde{R}(\rho_{t})$ will be positive when the lower bound $l$ of the proportion $p$ is given by
		\begin{eqnarray}
			l = \frac{4k}{Tr[R(\rho_t)] + 4k}  =\frac{2(13-24t+8t^2)-\sqrt{3(67-112t+64t^2)}}{(-5+4t)^2} \equiv p_1 \label{p1}
		\end{eqnarray}
		Applying $Theorem-\ref{thm3.1}$, it can be shown that the approximated map $\widetilde{R}(\rho_t)$ is positive as well as completely positive for $p_1 \leq p \leq 1$.
		
		Thus, the SPA-R map $\widetilde{R}(\rho_t)$, which is a completely positive map may be suitable for detecting the entanglement in the family of states described by the density operator $\rho_{t}$.
		Now we apply our separability criterion discussed in $Theorem-\ref{thm3.2}$ which involves the comparison of  $||\widetilde{R}(\rho_t)||_1$ and the upper bound $[\widetilde{R}(\rho_t)]_{UB}$.
		After a few steps of the calculation, we obtain
		\begin{eqnarray}
			||\widetilde{R}(\rho_t)||_1 > [\widetilde{R}(\rho_t)]_{UB};\;\text{for}\; \begin{array}{lrl}
				t \in (-0.790569, -0.665506] &\text{when}& p_1 \leq p < p_2 \\
				t \in (0.116117, 0.125] &\text{when}& 0 \leq p < p_3 \\
				t \in (0.125, 0.790569] &\text{when}& 0 \leq p\leq 1
			\end{array}\label{rhoteq} 
		\end{eqnarray}
		where 
		\begin{eqnarray}
			p_2 &=& \frac{(-91 - 48 t - 64 t^2) -
				\sqrt{8673 + 9632 t - 8832 t^2 - 6144 t^3 + 4096 t^4}}{2
				(-7 + 48 t)^2}\\
			p_3 &=&  \frac{(14 - 128 t + 64 t^2)}{(7 - 80 t + 128 t^2)}
		\end{eqnarray}
		Thus, the inequality (\ref{thm3.2eq}) is violated for $t \in [-0.790569,-0.665506) \cup (0.116117, 0.790569]$ which implies that in this range of $t$, the state $\rho_{t}$ is entangled.\\
		The comparison of $||\widetilde{R}(\rho_t)||_1$ and $[\widetilde{R}(\rho_t)]_{UB}$ for the two-qubit state $\rho_t$ has been studied in Fig-\ref{fig3.1} for different range of $t$ given in (\ref{rhoteq}). 
		\begin{figure}[h!]
			\begin{center}
				\includegraphics[scale=.4]{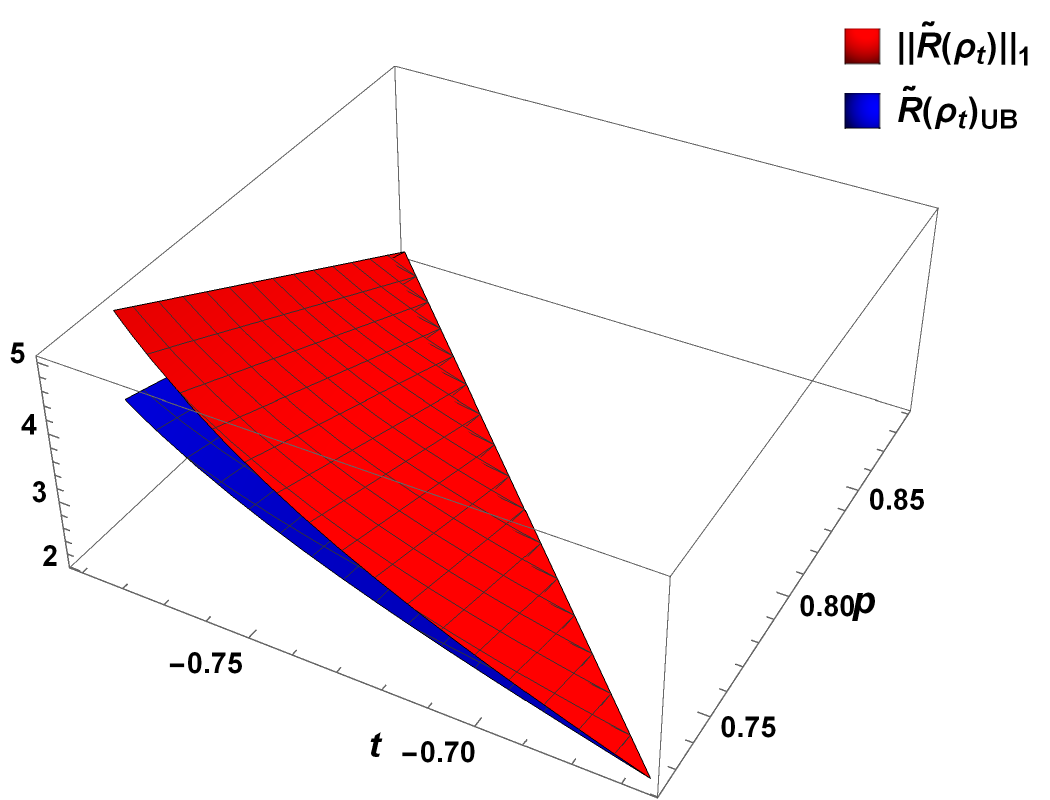}
				4.1(i) 
				\includegraphics[scale=.3]{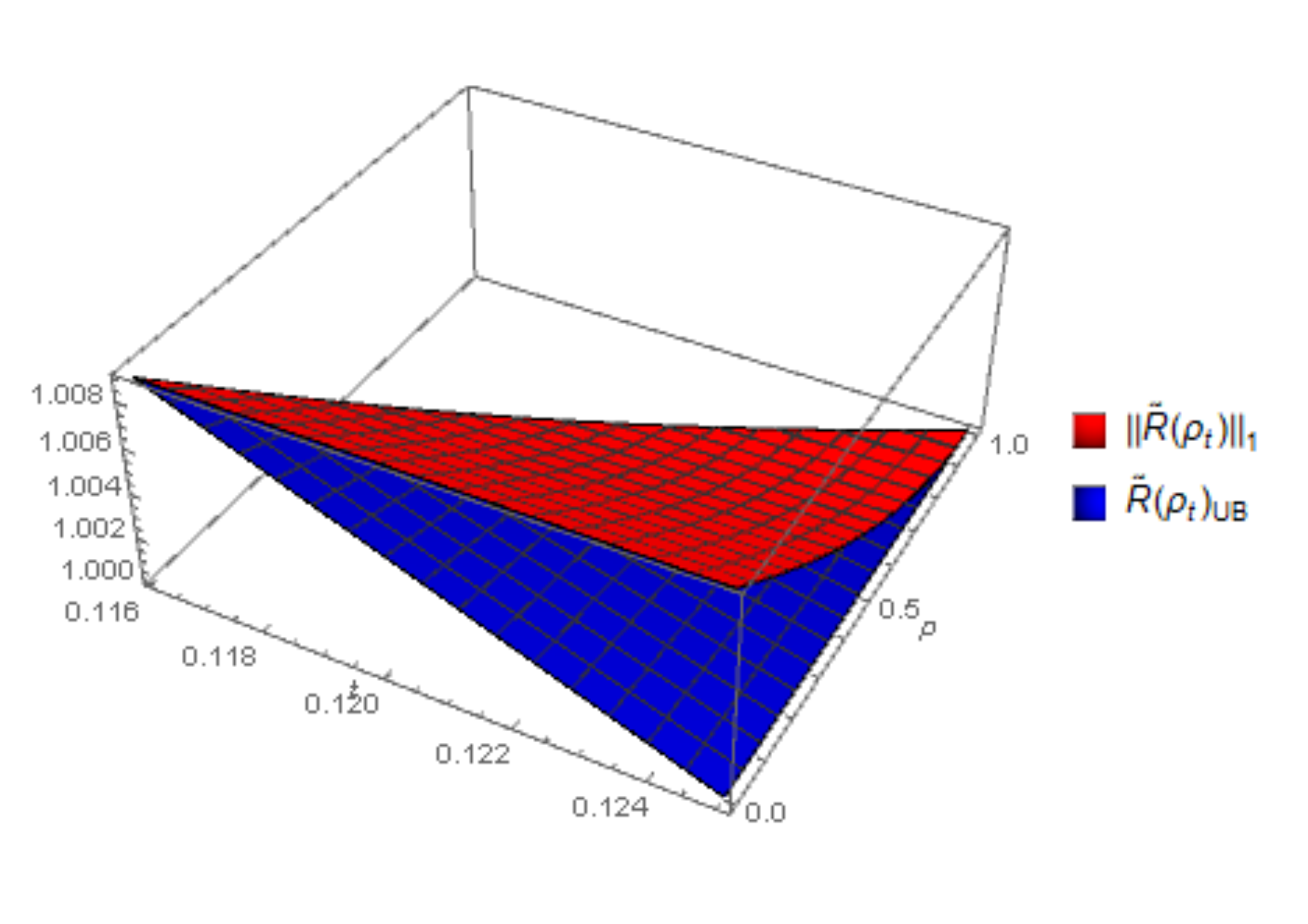}
				4.1(ii)
				\includegraphics[scale=.3]{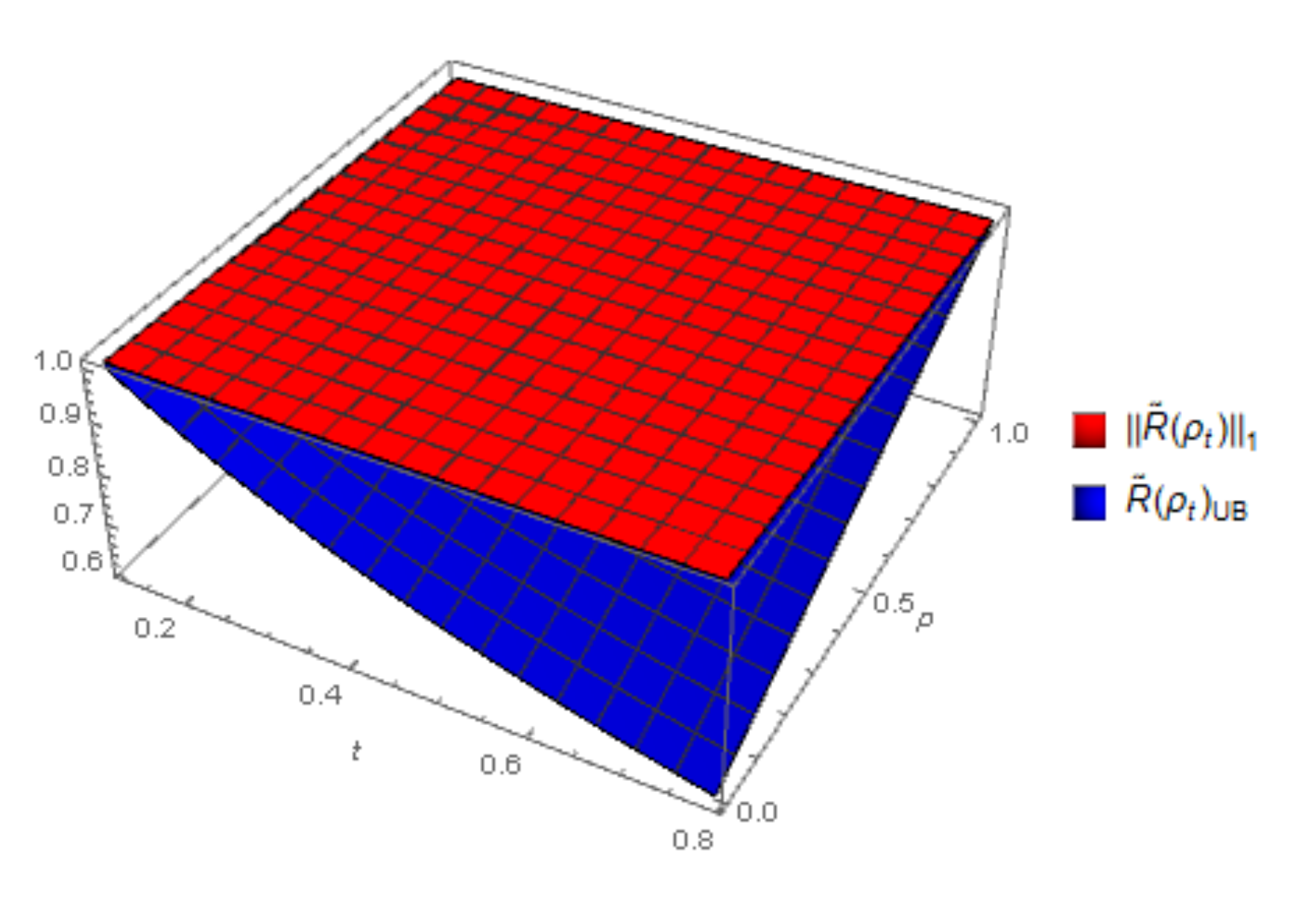}
				4.1(iii)
				\caption{The comparison between the $||\widetilde{R}(\rho_t)||_1$ and $[\widetilde{R}(\rho_t)]_{UB}$ for the two-qubit state $\rho_{t}$ has been displayed. In Fig 4.1(i), one can observe that the inequality (\ref{thm3.2eq}) obtained in $Theorem-\ref{thm3.2}$ is violated when $-0.790569 \leq t < -0.665506$ for $p \in [p_1,p_2]$ whereas in Fig 4.1(ii) the inequality is violated when $0.116117< t \leq 0.125$ and $p$ lies in the interval $[0,p_3)$. Fig. 4.1(iii) shows the violation of inequality (\ref{thm3.2eq}) when $t > 0.125$ and $0 \leq p \leq 1$.} \label{fig3.1}
			\end{center}
		\end{figure}
		\example \label{eg3.2} Consider a two-qutrit state defined in \cite{swapan}, which is described by the density operator
		\begin{eqnarray}
			\rho_{\mu} = \frac{1}{5+2{\mu}^2}\sum_{i=1}^{3}{|\psi_i\rangle \langle \psi_i|},~~\frac{1}{\sqrt{2}}\leq \mu \leq 1
			\label{eg2}
		\end{eqnarray}
		where $|\psi_i\rangle=|0i\rangle-\mu|i0\rangle$, for $i=\{1,2\}$ and $|\psi_3\rangle=\sum_{i=0}^{2}{|ii\rangle}$.\\
		The state described by the density operator $\rho_{\mu}$ is NPTES \cite{swapan}.
		Using the prescription given in (\ref{sparealign}), we construct the SPA-R map $\widetilde{R}: M_9 (\mathbb{C}) \longrightarrow M_9 (\mathbb{C})$ as
		\begin{eqnarray}
			\widetilde{R}(\rho_{\mu}) = \frac{p}{9} I_9 + \frac{(1-p)}{Tr[R(\rho_{\mu})]}{R(\rho_{\mu})},~ 0 \leq p \leq 1
		\end{eqnarray}
		The eigenvalues of $R(\rho_{\mu})$ can be calculated by finding the characteristic equation of the matrix $R(\rho_{\mu})$. The characteristic polynomial can be expressed as
		\begin{eqnarray}
			f_2(x) = x^9 - a_1(\mu) x^8 +a_2(\mu) x^7 - a_3(\mu) x^6 + a_4(\mu) x^5  - a_5(\mu) x^4  + a_6(\mu) x^3 - a_7(\mu) x^2 + a_8(\mu) x - a_{9}(\mu)\nonumber
		\end{eqnarray}
		where the coefficients $a_i(\mu)$, $i=1,2,...,9$ are given by
		\begin{eqnarray}
			a_1({\mu})&=&\frac{9}{5+2{\mu}^2},\quad a_2({\mu}) = - \frac{4(-9+{\mu}^2)}{(5+2{\mu}^2)^2},\quad  a_3({\mu}) = -\frac{28(-3+{\mu}^2)}{(5+2{\mu}^2)^3},\nonumber\\a_4({\mu}) &=& \frac{126-84{\mu}^2+5{\mu}^4}{(5+2{\mu}^2)^4}, \quad a_5({\mu}) = \frac{126-140{\mu}^2+25{\mu}^4}{(5+2{\mu}^2)^5},  \; \; a_6({\mu}) = -\frac{2(-42+70{\mu}^2-25{\mu}^4+{\mu}^6)}{(5+2{\mu}^2)^6}, \nonumber\\
			a_7({\mu}) &=&-\frac{2(-18+42{\mu}^2-25{\mu}^4+3{\mu}^6)}{(5+2{\mu}^2)^7},\quad a_8({\mu}) =- \frac{-9+28{\mu}^2-25{\mu}^4+6{\mu}^6}{(5+2{\mu}^2)^8}\nonumber\\
			a_9({\mu}) &=&-\frac{(-1+{\mu}^2)^2(-1+2{\mu}^2)}{(5+2{\mu}^2)^9}
		\end{eqnarray}
		From the coefficients of $f_2(x)$, it can be observed that atleast one coefficient of $f_2(x)$ is negative. This means $R(\rho_{\mu})$ has atleast one negative eigenvalue. Using Descarte's rule of sign, we find that $R(\rho_{\mu})$ is not a positive semi-definite operator. \\
		Using $Theorem-\ref{thm3.1}$, the approximated map $\widetilde{R}(\rho_{\mu})$ is positive as well as completely positive when the lower bound $l$ of the proportion $p$ is given as
		\begin{eqnarray}
			l = \frac{-1+15\sqrt{2}w+6\sqrt{2}{\mu}^2w}{3\sqrt{2}(5+2{\mu}^2)w},~w=\sqrt{\frac{1}{56+9{\mu}^2(5+{\mu}^2)}} \label{spa-eg2-l}
		\end{eqnarray}
		Since the SPA-R map $\widetilde{R}(\rho_{\mu})$ is positive as well as completely positive for $p \in [l,1]$ where $l$ is given in (\ref{spa-eg2-l}), so it is suitable for detecting the entanglement in the state $\rho_{\mu}$ experimentally.\\
		Now we apply our separability criterion discussed in $Theorem-\ref{thm3.2}$ which involves the comparison of  $||\widetilde{R}(\rho_{\mu})||_1$ and the upper bound $[\widetilde{R}(\rho_{\mu})]_{UB}$ defined in (\ref{thm3.2eq}).
		For $\frac{1}{\sqrt{2}}\leq {\mu} \leq 1$, we find that
		\begin{eqnarray}
			||\widetilde{R}(\rho_{\mu})||_1 > [\widetilde{R}(\rho_{\mu})]_{UB}
		\end{eqnarray}
		The comparison of $||\widetilde{R}(\rho_{\mu})||_1$ and $[\widetilde{R}(\rho_a)]_{UB}$ for the two-qutrit state $\rho_{\mu}$ has been studied in Fig-\ref{fig3.2}.
		From Fig-\ref{fig3.2}, it is evident that the inequality (\ref{thm3.2eq}) obtained in $Theorem-\ref{thm3.2}$ is violated. Thus, the state described by the density operator $\rho_{\mu}$ is an entangled state.\\
		\begin{figure}[h!]
			\centering
			\includegraphics[scale=.4]{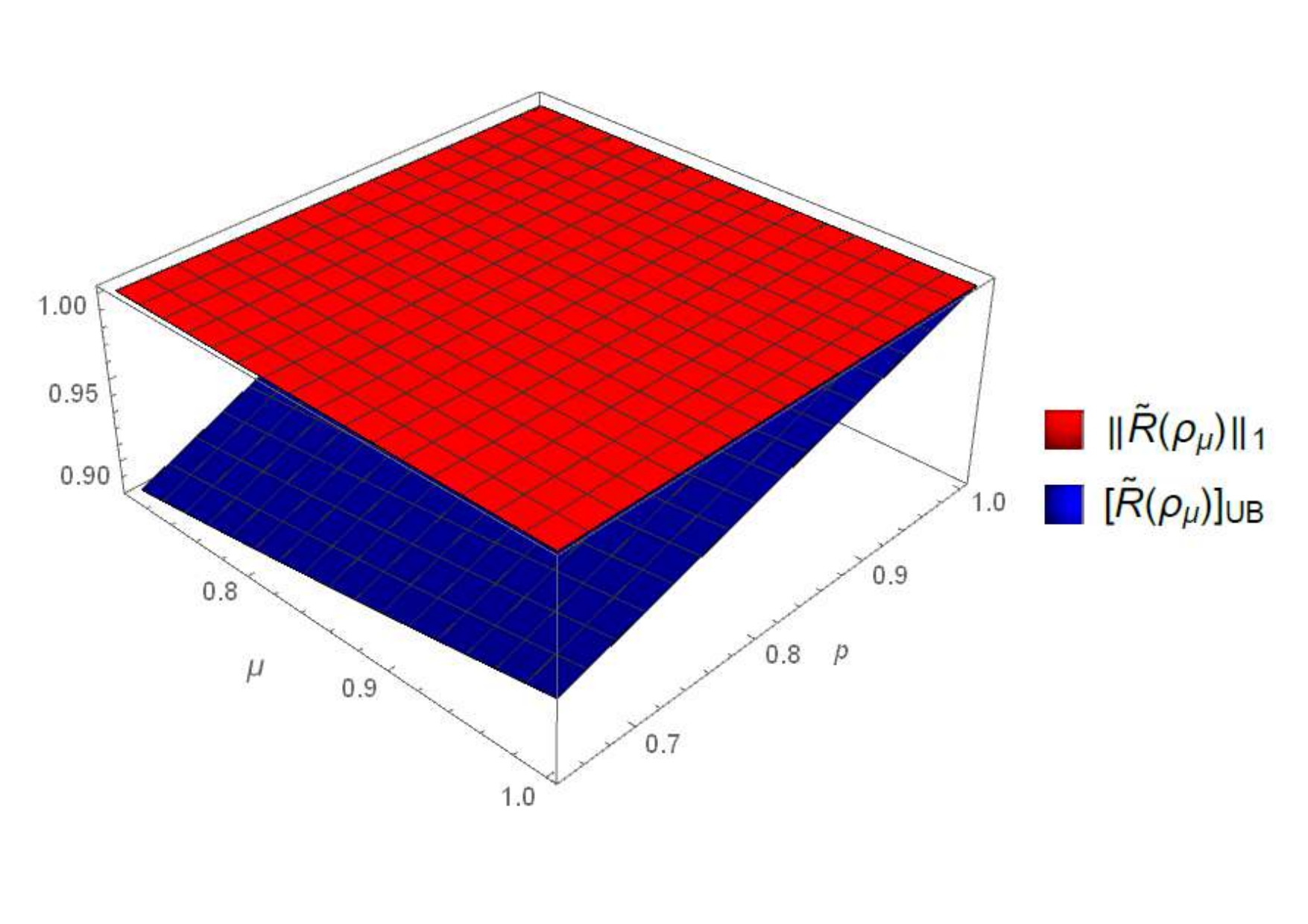}
			\caption{The comparison between the $||\widetilde{R}(\rho_{\mu})||_1$ and $[\widetilde{R}(\rho_{\mu})]_{UB}$ for the two-qutrit state $\rho_{\mu}$ has been displayed. It has been observed that the inequality (\ref{thm3.2eq}) is violated for $\rho_{\mu}$ in the whole range of $\mu$ and for $p \in [l, 1]$} \label{fig3.2}
		\end{figure}
		
		\example \label{eg3.3} Let us consider a two-qutrit isotropic state described by the density operator $\rho_{\beta}$ \cite{bert2008}
		\begin{eqnarray}
			\rho_{\beta}=\beta|\phi_{+}\rangle \langle \phi_{+}|+\frac{1-\beta}{9}I_9, ~~-\frac{1}{8}\leq \beta \leq 1
			\label{betastate}
		\end{eqnarray}
		where $I_9$ denotes the identity matrix of order 9 and the state $|\phi_{+}\rangle$ represents a Bell state in a two-qutrit system and may be expressed as
		\begin{eqnarray}
			|\phi_{+}\rangle=\frac{1}{\sqrt{3}}(|11\rangle +|22\rangle +|33\rangle)
			\label{bellstate}
		\end{eqnarray}
		Using realignment criteria, the state $\rho_{\beta}$ is an entangled state for $\frac{1}{3}< \beta \leq 1$.\\
		Let us calculate the eigenvalues of $R(\rho_\beta)$.  The characteristic polynomial of $R(\rho_\beta)$ is given by
		\begin{eqnarray}
			f_3(x) = x^9 - a_1(\beta) x^8 +a_2(\beta) x^7 - a_3(\beta) x^6 + a_4(\beta) x^5  - a_5(\beta) x^4  + a_6(\beta) x^3 - a_7(\beta) x^2 + a_8(\beta) x - a_{9}(\beta)\nonumber
		\end{eqnarray}
		where the coefficients $a_i(\beta)$, $i=1,2,...,9$, may be expressed as
		\begin{eqnarray}
			a_1(\beta) = \frac{1}{3}(1+8\beta),  \; \; a_2(\beta) = \frac{4}{9}\beta(2+7\beta), \; \; a_3(\beta) = \frac{28}{27}\beta^2(1+2\beta),  \; \; \nonumber\\ a_4(\beta) = \frac{14}{81}\beta^3(4+5\beta),\;\; a_5(\beta) = \frac{14}{243}\beta^4(5+4\beta),  \; \; a_6(\beta) = \frac{28}{729}\beta^5(2+\beta), \nonumber\\
			a_7(\beta) = \frac{4}{2187}\beta^6(7+2\beta),  \; \; a_8(\beta) = \frac{1}{6561}\beta^7(8+\beta),\;\;
			a_9(\beta) =\frac{\beta^8}{19623}
		\end{eqnarray}
		Since all the coefficients $a_i(\beta),~~i=1~ \text{to} ~9$, of $f_3(-x)$ are positive, realignment matrix $R(\rho_{\beta})$ is positive semi-definite.  
		Using Descarte's rule of sign, we find that the realignment matrix $R(\rho_{\beta})$ is positive semi-definite for $0 \leq p \leq 1$.\\
		The comparison between $||\widetilde{R}(\rho_{\beta})||_1$ and $[\widetilde{R}(\rho_{\beta})]_{UB}$ has been studied in Fig-\ref{fig3.3}. 
		\begin{figure}[h!]
			\centering
			\includegraphics[scale=.33]{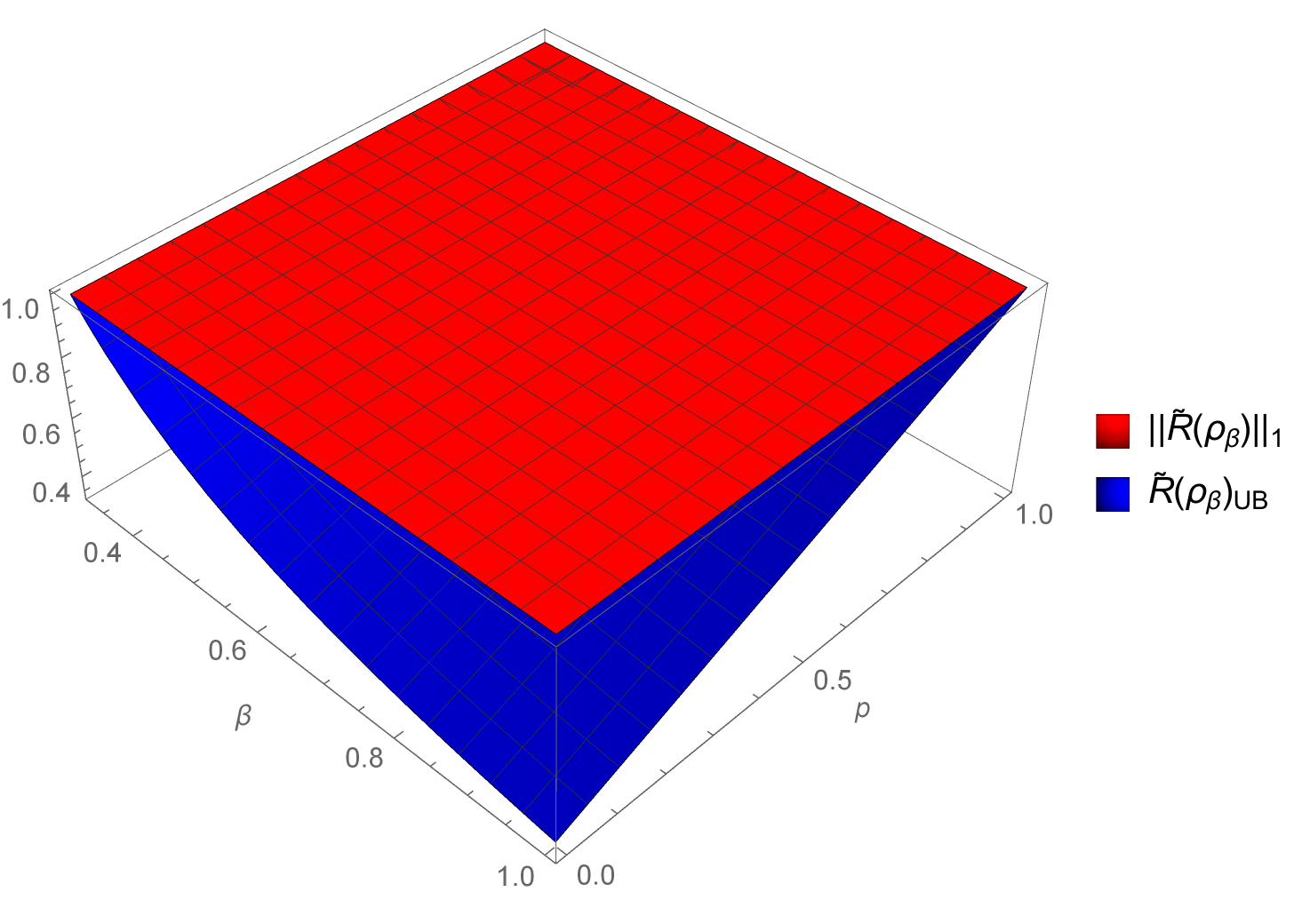}
			\caption{The comparison between the $||\widetilde{R}(\rho_{\beta})||_1$ and $[\widetilde{R}(\rho_{\beta})]_{UB}$ for the two-qutrit state $\rho_{\beta}$ has been displayed. It has been observed that the inequality (\ref{thm3.2eq}) is violated for all values of $\beta \in (1/3, 1]$ and for any $p \in [0,1]$.} \label{fig3.3}
		\end{figure}
		From Fig-\ref{fig3.3}, it can be observed that the inequality (\ref{thm3.2eq}) is violated for $\frac{1}{3}< \beta \leq 1$ and $p\in [0,1]$. Thus, using $Theorem-\ref{thm3.2}$, the state described by the density operator $\rho_{\beta}$ is an entangled state.\\
		\example \label{eg3.4} Consider the state described by the density operator $\rho_{a}$, for $0\leq a \leq 1$  defined in (\ref{astate-matrix}). It has been shown that this state is PPTES for $0<a<1$ \cite{horodeckia}. \\
		The eigenvalues of $R(\rho_{a})$ can be calculated by finding the characteristic equation of the matrix $R(\rho_{a})$. The characteristic polynomial can be expressed as
		\begin{eqnarray}
			f_4(x) = x^9 - a_1(a) x^8 +a_2(a) x^7 - a_3(a) x^6 + a_4(a) x^5 - a_5(a) x^4  + a_6(a) x^3 - a_7(a) x^2 + a_8(a) x - a_{9}(a)\nonumber
		\end{eqnarray}
		where the coefficients $a_i(a)$, $i=1$ to $9$ are given as
		\begin{eqnarray}
			a_1(a) = \frac{1 + 17a}{2(1 + 8a)},\quad a_2(a) = \frac{a(7+ 59a)}{2(1+8a)^2},\quad a_3(a) = \frac{a^2(21 + 109a)}{2(1+8a)^3},\quad a_4(a) = \frac{5a^3(7+ 23a)}{2(1+8a)^4}, \nonumber\\ a_5(a) = \frac{a^4(35 + 67a)}{2(1+8a)^5},\quad a_6(a) = \frac{a^5(21 + 17a)}{2(1+8a)^6},\quad
			a_7(a) = \frac{\alpha^6(7 - a)}{2(1+8a)^7},\quad a_8(a) = \frac{a^7(1-a)}{2(1+8a)^8},\quad a_9(a) =0\nonumber
		\end{eqnarray}
		Now since $a_i(a) \geq 0$ for $i=1$ to $9$, using Descarte's rule of sign, $R(\rho_{a})$ is PSD. Hence, by $Theorem-\ref{thm3.1}$, $\widetilde{R}(\rho_{a})$ defines a positive map for $0 \leq p \leq 1$ and for all $a \in (0,1)$.	 Further, using the section \ref{sec-cp}, it can be easily shown that the SPA-R map $\widetilde{R}(\rho_{a})$ is completely positive for any $p \in[0,1]$. \\
		It has been observed that the inequality (\ref{thm3.2eq}) is violated for different ranges of $p$ and for some values of $a$, which is shown in table \ref{table3.1}.
		\begin{table}[h!]
			\begin{center}
				\begin{tabular}{ p{2.0cm} p{4cm}  p{2.5cm} }
					\hline
					$a$   & $Range~of~p$   & $Theorem-\ref{thm3.2}$  \\ 
					\hline
					$0.1$  &$0\leq p \leq 0.019383$  & $Violated$ \\
					$0.2$  &$0\leq p \leq 0.022143$  & $Violated$\\
					$0.3$  &$0\leq p \leq 0.021903$ & $Violated$\\
					$0.4$  &$0\leq p \leq 0.020444$ & $Violated$\\
					$0.5$  &$0\leq p \leq 0.018284$ & $Violated$\\
					$0.6$  &$0\leq p \leq 0.015611$ & $Violated$\\
					$0.7$  &$0\leq p \leq  0.012488$  & $Violated$\\
					$0.8$  &$0\leq p \leq 0.008904$  & $Violated$\\
					$0.9$  &$0\leq p \leq  0.004791$  & $Violated$\\
					\hline   
				\end{tabular}
			\end{center}
			\caption{The table shows the range of the probability $p$ for which the inequality (\ref{thm3.2eq}) is violated for different values of the state parameter $a$}.
			\label{table3.1}
		\end{table}
		Thus, the inequality given in $Theorem-\ref{thm3.2}$ is violated by $\rho_{a}$, and hence our criterion detects the bound entangled state given by (\ref{astate-matrix}).
		\section{Efficiency of SPA-R Criterion}
		We now show how the SPA-R criterion is efficient in comparison to other entanglement detection criteria. In particular, we are considering three entanglement detection criteria such as (a) Zhang's separability criterion based on realigned moment \cite{tzhang} and (b) $R$-moment criterion discussed in section \ref{sec-rmoment} for comparing the efficiency of SPA-R criterion.
		\subsection{Comparing SPA-R and Zhang's realignment moment based criterion}
		We employ Example-\ref{eg3.1} and Example-\ref{eg3.4} to compare the SPA-R criterion with Zhang's realignment moment based criterion. \\
		\textbf{(i)} Let us recall Example-\ref{eg3.1}, where the family of states is described by the density operator $\rho_t$. Interestingly, for this family of states when $t>0$, our SPA-R criteria detects entanglement in the region $t \in (0.116117, 0.790569]$. But Zhang's realignment moment based criteria given in (\ref{L_4}) detect the entangled state in the range $t\in (0.370992,  0.790569]$. Clearly, SPA-R criteria detects the NPTES $\rho_t$ for $t>0$ in a better range than Zhang's criteria.\\
		\textbf{(ii)} Let us consider the BES studied in Example-\ref{eg3.4}, which is described by the density operator $\rho_{a}$, $0< a <1$. As shown in table \ref{table3.1}, the state $\rho_a$ is detected by SPA-R criteria. \\
		Using (\ref{eq-r_k}), Zhang's realignment moment for a bipartite state $\rho_{a}$ may be defined as  
		\begin{eqnarray}
			r_k (R(\rho_{a})) = Tr[R(\rho_{a}) (R(\rho_{a}))^{\dagger}]^{k/2},\; k = 1, 2, 3, . . ., 9 \label{zhangdef}
		\end{eqnarray}
		As given in (\ref{L_4}), the separability criterion based on realignment moments $r_2$ and $r_3$ may be stated as: If a quantum state $\rho_{a}$ is separable, then
		\begin{eqnarray}
			Q_1 = (r_2(R(\rho_{a}))^2 - r_3(R(\rho_{a}) \leq 0	\label{rzhang}
		\end{eqnarray}
		$Q_1 > 0$ certifies that the given state is entangled.\\
		Fig-\ref{q2al} shows that the inequality (\ref{rzhang}) is not violated for the BES $\rho_{a}$ in the whole range $0 < a < 1$. Hence the BES $\rho_{a}$ is undetected by Zhang's realignment moment based criteria.
		\begin{figure}[h!]
			\centering
			\includegraphics[scale=.48]{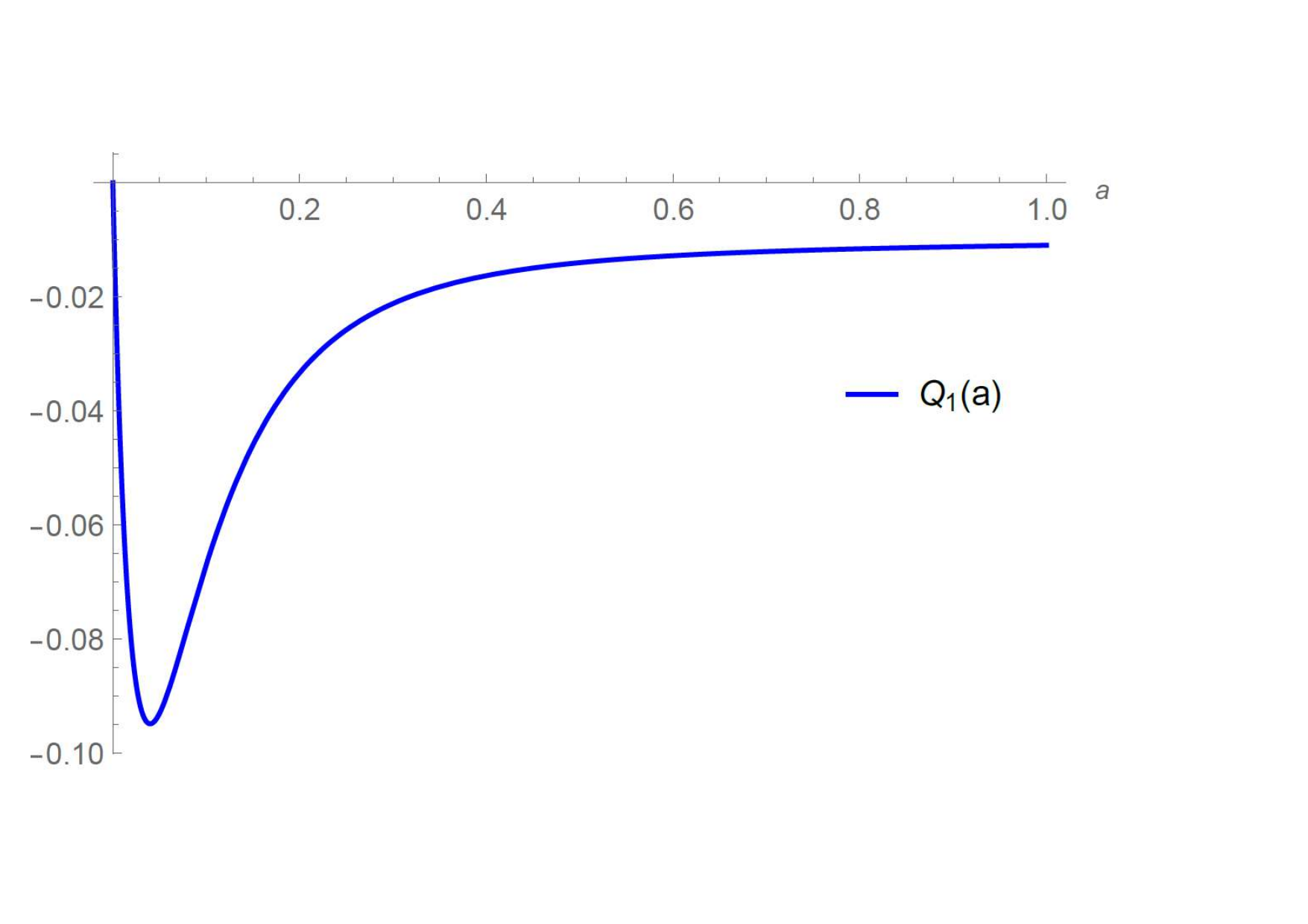}
			\caption{The blue curve represents $Q_1$ for the state $\rho_{a}$ and \textit{x}-axis depicts the state parameter $a$.}.
			\label{q2al}
		\end{figure} 
	}
	\subsection{Comparing SPA-R and $R$-moment criterion}
	\noindent Let us again recall Example-\ref{eg3.1} and Example-\ref{eg3.4} to compare the SPA-R criterion with the $R$-moment criterion.\\ 
	\textbf{(i)} In the Example-\ref{eg3.1}, the family of states described by the density operator $\rho_t$ and it is detected by SPA-R criteria in the range $(0.116117, 0.790569]$. By $R$-moment criterion given in section \ref{sec-rmoment}, $\rho_t$ is detected when $t \in (0.214312, 0.790569] \subset (0.116117, 0.790569] $. Therefore, SPA-R criteria detect' more entangled states than the $R$-moment criterion.\\
	\textbf{(ii) }Let us now consider the BES studied in Example-\ref{eg3.4} and find that the state is detected by SPA-R criteria. Applying $R$-moment criterion on the BES described by the density operator $\rho_{a}$, $0<a<1$, we get
	\begin{eqnarray}
		Q_2 \equiv 	56 D_8^{1/8} + T_1 - 1 \leq 0 \;\; \forall \;  a \in (0,1) \label{ineqal}
	\end{eqnarray}
	where $D_8 = \prod_{i=1}^8 \sigma_i^2(\rho_{a})$ and $T_1 = Tr[R(\rho_{a})]$. Here $\sigma_i(\rho_{a})$ represents the $ith$ singular value of $\rho_{a}$.
	Since the above inequality is not violated for any $a \in (0,1)$, the BES $\rho_{a}$ is undetected by  $R$-moment based criteria. This is shown in Fig-\ref{q1al}.
	\begin{figure}[h!]
		\centering
		\includegraphics[scale=.48]{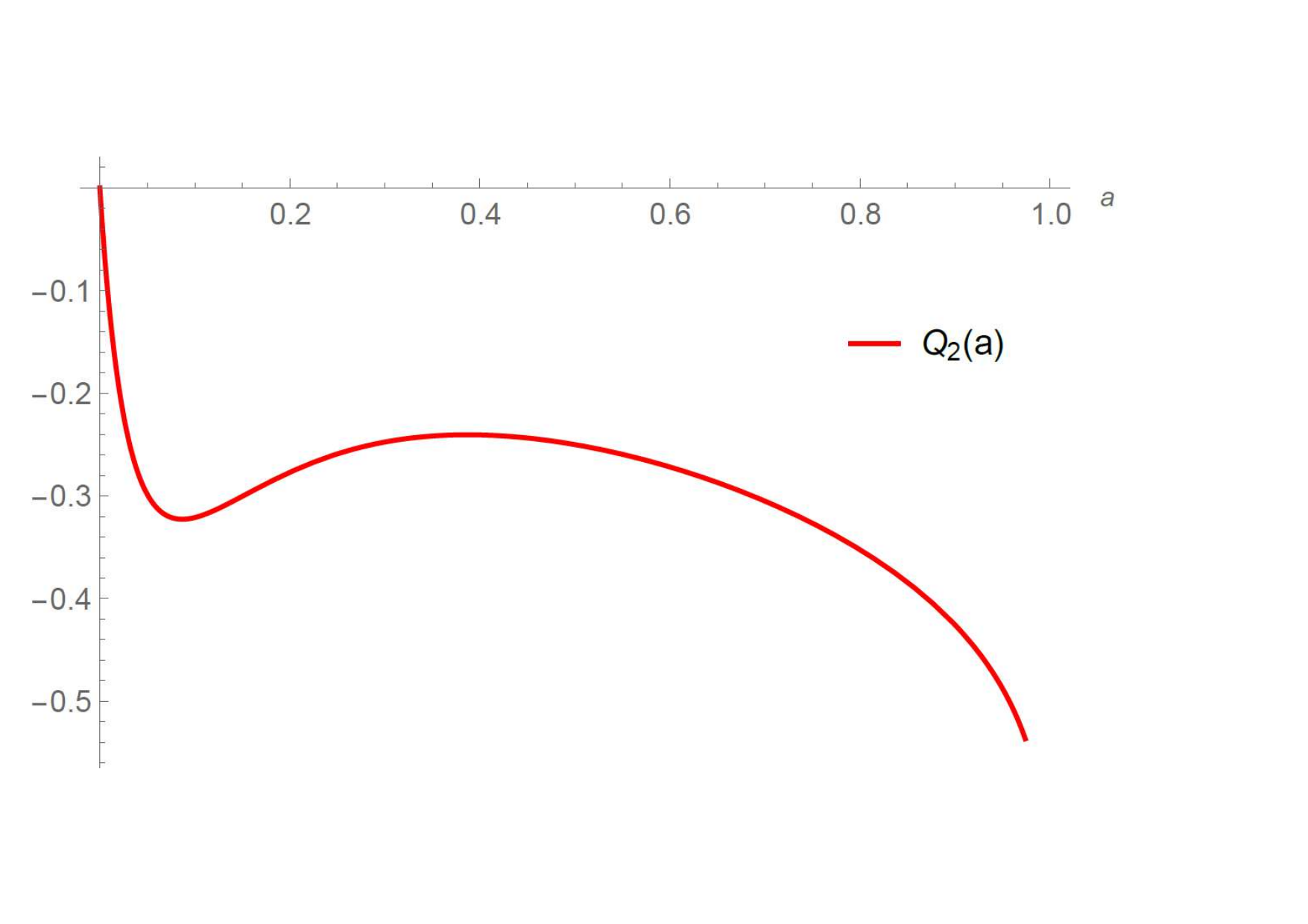}
		\caption{The red curve represents $Q_2$ for the state $\rho_{a}$ and $x$-axis depicts the state parameter $a$.}.
		\label{q1al}
	\end{figure} 
	
	\section{Error in the Approximated Map}
	\noindent We now study and analyze the error generated when $R(\rho)$ is approximated by its SPA. 
	In the approximated map, we have added an appropriate proportion of maximally mixed state such that the approximated map has no negative eigenvalue. The error between the approximated map $\widetilde{R}(\rho)$ and the realignment map $R(\rho)$ may be calculated as:
	\begin{eqnarray}
		||\widetilde{R}(\rho)-R(\rho)||_1&=&||\frac{p}{d^2}I_{d\otimes d}+\frac{(1-p)}{Tr[R(\rho)]}R(\rho)-R(\rho)||_1\nonumber\\
		&=&||\frac{p}{d^2}I_{d\otimes d}+[\frac{(1-p)}{Tr[R(\rho)]}-1]R(\rho)||_1
		\label{rrho1}
	\end{eqnarray}
	Using triangular inequality for trace norm, (\ref{rrho1}) reduces to
	\begin{eqnarray}
		||\widetilde{R}(\rho)-R(\rho)||_1\leq p+{\frac{1-p-Tr[R(\rho)]}{Tr[R(\rho)]}}||R(\rho)||_1
		\label{error}
	\end{eqnarray}
	The inequality (\ref{error}) may be termed as error inequality. The error inequality holds for any two-qudit bipartite state.
	{\remark The error inequality (\ref{error}) is not defined for Schmidt symmetric entangled states.}\\
	\textbf{Proposition 1:} The equality relation   
	\begin{eqnarray}
		||\widetilde{R}(\rho_{sep})-R(\rho_{sep})||_1= \frac{(1-p)(1-Tr[R(\rho_{sep})])}{Tr[R(\rho_{sep})]}
		\label{errorsep1}
	\end{eqnarray}
	holds for separable state described by the density operator $\rho_{sep}$ such that $||R(\rho_{sep})||_1=1$.
	\begin{proof} 
		Equality in (\ref{error}) holds if and only if 
		\begin{eqnarray}
			\frac{p}{d^2}I_{d\otimes d}=[\frac{(1-p)}{Tr[R(\rho)]}-1]R(\rho)
		\end{eqnarray}
		i.e. equality in (\ref{error}) holds when the realigned matrix takes the form
		\begin{eqnarray}
			R(\rho)=\frac{pTr[R(\rho)]}{1-p-Tr[R(\rho)]}\frac{I}{d^2},~~0\leq p\leq 1
			\label{eq22}
		\end{eqnarray} 
		Taking trace norm, (\ref{eq22}) reduces to
		\begin{eqnarray}
			||R(\rho)||_1=\frac{pTr[R(\rho)]}{1-p-Tr[R(\rho)]},~~0\leq p\leq 1
			\label{eq23}
		\end{eqnarray}
		For separable state $\rho_{sep}$, (\ref{eq23}) reduces to
		\begin{eqnarray}
			1=\frac{pTr[R(\rho_{sep})]}{1-p-Tr[R(\rho_{sep})]},~~0\leq p\leq 1
			\label{eq24}
		\end{eqnarray}
		Simplifying (\ref{eq24}), the value of $p$ and $1-p$ may be expressed as
		\begin{eqnarray}
			p=\frac{1-Tr[R(\rho_{sep})]}{1+Tr[R(\rho_{sep})]},~~1-p=\frac{2Tr[R(\rho_{sep})]}{1+Tr[R(\rho_{sep})]}
		\end{eqnarray}
		Substituting values of $p$ and $1-p$ in (\ref{eq22}), the realigned matrix for separable state, $R(\rho_{sep})$ takes the form
		\begin{eqnarray}
			R(\rho_{sep})=\frac{1}{d^2}I
			\label{mes}
		\end{eqnarray}
		Therefore, (\ref{mes}) holds only for separable states. This means that there exists a separable state $\rho_{sep}$ such that $||\rho_{sep}||_{1}=1$ for which the equality condition in the error inequality (\ref{error}) holds.
	\end{proof}
	{\result If any quantum system described by a density operator $\rho$ in $d\otimes d$ system is separable then the error inequality is given by
		\begin{eqnarray}
			||\widetilde{R}(\rho)-R(\rho)||_1&\leq&  \frac{(1-p)[1-Tr[R(\rho)]]}{Tr[R(\rho)]}
			\label{errorsep1}
	\end{eqnarray}}
	\begin{proof} Let us consider a $d \otimes d$ dimensional separable state $\rho_{sep}$. Using realignment criteria in $Theorem-\ref{thm-ccnr}$, we have $||R(\rho_{sep})||_{1}\leq 1$. Therefore, error inequality (\ref{error}) reduces to
		\begin{eqnarray}
			||\widetilde{R}(\rho_{sep})-R(\rho_{sep})||_1&\leq& p+{\frac{1-p-Tr[R(\rho_{sep})]}{Tr[R(\rho_{sep})]}}\nonumber\\
			&=& \frac{(1-p)[1-Tr[R(\rho_{sep})]]}{Tr[R(\rho_{sep})]}
			\label{errorsep}
		\end{eqnarray}
		Hence proved.
	\end{proof}
	{\corollary If inequality (\ref{errorsep1}) is violated by any bipartite $d \otimes d$ dimensional quantum state, then the state under investigation is entangled.}
	
	\section{Conclusion}
	\noindent To summarize, we have developed a separability criterion by approximating realignment operation via structural physical approximation (SPA). Since the partial transposition (PT) operation is limited to detecting only NPTES, we have studied here the realignment operation, which may detect both NPTES and PPTES. However, since realignment map is not a positive map and thus it does not represent a completely positive map, it is difficult to implement it in a laboratory. Therefore, in order to make the realignment map completely positive, firstly, we have approximated it to a positive map using the method of SPA and then we have shown that this approximated map is also completely positive. We have shown that the positivity of the SPA-R map can be verified in an experiment because the lower bound of the fraction $p$ can be expressed in terms of the first and second moments of the realignment matrix. Interestingly, we have shown that the derived separability criterion using the approximated (SPA-R) map detects bipartite NPTES and PPTES.  Some examples are cited to support our obtained results. Although there are other PPT criteria that may detect NPTES and PPTES, our result is interesting in the sense that it may be realized in an experiment. Our obtained results may be realized in an experiment but to achieve this aim, we pay a price in terms of the short range detection. This fact can be observed in Example-\ref{eg3.1} where the range of the state parameter for the detection of entangled state is smaller than the range obtained by usual realignment operation (without approximation). We also have analyzed the error that occurred during the structural physical approximation of the realignment map and it is described by an inequality known as error inequality. Continuing with the error inequality, we have obtained another inequality that is satisfied by all bipartite $d\otimes d$ dimensional separable states, and the violation of this inequality guarantees the fact that the state under probe is entangled. Interestingly, the SPA-R criteria coincide with the original realignment criteria for Schmidt-symmetric states.\\
	Moreover, we have presented a scheme for the measurement of the moments of the realignment matrix in order for our entanglement detection criterion to be realized in practice. We conclude by noting that our proposed entanglement detection approach should be experimentally implementable through the combination of techniques associated with structural physical approximation for realignment \cite{horoekert2002}, and SWAP operations for measuring density matrix moments \cite{barti, sougato}.
	
	\begin{center}
		****************
	\end{center}

\chapter{Detection and  Quantification of Entanglement Through Witness Operator}\label{ch5}
\vspace{1cm}
\noindent{\small \emph{If you think you understand quantum mechanics, you don't understand quantum mechanics.\\
		-Richard P. Feynman}}
\vspace{2cm}
\noindent \hrule
\noindent \emph{ In this chapter\;\footnote {This chapter is based on the published research article, ``S. Aggarwal, S. Adhikari, \emph{Witness operator provides better estimate of the lower bound of concurrence of bipartite bound entangled states in $d_1 \otimes d_2$ dimensional system}, {Quantum Information Processing} \textbf{20}, 83, (2021)"}
	we take an analytical approach to construct a family of witness operators detecting NPTES and BES in arbitrary dimensional bipartite quantum systems. The introduced family of witness operators is then used to estimate the lower bound of concurrence of the detected mixed bipartite entangled states. Next, we show that our lower bound estimate concurrence is better as compared to the lower bound of the concurrence given by Chen et al. (Phys. Rev. Lett. 95, 040504, 2005).}
\noindent\hrulefill
\newpage
\section{Introduction}\label{sec5.1}
Since bound entangled states (BES) are very weak entangled states, they behave like separable states and thus it is very difficult to separate BES from the set of separable states. Hence, in particular, the detection of BES is an important problem to consider. Also, researchers have found many applications of bipartite BES in quantum cryptography \cite{phorodecki1},  metrology \cite{gt}, and non-locality \cite{tn}.  There are some powerful entanglement detection criteria such as the partial transposition criterion, and realignment criterion but it may not be possible to implement them successfully in the experiment. This situation can be avoided if the entanglement is detected through the construction of a witness operator. Entanglement witness operator plays a significant role in the entanglement detection problem since if we have some prior partial information about the state which is to be detected then entanglement can be detected in the experiment by the construction of witness operator \cite{guhnerev}. Thus, by realizing its importance, we have constructed a family of witness operators, denoted by $W_{(n)}$, for the detection of entangled states, particularly, BES.\\
Secondly, it is known that for higher dimensional systems, we do not have any closed formula for concurrence, just like we have for a two-qubit system. Thus, the quantification of entanglement by estimating the exact value of the concurrence is a formidable task. Despite these, few attempts have been made to obtain a lower bound of the concurrence for a qubit-qudit system \cite{gerjuoy,lozinski} and to derive a purely algebraic lower bound of the concurrence \cite{fmintert}. Later, Chen et.al. \cite{kchen} derived the lower bound of the concurrence for arbitrary $d_{1}\otimes d_{2}$ ($d_{1} \leq d_{2}$) dimensional system and it is given by
\begin{eqnarray}
	C(\rho_{AB})\geq \sqrt{\frac{2}{d_{1}(d_{1}-1)}}\left(max(\|\rho_{AB}^{T_{A}}\|_1,\|R(\rho_{AB})\|_1)-1\right)
	\label{albebound}
\end{eqnarray}
where $C(\rho_{AB})$ denotes the concurrence of a mixed bipartite quantum state $\rho_{AB}$ and other notations were defined earlier.
We have modified the above lower bound of concurrence and obtained the modified lower bound by using the constructed witness operator.
We take a few steps forward in this direction of research by obtaining a new improved lower bound of concurrence using the constructed witness operator $W_{(n)}$.
\indent 
\section{Preliminary Results}\label{sec5.2}
{\result \label{res5.1} If a state described by the density operator $\rho_{AB}$ in $d_{1}\otimes d_{2}$ dimensional system represents a PPT state then the following inequalities hold \cite{lin2016,pzhang2019}
	\begin{eqnarray}
		det(I_{d_{2}}+Tr_{A}(\rho_{AB}))\leq det(I_{d_{1}d_{2}}+\rho_{AB})
		\label{resp3}
	\end{eqnarray}
	\begin{eqnarray}
		det(I_{d_{1}}+Tr_{B}(\rho_{AB}))\leq det(I_{d_{1}d_{2}}+\rho_{AB})
		\label{resp3i}
	\end{eqnarray}
	where $Tr_{A}(\rho_{AB})$ and $Tr_{B}(\rho_{AB})$ represent the partial traces of the state $\rho_{AB}$ with respect to the subsystems $A$ and $B$, respectively, while $I_d$ denotes the $d\times d$ identity matrix and $det$ denotes the matrix determinant operation.}
{\result \label{res5.2}
	If $W$ represents the witness operator that detects the entangled quantum state described by the density operator $\rho_{AB}$ and
	$C(\rho_{AB})$ denotes the concurrence of the state $\rho_{AB}$ then the lower bound of concurrence is given by \cite{mintert2}
	\begin{eqnarray}
		C(\rho_{AB})\geq -Tr[W\rho_{AB}]
		\label{resultP4}
\end{eqnarray}}
\subsection{New Theorems and Results}
Let us define an operator $A$ of the form
\begin{equation}
	A = \frac{1}{\sqrt{rank(R(\rho_{AB}))}\;  \|R(\rho_{AB})\|_{2}} \; R(\rho_{AB})
\end{equation}
Using Result \ref{res-n1n2}, it can be shown that $\|A\|_{1}\leq 1$.
{\theorem \label{thm5.1} If the bipartite state described by the density operator $\rho_{AB}$ in $d_{1} \otimes d_{2}$ dimensional system, is separable then
	\begin{equation}
		\|A\|_{1} \leq \frac{1}{\sqrt{rank(R(\rho_{AB}))}\;  \|R(\rho_{AB})\|_{2}}
		\label{res3}
\end{equation}}
\begin{proof}
	Using the fact that if the state $\rho_{AB}$ is separable then $\|R(\rho_{AB})\|_{1}\leq 1$, one can prove that the $Theorem-\ref{thm5.1}$ is indeed true.
\end{proof}
{\corollary If the inequality (\ref{res3}) is violated by a quantum state $\rho_{AB}$ then the state $\rho_{AB}$ must be entangled, i.e., if the
	state $\rho_{AB}$ satisfies
	\begin{equation}
		\frac{1}{\sqrt{rank(R(\rho_{AB}))}\;  \|R(\rho_{AB})\|_{2}} < \|A\|_{1}
		\label{res31}
	\end{equation}
	then the state $\rho_{AB}$ is entangled.}
{\theorem \label{norm1smax} Let $R(\rho_{AB})$ be the realigned matrix of the bipartite state described by the density operator $\rho_{AB}$ in $d_{1} \otimes d_{2}$ dimensional system. If the state $\rho_{AB}$ is separable then,
	\begin{equation}
		{\| \rho_{AB}^{T_B} R(\rho_{AB}) \|}_{1} \leq \sigma_{max}(\rho_{AB}^{T_B}) \label{res4}
	\end{equation}
	where $T_{B}$ denotes the partial transposition with respect to the system $B$ and $\sigma_{max}(\rho_{AB}^{T_B})$ denotes the maximum singular value of $\rho_{AB}^{T_B}$}.
\begin{proof}
	Let us consider the product of two matrices $\rho_{AB}^{T_{B}}$ and $R(\rho_{AB})$ and further suppose that $\sigma_i(\rho_{AB}^{T_B}R(\rho_{AB}))$ denoting the $i^{th}$ singular value of the product $\rho_{AB}^{T_B} R(\rho_{AB})$. Therefore, the upper bound of trace norm of $\rho_{AB}^{T_B}R(\rho_{AB})$ is given by
	\begin{eqnarray}
		\| \rho_{AB}^{T_B} R(\rho_{AB}) \|_1 = \sum \sigma_i (\rho_{AB}^{T_B} R(\rho_{AB})) &\leq& \sum \sigma_i (\rho_{AB}^{T_B})\: \sigma_i (R(\rho_{AB})) \nonumber\\
		&\leq& \sigma_{max}(\rho_{AB}^{T_B}) \sum \sigma_i (R(\rho_{AB}))\nonumber\\
		&=& \sigma_{max}(\rho_{AB}^{T_B}) \|R(\rho_{AB}) \|_{1}
		\label{productub}
	\end{eqnarray}
	where the first inequality follows from \cite{horn}.\\
	If the state $\rho_{AB}$ is separable, then by $Theorem-\ref{thm-ccnr}$, we have $ \|R(\rho_{AB})\|_{1} \leq 1$.
	Thus, the inequality (\ref{productub}) for the separable state reduces to
	\begin{eqnarray}
		\| \rho_{AB}^{T_B} R(\rho_{AB}) \|_1 \leq \sigma_{max}(\rho_{AB}^{T_B})
		\label{productsepcond}
	\end{eqnarray}
	Hence proved.
\end{proof} 
{\corollary If the state $\rho_{AB}$ is separable then,
	\begin{equation}
		|Tr[(R(\rho_{AB}))^{T_B} \rho_{AB}]|  \leq \sigma_{max}(\rho_{AB}^{T_B}) \label{cor2}
\end{equation}}
\begin{proof}
	Let us start with $|Tr[(R(\rho_{AB}))^{T_B} \rho_{AB}]|$. It is given by
	\begin{eqnarray}
		|Tr[(R(\rho_{AB}))^{T_B} \rho_{AB}]| = |Tr[R(\rho_{AB}) \rho_{AB}^{T_B}]|   
		\leq {\| \rho_{AB}^{T_B} R(\rho_{AB}) \|}_{1} \leq \sigma_{max}(\rho_{AB}^{T_B})
	\end{eqnarray}
	The second last inequality follows from (\ref{modnorm1}) and the last inequality follows from $Theorem-\ref{norm1smax}$.
\end{proof}
\section{Construction of Witness Operator}
We now construct different types of witness operators that can detect (i) only NPTES and (ii) Both NPTES and PPTES.
\subsection{Witness operator detecting only NPTES}
\noindent It is well known that partial transposition operation can detect NPTES but the problem lies in the fact that it is not a physical operation and thus not possible to implement it in real experiments. To resolve this issue, we take the approach of constructing a witness operator that does not contain the partial transposition map for the detection of NPTES.\\
Let us consider a $d_{1}\otimes d_{2}$ dimensional quantum state described by the density operator $\rho_{AB}$. Our task is to determine whether the state described by the density operator $\rho_{AB}$ is NPTES.
{\theorem A $d_{1}\otimes d_{2}$ dimensional quantum state $\rho_{AB}$ is NPTES if there exist a witness operator $\widetilde{W}$ such that
	\begin{eqnarray}
		Tr[\widetilde{W}\rho_{AB}]<0
		\label{theorem1}
	\end{eqnarray}
	where $\widetilde{W}$ is given by
	\begin{eqnarray}
		\widetilde{W}= \frac{det(I_{d_{1}d_{2}}+(.))}{\sigma}|\psi\rangle\langle\psi| - (det(I_{d_{2}}+Tr_{A}(.)))I_{d_{1}d_{2}}
		\label{witnptes1}
	\end{eqnarray}
	$(.)$ means a $d_{1}\otimes d_{2}$ dimensional bipartite state which is under investigation, $Tr_{A}(.)$ represents the partial trace with respect to the subsystem $A$ of the state under investigation, $I_{d_{1}d_{2}}$ denoting the identity matrix in $d_{1}\otimes d_{2}$ dimensional Hilbert space and $|\psi\rangle$ be the normalized eigenvector corresponding to any non-zero eigenvalue $\lambda$ of $\rho_{AB}$.}
\begin{proof}
	Let us consider any separable state $\rho^{sep}_{AB}$ in $d_{1}\otimes d_{2}$ dimensional system. The trace of the operator $\widetilde{W}$ over a separable state $\rho^{sep}_{AB}$ is given by
	\begin{eqnarray}
		Tr[\widetilde{W}\rho^{sep}_{AB}]&=& Tr[(\frac{det(I_{d_{1}d_{2}}+\rho^{sep}_{AB})}{\lambda}|\psi\rangle\langle\psi|- det(I_{d_{2}}+Tr_{A}(\rho^{sep}_{AB}))I_{d_{1}d_{2}})\rho^{sep}_{AB}]\nonumber\\&=& det(I_{d_{1}d_{2}}+\rho^{sep}_{AB})-det(I_{d_{2}}+Tr_{A}(\rho^{sep}_{AB}))
		\geq 0
		\label{cond1}
	\end{eqnarray}
	The last step follows from (\ref{resp3}). Therefore, $Tr[\widetilde{W}\rho^{sep}_{AB}]\geq 0$ for
	any separable state $\rho^{sep}_{AB}$ .\\
	Next, let us consider a state described by the density operator $\rho_{12}$ defined as
	\begin{eqnarray}
		\rho_{12}=
		\begin{pmatrix}
			\frac{13}{30} & 0 & 0 & \frac{11}{30} \\
			0 & \frac{1}{15} & 0 & 0 \\
			0 & 0 & \frac{1}{15} & 0 \\
			\frac{11}{30} & 0 & 0 & \frac{13}{30}
		\end{pmatrix}
		\label{state1}
	\end{eqnarray}
	It can be easily shown that the state $\rho_{12}$ is indeed entangled.\\
	The trace value of $\widetilde{W}$ with respect to the state $\rho_{12}$ is given by
	\begin{eqnarray}
		Tr[\widetilde{W}\rho_{12}] = -\frac{491}{7500}<0
		\label{trdet}
	\end{eqnarray}
	Thus, the operator $\widetilde{W}$ is a witness operator.
\end{proof}
We note that if $\rho_{AB}$ denotes the PPTES in $d_1 \otimes d_2$ ($d_1, d_2 \geq 3$) dimensional system, then it can be easily shown that $Tr[\widetilde{W}\rho_{AB}]\geq 0$. This happens because (\ref{resp3}) holds for any PPTES also. Thus, it is not possible to detect any PPTES using the witness operator $\widetilde{W}$. Therefore, the witness operator $\widetilde{W}$ detect only NPTES.\\
Let us now consider a family of $3\otimes 3$ dimensional isotropic state \cite{zhao2010}, which is defined by
\begin{eqnarray}
	\rho_{iso}(f)=\frac{1-f}{8}I_{9}+\frac{9f-1}{8}|\psi^{+}\rangle\langle\psi^{+}|, 0\leq f\leq 1
	\label{isotropic}
\end{eqnarray}
where $|\psi^{+}\rangle=\frac{1}{\sqrt{3}}(|00\rangle+|11\rangle+|22\rangle)$ and $f=\langle\psi^{+}|\rho_{iso}(f)|\psi^{+}\rangle$.\\
The state $\rho_{iso}(f)$ is separable when $f\leq\frac{1}{3}$ and NPTES when $f>\frac{1}{3}$.\\
We now calculate $Tr[\widetilde{W}\rho_{iso}(f)]$ to determine how efficiently $\widetilde{W}$ detects NPTES.
$Tr[\widetilde{W}\rho_{iso}(f)]$ is given by
\begin{eqnarray}
	Tr[\widetilde{W}\rho_{iso}(f)]&=& \frac{(-9+f)^{8}(1+f)}{16777216}-\frac{64}{27}
	\nonumber\\&<&0,~~~~~0.591634<f\leq 1
	\label{traceisotropicdet}
\end{eqnarray}
Thus, the witness operator $\widetilde{W}$ fails to detect a few members in the family of isotropic NPTES. If the parameter $f$ lies in the region $\frac{1}{3}<f\leq 0.591634$, then the family of entangled states are not detected by $\widetilde{W}$. Hence, we can say that the witness operator $\widetilde{W}$ is not as efficient in comparison to the other witnesses in the literature.\\
Here one may argue about the utility of constructing $\widetilde{W}$ to detect NPTES for which we already have PPT criterion. It is known that the partial transposition is not a completely positive map and thus it would be very difficult to implement it in the laboratory. One approach to overcome this complication is given in \cite{kumari2019} where the method of structural physical approximation of a partial transposition (SPA-PT) is adopted to detect NPTES. We provide a complementary approach to address this problem by constructing the witness operator $\widetilde{W}$, which is independent of the PT operation. Our criterion  is constructive and applicable to detect NPTES even in the higher dimensional bipartite systems where the SPA method can be strenuous to implement.\\
Now, our task is to construct another witness operator that can be as efficient as $\widetilde{W}$. To achieve this, let us start with $R(\rho_{AB})$, which denotes the realigned matrix of the state under investigation. Then the operator $W^{o}$ can be defined as
\begin{eqnarray}
	W^{o} = \left(1 + \frac{1 - \|R(\rho_{AB})\|_{1}}{\sqrt{rank(R(\rho_{AB}))}\;  \|R(\rho_{AB})\|_{2}}\right)I_{d^{2}} -\frac{(R(\rho_{AB}))^{T_B}}{\sigma_{max}(\rho_{AB}^{T_B})}
	\label{wit2}
\end{eqnarray}
where $T_{B}$ is the partial transpose with respect to the second subsystem $B$ and $\sigma_{max}(\rho_{AB}^{T_B})$ denotes the maximum singular value
of $\rho_{AB}^{T_B}$.\\
{\theorem The operator $W^o$ is an entanglement witness operator.}
\begin{proof}
	Let us consider any $d_{1}\otimes d_{2}$ dimensional bipartite separable state $\rho^{sep}_{AB}$. Therefore, $Tr[W^{o} \rho^{sep}_{AB}]$ is given by
	\begin{eqnarray}
		Tr[W^{o} \rho^{sep}_{AB}]= 1 - \frac{Tr[(\rho^{sep}_{AB})^{T_B} R(\rho^{sep}_{AB})}{\sigma_{max}((\rho^{sep}_{AB})^{T_B})}  +  \frac{1 - \|R(\rho^{sep}_{AB})\|_{1}}{\sqrt{rank(R(\rho^{sep}_{AB}))}\|R(\rho^{sep}_{AB})\|_2}
		\label{tracevalueWo}
	\end{eqnarray}
	From (\ref{res3}) and (\ref{cor2}), it follows that $Tr[W^{o}\rho^{sep}_{AB}]\geq 0$ for all separable states $\rho^{sep}_{AB}$.\\
	Now it remains to show that there exists at least one entangled state $\rho_{AB}$ for which $Tr[W^{o} \rho_{AB}]< 0$. For this, let us
	consider a state of the form
	\begin{eqnarray}
		\varrho_{12}=
		\begin{pmatrix}
			\frac{11}{30} & 0 & 0 & \frac{7}{30} \\
			0 & \frac{2}{15} & 0 & 0 \\
			0 & 0 & \frac{2}{15} & 0 \\
			\frac{7}{30} & 0 & 0 & \frac{11}{30}
		\end{pmatrix}
		\label{state2}
	\end{eqnarray}
	It can be easily verified that the state $\varrho_{12}$ is an entangled state.\\
	The quantity $Tr[W^{o}\varrho_{12}]$ is given by
	\begin{eqnarray}
		Tr[W^{o}\varrho_{12}]= -0.0585731 < 0
		\label{tr3}
	\end{eqnarray}
	Thus, the operator $W^{o}$ is indeed an entanglement witness operator.
\end{proof}
Let us now recall again the family of $3\otimes3$ isotropic states defined in (\ref{isotropic}) and investigate whether
the witness operator $W^{o}$ detect more members of the family of isotropic states than $\widetilde{W}$. To probe this, let us calculate the following: \\
\begin{eqnarray*}
	Tr[{\rho^{T_B}_{iso}} \; R(\rho_{iso}(f))] &=& \frac{1}{96} ( -1 + 42f - 9f^2)\\
	\sigma_{max}(\rho^{T_B}_{iso}(f)) &=&
	\begin{cases}
		\frac{1 - 3f}{6} &  0 \leq f \leq \frac{1}{9} \\
		\frac{1 + 3f}{12} &  \frac{1}{9} \leq f \leq 1 \\
	\end{cases}\\
	rank(R(\rho_{iso}(f)))&=&
	\begin{cases}
		1 &  f = \frac{1}{9}\\
		9 & f \neq \frac{1}{9} \\
	\end{cases}\\
	\|R(\rho_{iso}(f))\|_{1}&=&
	\begin{cases}
		\frac{2}{3} - 3f &  0 \leq f \leq \frac{1}{9} \\
		3f &  \frac{1}{9} \leq f \leq 1 \\
	\end{cases}\\
	\|R(\rho_{iso}(f))\|_{2}&=& \sqrt{\frac{1 - 2f + 9f^2}{8}}\\
\end{eqnarray*}
Using (\ref{tracevalueWo}), we get
\begin{eqnarray}
	Tr[W^{o}\rho_{iso}(f)]=
	\begin{cases}
		\frac{17 - 90f + 9f^2}{16 - 48f} + \frac{2\sqrt{2}(1 + 9f)}{9\sqrt{1 - 2f + 9f^2}},& 0 \leq f < \frac{1}{9}\\
		\frac{8}{3}, & f = \frac{1}{9}\\
		\frac{1}{3}\left(\frac{27 (-1 + f)^2}{8 + 24f} - \frac{2\sqrt{2} (-1 + 3f)}{\sqrt{1 - 2f + 9f^2}}\right), & \frac{1}{9} < f \leq 1
	\end{cases}
	\label{traceWo_iso}
\end{eqnarray}
Here, $	Tr[W^{o}\rho_{iso}(f)] < 0$ for $0.413285 < f \leq 1$, which improves the detection range obtained in (\ref{traceisotropicdet}). Thus, the witness operator $W^{o}$ can be considered as more efficient than the witness operator $\widetilde{W}$. We can now observe the following facts:\\
(i) $W^{o}$ may detect bound entangled states also.\\
(ii) $R(\rho_{AB})$ and $({R(\rho_{AB})})^{T_B}$ are both non-Hermitian matrices. But the real eigenvalues of $({R(\rho_{AB})})^{T_B}$ makes our witness operator CPT  symmetric \cite{bender2002, bender2005, pati2009} and capable of detecting entanglement. In most of the cases, we find that the eigenvalues of $({R(\rho_{AB})})^{T_B}$ are real.
\subsection{Witness operator detecting both NPTES and PPTES}
Now our task is to construct a witness operator that is efficient in detecting both NPTES and PPTES.\\
Let us now start with the operator $W_{(n)}$ defined in $d_{1} \otimes d_{2}$ $(d_{1}\leq d_{2})$ dimensional space as follows:
\begin{eqnarray}
	W_{(n)} = \frac{d_{1}}{d_{1}-1} \left[	
	(k_{\rho_{AB}})^{n} \left(I_{d_{1}d_{2}} - \frac{(R(\rho_{AB}))^{T_B}}{\sigma_{max}(\rho_{AB}^{T_B})}\right) + \left(\frac{1 - \|R(\rho_{AB})\|_{1}}{\sqrt{rank(R(\rho_{AB}))}\;  \|R(\rho_{AB})\|_{2}}\right)I_{d_{1}d_{2}}\right] \label{witn1}
\end{eqnarray}
where $n \in \mathbb{N}$, the set of natural numbers; and $k_{\rho_{AB}} = det(I_{d_{1}d_{2}}+\rho_{AB}) - det(I_{d_{2}} + Tr_{A}(\rho_{AB}))$.\\
{\theorem The operator $W_{(n)}$ is a witness operator that can detect PPTES.}
\begin{proof}
	Let us consider a bipartite $d_{1} \otimes d_{2}$ $(d_{1}\leq d_{2})$ dimensional separable state $\rho^{sep}_{AB}$. Using (\ref{resp3}), we find that $k_{\rho^{sep}_{AB}} \geq 0$ for any separable state $\rho^{sep}_{AB}$.
	\begin{eqnarray}
		Tr[W_{(n)} \rho^{sep}_{AB}]= \frac{d_{1}}{d_{1}-1}\left[
		(k_{\rho^{sep}_{AB}})^{n}\left(1 - \frac{Tr[(\rho^{sep}_{AB})^{T_B}\; R(\rho^{sep}_{AB})]}{\sigma_{max}((\rho^{sep}_{AB})^{T_B})}\right) +  \frac{1 - \|R(\rho^{sep}_{AB})\|_{1}}{\sqrt{rank(R(\rho^{sep}_{AB}))}\|R(\rho^{sep}_{AB})\|_2}	\right]
		\label{tracevalueWn}
	\end{eqnarray}
	Using (\ref{tracevalueWo}), it follows that $Tr[W_{(n)}\rho^{sep}_{AB}]\geq 0$ for all bipartite $d_{1} \otimes d_{2}$ dimensional separable state $\rho^{sep}_{AB}$.\\
	Let us now consider the BES given in \cite{bihalan}.
	\begin{equation}
		\rho_{BE} =
		\begin{pmatrix}
			a&0&0&0&b&0&0&0&b\\
			0&c&0&0&0&0&0&0&0\\
			0&0&a&0&0&0&0&0&0\\
			0&0&0&a&0&0&0&0&0\\
			b&0&0&0&a&0&0&0&0\\
			0&0&0&0&0&c&0&b&0\\
			0&0&0&0&0&0&c&0&0\\
			0&0&0&0&0&b&0&a&0\\
			b&0&0&0&0&0&0&0&a\\
		\end{pmatrix};\;\text{where}\;\; a=\frac{1+\sqrt{5}}{3+9\sqrt5},\;\; b=\frac{-2}{3+9\sqrt5},\;\; c=\frac{-1+\sqrt{5}}{3+9\sqrt5}
	\end{equation}
	The values of the parameters involved in the witness operator $W_{(n)}$ to detect the state $\rho_{BE}$ are given below.
	\begin{eqnarray}
		&&k_{\rho_{BE}}=0.149 > 0;\quad Tr[\rho_{BE}^{T_B} R(\rho_{BE})] = \frac{1}{363} (21 - 8\sqrt{5});\quad
		\sigma_{max} (\rho_{BE}^{T_B})= \frac{1}{33} \sqrt{29 + 12\sqrt{5}},\nonumber\\&& rank(R(\rho_{BE})) = 9;\quad 
		\|R(\rho_{BE})\|_1 = 1.025;\quad \|R(\rho_{BE})\|_2 = 0.413
	\end{eqnarray}
	The expectation value of $W_{(n)}$ with respect to the state $\rho_{BE}$ is given by
	\begin{eqnarray}
		Tr[W_{(n)} \rho_{BE}] &=& 1.5 (0.0203459 - 0.962145 \times 0.149599^n) \nonumber\\ &<&  0~~
		\text{for}~~ \; n \geq 3
	\end{eqnarray}
	Since the operator $W_{(n)}$ detects the PPTES described by the density operator $\rho_{BE}$ for each $n \geq 3$ so $W_{(n)}$ is a witness
	operator. Hence proved.
\end{proof}
\section{Estimating the Concurrence for any Arbitrary Dimensional Bipartite System}
We now derive a new lower bound of concurrence of a bipartite quantum state $\rho_{AB}$ in $d_{1} \otimes d_{2}$ dimensional system and show that our bound is better in most cases when it is compared to the lower bound of the concurrence given by \cite{kchen}. We note that the lower bound given in (\ref{albebound}) is not normalized but can be normalized to unity. If $C_{min}(\rho_{AB})$ denotes the normalized value of this bound for the state $\rho_{AB}$, then we have
\begin{eqnarray}
	C(\rho_{AB})&\geq& C_{min}(\rho_{AB})\nonumber\\ &=& \frac{1}{(d_{1}-1)}(max(\|\rho_{AB}^{T_{A}}\|_{1},\|R(\rho_{AB})\|_{1})-1)
	\label{nalbebound12}
\end{eqnarray}
We are now in a position to use the witness operator $W_{(n)}$ defined in (\ref{witn1}) in the Result \ref{res5.2} by Mintert \cite{mintert2} for getting the improvement of the lower bound of the concurrence of an arbitrary bipartite $d_{1}\otimes d_{2}$ dimensional system. It may be noted that not all witness operators improve the lower bound of the concurrence given by (\ref{albebound}). 
{\theorem Let $\rho_{AB}$ be an entangled state in $d_{1}\otimes d_{2}$ $(d_{1}\leq d_{2})$ dimensional system detected by the witness operator $W_{(n)}$ defined in (\ref{witn1}). Then there exist $n_{1}\in \mathbb{N}$ such that the lower bound of concurrence of the state $\rho_{AB}$ is given by
	\begin{eqnarray}
		C(\rho_{AB})\geq \Phi_{W_{(n)}} (\rho_{AB}),~~ \forall n\geq n_{1}
		\label{result6}
	\end{eqnarray}
	where
	\begin{eqnarray}
		\Phi_{W_{(n)}} (\rho_{AB}) &=& - Tr[W_{(n)} \rho_{AB}] \nonumber\\&=&
		\frac{d_{1}}{d_{1}-1} \left[(k_{\rho_{AB}})^{n}\left(\frac{Tr[\rho_{AB}^{T_B} R(\rho_{AB})]}{\sigma_{max}(\rho_{AB}^{T_B})} - 1\right) + \frac{\|R(\rho_{AB})\|_{1} - 1}{\sqrt{rank(R(\rho_{AB}))} \|R(\rho_{AB})\|_2}\right]
		\label{ourbound}
\end{eqnarray}}
\begin{proof}
	Let us first recall the witness operator $W_{(n)}$ defined in (\ref{witn1}). Then the theorem follows by using the witness operator $W_{(n)}$ in the result given in (\ref{resultP4}). Hence proved.
\end{proof}
\begin{lemma}
	For any bipartite state $\rho_{AB}$ in $d_1 \otimes d_2$ dimensional system, we have
	\begin{eqnarray}
		|k_{\rho_{AB}}| < 1
	\end{eqnarray}
	where $k_{\rho_{AB}} = det(I_{d_{1}d_{2}}+\rho_{AB}) - det(I_{d_{2}} + Tr_{A}(\rho_{AB}))$.
\end{lemma}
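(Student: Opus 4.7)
The plan is to reduce the bound on $|k_{\rho_{AB}}|$ to a purely spectral question about $\rho_{AB}$ and $\rho_B = Tr_A(\rho_{AB})$, and then to apply two elementary inequalities. First I would let $\lambda_1,\dots,\lambda_{d_1 d_2}$ be the eigenvalues of $\rho_{AB}$ and $\mu_1,\dots,\mu_{d_2}$ those of $\rho_B$. Since $\rho_{AB}$ is a density matrix and the partial trace of a density matrix is again a density matrix, both spectra consist of non-negative numbers summing to one, and
$$\det(I_{d_1 d_2}+\rho_{AB}) = \prod_{i=1}^{d_1 d_2}(1+\lambda_i), \qquad \det(I_{d_2}+\rho_B) = \prod_{j=1}^{d_2}(1+\mu_j).$$

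Next, I would show that every product of the form $\prod_{k=1}^{n}(1+x_k)$, with $x_k\geq 0$ and $\sum_k x_k=1$, lies in the interval $[2,e)$. The lower bound comes from expanding the product as the sum of elementary symmetric polynomials $e_0(x)+e_1(x)+\dots+e_n(x)$: one has $e_0=1$, $e_1=\sum_k x_k=1$, and $e_j\geq 0$ for $j\geq 2$ because the $x_k$ are non-negative, giving $\prod_k(1+x_k)\geq 2$. The strict upper bound follows from AM-GM applied to the numbers $1+x_1,\dots,1+x_n$,
$$\prod_{k=1}^{n}(1+x_k)\leq\left(\frac{1}{n}\sum_{k=1}^{n}(1+x_k)\right)^{n}=\left(1+\frac{1}{n}\right)^{n}<e,$$
combined with the standard fact that $(1+1/n)^n$ is strictly less than $e$ for every finite $n$. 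Applying these bounds to the two spectra above gives $2\leq \det(I_{d_1 d_2}+\rho_{AB})<e$ and $2\leq \det(I_{d_2}+\rho_B)<e$.

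Finally, subtracting the extremes yields
$$|k_{\rho_{AB}}|=\bigl|\det(I_{d_1 d_2}+\rho_{AB})-\det(I_{d_2}+\rho_B)\bigr|<e-2<1,$$
which is the required inequality. The proof is entirely elementary, and I do not foresee any real obstacle; the only point that needs a moment's care is to ensure that $\rho_B$ is itself a density matrix so that both spectra simultaneously satisfy the non-negativity and unit-sum hypotheses on which the symmetric-polynomial and AM-GM steps rely.
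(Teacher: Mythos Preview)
Your argument is correct and follows the same route as the paper: both show that each determinant lies in $[2,e)$ via AM--GM for the upper bound and positivity of the eigenvalues for the lower bound, and then conclude $|k_{\rho_{AB}}|<e-2<1$. Your presentation is tidier in that you isolate a single lemma $\prod_k(1+x_k)\in[2,e)$ applying symmetrically to both $\rho_{AB}$ and $\rho_B$, whereas the paper bounds $\det(I_{d_1d_2}+\rho_{AB})$ above and $\det(I_{d_2}+\rho_B)$ below by separate arguments (handling the qubit case first for the latter and then generalizing), leaving the reverse inequality $k_{\rho_{AB}}>-1$ to an unwritten analogous step.
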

\begin{proof}
	Let us start with the expression of $Tr[I_{d_1 d_2} + \rho_{AB}]$ which is given by
	\begin{eqnarray}
		Tr[I_{d_1 d_2} + \rho_{AB}] = Tr[I_{d_1 d_2}] + Tr[\rho_{AB}] = d_1d_2+1
		\label{l1}
	\end{eqnarray}
	Moreover, the inequality $det(I_{d_{1}d_{2}}+\rho_{AB}) \leq \left(\frac{1 + d_1d_2}{d_1d_2}\right)^{d_1d_2}$ can be derived as
	\begin{eqnarray}
		1 + d_1d_2 &=& Tr[I_{d_1 d_2} + \rho_{AB}] \nonumber\\
		&=&  \sum_{i=1}^{d_1d_2} \lambda_i(I_{d_1 d_2} + \rho_{AB}) \nonumber\\
		&\geq&  d_1d_2 \left(\prod_{i=1}^{d_1d_2} \lambda_i(I_{d_1 d_2} + \rho_{AB})\right)^{\frac{1}{d_1d_2}}  \nonumber\\
		&=&  d_1d_2 \left(det(I_{d_{1}d_{2}}+\rho_{AB})\right)^{\frac{1}{d_1d_2}}
	\end{eqnarray}
	i.e.,
	\begin{equation}
		det(I_{d_{1}d_{2}}+\rho_{AB}) \leq \left(\frac{1 + d_1d_2}{d_1d_2}\right)^{d_1d_2}
		\label{expr1}
	\end{equation}
	It can be seen that R.H.S of (\ref{expr1}) tends toward \textit{Euler's} number \textit{e} as $d_1,d_2$ tends to $\infty$.\\
	Therefore, for arbitrary large value of $d_{1}$ and $d_{2}$, we have
	\begin{equation}
		det(I_{d_{1}d_{2}}+\rho_{AB}) \leq \textit{e}
		\label{expr11}
	\end{equation}
	Let us first calculate the bound of $det(I_{d_{2}} + Tr_{A}[\rho_{AB}])$ for $d_{2}=2$ and then generalize the result to arbitrary dimension $d_{2}$.
	The quantum state in a 2-dimensional system, i.e., a qubit is described by the density operator
	\begin{equation}
		Tr_A[\varrho_{AB}^{(2)}]= \frac{I_2 + \vec{r}.\vec{\sigma}}{2}
		\label{expr11}
	\end{equation}
	where $\vec{r} \in \mathbb{R}^3$ with $|\vec{r}|^2 \leq 1$ is the \textit{Bloch vector} for the state $Tr_A[\varrho_{AB}^{(2)}]$; $I_2$ is $2\times2$ identity matrix; and $\vec{\sigma} = (\sigma_x,\sigma_y,\sigma_z)$ where $\sigma_x,\sigma_y$ and $\sigma_z$ are \textit{Pauli matrices} defined in (\ref{pauli}).\\
	After carrying out a simple calculations, we arrive at the result given by
	\begin{equation}
		det(I_2 + Tr_A[\varrho_{AB}^{(2)}]) \geq 2
		\label{expr12}
	\end{equation}
	The equality holds in (\ref{expr12}) for pure states.\\
	Since pure states are rank one projectors, we have $det(I_{d_{2}} + Tr_A[\varrho_{AB}^{(d_2)}]) = 2$ for any pure state $Tr_A[\varrho_{AB}^{(d_2)}]$ in $d_2$ dimensional system. Thus, we can generalize (\ref{expr12}) to an arbitrary qudit described by the density operator $Tr_A[\varrho_{AB}^{(d_2)}]$, we obtain the following
	\begin{equation}
		det(I_{d_{2}} + Tr_A[\varrho_{AB}^{(d_2)}]) \geq 2 \label{quditcase}
	\end{equation}
	Using the results (\ref{expr1}) and (\ref{quditcase}) in $k_{\rho_{AB}}$, we get
	\begin{eqnarray}
		k_{\rho_{AB}} &=& det(I_{d_{1}d_{2}}+\rho_{AB}) - det(I_{d_{2}} + Tr_{A}[\rho_{AB}]) \nonumber\\
		&\leq& \textit{e} - 2 < 1
	\end{eqnarray}
	Similarly, we can show that $ k_{\rho_{AB}} > -1$. Thus, we have $|k_{\rho_{AB}}|<1$. Hence proved.
\end{proof}
{\remark For PPT states, we have $0 \leq k_{\rho_{AB}} < 1$.}
{\corollary \label{corlimvalue} For large value of $n$, i.e., as $n\rightarrow \infty$, the lower bound of concurrence is given by
	\begin{eqnarray}
		C(\rho_{AB})\geq \phi(\rho_{AB})\quad \text{where} \quad \phi(\rho_{AB}) = \frac{d_{1}}{d_{1}-1}  \left(\frac{\|R(\rho_{AB})\|_{1} - 1}{\sqrt{rank(R(\rho_{AB}))} \|R(\rho_{AB})\|_2}\right)
		\label{result7}
\end{eqnarray}}
\begin{proof}
	Since $|k_{\rho_{AB}}|<1$ so $(k_{\rho_{AB}})^{n}\rightarrow 0$, as $n\rightarrow \infty$. Thus, we have
	\begin{eqnarray}
		\lim_{n\rightarrow\infty}\Phi_{W_{(n)}} (\rho_{AB}) = \phi(\rho_{AB}) \label{limvalue}
	\end{eqnarray} 
	Hence proved.
\end{proof}
{\remark The lower bound of concurrence given in  (\ref{result7}) is better than that given in (\ref{nalbebound12}), when $n\rightarrow \infty$.}
\section{Efficiency of the Witness Operator}
Now we discuss some examples to illustrate the utility of the witness operator $W_{(n)}$ in the estimation of concurrence of NPTES and PPTES. By using the witness operator $W_{(n)}$, we notice an improvement in the lower bound of the concurrence of the given NPTES and PPTES. Moreover, we find examples of the state $\rho_{AB}$ to demonstrate the following relations $S_1$ and $S_2$: 
\begin{eqnarray}
	&S_1:&C(\rho_{AB})\geq \Phi_{W_{(n)}} (\rho_{AB})\geq C_{min}(\rho_{AB}) ,~~ \forall\; n\geq n_{1}\\
	&S_2:&C(\rho_{AB})\geq \phi(\rho_{AB})\geq C_{min}(\rho_{AB})
	\label{result60}
\end{eqnarray}
\subsection{Examples of Estimation of Lower Bound of Concurrence of NPTES}
{\example Let us again recall the $3\otimes3$ isotropic states described by the density operator $\rho_{iso}(f)$ defined in (\ref{isotropic}). The witness operator $W_{(n)}$ detects $3\otimes3$ isotropic states for some range of the parameters which has been shown in Table-\ref{table_iso} given below.
	\begin{table}[h!]
		\begin{tabular}{| p{1cm} | p{6.5cm} | p{6.5cm} |}
			\hline
			\multicolumn{3}{|c|}{$3\otimes 3$ Isotropic States described by $\rho_{iso}(f)$} \\
			\hline
			\textit{n} & The range of the parameter $f$ for which $Tr[W_{(n)} \rho_{iso}(f)]<0$ &$\rho_{iso}(f)$ detected/not detected by the witness operator $W_{(n)}$ \\
			\hline
			1 &  $0.35 < f \leq 1$  &  Detected by $W_{(1)}$ \\
			\hline
			2 &  $0.336 < f \leq 1$  &  Detected by $W_{(2)}$\\
			\hline
			3 &  $0.3338 < f \leq 1$  &  Detected by $W_{(3)}$\\
			\hline
			4 &  $0.3334 < f \leq 1$  & Detected by $W_{(4)}$\\
			\hline
			5 &  $0.33334 < f \leq 1$  &  Detected by $W_{(5)}$\\
			\hline
		\end{tabular}
		\caption{Detection of isotropic state (NPTES) using $W_{(n)}$ in the range $\frac{1}{3} < f \leq 1$}
		\label{table_iso}
	\end{table}\\
	We now use the witness operator $W_{(n)}$ to estimate the lower bound of the concurrence of $\rho_{iso}(f)$. With an increase in $n$, one can easily find the improvement in the lower bound of concurrence estimated by the witness operator $W_{(n)}$,  $\forall \; n \in \mathbb{N}$, when compared to the lower bound of the concurrence given in (\ref{nalbebound12}). If we take sufficiently large value of $n$ then from $Corollary-\ref{corlimvalue}$, we have
	\begin{eqnarray}
		C(\rho_{iso}(f)) \geq \phi(\rho_{iso}(f)) = \frac{\sqrt{2} (-1 + 3f)}{\sqrt{1 - 2f + 9f^2}},\quad
		\frac{1}{3}<f\leq 1
		\label{lbcex-1}
	\end{eqnarray}
	where $C(\rho_{iso}(f))$ denotes the concurrence of the isotropic state.\\
	In Fig.-\ref{fig1ch1}, we have compared the lower bound $\phi(\rho_{iso}(f))$ given in (\ref{lbcex-1}) with the lower bound $C_{min}(\rho_{iso}(f))$ given in (\ref{nalbebound12}).
	\begin{figure}[h!]
		\begin{center}
			\includegraphics[width=0.6\textwidth]{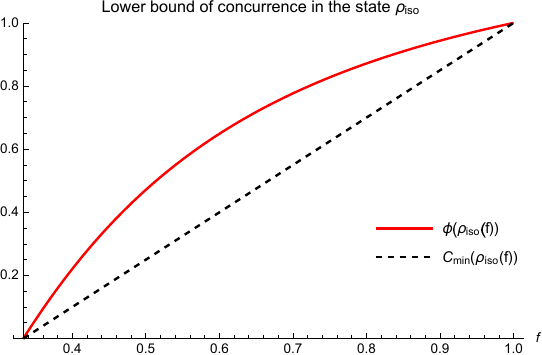}
			\caption{Isotropic states $\rho_{iso}(f)$: The dotted curve represents $C_{min}(\rho_{iso}(f))$ and the solid red curve represents the limiting value of our bound, i.e., $\phi(\rho_{iso}(f))$. Clearly, in the entangled region $\frac{1}{3}< f \leq 1$, $\phi(\rho_{iso}(f))$ gives a better estimate of the lower bound of concurrence as compared to $C_{min}(\rho_{iso}(f))$.} \label{fig1ch1}
		\end{center}
	\end{figure}
	
	\example Let us consider a class of bipartite quantum state in $3\otimes 3$ dimensional system, which is defined as \cite{pmrhoro}
	\begin{eqnarray}
		\rho_{\alpha}=\frac{2}{7}|\psi^{+}\rangle\langle\psi^{+}|+\frac{\alpha}{7}\sigma_{+}+\frac{5-\alpha}{7}\sigma_{-},~~~2\leq \alpha \leq 5
		\label{pptes2}
	\end{eqnarray}
	where $|\psi^{+}\rangle=\frac{1}{\sqrt{3}}(|00\rangle+|11\rangle+|22\rangle)$ and
	\begin{eqnarray}
		\sigma_{+}=\frac{1}{3}(|01\rangle\langle01|+|12\rangle\langle12|+|20\rangle\langle20|);
		\;\;
		\sigma_{-}=\frac{1}{3}(|10\rangle\langle10|+|21\rangle\langle21|+|02\rangle\langle02|)
		\label{sigma+-}
	\end{eqnarray}
	The state $\rho_{\alpha}$ can be characterized with respect to the parameter $\alpha$ in the interval $[2,5]$  as:\\
	(i) $\rho_{\alpha}$ is a separable state when $2\leq \alpha \leq 3$.\\
	(ii) $\rho_{\alpha}$ represents PPTES when $3<\alpha \leq 4$.\\
	(iii) $\rho_{\alpha}$ is NPTES when $4< \alpha \leq 5$. \\
	In this example, we will consider the state $\rho_{\alpha}$ for $4< \alpha \leq 5$.\\
	It can be easily seen that for each $n$, the witness operator $W_{(n)}$ detects all the NPTES belonging to the family of states described by the density operator $\rho_{\alpha},~~4< \alpha\leq 5$. Further, we can use the witness operator $W_{(n)}$ to improve the estimation of the lower bound of concurrence of the state $\rho_{\alpha},~~4< \alpha\leq 5$. In this case, we find that except for $n=1$, the witness operator $W_{(n)}$ improves the lower bound of the concurrence compared to the lower bound given in (\ref{nalbebound12}) in the whole range of the parameter $\alpha$, i.e., $4< \alpha \leq 5$. For $n=1$, the witness operator $W_{(1)}$ improves the lower bound in the interval $4.15 < \alpha \leq 5$. Figure-\ref{fig2} describes a comparison between the limiting value of our bound, i.e., $\phi(\rho_{\alpha})$ with $C_{min}(\rho_{\alpha})$.
	\begin{figure}[h!]
		\begin{center}
			\includegraphics[width=0.6\textwidth]{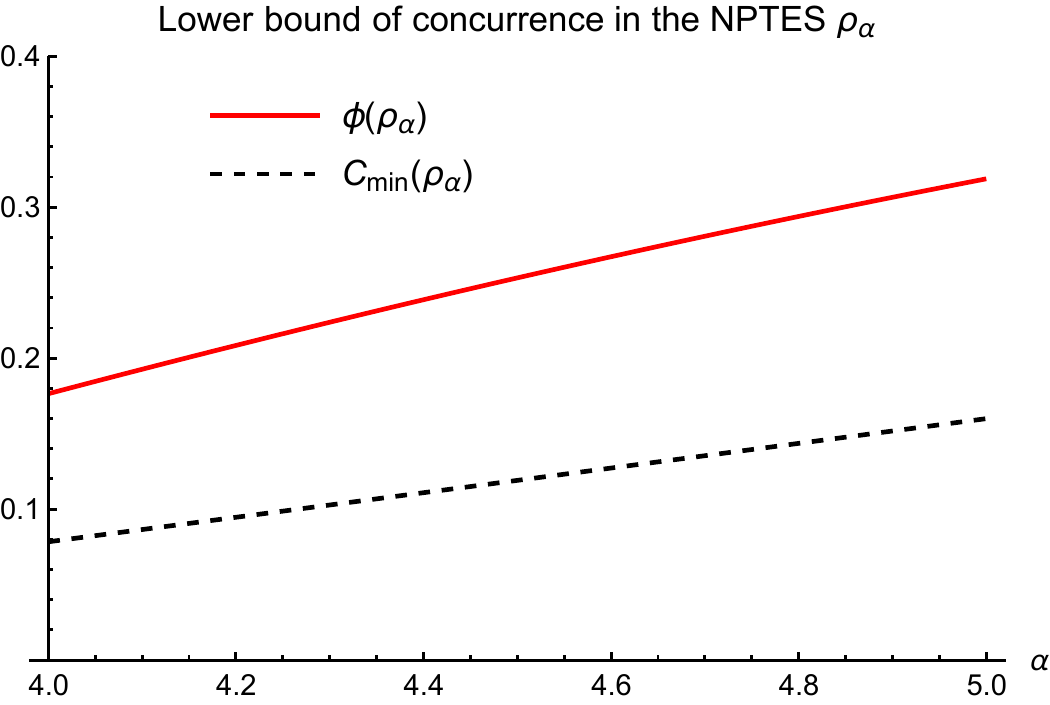}
			\caption{Horodecki alpha states $(\rho_{\alpha})$: The dotted curve represents $C_{min}(\rho_{\alpha})$ and the solid red curve represents the limiting value of our bound, i.e., $\phi(\rho_{\alpha})$. Clearly, in the NPT entangled region $4 < \alpha \leq 5$, $\phi(\rho_{\alpha})$ gives a better estimate of the lower bound of concurrence as compared to $C_{min}(\rho_{\alpha})$.} \label{fig2}
		\end{center}
	\end{figure}
}
\subsection{Examples of Estimation of Lower Bound of Concurrence of PPTES}
\noindent We will now consider a few examples of PPTES detected by the witness operator $W_{(n)}$.

\begin{example} A $3\otimes 3$ PPTES constructed from the unextendible product basis (UPB) is given by \cite{cbennett}
	\begin{eqnarray}
		\rho_{UPB} = \frac{1}{4}[I_{9}-\sum_{i=1}^{5}|\psi_{i}\rangle\langle\psi_{i}|]
		\label{bennetstate}
	\end{eqnarray}
	where the states $\{|\psi_{i}\rangle\}_{i=1}^5$ form the UPB and are given by
	\begin{eqnarray}
		&&|\psi_{1}\rangle=\frac{1}{\sqrt{2}}|0\rangle\otimes(|0\rangle-|1\rangle); \quad
		|\psi_{2}\rangle=\frac{1}{\sqrt{2}}(|0\rangle-|1\rangle)\otimes|2\rangle\nonumber; \quad
		|\psi_{3}\rangle=\frac{1}{\sqrt{2}}|2\rangle\otimes(|1\rangle-|2\rangle)\nonumber\\&&
		|\psi_{4}\rangle=\frac{1}{\sqrt{2}}(|1\rangle-|2\rangle)\otimes|0\rangle; \quad
		|\psi_{5}\rangle=\frac{1}{3}(|0\rangle+|1\rangle+|2\rangle)\otimes(|0\rangle+|1\rangle+|2\rangle)
		\label{upb}
	\end{eqnarray}
	To start with, let us calculate the following quantities for the state $\rho_{UPB}$:
	\begin{eqnarray}
		&&k_{\rho_{UPB}} = \frac{71}{768}, Tr[\rho_{UPB}^{T_B}R(\rho_{UPB})] = \frac{1}{16},
		\sigma_{max}(\rho_{UPB}^{T_B}) = \frac{1}{4},\nonumber\\&& rank(R(\rho_{UPB})) = 6,
		\|R(\rho_{UPB})\|_{1} = 1.08741, \|R(\rho_{UPB})\|_{2} = 0.5
		\label{i1}
	\end{eqnarray}
	Using the data given in (\ref{i1}), we can construct the witness operator $W_{(n)}$ and calculate its expectation value with respect to the state $\rho_{UPB}$ as
	\begin{eqnarray}
		Tr[W_{(n)} \rho_{UPB}] &=& \frac{3}{2} \left[
		(k_{\rho_{UPB}})^{n}\left(1 - \frac{Tr[\rho_{UPB}^{T_B} R(\rho_{UPB})]}{\sigma_{max}(\rho_{UPB}^{T_B})}\right)  +  \frac{1 - \|R(\rho_{UPB})\|_{1}}{\sqrt{rank(R(\rho_{UPB}))}\|R(\rho_{UPB})\|_2}\right]\nonumber\\
		&=& \frac{3}{2} \left(\frac{3}{4}\left(\frac{71}{768}\right)^n - 0.071372\right) <  0 \quad \forall \; n \in \mathbb{N}
		\label{tracevalueWn_bennet}
	\end{eqnarray}
	Thus, $W_{(n)}$ detect the PPTES described by the density operator $\rho_{UPB}$ for all $n \in \mathbb{N}$.\\
	The lower bound of the concurrence of the state $\rho_{UPB}$ is given by
	\begin{eqnarray}
		C(\rho_{UPB}) \geq \Phi_{W_{(n)}}(\rho_{UPB}) = - Tr[W_{(n)} \rho_{UPB}], \quad\forall \; n \in \mathbb{N}
	\end{eqnarray}
	In Table-\ref{Bennet_table}, we compare the lower bound of concurrence $\Phi_{W_{(n)}}(\rho_{UPB})$ with $C_{min}(\rho_{UPB})$. It shows that for $n > 1$, the function $\Phi_{W_{(n)}}$ gives a better estimate of the lower bound of concurrence as compared to the one given in (\ref{nalbebound12}), i.e.,
	\begin{eqnarray}
		\Phi_{W_{(n)}}(\rho_{UPB}) > C_{min}(\rho_{UPB}) = 0.04 \quad\forall \; n > 1
	\end{eqnarray}
	\begin{table}[h]
		\begin{center}
			\begin{tabular}{| p{1cm} | p{6cm} | p{6cm} | }
				\hline
				\multicolumn{3}{|c|}{Lower bound of concurrence for the state $\rho_{UPB}$} \\
				\hline
				\textit{n} & $\Phi_{W_{(n)}} (\rho_{UPB})$ & $C_{min}(\rho_{UPB})$  \\
				\hline
				1 &  0.00305406 &  0.04 \\
				\hline
				2 &  0.097443  & 0.04 \\
				\hline
				3 &  0.106169  &  0.04  \\
				\hline
				4 &  0.106976  & 0.04 \\
				\hline
				5 &  0.10705  &  0.04  \\
				\hline
			\end{tabular}
			\caption{Lower bound is compared with $C_{min}(\rho_{UPB})$}
			\label{Bennet_table}
		\end{center}
	\end{table}
	Also, it can be observed that as we increase the value of $n$, the value of the lower bound of concurrence is also improved. So, it would be interesting to find out the value of the lower bound of concurrence for indefinite large $n$. We calculate the lower bound of concurrence of $\rho_{UPB}$ for large $n$ and using $Corollary-\ref{corlimvalue}$, it can be estimated as
	\begin{eqnarray}
		C(\rho_{UPB})\geq \phi(\rho_{UPB})=0.107058
		\label{cor3ex1}
	\end{eqnarray}
\end{example}

\begin{example}
	Let us consider the two qutrit, non-full rank PPTES given by \cite{bandyo}
	\begin{eqnarray}
		\rho_i(\gamma) = \gamma |\psi_i\rangle \langle\psi_i| + (1-\gamma)\rho_{UPB},~~1\leq i \leq 5
		\label{ex-2b}
	\end{eqnarray}
	where $\rho_{UPB}$, $|\psi_i\rangle$ are defined in (\ref{bennetstate}) and (\ref{upb}) and $\gamma \in [0,1]$.
	The state $\rho_i(\gamma)$ satisfy the range criteria. For any $i(1\leq i \leq 5)$, the PPT states $\rho_i(\gamma)$ are entangled if and only if $0\leq \gamma < \frac{1}{5}$. $\rho_i(\gamma)$ represent separable states for $\frac{1}{5} \leq \gamma \leq 1 $.\\
	After simple calculations, we find that our witness operator $W_{(n)}$ identify the states $\rho_i(\gamma),~~1\leq i \leq 5$ given in (\ref{ex-2b}) as PPTES in the region $0\leq \gamma \leq 0.0635994$, for any $n \in \mathbb{N}$. Furthermore, for a sufficiently large value of $n$, but in the same range of $\gamma$, the witness operator $W_{(n)}$ detects the state described by the density operator $\rho_i(\gamma),~~1\leq i \leq 5$ as the matrix realignment criteria.\\
	When $0\leq \gamma \leq 0.0635994$, our derived lower bound of the concurrence of the state $\rho_i(\gamma),~~1\leq i \leq 5$ gives a better lower bound in comparison to the lower bound of concurrence given by Albeverio et.al. This has been shown in Figure-\ref{lambdaimage}.\\
	\begin{figure}[h!]
		\begin{center}
			\includegraphics[width=0.6\textwidth]{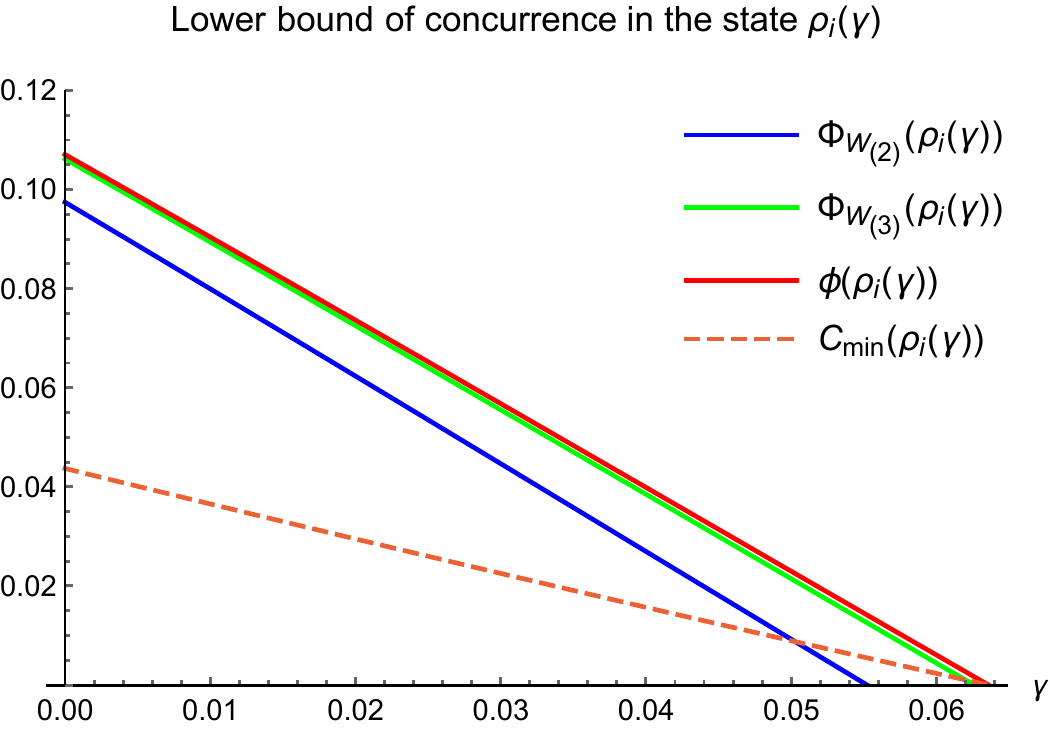}
			\caption{The lower bound $\Phi_{W_{(n)}}(\rho_i(\gamma))$, is represented by the solid curves, in blue ($n=2$), green($n=3$). One must observe that for $n>3$, the bound is slightly less than $\phi(\rho_{i}(\gamma))$ and finally converges to it, i.e., to the red curve, as n is increased. The dotted curve represents $C_{min}(\rho_i(\gamma))$, i.e., the normalized lower bound of concurrence of the state $\rho_i(\gamma)$ given by (\ref{nalbebound12}).}
			\label{lambdaimage}
		\end{center}
	\end{figure}\\
\end{example}

\begin{example}
	Let us again recall a class of bipartite quantum state $\rho_{\alpha}$ for $3 < \alpha \leq 4$, which is given by \cite{pmrhoro}
	\begin{eqnarray}
		\rho_{\alpha}=\frac{2}{7}|\psi^{+}\rangle\langle\psi^{+}|+\frac{\alpha}{7}\sigma_{+}+\frac{5-\alpha}{7}\sigma_{-},~~~3 < \alpha \leq 4
		\label{pptes2}
	\end{eqnarray}
	The state $\rho_{\alpha}$ represents PPTES when $3<\alpha \leq 4$. One can check that the marginals of $\rho_{\alpha}$ are maximally mixed and $k_{\rho_{\alpha}} > 0$ for $3< \alpha \leq 4$. We note that the PPTES of this family is not detected by the witness operators $W^o$, which is defined in (\ref{wit2}).\\
	To detect the state $\rho_{\alpha}$, let us construct the witness operator $W_{(n)}$ with the data given by
	\begin{eqnarray}
		&&k_{\rho_{\alpha}} = -\frac{64}{27} + \frac{9}{7}\left(1 + \frac{5 - \alpha}{21}\right)^3 \left(1 + \frac{\alpha}{21}\right)^3;\quad
		Tr[\rho_{\alpha}^{T_B} R(\rho_{\alpha})] = \frac{2}{21}; \quad rank(R(\rho_{\alpha})) = 9;\nonumber\\&&
		\sigma_{max}(\rho_{\alpha}^{T_B}) = \frac{1}{21} \sqrt{\frac{33}{2} - 5\alpha + 5\alpha^2 + \frac{5}{2} \sqrt{41 - 20\alpha + 4 \alpha^2}};
		\\&&  \|R(\rho_{\alpha})\|_{1} = \frac{1}{21}(19 + 2\sqrt{19 -15\alpha + 3\alpha^{2}});\quad
		\|R(\rho_{\alpha})\|_{2} = \sqrt{\frac{73}{441} + \frac{1}{882} (76 - 60\alpha + 12 \alpha^2)} \nonumber
	\end{eqnarray}
	The range of the parameter $\alpha$ for which $Tr[W_{(n)} \rho_{\alpha}] < 0,~~n=1~~\textrm{to}~~5$ is given in Table \ref{alphatable}. It shows that as the value of $n$ increases, more and more PPTES are detected by the witness operator $W_{(n)}$. \\
	\begin{table}[h!]
		\begin{center}
			\begin{tabular}{| p{1cm} | p{6cm} |p{6cm} |}
				\hline
				\multicolumn{3}{|c|}{The state described by $\rho_{\alpha}$ for $3 < \alpha \leq 4$} \\
				\hline
				$n$ & The range of $\alpha$ for which $Tr[W_{(n)} \rho_{\alpha}]<0$ & $\rho_{\alpha}$ detected/not detected by the witness operator $W_{(n)}$ \\
				\hline
				1 &  $3.7 < \alpha \leq 4$  & PPTES detected by $W_{(1)}$   \\
				\hline
				2 &  $3.11 < \alpha \leq 4$ & PPTES detected by $W_{(2)}$  \\
				\hline
				3 &  $3.01 < \alpha \leq 4$  & PPTES detected by $W_{(3)}$ \\
				\hline
				4 &  $3.0025 < \alpha \leq 4$ & PPTES detected by $W_{(4)}$  \\
				\hline
				5 &  $3.0004 < \alpha \leq 4$  & PPTES detected by $W_{(5)}$ \\
				\hline
			\end{tabular}
			\caption{Detection of PPTES with the witness operator $W_{(n)}$ for different $n$ and in the range $3 < \alpha \leq 4$}
			\label{alphatable}
		\end{center}
	\end{table} \\
	The witness operator $W_{(n)}$ not only detect the PPTES $\rho_{\alpha}$ but also estimate the lower bound $\Phi_{n}(\rho_{\alpha})$ of the concurrence of $\rho_{\alpha}$ when $3 < \alpha \leq 4$. It can be easily shown that the value of $\Phi_{n}(\rho_{\alpha})$ improves as we increase the value of $n$. Thus, for large $n$, the lower bound of the concurrence $\phi(\rho_{\alpha})$ is given by
	\begin{eqnarray}
		\phi(\rho_{\alpha})=\frac{-1 + \sqrt{19 - 15\alpha + 3\alpha^2}}{\sqrt{111 - 30\alpha + 6\alpha^2}}
		\label{lbcex-3}
	\end{eqnarray}
	Again, the normalized lower bound $C_{min}(\rho_{\alpha})$ of the concurrence of $\rho_{\alpha}$ can be calculated by the prescription given in \cite{kchen}
	\begin{eqnarray}
		C_{min}(\rho_{\alpha})=
		\frac{1}{21}(\sqrt{3\alpha^{2}-15\alpha+19}-1)
		\label{maxeig}
	\end{eqnarray}
	We then compare the value of $\phi(\rho_{\alpha})$ given in (\ref{lbcex-3}) with the lower bound given in (\ref{maxeig}). The comparison is shown in the Figure-\ref{alphaimage} and it can be concluded that $C(\rho_{\alpha})\geq \phi(\rho_{\alpha}) \geq C_{min}(\rho_{\alpha})$  where $3 < \alpha \leq 4$.

	\begin{figure}[h!]
		\begin{center}
			\includegraphics[width=0.6\textwidth]{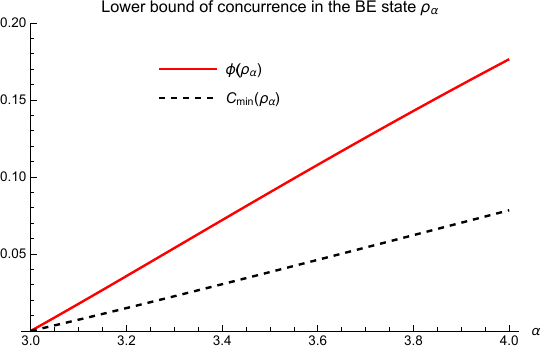}
			\caption{The dotted curve represents $C_{min}(\rho_{\alpha})$ and the solid red curve represents the limiting value of our bound, i.e., $\phi(\rho_{\alpha})$ in the PPT entangled region $3 < \alpha \leq 4$. Our bound varies from 0 (at $\alpha = 3$) to 0.176 (at $\alpha = 4$) which gives a better estimate of the lower bound of concurrence as compared to $C_{min}(\rho_{\alpha})$ which varies from 0 (at $\alpha = 3$) to 0.078 (at $\alpha = 4$).}
			\label{alphaimage}
		\end{center}
	\end{figure}
\end{example}

\begin{example} \label{egch2-astate}
	Let us consider another family of PPTES described by the density operator $\rho_{a}$, given in (\ref{astate-matrix}). The state $\rho_a$ is separable when $a = 0$ and 1.
	It is known that the density matrix $\rho_a$ represents a family of PPTES for $0 < a < 1$ \cite{horodeckia}. Further, it may be noted that the eigenvalues of the realigned matrix $R(\rho_a)$ and that of the partial transpose of the realigned matrix of $\rho_a$, i.e.,$(R(\rho_a))^{T_B}$, are real and non-negative.\\
	In Table-\ref{atable}, we have shown that there exist different witness operators from the family of witness operators $W_{(n)}$ that can detect PPTES from the family of states described by the density operator $\rho_{a}$ for different ranges of the parameter $a$.
	\begin{table}[h!]
		\begin{center}
			\begin{tabular}{| p{1cm} | p{5.6cm} | p{6.6cm} |}
				\hline
				\multicolumn{3}{|c|}{The state $\rho_{a},~~0<a<1$} \\
				\hline
				\textit{n} & The range of the parameter $a$ for which $Tr[W_{(n)}\rho_{a}]<0$ &$\rho_{a}$ detected/not detected by the witness operator $W_{(n)}$ \\
				\hline
				1 &  Does not exist  &  $\rho_{a}$ is not detected by $W_{(1)}$ \\
				\hline
				2 &  $0 < a < 0.016$  &  PPTES detected by $W_{(2)}$\\
				\hline
				3 &  $0 < a < 0.62$  &  PPTES detected by $W_{(3)}$\\
				\hline
				4 &  $0 < a < 0.951$  &  PPTES detected by $W_{(4)}$\\
				\hline
				5 &  $0 < a < 0.9932$  &  PPTES detected by $W_{(5)}$\\
				\hline
				6 &  $0 < a < 0.999$  &  PPTES detected by $W_{(6)}$\\
				\hline
				7 &  $0 < a < 0.99987$  &  PPTES detected by $W_{(7)}$\\
				\hline
				8 &  $0 < a < 0.99998$  & PPTES detected by $W_{(8)}$\\
				\hline
			\end{tabular}
			\caption{Detection of PPTES in the range $0 < a < 1$}
			\label{atable}
		\end{center}
	\end{table} \\
	Next, our task is to estimate the lower bound of the concurrence of the state $\rho_{a},~~0<a<1$. We first calculate the lower bound $\Phi_{W_{(n)}}(\rho_{a})$ of the concurrence and then compare it with the lower bound $C_{min}(\rho_{a})$ given in (\ref{nalbebound12}). The comparison is shown in Figure-\ref{aimage1} and we find that there exist a critical value of $n$, say $n_{1}$ such that for $n\geq n_{1}$, the quantity $\Phi_{W_{(n)}}(\rho_{a})$ gives better lower bound of the concurrence than $C_{min}(\rho_{a})$. Also, from $Corollary-\ref{corlimvalue}$, we know that $\Phi_{W_{(n)}}(\rho_{a}) \rightarrow \phi(\rho_{a}),~~\textrm{as}~~n\rightarrow \infty$, so we obtained the inequality for the state $\rho_{a},~~0<a<1$
	\begin{eqnarray}
		C(\rho_{a})\geq \phi(\rho_{a}) \geq C_{min}(\rho_{a})
	\end{eqnarray} 
	\begin{figure}[h!]
		\begin{center}
			\includegraphics[width=0.8\textwidth]{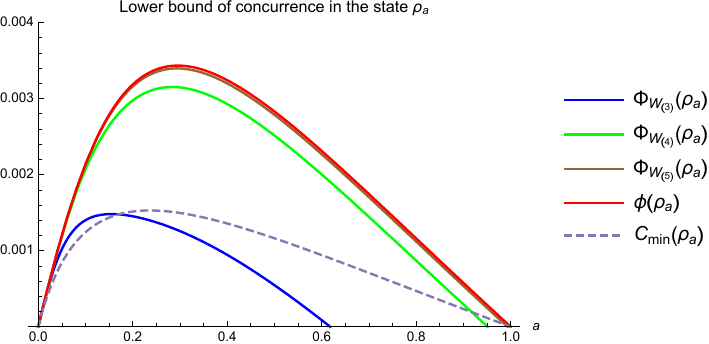}
			\caption{The lower bound $\Phi_{W_{(n)}}(\rho_{a})$, is represented by different solid curves for different values of $n$. The solid curve is given in blue for $n=3$, green for $n=4$, and brown for $n=5$. One may note here that for $n>5$, the bound converges to the red curve which represents $\phi(\rho_{a})$. The dotted curve represents $C_{min}(\rho_{a})$, i.e., the normalized lower bound concurrence of $\rho_a$ given by (\ref{nalbebound12}). }
			\label{aimage1}
		\end{center}
	\end{figure}
\end{example}
\begin{example}  Let us consider a $4\otimes4$ dimensional PPTES $\rho_{p,q}$ defined in (\ref{bes4by4}).
	The state $\rho_{p,q}$ is a PPT state for $q = \frac{\sqrt{2} - 1}{2}\equiv q_{0}$ and $p = \frac{1 - 2q}{4}\equiv p_{0}$. Since $\|R(\rho_{p_{0},q_{0}})\|_1 = 1.08579$, which is greater than one, by matrix realignment criteria one can say that $\rho_{p_{0},q_{0}}$ is a PPTES. Note that for this PPTES, the realigned matrix $R(\rho_{p_{0},q_{0}})$ is Hermitian and so is $(R(\rho_{p_{0},q_{0}}))^{T_B}$. The witness operator $W^o$ defined in (\ref{wit2}) fails to detect this state.
	A simple calculation shows that $\rho_{p_{0},q_{0}}$ is detected by our witness operator $W_{(n)}$ for all $n$, i.e.,
	\begin{eqnarray}
		Tr[W_{(n)} \rho_{p_{0},q_{0}}] &=& \frac{4}{3} \left(\left(\frac{3}{2} - \frac{1}{\sqrt{2}}\right)\left(\frac{7}{128} (-17 + 13\sqrt{2})\right)^n  + \frac{1}{8}(-2 + \sqrt{2}) \right)\nonumber\\&<& 0, \quad \forall \; n
	\end{eqnarray}
	Let us see now how efficiently, we estimate the lower bound of concurrence through the witness operator $W_{(n)}$. The lower bound $\Phi_{W_{(n)}} (\rho_{p_{0},q_{0}})$ of the concurrence of the state $\rho_{p_{0},q_{0}}$ is estimated in Table-\ref{4by4_table}.\\
	\begin{table}[h!]
		\begin{center}
			\begin{tabular}{| p{1cm} | p{3.5cm} | p{3.5cm} |}
				\hline
				\multicolumn{3}{|c|}{$4\otimes4$ bound entangled state $\rho_{p_{0},q_{0}}$} \\
				\hline
				\textit{n }& $\Phi_{W_{(n)}} (\rho_{p_{0},q_{0}})$ & $C_{min}(\rho_{p_{0},q_{0}})$ \\
				\hline
				1 &  0.01757 & 0.0285955 \\
				2 &  0.0915681  &  0.0285955 \\
				3 &  0.0971719  &  0.0285955 \\
				4 &  0.0975963  &  0.0285955 \\
				5 &  0.0976284  &  0.0285955\\
				\hline
			\end{tabular}
			\caption{Comparison of $C_{min}$ with our lower bound of concurrence $\Phi_{W_{(n)}}$, for $n = 1$ to 5.}
			\label{4by4_table}
		\end{center}
	\end{table}\\
	We then verified the following relation, for $n \geq 1$
	\begin{eqnarray}
		&&C(\rho_{p_{0},q_{0}}) \geq \Phi_{W_{(n)}} (\rho_{p_{0},q_{0}}) \geq C_{min}(\rho_{p_{0},q_{0}})
	\end{eqnarray}
	For sufficiently large value of $n$, the lower bound of the concurrence of the state $\rho_{p_{0},q_{0}}$ is $0.0976311$, which is again much better than $C_{min}(\rho_{p_{0},q_{0}})$.
\end{example}
\section{Conclusion}
To summarize, firstly we have constructed different witness operators to detect NPTES and PPTES. Our main contribution in this chapter is that the constructed witness operator is used to improve the lower bound of the concurrence for any arbitrary entangled state in $d_{1} \otimes d_{2} (d_{1}\leq d_{2})$ dimensional system. In particular, our result will be useful to estimate the lower bound of the concurrence of the BES in higher dimensional systems. This study is important because the exact expression of the concurrence for higher dimensional entangled states is not known and thus one has to depend on the lower bound of it. Also, the witness operator $W_{(n)}$ defined in (\ref{witn1}) is proved to be very useful in detecting not only the NPTES but also in the detection of many BES. Here we have illustrated the above facts with a few examples but one may use $W_{(n)}$ to detect other BES and their lower bound of the concurrence can also be estimated.\\
\begin{center}
	****************
\end{center}

\chapter{Detection of Bipartite and Genuine Three-qubit Entanglement Through Realignment Operation}\label{ch6}
{\small \emph {In an honest search for knowledge, you quite often have to abide by ignorance for an indefinite period. \\
		-Erwin Schrodinger}}
\vspace{0.5cm}
\noindent \hrule
\noindent \emph{ In this chapter\;\footnote { This chapter is based on the published research article, ``S. Aggarwal, S. Adhikari, \emph{Theoretical proposal for the experimental realization of realignment operation}, arXiv:2307.07952  (2023)"},  we have developed techniques in the detection of bipartite and tripartite genuine entangled states that may be realized in the current technology. We address the problem of experimental realization of realignment operation and to achieve this aim, we propose a theoretical proposal for the same. We first show that realignment operation on a bipartite state can be expressed in terms of the partial transposition operation along with column interchange operations. We observed that these column interchange operations form a permutation matrix which can be implemented via SWAP operator acting on the density matrix. This mathematical framework is used to exactly determine the first moment of the realignment matrix experimentally. This has been done by showing that the first moment of the realignment matrix can be expressed as the expectation value of a SWAP operator which indicates the possibility of its measurement. Further, we have provided an estimation of the higher order realigned moments in terms of the first realigned moment and thus pave the way to estimate the higher order moments experimentally. Next, we develop moments based entanglement detection criteria that detect PPTES as well as NPTES. Moreover, we define a new matrix realignment operation for three-qubit states and have devised entanglement criteria that detect three-qubit genuine entangled states.} 
\noindent 
\newpage
\section{Introduction}\label{sec6.1}
Entanglement is a fundamental property of quantum systems, promising to power the near future of quantum information processing and quantum computing \cite{niel,guhnerev,horodecki9}. Entangled states reveal some peculiar properties of the quantum world that may be used to do many interesting quantum information processing tasks which may be done better than that with the separable states alone \cite{pshor, bennett1n,harrow}.
Hence, it is important to devise implementable and efficient methods to detect bipartite as well as multipartite entanglement. PPT criterion and realignment criterion can be considered as the most important entanglement detection criterion \cite{arul2013}.
However, realignment criteria is a more powerful and operational criterion in the sense that it can detect BES as well as NPTES \cite{rud2005, rudolph2000, rudolph2004}, while PPT criterion detects only NPTES. 
Both the operations viz. partial transposition and realignment operation are unphysical and are based on the permutation of the elements of density matrix \cite{horo2006}. On one hand, the explicit form of partial transposition has been studied extensively due to which this operation has been exploited to a great extent for detection, quantification, and characterization of entanglement but on the other hand, the physical interpretation of the realignment operation is still not known which makes the study of this matrix operation more intriguing. Thus, this may be the possible reason that the permutation involved in the realignment operation has not been studied much. Some investigation shows that the realignment operation is a global operation acting on a bipartite state, unlike the partial transposition that acts on a subsystem. This makes the physical implementation of realignment operation challenging and could be another possible reason for it to be a comparatively less explored density matrix operation.\\
In the past decades, it has been shown that the moments of the density matrix contain much information about the state and measurement of these moments is relatively easy \cite{sjvan,johnston}. To avoid state tomography, several moment based methods have been proposed in the literature \cite{elben,neven,guhne2021,liu}. Moments act as practical tools in estimating some crucial properties of quantum systems. A method for experimentally measuring the moments by measuring the purity of state is given in \cite{elben19, brydges}.  A scheme based on the random unitary evolution and local measurements on single-copy quantum states given in \cite{yzhou} is shown to be more practical compared with former methods based on collective measurements on many copies of the identical state. A method to measure the $k$ partial moments using $k$ copies of the state and SWAP operators has been discussed in \cite{kwek2002,sougato, ha,cai}. However, the measurement of moments of the realignment matrix may be considered an open problem. This has provided a strong motivation to develop an experimentally feasible method to estimate the moments of the realigned matrix.\\
In this chapter, our aim is to study the following: (i) the physical nature of permutations involved in the realignment operation and (ii) to devise physically implementable entanglement detection criteria based on realignment operation. \\
Researchers have shown that the moments of the realigned matrix can be used to detect bound entangled states and hence estimation of these moments is an important task \cite{tzhang, liu}. However, it is not known yet how to estimate the realigned moments so we have considered this problem and shown that these are expressible in terms of the expectation value of partial transposition of a permutation operator concerning any $d\otimes d$ dimensional bipartite quantum state. We have made an effort to explore a method that enables the experimental realization of the realignment operator. We have achieved this by constructing the realignment matrix with the help of SWAP operator and partial transposition operation. We have shown that the first moment of the realigned matrix may be determined exactly in an experiment. Also, it may be used in an experiment to detect and estimate the amount of entanglement in a $d\otimes d$ dimensional system. We also provided a method that estimates the $k$th moment of the hermitian matrix $[R(\rho_{AB})]^{\dagger} R(\rho_{AB})$ using its first moment. This helps in deriving the entanglement detection criterion in $d_1\otimes d_2$ dimensional system that may be experimentally feasible. Moreover, we have generalized the matrix realignment operation in a three-qubit system. We will show that the newly defined realignment operator along with the SPA-PT map may be used to develop an entanglement detection criteria that has the potential to detect three-qubit genuine entangled states.
\section{Estimation of First Moment of Realigned Matrix}
\noindent It is known that measuring the moments of the density matrix is practically possible using $m$ copies of the state and controlled swap operations \cite{ha}. Exploiting the similar approach, Gray et al. have shown that the moments of the partially transposed density matrix can also be measured using swap operators \cite{sougato}. For a state $\rho_{AB}$, it has been shown that the $k$th order moment of the partial transposed matrix $\rho_{AB}^{T_{B}}$  can be measured using the individual constituents of the $k$ copies of the state, i.e., $\rho_{AB}^{\otimes k}= \otimes_{c=1}^k \rho_{A_cB_c}$ \cite{sougato}. The idea is to write the matrix power as an expectation of a partial transposition of a permutation operator for the state $\rho_{AB}$. This may be considered as an important step since partial transposition operation is not a physical operation but its $k$th order moment has been shown to be physically realizable. However, no such protocol has been discussed in the literature till now for measurement of the moments of the realignment matrix.\\		
We adopted the approach of Gray et al \cite{sougato} to show that the first moment of $R(\rho_{AB})$, i.e., $Tr[R(\rho_{AB})]$ can be expressed as the expectation value of the partial transposition of a permutation operator with respect to the state $\rho_{AB}$ and thus can be experimentally measurable. Later, we derive a few results based on the first moment of $R(\rho_{AB})$. Finally, we present separability criteria and show that it is equivalent to the original realignment criteria for the class of Schmidt-symmetric states. 
\subsection{Realigned matrix as the product of swap operator and partial transposition operation}
In the literature, it has been found that realignment operation may be defined as a way of arranging the elements of the matrix in block matrix form, which is different from the partial transposition operation. We will show here that the realigned matrix of any $d\otimes d$ dimensional bipartite state $\rho_{AB}$ may be obtained by the following two actions: (i) interchange the two columns of the density matrix $\rho_{AB}$ and (ii) apply the partial transposition operation on a subsystem. This may be illustrated by the following figure (Fig.1).
\begin{figure}[h!]
	\begin{center}
		\begin{tikzpicture}[node distance=1.5cm]
			\tikzstyle{rho} =[]
			\node (in0) [rho] {$\rho_{AB}$};
			\tikzstyle{P} = [rectangle, rounded corners, minimum width=1cm, minimum height=1cm,text centered,  draw=black, fill=blue!8]
			\node (start) [P, xshift=1.4cm] {$P$};
			
			\tikzstyle{Ptb} = [rectangle, rounded corners, minimum width=1cm, minimum height=1cm,text centered, draw=black, fill=red!10]
			\node (in1) [Ptb, xshift=3.7cm] {$T_B$};
			
			\tikzstyle{P2} = [rectangle, rounded corners, minimum width=1cm, minimum height=1cm,text centered, draw=black, fill=blue!8]
			\node (in2) [P2, xshift=6.5cm] {$P$};
			
			\tikzstyle{rhor} =[]
			\node (in3) [rhor, xshift=7.9cm] {$R(\rho_{AB})$};
			
			\tikzstyle{mid1} =[]
			\node (in4) [mid1, xshift=2.6cm, yshift=0.3cm] {$\rho_{AB}P$};
			
			\tikzstyle{mid2} =[]
			\node (in5) [mid2, xshift=5.1cm, yshift=0.3cm] {$(\rho_{AB}P)^{T_B}$};
			
			\tikzstyle{arrow} = [thick,->,>=stealth]
			\draw [arrow] (in0) -- (start);
			\draw [arrow] (start) -- (in1);
			\draw [arrow] (in1) -- (in2);
			\draw [arrow] (in2) -- (in3);
			<TikZ code>
		\end{tikzpicture}	\label{pic1}
		\caption{Schematic diagram representing the realignment operation in terms of partial transposition and permutation operator}
	\end{center}
\end{figure}
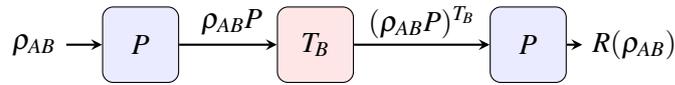
To understand these actions (i) and (ii), let us recall a two-qubit bipartite state described by the density operator $\rho_{AB}$ given in (\ref{rho}). After performing the first action (i), i.e., interchanging the second and third columns of $\rho_{AB}$, it reduces to
\begin{eqnarray}
	\rho_{AB}^{(12)}=
	\begin{pmatrix}
		\rho_{11} & \rho_{13} & \rho_{12} & \rho_{14}\\
		\rho_{12}^* & \rho_{23} & \rho_{22} & \rho_{24}\\
		\rho_{13}^* & \rho_{33} & \rho_{23}^* & \rho_{34}\\
		\rho_{14}^* & \rho_{34}^* & \rho_{24}^* & \rho_{44}\\
	\end{pmatrix}
	\label{rho12}
\end{eqnarray}
Applying the second action (ii) on $\rho_{AB}^{(12)}$, i.e., performing the partial transposition operation on qubit $B$, gives
\begin{eqnarray}
	(\rho_{AB}^{(12)})^{T_{B}}=
	\begin{pmatrix}
		\rho_{11} & \rho_{12}^* & \rho_{12} & \rho_{22}\\
		\rho_{13} & \rho_{23} & \rho_{14} & \rho_{24}\\
		\rho_{13}^* & \rho_{14}^* & \rho_{23}^* & \rho_{24}^*\\
		\rho_{33} & \rho_{34}^* & \rho_{34} & \rho_{44}\\
	\end{pmatrix}
	\label{rhopt}
\end{eqnarray}
Again applying the first action (interchanging 2nd and 3rd columns) on $(\rho_{AB}^{(12)})^{T_{B}}$,
it can be verified that the resulting transformed matrix is nothing but the realigned matrix $R(\rho_{AB})$ given in (\ref{Rrho}).  Therefore, we have shown that realignment operation, which is a non-physical operation may be expressed in terms of another non-physical operation (i.e., partial transposition operation) together with the permutation operation. It may be observed that the interchange of two columns may be realized in an experiment using a swap operator. Thus, if $\rho_{AB}$ denote a $d \otimes d$ dimensional state and $R(\rho_{AB})$ denotes the matrix obtained after applying realignment operation to $\rho_{AB}$, then the realigned matrix $R(\rho_{AB})$ can be expressed as
\begin{equation}
	R(\rho_{AB}) = (\rho_{AB} P)^{T_B}P  \label{rhorp}
\end{equation}
where $T_B$ denotes the partial transposition operation with respect to the subsystem $B$ and $P$ denotes the SWAP operator in the computational basis, which is given by
\begin{eqnarray}
	P	= \sum_{i,j=0}^{d-1} |ij\rangle \langle ji| \label{swap}
\end{eqnarray} The SWAP operator $P$ has the following properties:
\begin{enumerate}
	\item $P$ is a unitary operator with $P^2 = I_{d^2}$, where $I_{d^2}$  denotes the identity operator in $d \otimes d$ dimensional system.
	\item $Tr[P] = Tr[P^{T_B}] = d$
	\item  $PP^{T_B} =  P^{T_B}$ where 
	$P^{T_B} = \sum_{i,j=0}^{d-1} |ii\rangle \langle jj|$
	denote the partial transposition of the SWAP operator.
\end{enumerate}

\subsection{Estimation of the first moment of $R(\rho_{AB})$}
\noindent We are now in a position to estimate the first moment of the realigned matrix $R(\rho_{AB})$, where $\rho_{AB}$ denote the $d\otimes d$ dimensional bipartite system.
{\theorem \label{thm-6.1} Let $\rho_{AB}$ be a $d \otimes d$ dimensional bipartite state. If $R(\rho_{AB})$ denote its realigned matrix then the first moment of $R(\rho_{AB})$ is given by
	\begin{eqnarray}
		t_1 :=	Tr[R(\rho_{AB})] = Tr[\rho_{AB} P^{T_B}] \label{thm1}
	\end{eqnarray}
	where $P$ is the SWAP operator defined in (\ref{swap}).}
\begin{proof}
	The first moment of $R(\rho_{AB})$ may be defined as
	\begin{eqnarray}
		t_1 :=	Tr[R(\rho_{AB})] 
		\label{def1}
	\end{eqnarray}
	Using (\ref{def1}) and the expression of $R(\rho_{AB})$ given in (\ref{rhorp}), we get 
	\begin{eqnarray}
		t_1 :=	Tr[R(\rho_{AB})] &=& Tr[(\rho_{AB} P)^{T_B}P] \nonumber\\
		&=& Tr[(\rho_{AB}P) P^{T_B}] \label{e1}\nonumber\\
		&=& Tr[\rho_{AB} (P P^{T_B})]\nonumber\\
		&=& Tr[\rho_{AB} P^{T_B}] 
		\label{e2}
	\end{eqnarray}
	The equality in the second line follows from the relation $Tr[X Y^{T_B}] = Tr[X^{T_B} Y]$ and the last step follows from the fact that $P  P^{T_B} =  P^{T_B}$.
\end{proof}
Therefore, we may conclude that the first moment of $R(\rho_{AB})$, i.e., $t_1 = Tr[R(\rho_{AB})]$ can be measured experimentally since it can be expressed as the expectation value of the partial transposition of the permutation operator with respect to any $d \otimes d$ dimensional bipartite state $\rho_{AB}$.
\section{Usefulness of the First Moment of Realigned Matrix}	
\noindent We now show that the first moment of the realigned matrix may be used in the detection and quantification of entanglement.
\subsection{Detection of $d \otimes d$ dimensional entangled state}   	
\noindent The realignment criterion can be considered as a strong criterion as it detects both PPTES and NPTES. However, it is not known whether this criterion may be implemented in the laboratory or not. To address this question, we show that the realignment criterion may be re-expressed in terms of the first moment of the realigned matrix and thus it may be realized in an experiment.
{\theorem \label{thm-6.2} If a bipartite state $\rho_{AB}$ is separable, then 
	\begin{eqnarray}
		t_{1}:=Tr[\rho_{AB} P^{T_B}] \leq 1
\end{eqnarray}}
where the SWAP operator $P$ is given in (\ref{swap}).
\begin{proof} Let us start with the statement of the realignment criterion. It state that if a bipartite state $\rho_{AB}$ is separable then 
	\begin{eqnarray}
		||R(\rho_{AB})||_1 \leq 1
		\label{realigncr}
	\end{eqnarray}
	where $||.||_1$ defines the trace norm.\\
	Using Result \ref{res-trnorm} and (\ref{realigncr}), we get
	\begin{eqnarray}
		Tr[R(\rho_{AB})] \leq ||R(\rho_{AB})||_1 \leq 1
		\label{ineq301}
	\end{eqnarray} 
	From Theorem \ref{thm-6.1}, we have $Tr[\rho P^{T_B}] = Tr[R(\rho_{AB})]$.
	Therefore, the theorem is proved using (\ref{ineq301}) 
\end{proof}
{\remark For the class of Schmidt-symmetric states described by the density operator $\sigma_{AB}$ defined in \cite{hertz}, we have $Tr[R(\sigma_{AB})] = ||R(\sigma_{AB})||_1$. Therefore, the criteria given in $Theorem-\ref{thm-6.2}$ will become equivalent to the original realignment criteria. Thus, the result given in $Theorem-\ref{thm-6.2}$ will be strong when considering the class of Schmidt-symmetric states.}
\subsection{Quantification of $d \otimes d$ dimensional entangled state}
In the case of a bipartite system, concurrence is a well-known measure of entanglement. An elegant formula of concurrence for two-qubit states has been given by Wootters \cite{wootters}. However, the closed formula of concurrence in a higher dimensional quantum system is not known yet. To overcome this situation, researchers have proposed several analytical lower bounds of concurrence for mixed bipartite states in higher dimensions \cite{kchen,mintert2}. Also, the lower bound of the concurrence has been studied through the witness operator in the literature \cite{mintert2}. Witness operator may be implementable in the experiment but it is not known that the lower bound of the concurrence obtained through witness operator always gives better results than the bound obtained by Chen et al. \cite{kchen}. Thus, there may exist a $d \otimes d$ dimensional bipartite entangled state for which the Chen et al. bound is better than the bound obtained through the witness operator. Hence it is important to review the lower bound of the concurrence and express it in terms of the first moment of the realigned matrix. If we are able to do this then the lower bound of the concurrence given by Chen et al. \cite{kchen} may be realized in the experiment. To achieve our aim, let us rewrite the lower bound of the concurrence for arbitrary $d \otimes d$-dimensional system as \cite{kchen}
\begin{eqnarray}
	C(\rho_{AB}) \geq \sqrt{\frac{2}{d(d - 1)}}\left(max(||\rho_{AB}^{T_B}||_1, ||R(\rho_{AB})||_1)-1\right) \label{albe}
\end{eqnarray}
Using the property of trace norm, we have the following inequality 
\begin{eqnarray}
	Tr[\rho_{AB}^{T_B}P] \leq ||\rho_{AB}^{T_B}P||_1 \leq ||\rho_{AB}^{T_B}||_1 ||P||_1 = d^2 ||\rho_{AB}^{T_B}||_1
	\label{clb1}
\end{eqnarray}
The inequality (\ref{clb1}) may be re-expressed in the simplified form as
\begin{eqnarray}
	||\rho_{AB}^{T_B}||_1 \geq \frac{Tr[\rho_{AB}^{T_B}P]}{d^2}=\frac{Tr[\rho_{AB}P^{T_{B}}]}{d^2}
	\label{clb2}
\end{eqnarray}
Thus, we have $ max(||\rho_{AB}^{T_B}||_1, ||R(\rho_{AB})||_1) \geq max(\frac{Tr[\rho_{AB}P^{T_{B}}]}{d^2},Tr[\rho_{AB}P^{T_{B}}]) = Tr[\rho_{AB}P^{T_{B}}]$. Hence, the lower bound of concurrence can be derived as 
\begin{eqnarray}
	C(\rho_{AB}) \geq \sqrt{\frac{2}{d(d - 1)}}(Tr[\rho_{AB}P^{T_B}] - 1)
	\label{lbcon}
\end{eqnarray}
From (\ref{lbcon}), it is clear that the lower bound of the concurrence of any $d \otimes d$ dimensional entangled system $\rho_{AB}$ may be estimated through the first moment of the realigned matrix of $\rho_{AB}$. Since the value of  $Tr[R(\rho_{AB})]=Tr[\rho_{AB}P^{T_B}]$ may be estimated practically so the estimation of the lower bound of the concurrence may be possible in an experiment.  
\section{Estimation of the Moments of the Hermitian Matrix $[R(\rho_{AB})]^\dagger R(\rho_{AB})$}
We may observe that the results obtained in the previous section may be used to detect $d\otimes d$ dimensional entangled states. But to deal with the more general case, i.e., for $d_1\otimes d_2$ dimensional system, let us first define the $k$th moment of Hermitian positive semi-definite operator $[R(\rho_{AB})]^\dagger R(\rho_{AB})$, which may be defined as
\begin{eqnarray}
	T_k = Tr[([R(\rho_{AB})]^\dagger R(\rho_{AB}))^k]
\end{eqnarray} 
As $R(\rho_{AB})$ is not Hermitian, so the moments of the operator $[R(\rho_{AB})]^\dagger R(\rho_{AB})$ has been considered. Furthermore, since the physical implementation of the moments of the realignment matrix is not known, we have studied the representation of the moments of the matrix  $[R(\rho_{AB})]^\dagger R(\rho_{AB})$ that may be implementable in the experiment. \\
To achieve the above task, we derive a lower bound of the first moment of $[R(\rho_{AB})]^\dagger R(\rho_{AB})$ in terms of the first moment of $R(\rho_{AB})$ that has been shown to be experimentally measurable. Then we estimate the $k$th moment of $[R(\rho_{AB})]^\dagger R(\rho_{AB})$ that may also be expressed as the product of the maximum singular value of $R(\rho_{AB})$ and the first moment of $[R(\rho_{AB})]^\dagger R(\rho_{AB})$. Finally, we present entanglement detection criteria based on these moments and provide a few examples to illustrate that the derived criteria can detect BES as well as NPTES.
\subsection{Estimation of the first moment and kth moment of $[R(\rho_{AB})]^\dagger R(\rho_{AB})$ }
Let us consider a $d_1\otimes d_2$ dimensional quantum state described by the density operator $\rho_{AB}$ and the realigned matrix of it is denoted by $R(\rho_{AB})$. Assume that $R(\rho_{AB})$ has $k$ non-zero singular values that may be arranged in an ascending order as $\sigma_1 \geq \sigma_2 \geq . . . \geq \sigma_k$, where $1\leq k \leq min\{d_1^2,d_2^2\}$.\\ 
{\lemma \label{lemma-6.1} The first moment $T_{1}$ of the Hermitian positive semi-definite matrix $[R(\rho_{AB})]^\dagger R(\rho_{AB})$ may be bounded below by the following bound:
	\begin{eqnarray}
		T_1 \geq \frac{(Tr[\rho_{AB} P^{T_B}])^2}{k} \label{T1lb}
	\end{eqnarray}
	where $k$ denotes the number of non-zero singular values of $R(\rho_{AB})$.}
\begin{proof}
	Applying Result-\ref{res-trnorm} and using $Tr[R(\rho_{AB})]=Tr[\rho_{AB}^{T_{B}}P]$, we have
	\begin{eqnarray}
		Tr[\rho^{T_B} P] = Tr[R(\rho_{AB})] \leq ||R(\rho_{AB})||_1 \label{r11}
	\end{eqnarray}
	Further, using Result-\ref{res-n1n2} with $R(\rho_{AB})$, we have
	\begin{eqnarray}
		||R(\rho_{AB})||_1 \leq \sqrt{k}||R(\rho_{AB})||_2 = \sqrt{k \sum_{i=1}^k \sigma_i^2 (R(\rho_{AB}))}= \sqrt{kT_1} \label{r12}
	\end{eqnarray}
	where $\sigma_i$ denotes the $ith$ singular value of $R(\rho_{AB})$ and $T_{1}=\sum_{i=1}^{k}\sigma_{i}^{2}(R(\rho))$.\\
	From (\ref{r11}) and (\ref{r12}), we have
	\begin{eqnarray}
		Tr[\rho^{T_B} P] \leq \sqrt{kT_1}
		\label{ineq42}
	\end{eqnarray}
	Simplifying (\ref{ineq42}), we prove the required lemma.
\end{proof}
Now, we are in the position to derive the lower and upper bound of the kth moment of the hermitian matrix $[R(\rho_{AB})]^\dagger R(\rho_{AB})$ in terms of $T_{1}$ and it may be expressed by the following theorem.
{\theorem \label{thm-6.3} The lower and upper bound of the $k$th moment of  $[R(\rho_{AB})]^\dagger R(\rho_{AB})$ may be given by
	\begin{eqnarray}
		(\sigma^2_{min}(R(\rho_{AB})))^{k-1} T_1 \leq T_k \leq 	(\sigma^2_{max}(R(\rho_{AB})))^{k-1} T_1
\end{eqnarray}}
\begin{proof}
	Since $[R(\rho_{AB})]^\dagger R(\rho_{AB})$ is Hermitian so we can apply Result-\ref{res-lmintr} on $[R(\rho_{AB})]^\dagger R(\rho_{AB})$. Therefore, using (\ref{jb}), the lower and upper bound of the second moment of the matrix $[R(\rho_{AB})]^\dagger R(\rho_{AB})$ may be obtained as follows.
	\begin{eqnarray}
		&&	\lambda_{min}([R(\rho_{AB})]^\dagger R(\rho_{AB})) Tr[[R(\rho_{AB})]^\dagger R(\rho_{AB})]  \leq Tr[([R(\rho_{AB})]^\dagger R(\rho_{AB}))^2] \nonumber\\&&\leq  \lambda_{max}([R(\rho_{AB})]^\dagger R(\rho_{AB})) Tr[[R(\rho_{AB})]^\dagger R(\rho_{AB})]
	\end{eqnarray}
	Since, $\lambda_{i} ([R(\rho_{AB})]^\dagger R(\rho_{AB})) = \sigma^2_i (R(\rho_{AB}))$, the above inequalities can be re-expressed as 
	\begin{eqnarray}
		\sigma_{min}^2(R(\rho_{AB})) T_1 \leq T_2 \leq  \sigma_{max}^2(R(\rho_{AB}))  T_1 
	\end{eqnarray}
	where $T_{1}=Tr[[R(\rho_{AB})]^\dagger R(\rho_{AB})]$.\\
	Following the same procedure for the $k$th moment, we deduce that the following inequality holds for all $1 < k \leq n$.
	\begin{eqnarray}
		\sigma_{min}^2(R(\rho_{AB})) T_{k-1} \leq T_k \leq  \sigma_{max}^2(R(\rho_{AB}))  T_{k-1} 
	\end{eqnarray}
	By the principle of mathematical induction, we have
	\begin{eqnarray}
		(\sigma^2_{min}(R(\rho_{AB})))^{k-1}  T_1 &\leq&	\sigma^2_{min}(R(\rho_{AB})) T_{k-1} \leq T_k \nonumber\\ &\leq&  \sigma^2_{max}(R(\rho_{AB}))  T_{k-1} \leq (\sigma^2_{max}(R(\rho_{AB})))^{k-1}  T_1 
	\end{eqnarray}
	Hence proved.
\end{proof}
\subsection{Entanglement detection criteria for $d_1 \otimes d_2$ dimensional bipartite system}
We now derive an entanglement detection criterion for $d_1 \otimes d_2$ dimensional bipartite system. The required entanglement detection criterion is derived in terms of the inequality that involves the $k$th moment $T_{k}$. The kth moment $T_{k}$ may be estimated in terms of the function of the first moment $t_{1}$ of the realigned matrix. Since $t_{1}=Tr[\rho_{AB} P^{T_B}]$ is an experimentally realizable quantity, the $d_1 \otimes d_2$ dimensional entangled state may be detected in an experiment.
{\theorem \label{thm-6.4} Let $\rho_{AB}$ be any bipartite state in $d_1 \otimes d_2$ dimensional Hilbert space. Consider the $k$ non-zero singular values of the realigned matrix $R(\rho_{AB})$ that may be denoted as $\sigma_1, \sigma_2, \ldots \sigma_k$ with $1\leq k \leq min\{d_1^2,d_2^2\}$. If $\rho_{AB}$ is separable then the following inequality holds:
	\begin{equation}
		\frac{t_{1}^2}{k} \leq	1 - k(k-1) {D_k}^{1/k} \label{eq-thm6.4}
	\end{equation}
	where $D_k = \prod_{i=1}^{k} \sigma_i^2(R(\rho_{AB}))$.}
\begin{proof} Let $\rho_{AB}$ be any arbitrary separable state in $d_1 \otimes d_2$ dimensional system. Using the $R$-moment criterion given in $Theorem-\ref{thm4.1}$, we have
	\begin{eqnarray}
		k(k - 1) D_k^{1/k} + T_1 \leq 1 \label{rmoment1}
	\end{eqnarray}
	where the quantity $D_k$ can be estimated in terms of moments using the bounds given in $Lemma-\ref{lemma-6.1}$ and $Theorem-\ref{thm-6.3}$. The inequality given in (\ref{rmoment1}) can be rewritten as
	\begin{eqnarray}
		T_1 \leq 1 - k(k-1) D_k^{1/k} \label{rmoment}
	\end{eqnarray}
	Thus, the inequality in (\ref{eq-thm6.4}) follows using $Lemma-\ref{lemma-6.1}$. 
\end{proof}

\subsection{Examples}
Now we illustrate the efficiency of the criteria given in $Theorem-\ref{thm-6.4}$ using the following examples.
{\example Let us consider the following class of $3 \otimes 3$ PPTES \cite{hakye}.
	\begin{eqnarray}
		\rho_{\epsilon} = \frac{1}{N}
		\begin{pmatrix}
			1 & 0 & 0&0&1&0&0&0 & 1\\
			0 & 1/\epsilon^2 &0&1&0&0&0 &0&0\\
			0 & 0&\epsilon^2&0&0&0&1 &0&0\\
			0 & 1&0&\epsilon^2&0&0&0 &0&0\\
			1 & 0&0&0&1&0 &0 &0&1\\
			0 & 0&0&0&0 &1/\epsilon^2&0 &1&0\\
			0 & 0&1&0&0&0&1/ \epsilon^2 &0&0\\
			0 & 0&0&0&0&1&0 &\epsilon^2&0\\
			1 & 0&0&0&1&0&0 &0& 1\\
		\end{pmatrix}
		\label{rhoep}
	\end{eqnarray} 
	where $\epsilon>0$, $\epsilon \neq 1$ and $N=3(1+\epsilon^2+\frac{1}{\epsilon^2})$ is the normalization constant.\\
	Matrix rank of $R(\rho_{\epsilon})$ is 8. 
	Inequality in (\ref{eq-thm6.4}) is violated for $\epsilon \in [0.622496, 0.780349] \cup [1.281481, 1.606435]$ and hence the PPTES lying in this region are detected by the criteria given in $Theorem-\ref{thm-6.4}$.
	\example  Consider the class of NPTES in $3\otimes 3$ dimensional system, which is defined in (\ref{rhoa_npt})
	\begin{eqnarray}
		\rho_a =
		\begin{pmatrix}
			\frac{1-a}{2} & 0 & 0&0&0&0&0&0 & \frac{-11}{50}\\
			0 & 0&0&0&0&0&0 &0&0\\
			0 & 0&0&0&0&0&0 &0&0\\
			0 & 0&0&0&0&0&0 &0&0\\
			0 & 0&0&0&\frac{1}{2} - a&  \frac{-11}{50} &0 &0&0\\
			0 & 0&0&0&  \frac{-11}{50} &a&0 &0&0\\
			0 & 0&0&0&0&0&0 &0&0\\
			0 & 0&0&0&0&0&0 &0&0\\
			\frac{-11}{50} & 0&0&0&0&0&0 &0& \frac{a}{2}
		\end{pmatrix};\; \frac{1}{50} (25 - \sqrt{141}) \leq a \leq \frac{1}{100}(25 + \sqrt{141})
	\end{eqnarray} 
	The inequality in (\ref{eq-thm6.4}) is violated for all $a \in [\frac{1}{50} (25 - \sqrt{141}), \frac{1}{100}(25 + \sqrt{141})]$. Hence, the above state is detected by the entanglement detection criteria given in $Theorem-\ref{thm-6.4}$. }
\section{Three-qubit System}
Let us start with any three-qubit state described by the density operator $\rho_{ABC}$, which may be expressed as
\begin{eqnarray}
	\rho_{ABC} = \frac{1}{8}[
	I \otimes I \otimes I + \vec{l}. \vec{\sigma} \otimes I \otimes I + I \otimes \vec{m}.\vec{\sigma} \otimes I  + I \otimes I \otimes \vec{n}.\vec{\sigma} + \vec{u}.\vec{\sigma}  \otimes \vec{v}.\vec{\sigma} \otimes I \nonumber\\ + \vec{u}.\vec{\sigma}  \otimes I \otimes \vec{w}.\vec{\sigma} + I \otimes \vec{v}.\vec{\sigma}  \otimes \vec{w}.\vec{\sigma} + \sum_{i,j,k = x,y,z}t_{ijk} \sigma_i \sigma_j \sigma_k]
\end{eqnarray}
with 
\begin{eqnarray}
	&&	l_i = Tr[\rho_{ABC} (\sigma_i \otimes I \otimes I)], \;\; 	m_i = Tr[\rho_{ABC} (I \otimes  \sigma_i \otimes I)];\;\; 	n_i = Tr[\rho_{ABC} ( I \otimes I \otimes\sigma_i )]\nonumber\\
	&&	u_iv_i =Tr[\rho_{ABC} (\sigma_i \otimes \sigma_i \otimes I)], \;\; v_iw_i =Tr[\rho_{ABC} (I \otimes  \sigma_i \otimes \sigma_i)];\;\;  u_iw_i =Tr[\rho_{ABC} ( \sigma_i \otimes I  \otimes \sigma_i)]\; \text{for}\; i=x,y,z\nonumber\\
	&&	t_{ijk} = Tr[\rho_{ABC} (\sigma_i \otimes \sigma_j\otimes \sigma_k)]\;\; \text{for}\; (i,j,k=x,y,z)
\end{eqnarray}
A density matrix $\rho_{ABC}$ on the Hilbert space $\mathcal{H}_A\otimes \mathcal{H}_B \otimes \mathcal{H}_C$ ($\mathcal{H}_i,i=A,B,C$, denotes the Hilbert spaces of dimension 2) is fully separable if it can be written as 
\begin{eqnarray}
	\rho_{ABC} = \sum_i p_i \rho_i^{A} \otimes \rho_i^{B} \otimes \rho_i^{C} \;\text{with}\; p_i \geq 0 \; \text{and}\;\sum p_i = 1 \nonumber
\end{eqnarray}
If a state is not fully separable then the state either belongs to biseparable class or the class of genuine entangled states. A three-qubit mixed state is said to be a PPTES or BES if it is genuinely entangled and its partial transposition on any qubit is positive. As we know the partial transposition operation is not completely positive so the SPA-PT map for the three-qubit system has been studied to classify its different SLOCC inequivalent classes \cite{kumari1}. Here, we apply the SPA-PT map on the three-qubit system for the detection of genuine entangled states.\\ 
Let $\rho_{ABC}$ be a three-qubit state and  $\rho_{ABC}^{T_A}$, $\rho_{ABC}^{T_B}$,  $\rho_{ABC}^{T_C}$ denote the partial transposition of $\rho_{ABC}$ with respect to the subsystem $A$, $B$ and $C$ respectively. We now use the structural physical approximation of partial transposition (SPA-PT) of a single qubit in a three-qubit system. The SPA-PT map on the qubit $X$ ($X=A,B,C)$ of
the three-qubit state $\rho_{ABC}$, may be defined as 
\begin{eqnarray}
	\widetilde{\rho^{T_X}_{ABC}} = \frac{1}{10} (I_2 \otimes I_2 \otimes I_2) + \frac{1}{5}\rho^{T_X}_{ABC},~~X=A,B,C \label{spapt}
\end{eqnarray}
where $I_2$ denotes the $2 \times 2$ identity matrix.
The SPA-PT map defined above is completely positive and hence may be implemented in an experiment \cite{kumari1}.
The statement of the necessary conditions for the separability and biseparability of a three-qubit state is given by \cite{kumari1}.
\begin{enumerate}
	\item[(a)]  If a tripartite state $\rho_{ABC}$ is separable or biseparable in the $A|BC$ cut, then 
	\begin{eqnarray}
		\lambda_{min} (\widetilde{\rho^{T_A}_{ABC}}) \geq \frac{1}{10} \label{Anu1}
	\end{eqnarray}
	\item [(b)] If $\rho_{ABC}$ is separable or biseparable in the $B|AC$ cut, then 
	\begin{eqnarray}
		\lambda_{min} (\widetilde{\rho^{T_B}_{ABC}}) \geq \frac{1}{10} \label{anu2}
	\end{eqnarray}
	\item [(c)] If $\rho_{ABC}$ is separable or biseparable in the $C|AB$ cut, then 
	\begin{eqnarray}
		\lambda_{min} (\widetilde{\rho^{T_C}_{ABC}}) \geq \frac{1}{10} \label{anu3}
	\end{eqnarray}
\end{enumerate}	
It may be noted that the violation of the inequalities (\ref{Anu1}), (\ref{anu2}), (\ref{anu3}) implies that the state $\rho_{ABC}$ is genuinely entangled.
\subsection{Realignment map on three-qubit system}
Here, we generalize the concept of a realignment map to three party system and thus define a realignment operation on $2 \otimes 2 \otimes 2$ system. If $\rho_{ABC}$ denotes a quantum state in $2 \otimes 2 \otimes 2$ dimensional Hilbert space, then $\rho_{ABC}$ can be expressed in block matrix form as
\begin{eqnarray}
	\rho_{ABC} =
	\begin{pmatrix}
		A &B&C&D\\
		B^*&E&F&G\\
		C^*&F^*&H&I\\
		D^*&G^*&I^*&J
	\end{pmatrix}
\end{eqnarray}
where $A, B, C, D, E, F, G, H, I$, and $J$ represent a $2 \times 2$ block matrices.\\
Then the realignment matrix of $\rho_{ABC}$ may be defined as
\begin{eqnarray}
	\mathcal{R}(\rho_{ABC}) =
	\begin{pmatrix}
		vec(A) & vec(B)\\
		vec(C) & vec(D)\\
		vec(B^*) & vec(E)\\
		vec(F) & vec(G)\\
		vec(C^*) & vec(F^*)\\
		vec(H) & vec(I)\\
		vec(D^*) & vec(G^*)\\
		vec(I^*) & vec(J)
	\end{pmatrix} 
	\label{rhortri}
\end{eqnarray}
For any matrix $Z=(z_{ij}) \in C^{m\times n}$,  $vec(Z)$ is defined as
\begin{eqnarray}
	vec(Z) = (z_{11}, . . . z_{m1}, z_{12}, . . .,z_{m2}, z_{1n}, . . .,z_{mn})
\end{eqnarray}
Let $Q$ be a permutation operator defined by $Q(e_i) = \pi{(e_i)}$ where $\pi$ is a permutation on $S_8$ given by $\pi(1,2,3,4,5,6,7,8) = (1,3,5,7,2,4,6,8)$ and $e_i$ denotes a $8 \times 1$ column vector with $i$th entry equal to 1 and other entries equal to zero. Thus, the permutation operator $Q$ may be expressed in matrix form as
\begin{eqnarray}
	Q=
	\begin{pmatrix}
		1 & 0 & 0 & 0 & 0 & 0 & 0 & 0\\
		0 & 0 & 1 & 0 & 0 & 0 & 0 &0\\
		0 & 0 & 0 & 0 & 1 & 0 & 0&0\\
		0 & 0 & 0 & 0 & 0 & 0 & 1&0\\
		0 & 1 & 0 & 0 & 0 & 0 & 0 &0\\
		0 & 0 & 0 & 0 & 0 & 1 & 0 & 0\\
		0 & 0 & 0 & 0 & 0 & 0 & 0& 1
	\end{pmatrix}
\end{eqnarray}
Let $X =[X_{ij}]_{i,j=1}^4$ be any $8 \times 8$ matrix written in block matrix form with each block matrix $X_{ij}$ of size 2. Then $\tau$ is the matrix operation defined as $X^{\tau} = [X_{ij}^T]_{i,j=1}^4$ where $T$ denotes the usual transpose operation. Then the transformed matrix $\mathcal{R}(\rho_{ABC})$ obtained after applying the realignment operation on the state $\rho_{ABC}$ can be expressed as
\begin{eqnarray}
	\mathcal{R}(\rho_{ABC}) = (\rho_{ABC} Q)^{\tau} \label{rhorq}
\end{eqnarray}
\subsection{Detection of genuine entanglement in three-qubit system}
Now we use the realignment operation defined in (\ref{rhortri}) to derive a genuine entanglement detection criteria for three-qubit states. To develop this criterion, we use the SPA-PT map defined in (\ref{spapt}). This criterion involves the computation of the minimum eigenvalue of the Hermitian matrix $[\mathcal{R}(\rho_{ABC})]^{\dagger}\mathcal{R}(\rho_{ABC})$, where $\mathcal{R}$ given in (\ref{rhorq}) is the realignment operation defined on the three-qubit system. Later, we will show that our criteria have the potential to detect three-qubit entangled states.
{\theorem \label{thm-6.5} Let $\rho_{ABC}$ be a tripartite state in $2 \otimes 2 \otimes 2$ dimensional Hilbert space. If the state $\rho_{ABC}$ is biseparable in $A|BC$ cut or $B|AC$ cut or $C|AB$ cut respectively or if it is fully separable then 
	\begin{eqnarray}
		\lambda_{min}([\mathcal{R}(\rho_{ABC})]^{\dagger}\mathcal{R}(\rho_{ABC}) + \widetilde{\rho_{ABC}^{T_{X}}}) \geq \lambda_{min}([\mathcal{R}(\rho_{ABC})]^{\dagger}\mathcal{R}(\rho_{ABC})) + \frac{1}{10} \nonumber\\ \label{s1}
	\end{eqnarray}
	where $\widetilde{\rho_{ABC}^{T_X}}$ denotes the SPA-PT with respect to the subsystem $X=A,B,C$.}
\begin{proof}
	Let us assume that the three-qubit state $\rho_{ABC}$ represents either a separable or a biseparable state in the $A|BC$ cut. Since $[\mathcal{R}(\rho_{ABC})]^{\dagger}\mathcal{R}(\rho_{ABC})$ and $\rho_{ABC}^{T_A}$ are Hermitian, using (\ref{weyl}), we have
	\begin{eqnarray}
		\lambda_{min}([\mathcal{R}(\rho_{ABC})]^{\dagger}\mathcal{R}(\rho_{ABC}) + \widetilde{\rho_{ABC}^{T_A}}) &\geq& \lambda_{min}([\mathcal{R}(\rho_{ABC})]^{\dagger}\mathcal{R}(\rho_{ABC})) + \lambda_{min}( \widetilde{\rho_{ABC}^{T_A}}) \nonumber \\ &\geq& \lambda_{min}([\mathcal{R}(\rho_{ABC})]^{\dagger}\mathcal{R}(\rho_{ABC}))+ \frac{1}{10}
	\end{eqnarray}
	where the last inequality follows from (\ref{Anu1}). This proves $Theorem-\ref{thm-6.5}$ for $X=A$. Similarly, the proof follows for $X=B$ and $C$.
\end{proof}
{\corollary \label{cor6.1} A tripartite state $\rho_{ABC}$ is genuine entangled if the following inequality holds with respect to every subsystem $X = A, B, C$
	\begin{eqnarray}
		\lambda_{min}([\mathcal{R}(\rho_{ABC})]^{\dagger}\mathcal{R}(\rho_{ABC}) + \widetilde{\rho^{T_X}}) < \lambda_{min}([\mathcal{R}(\rho_{ABC})]^{\dagger}\mathcal{R}(\rho_{ABC})) + \frac{1}{10} \label{s4}
\end{eqnarray}}
The result stated in $Corollary-\ref{cor6.1}$ is strong in the sense that it can detect three-qubit genuine entangled states. 
\subsection{Example}
Consider the following family of entangled three-qubit states constructed from mutually unbiased basis as \cite{jaferi}
\begin{eqnarray}
	\rho_{ABC}(p_1,p_2,p_3) &=& \frac{1}{8} [I_2 \otimes I_2 \otimes I_2 + r_1 \sigma_z \otimes \sigma_z \otimes I_2 + r_2 \sigma_z \otimes I_2 \otimes \sigma_z + r_3 I_2 \otimes \sigma_z \otimes \sigma_z + r_4 \sigma_x \otimes \sigma_x \otimes \sigma_x \nonumber \\ &&+  r_5 \sigma_x \otimes \sigma_y \otimes \sigma_y  + r_6 \sigma_y \otimes \sigma_x \otimes \sigma_y + r_7 \sigma_y \otimes \sigma_y \otimes \sigma_x] \label{eq-pstate}
\end{eqnarray}
where $r_1 = r_2 = r_3 = p_1 + p_2 - p_3$, $r_4 = p_1 - p_2 + 3p_3$,  $r_5 = r_6 = r_7 = -p_1 + p_2 + p_3$ with $p_1 + p_2 + 3 p_3 = 1$ and $0\leq p_i \leq 1$ for $i=1,2,3$.
Using the separability criteria given in $Theorem- \ref{thm-6.5}$, we find that the inequality given in (\ref{s1}) is violated for each subsystem $X=A,B,C$ for some values of $(p_1, p_3)$, which is shown in Fig-\ref{pstate}.
Thus, by $Corollary-\ref{cor6.1}$, we conclude that $\rho_{ABC}(p_1,p_2,p_3)$ is a genuine three-qubit entangled state. 
\begin{figure}[h!]
	\begin{center}
		\includegraphics[width=0.33\textwidth]{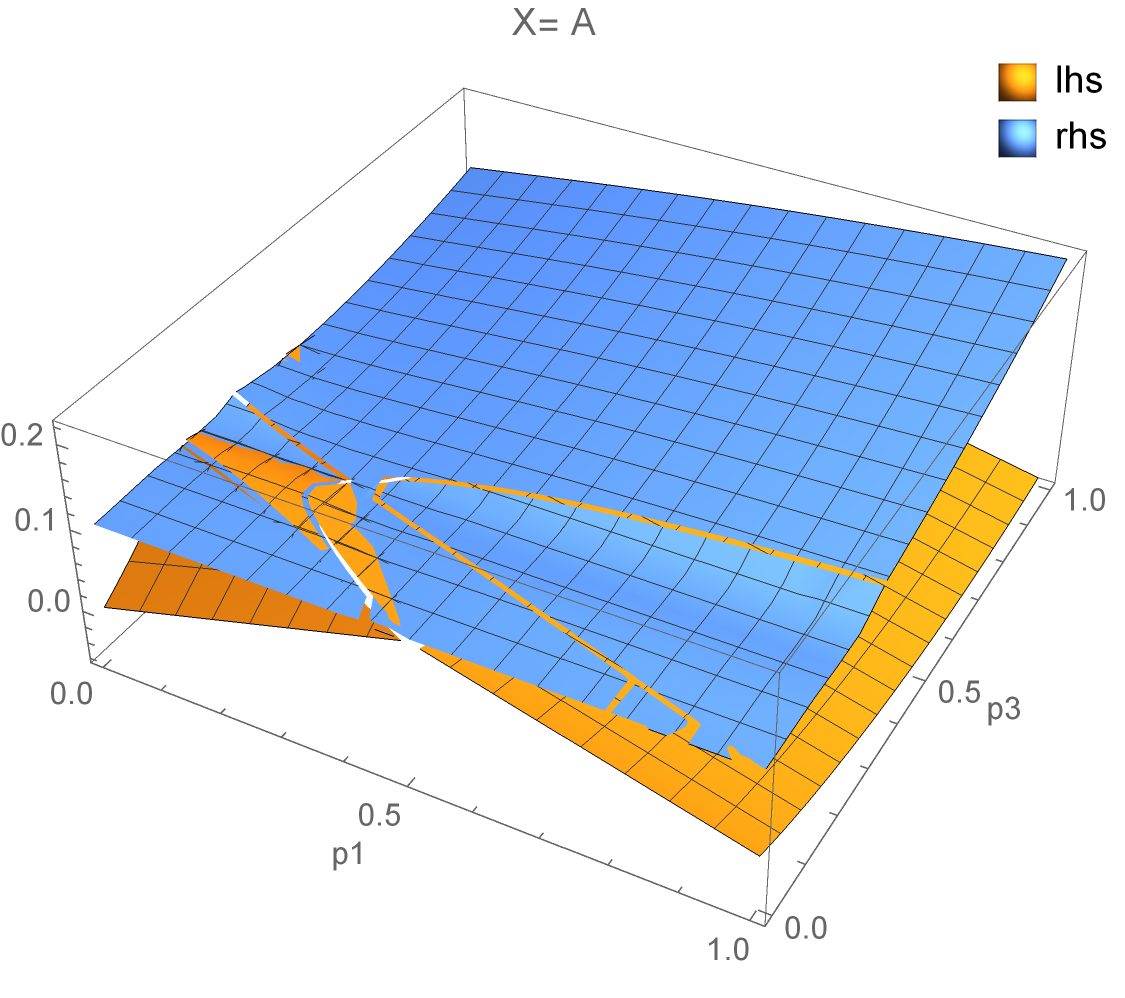}
		\includegraphics[width=0.33\textwidth]{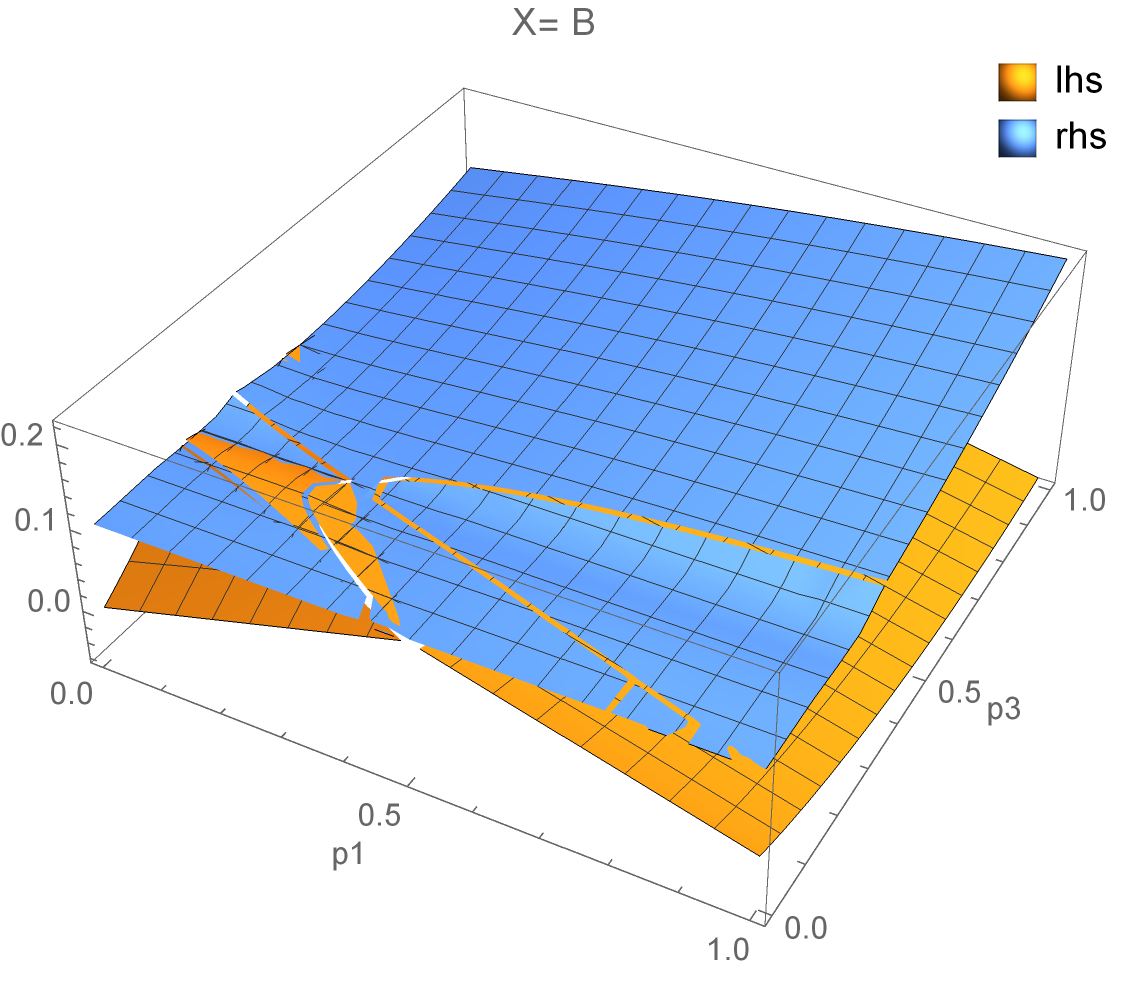}
		\includegraphics[width=0.33\textwidth]{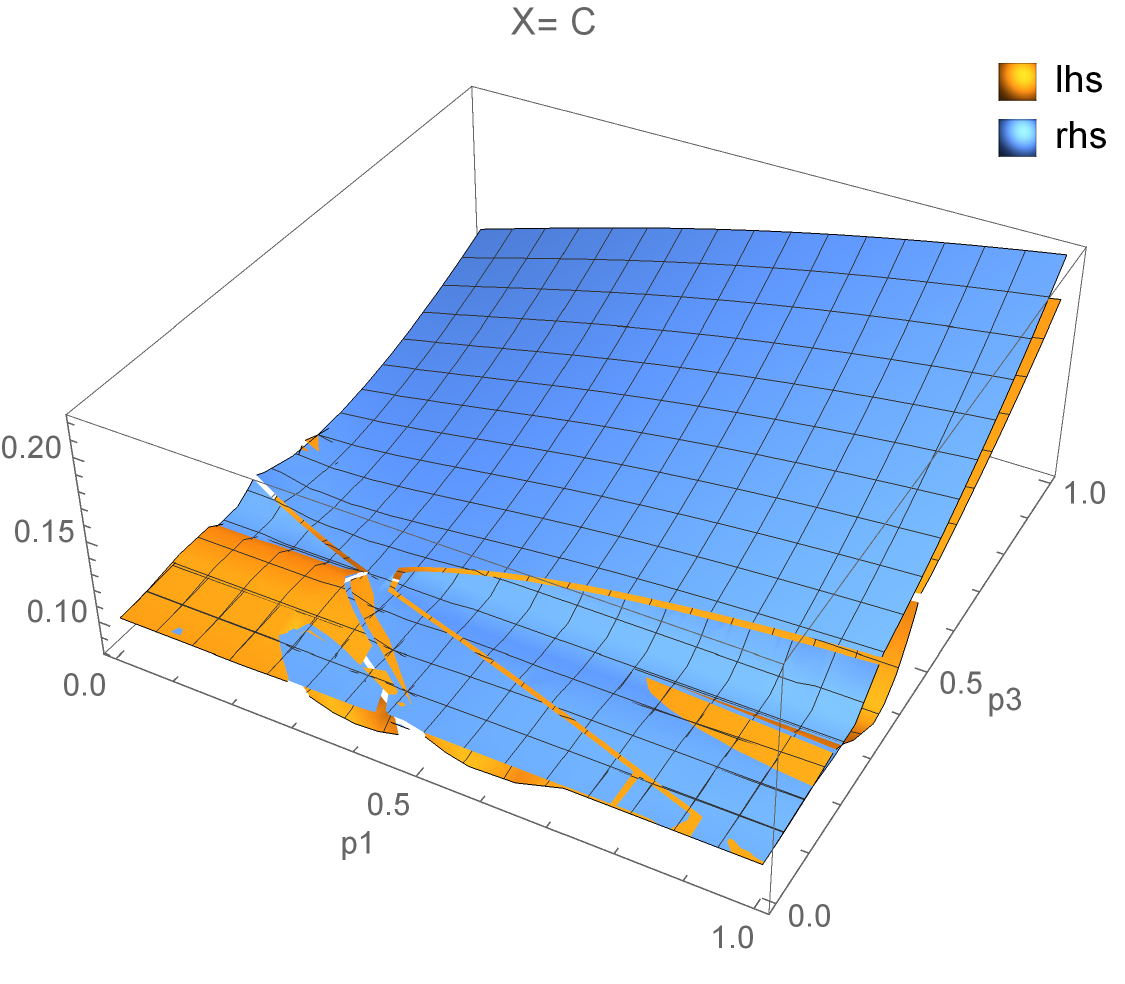}
		\caption{For the state $\rho_{ABC}{(p_1,p_2,p_3)}$ given in (\ref{eq-pstate}), the yellow and blue curve represents the left hand side and the right hand side of the inequality (\ref{s4}) for $X=A, B$, and $C$. The graph is plotted with respect to the state parameter $p_1$ and $p_3$. The figure shows that the inequality (\ref{s4}) holds for a large number of states and hence those entangled states are detected by the method developed in this chapter.}
		\label{pstate}
	\end{center}
\end{figure}

\section{Conclusion}
To summarize, we have shown that the realignment operation may be expressed in terms of partial transposition and permutation operator (SWAP operator). By doing this, we have demonstrated that $Tr[R(\rho_{AB})]$ can be expressed in terms of the expectation value of SWAP operator which indicates the possibility of its physical measurement. Next, this physically realizable quantity, $Tr[R(\rho_{AB})]$ has been used to obtain separability criteria. This criterion is weaker in terms of detection power, but it is important because it reduces to the original realignment criteria for Schmidt-symmetric states. Since $R(\rho_{AB})$ is not Hermitian, so we have derived the $k$th moments of the Hermitian operator $[R(\rho_{AB})]^{\dagger} R(\rho_{AB})$ in terms of its first moment which reduces the heavy calculations required to compute higher order moments. Using these moments, we propose entanglement detection criteria that are shown to detect NPTES as well as BES. Moreover, we have defined a new realignment operation for a three-qubit system and have shown that the three-qubit realigned matrix can be expressed using a permutation operator. We derive separability criteria to detect three-qubit genuine entangled states using SPA-PT and the proposed realignment operation. 
\begin{center}
	****************
\end{center}

\addcontentsline{toc}{chapter}{Conclusion and Future Scope}
\chapter*{Conclusion and Future Scope}
\section*{Conclusion}
In the present thesis, we have extensively studied and used realignment criteria to characterize entanglement in bipartite and multipartite systems. We have established a few separability criteria that successfully detect NPTES as well as PPTES. Although the topic of detection of entanglement has been extensively studied in the literature through many approaches, the majority of these criteria are not physically realizable. This means that they are well accepted in the mathematical language but cannot be implemented in a laboratory setting. In this thesis, we have proposed some theoretical ideas to realize these entanglement detection criteria experimentally.\\
In \noindent { $Chapter-1$}, we have introduced basic definitions and concepts of linear algebra and quantum mechanics, along with some important mathematical results obtained in the literature. We then provide a brief review on the theory of bipartite and multipartite entanglement. We also discuss here a few existing entanglement detection criteria. This chapter provides a foundation for understanding the upcoming chapters of the thesis. \\
In $Chapter-2$, we have constructed a family of witness operators that is efficient in detecting BES and NPTES. For this, we first constructed a linear map using the combination of partial transposition and
realignment operation. Then we found some conditions on the
parameters of the map for which the map represented a positive
map. Further, we have constructed a Choi matrix corresponding
to the map and have shown that it is not completely positive.
We then constructed an entanglement witness operator, which is
based on the linear combination of the function of the Choi matrix and the identity matrix and has shown that it can detect both NPTES and PPTES. Finally, we prove its efficiency by detecting several bipartite BES that were previously undetected by some well-known separability criteria. We also compared the
detection power of our witness operator with three well-known
powerful entanglement detection criteria, namely, dV criterion \cite{dv}, CCNR criterion \cite{rudolph2003} and the separability criteria based on correlation tensor (CT) proposed by Sarbicki et al. \cite{sarbicki} and find that our witness operator detects more entangled states than these criteria. \\
\noindent  In $Chapter-3$, we have introduced a novel entanglement criterion for bipartite systems based on the moments of the Hermitian matrix  $[R(\rho_{AB})]^{\dagger} R(\rho_{AB})$, where $R(\rho_{AB})$ denotes the matrix obtained after applying realignment operation on a quantum state $\rho_{AB}$, which we call here realigned moments (or $R$-moments). The motivation of this chapter is to construct experimentally realizable entanglement detection criteria that can detect NPTES as well as BES through partial knowledge of the density matrix. It has been shown that measuring partial moments is technically possible using $m$ copies of the state and controlled swap operations \cite{ha}. In this chapter, firstly, we have proposed a separability criterion (called $R$-moment criterion), which is based on moments and has been formulated in the form of an inequality that must be fulfilled by all bipartite separable states. Thus, the violation of the derived inequality reveals that the state is entangled. The criterion is efficient in the sense that it detects BES for $m \otimes n$, $mn \neq 4$ dimensional systems. Furthermore, we have shown that it performs better than the other existing criteria based on partial moments in some cases, for detection of NPTES in higher dimensional systems.   We have illustrated the efficiency of the $R$-moment criterion for the detection of NPTES and BES by examining some examples. Since the above discussed criterion requires all four moments for $2\otimes 2$ dimensional system, we have devised another moment based separability criterion for two-qubit systems that need only three moments. We also have chosen some examples of two-qubit NPTES that are undetected by other criteria based on partial moments and realignment operation. \\
\noindent In $Chapter-4$, the physical realization of a realignment map using the method of structural physical approximation has been proposed. Although the realignment criterion is one of the best for the detection of PPTES, it may not be used to detect
entanglement practically. It happens because the realignment map corresponds to a non-positive map and it is known that the non-positive maps are not experimentally implementable.  In this chapter, we have started with approximating the realignment map to a positive map using the method of structural physical approximation (SPA) and then we have shown that the structural physical approximation of realignment map (SPA-R) is completely positive. Next, we developed a separability criterion based on SPA-R map in the form of an inequality and have shown that the developed criterion not only detects NPTES but also PPTES. We estimated the eigenvalues of the
realignment matrix using moments that may be used physically
in an experiment \cite{brun,tanaka,sougato,guhne2021}. We have provided some examples to support the results obtained. Moreover, we discuss the accuracy of our approximated realignment (SPA-R) map by calculating the error of the approximation in trace norm. We have also introduced an error inequality which holds for all separable states. 
This chapter is significant because the idea of SPA of
realignment operation has been discussed here which has not been explored yet.\\
\noindent In $Chapter-5$, we have constructed a family of witness operators to detect and quantify NPTES and PPTES. Our main contribution in this work is that the constructed witness operator is used to improve the earlier obtained lower bound of the concurrence of any arbitrary entangled state in $d_{1} \otimes d_{2}\;(d_{1}\leq d_{2})$ dimensional system. In particular, our result may be used to estimate the lower bound of the concurrence of the bound entangled states in higher dimensional systems. This study is important because the exact expression of the concurrence for higher dimensional mixed bipartite entangled state is not known and thus has to depend on the lower bound. The defined witness operator has been proven to be useful in the detection and quantification of not only the NPTES but also BES. We have illustrated the above facts with a few examples.\\
\noindent In $Chapter-6$, the problem of the physical implementation of the realignment criterion has been discussed. In this chapter, we have shown that the matrix obtained after applying the realignment operation on a bipartite state can be expressed in terms of the partial transposition operation along with column interchange operations. These column interchange operations form a permutation matrix which can be implemented through swap operator acting on the state.  Using this mathematical framework, we have determined the exact expression for the first moment of the realignment matrix experimentally. This has been accomplished by showing that the first moment of the realignment matrix can be expressed as the expectation value of a permutation operator which indicates the possibility of its measurement. Further, we have provided an estimation of the higher order moments of the Hermitian matrix $[R(\rho_{AB})]^{\dagger} R(\rho_{AB})$ 
in terms of its first realigned moment. Moreover, we have defined a new realignment operation for three-qubit states. We have shown that the three-qubit realigned matrix can be expressed using a permutation operator and derived separability criteria to detect three-qubit genuine entangled states using SPA-PT and the proposed realignment operation. It has been illustrated using an example that the criteria has the potential to detect BES.

\section*{Future Scope}
In the literature, various schemes exist for the detection and classification of entangled states
in higher dimensional bipartite and multipartite systems but most of them may not be implemented experimentally. In the present thesis, we have focussed on overcoming this problem
and have presented theoretical solutions to a few of them, but still, there is a scope to improve this situation. A few ideas for future work are discussed below:\\
\textbf{(i)} \textbf{Quantification of bound entanglement via SPA-R map}\\
There exist various entanglement measures such as concurrence, negativity, relative entropy of entanglement, and geometric measure of entanglement that can quantify the amount of entanglement in a two-qubit as well as higher dimensional bipartite pure and mixed state. Now, the question arises of whether the entanglement measures can quantify the amount of entanglement for any arbitrary dimensional bipartite system physically. This situation is complicated in higher dimensional systems since only a handful of measures are available to quantify higher dimensional bipartite mixed states. For instance, concurrence is a well-defined measure but we do not have a closed formula for concurrence in higher dimensional systems. Secondly, we have another easily computable measure of entanglement, namely negativity, which may be used to quantify the amount of entanglement in higher dimensional bipartite states but the problem with this measure is that it depends on the negative eigenvalues of the non-physical partial transposition operation. Thus, negativity does not correspond to a completely positive map and hence, is difficult to implement in the laboratory. Also, negativity cannot be used for the quantification of BES. In the present thesis, we provide experimentally realizable entanglement detection criteria based on structural physical approximation of realignment operation. This work can be further extended to the quantification of bound entanglement in arbitrary dimensional bipartite and multipartite quantum systems using the method of SPA.\\
\textbf{(ii)} \textbf{Physical realization of higher order moments of realignment matrix}\\
In recent technology, moments become practical tools in estimating properties of quantum systems, including quantum entanglement. Several moment based methods have been proposed in the literature. However, the measurement of moments of the realigned matrix is still considered an open problem. In this thesis, we have revealed the exact expression for the first moment of the realignment matrix in terms of the expectation value of a permutation operator which makes its measurement possible. For higher ordered moments, we have obtained lower bounds and upper bounds in terms of the first order moment. This work can be extended to obtain the exact expression for the higher order moments of the realignment matrix in terms of an experimentally realizable quantity. \\
\textbf{(iii) Detection and classification of bound entangled states in multipartite systems}\\
Classification of multi-qubit quantum state is another important problem in quantum information theory that gets more complicated when the number of qubits is greater than or equal to three. We have developed a new realignment operation for three-qubit states and shown that it can be used to detect genuine entanglement. There is further scope to develop methods for the classification of multipartite entanglement using the proposed realignment operation.
\begin{center}
	****************
\end{center}


\renewcommand{\bibname}{Bibliography}
\newpage
\addcontentsline{toc}{chapter}{Bibliography}

\begin{center}
	****************
\end{center}

\newpage
\addcontentsline{toc}{chapter}{List of Publications}
\chapter*{List of Publications}
\begin{enumerate}
\item 	\textbf{Shruti Aggarwal} and Satyabrata Adhikari, \textit{"Witness operator provides better estimate of the lower bound of concurrence of bipartite bound entangled states in $d_1 \otimes d_2$ dimensional system"}, Quantum Information Processing, 20, 83 (2021). \textbf{SCIE}, Impact Factor: 2.349.
\item 	\textbf{Shruti Aggarwal} and Satyabrata Adhikari, \textit{"Search for an efficient entanglement witness operator for bound entangled states in bipartite quantum systems"}, Annals of Physics, 444, 169043 (2022). \textbf{SCI}, Impact Factor: 3.036.
\item 	\textbf{Shruti Aggarwal}, Anu Kumari and Satyabrata Adhikari, \textit{"Physical realization of realignment criteria using structural physical approximation"}, Physical Review A, 108, 012422 (2023) \textbf{ SCI}, Impact Factor: 2.9
\item	\textbf{Shruti Aggarwal}, Satyabrata Adhikari and A. S. Majumdar, \textit{"Entanglement detection in arbitrary dimensional bipartite quantum systems through partial realigned moments"}, Physical Review A, 109, 012404 (2024), \textbf{SCI}
\item 	\textbf{Shruti Aggarwal} and Satyabrata Adhikari, \textit{"Theoretical proposal for the experimental realization of realignment operation"}, arXiv:2307.07952v1\\ Status: Communicated
\end{enumerate}

\begin{center}
	****************
\end{center}

\newpage
\section*{Conferences attended and Paper presented}
\begin{enumerate}
\item	 Paper entitled \emph{"Improved Lower bound of Concurrence for Arbitrary Bipartite Entangled States"} presented in 5th International conference \textbf{Recent Advances in Mathematical Sciences and its Applications (RAMSA-2021)} organized by Jaypee Institute of Technology, Noida from 2nd-4th December 2021.
\medskip
\item Presented a paper entitled \emph{"Detection and quantification of entanglement using witness operator in mixed bipartite quantum states"} in the \textbf{International symposium in honor of great mathematician Srinivasa Ramanujan} on National Mathematics Day held on 22 December, 2021 at Dr. Harisingh Gour Vishwavidyalaya, Sagar, M.P. India.
\medskip
\item 		Presented a paper entitled \emph{"An entanglement witness operator for bound entangled states in bipartite quantum systems"} at \textbf{International Conference on
	Quantum Computing and Communications [QCC-2023]} held during February 8th-11th 2023, at Baba Farid College, Bhatinda, Punjab, India.
\item Presented a poster entitled \emph{"Search for an efficient entanglement witness operator for bound entangled states in bipartite quantum systems"} at \textbf{Young Quantum 2023
	(You-Qu-23)} held during February 15th-18th 2023, at Harish-Chandra Research Institute (HRI), Prayagraj
\item Presented the work entitled \emph{"Entanglement detection via realigned moments in arbitrary dimensional bipartite quantum systems"} in the researcher's meet at fourth international conference on \textbf{Quantum Information and Quantum Technology (QIQT-2023)} organized during June 11th-15th 2023 (in online mode) by Indian Institute of Science Education and Research (IISER), Kolkata.
\end{enumerate}
\noindent\hrulefill

\end{document}